\def\SS{\textbullet~}
\newcommand\introsubsec[1]{\smallskip\noindent\SS\textbf{#1.}~}
\newcommand\subsubsec[1]{\smallskip\noindent\SS\emph{#1.}~}
\def\bp{{\bar\partial}}
\def\bfs{\boldsymbol}
\def\ms{\medskip}
\def\pa{\partial}
\def\PM{\tau} 
\def\sm{\setminus}
\def\ti{\tilde}
\def\wh{\widehat}
\def\wt{\widetilde}
\def\ve{\varepsilon}
\def\div{\mathrm{div}}
\def\ee{\mathrm{e}}
\def\dd{\mathrm{d}}
\def\hol{\mathrm{hol}}
\def\id{\mathrm{id}}
\def\sing{\mathrm{sing}}
\def\supp{\mathrm{supp}}
\def\zero{\mathrm{zero}}
\def\Im{\mathrm{Im}}
\def\Jac{\mathrm{Jac}}
\def\NC{\mathrm{NC}}
\def\PS{\mathrm{PS}}
\def\PPS{\mathrm{PPS}}
\def\PS{\mathrm{PS}}
\def\Re{\mathrm{Re}}
\DeclareMathOperator{\Res}{Res}
\DeclareMathOperator{\Sing}{Sing}
\def\dd{\mathrm{d}}
\def\AA{\mathcal{A}}
\def\CC{\mathcal{C}}
\def\DD{\mathcal{D}}
\def\EE{\mathcal{E}}
\def\FF{\mathcal{F}}
\def\HH{\mathcal{H}}
\def\LL{\mathcal{L}}
\def\OO{\mathcal{O}}
\def\PP{\mathcal{P}}
\def\VV{\mathcal{V}}
\def\WW{\mathcal{W}}
\def\XX{\mathcal{X}}
\def\YY{\mathcal{Y}}
\def\ZZ{\mathcal{Z}}
\def\C{\mathbb{C}}
\def\D{\mathbb{D}}
\def\E{\mathbf{E}}
\def\H{\mathbb{H}}
\def\P{\mathbf{P}}
\def\R{\mathbb{R}}
\def\T{\mathbb{T}}
\def\Z{\mathbb{Z}}
\def\wh{\widehat}
\theoremstyle{plain}
\numberwithin{equation}{section}
\newtheorem{thm}{Theorem}[section]
\newtheorem{lem}[thm]{Lemma}
\newtheorem{cor}[thm]{Corollary}
\newtheorem{prop}[thm]{Proposition}
\newtheorem*{claim*}{Claim}
\newcounter{tmp}
\theoremstyle{definition}
\newtheorem*{eg*}{Example}\newtheorem*{egs*}{Examples}
\newtheorem*{q*}{Question}
\newtheorem*{def*}{Definition}
\theoremstyle{remark}
\newtheorem*{rmk*}{Remark}
\newtheorem*{rmks*}{Remarks}
\begin{document}
\title{Calculus of conformal fields on a compact Riemann surface}

\author{Nam-Gyu Kang} 
\address{School of Mathematics, Korea Institute for Advanced Study, \newline Seoul, 02455, Republic of Korea} 
\email{namgyu@kias.re.kr} 
\thanks{The first author was partially supported by Samsung Science and Technology Foundation (SSTF-BA1401-01).} 

\author{Nikolai~G.~Makarov}
\address{Department of Mathematics, California Institute of Technology, \newline Pasadena, CA 91125, USA} 
\email{makarov@caltech.edu}
\thanks{
The second author was supported by NSF grant no. 1500821.}

\subjclass[2010]{Primary 81T40; Secondary 30C35}
\keywords{Conformal field theory, Riemann surface, Ward's equations}

\begin{abstract} 
We present analytical implementation of conformal field theory on a compact Riemann surface.
We consider statistical fields constructed from background charge modifications of the Gaussian free field and derive Ward identities which represent the Lie derivative operators in terms of the Virasoro fields and  the puncture operators associated with the background charges.  
As applications, we derive Eguchi-Ooguri's version of Ward's equations and certain types of BPZ equations on a torus. 
\end{abstract}

\maketitle
\tableofcontents

\section{Introduction and main results} 

\begingroup
\setcounter{tmp}{\value{thm}}
\setcounter{thm}{0} 
\renewcommand\thethm{\Alph{thm}}

Since Belavin, Polyakov, and Zamolodchikov \cite{BPZ84} introduced an operator algebra formalism of conformal fields in the complex plane and classified them in terms of the representation theory of Virasoro algebra, conformal field theory has been applied with great success in string theory, condensed matter physics, and statistical physics.  
For example, it has been used to derive several exact results for the conformally invariant critical clusters in two-dimensional lattice models. 
In mathematics, conformal field theory inspired development of algebraic theories such as the theory of vertex algebras.
Such formalism has been extended to a compact Riemann surface in the physics literature, \cite{EO87b,EO87,Knizhnik,VV87}.  
See also expository papers, \cite{DP88, SCS17}.

One of goals in this paper is to construct several equations in function theory on a compact Riemann surface based on conformal field theoretic approach and show that they are satisfied by correlation functions of certain conformal Fock space fields.
Examples of such equations include Ward's equations, Eguchi-Ooguri equations, and BPZ equations.
The other goal is to develop basic resources for the study of relations between conformal field theory and  SLE (Schramm-Loewner evolution) theory in various settings like a doubly connected domain (\cite{BKT17}), arbitrary non-random pre-pre-Schwarzian modifications, and various insertions (e.g. $N$-leg operators). 
Such correspondences in some conformal settings are well known in the physics literature, e.g. see \cite{BBK05,HBB10}.

\introsubsec{Gaussian free field on a compact Riemann surface}
The fields we consider in this paper are the statistical fields constructed from the Gaussian free field $\Phi$ on a compact Riemann surface of genus $g.$ 
We define the Gaussian free field both as a distributional field $f\mapsto \Phi(f) $ on the space of test functions $f$ satisfying a certain neutrality condition and as a bi-variant Fock space field $(z,z_0)\mapsto\Phi(z,z_0).$
The correlation function of Gaussian free field is given by bipolar Green's function and computed in terms of the Riemann theta function and Abel-Jacobi map in the higher genus case.
Given a conformal metric $\rho$, we also define the bosonic field $\Phi_\rho(z)$ associated with $\rho$ as a single-variable Fock space field  and compute its correlation functions in terms of the resolvent kernel. 
If one considers its differences $\Phi_\rho(z)-\Phi_\rho(z_0)(=\Phi(z,z_0))$, then its dependence on $\rho$ disappears.

\introsubsec{OPE family of Gaussian free field} 
We construct some basic fields including the current field $J:=\pa \Phi,$ the Virasoro field $T$ via operator product expansion (OPE) of $J$, and the multi-vertex fields as the OPE exponentials of the Gaussian free field. 
The correlations of Fock space fields provide the coordinate-free functions in the function theory on a compact Riemann surface. 
Neither a basis for the homology nor a basis for the space of all holomorphic 1-differentials needs to be fixed. 
For example, the correlation $\E\,J(z)\overline{J(z_0)}$ represents the inverse of the period matrix and does not require a specific choice of basis for homology or cohomology. 
It is important that the correlations are independent of the choice of an element $e$ in the theta divisor. 
The Bergman projective connection appears in the 1-point function $\E\,T$ and its independence of $e$ is obvious from the conformal field theoretic approach. 
We consider the OPE family $\FF$ of the Gaussian free field with central charge $c=1$ where the Lie derivative operators can be represented by the Virasoro field $T$ within correlations,
$$\LL_v^+X(z) = \frac1{2\pi i} \oint_{(z)} vTX(z), \qquad X \in \FF,$$
where $\LL_v^+$ is the $\C$-linear part of Lie derivative operator $\LL_v,$ i.e. $\LL_v^+:=\frac12(\LL_v - \LL_{iv}).$
This family $\FF$ includes the Gaussian free field, the current field, the Virasoro field, and the multi-vertex fields.

\introsubsec{Central charge modifications and background charge modifications}
We modify the Gaussian free field by adding a non-random pre-pre-Schwarzian (PPS) form $\varphi$ of order $(ib,ib)$ to implement a version of conformal field theory with central charge $c = 1-12b^2.$
A PPS form $\varphi$ we consider mostly in this paper is single-valued and harmonic except for finitely many logarithmic singularities. 
Such a simple PPS form $\varphi$ determines a background charge $\bfs\beta = i\pa\bp\varphi/\pi.$
We remark that the metric we consider is not a smooth one, but a flat one with finitely many conical singularities. 
An extension of the Gauss-Bonnet theorem to the singular metric leads to the neutrality condition $(\NC_b):$ 
$$\sum \beta_ k =b\,\chi(M)$$
for a background charge $\bfs\beta = i\pa\bp\varphi/\pi = \sum \beta_k \delta_{q_k}.$
Here $\chi(M)$ is the Euler characteristic of $M.$
Given a divisor $\bfs\beta$ with the neutrality condition $(\NC_b),$ we construct a PPS form $\varphi_{\bfs\beta}$ such that $\bfs\beta = i\pa\bp\varphi_{\bfs\beta}/\pi.$
Such a PPS form $\varphi_{\bfs\beta}$ is unique up to an additive constant.
For an imaginary $b$ ($c>1$), we interpret such a modification as a Liouville field with the background charge $\bfs\beta$ and zero cosmological constant. 
See \cite{DKRV16,DRV16} for construction of Liouville quantum field theory with non-zero cosmological constant on the Riemann sphere and the complex tori, respectively. 
 
The OPE family $\FF_{\bfs\beta}^*(\equiv \FF_{\bfs\beta}(M^*))$ on the punctured surface $M^*(:=M\setminus\supp\,\bfs\beta$) of the modified field $\Phi_{\bfs\beta}$ has the central charge $c = 1-12b^2$ and the Virasoro field 
$$T_{\bfs\beta} = -\frac12 J_{\bfs\beta} *J_{\bfs\beta} +ib\pa J_{\bfs\beta}, \quad J_{\bfs\beta} = J + j_{\bfs\beta}, \quad  j_{\bfs\beta} = \pa \varphi_{\bfs\beta}.$$
The reason that we consider a PPS form in the definition of $\Phi_{\bfs\beta}$ and that we construct $T_{\bfs\beta}$ in this way is clear from the algebraic point of view. 
It is well known in the algebraic literature that if the generators $\ti L_n$ are constructed as 
$$\ti L_n = L_n - ib(n+1)J_n,$$ 
where $L_n$'s are the Virasoro generators, the modes of $T$ and $J_n$'s are the modes of the current field $J,$ then the generators $\ti L_n$ represent the Virasoro algebra with central charge $c= 1-12b^2.$
One can identify $\ti L_n$ with the modes of $T_{\bfs\beta}.$
Later, we extend the OPE family by adding all renormalized fields (for example $J_q:=J*1_{q}$)  rooted at the points $q\in\supp\,\bfs\beta$ and denote it by $\FF_{\bfs\beta}\equiv\FF_{\bfs\beta}(M).$ 

\introsubsec{Liouville action and partition function}
The bosonic field $\Phi_\rho$ associated with a conformal metric $\rho$ can be produced as a random distribution by the Dirichlet action $\DD:$
$$\DD(\psi) = \frac1{4\pi} (\psi,\Delta\psi)_{L^2(\rho)}$$
on the Sobolev space $\WW(\rho).$
More precisely, for a random function $\psi_n = \sum_{j=1}^n a_j e_j,$ 
($e_j$'s are normalized eigenfunctions of the positive Laplace operator $\Delta_\rho$ associated to the eigenvalues $\lambda_j$ with $\lambda_0 = 0, e_0 =1$)
the function $\ee^{-\DD(\psi_n)}$ gives rise to a joint distribution measure $\mu_n$ for random coefficients $a_j$ and its partition function $Z_n:$
$$\dd\mu_n = \frac1{Z_n} \ee^{-\DD(\psi_n)} \dd a_1\,\cdots\dd a_n, \qquad  Z_n =  \prod_{j=1}^n \sqrt{\frac{2\pi}{\lambda_j}}.$$
We identify the limiting field $\lim_{n\to\infty} \psi_n$ with $\Phi_\rho,$ see Subsection~\ref{ss: Dirichlet action}. 
Of course, a sequence $Z_n$ of partition functions diverges as $n\to\infty.$ 
However, by means of the zeta function regularization of the determinant of $\Delta_\rho,$ one can define the partition function $Z$ associated with the bosonic field $\Phi_\rho.$ 
For example, it is well known (e.g. see \cite{CFT}) that
$$Z \equiv Z(\tau) = \frac1{\sqrt{\Im\,\tau}|\eta(\tau)^2|}$$
on the torus $\T_\Lambda:=\C/\Lambda, (\Lambda = \Z + \tau\Z, \Im\,\tau>0).$ 
Here, $\eta$ is the Dedekind $\eta$-function, $\eta(\tau) = q^{1/24}\prod_{n=1}^\infty(1-q^n), q= \ee^{2\pi \tau}.$
The relation between the partition function and the Virasoro field is also well known in the physics literature:
\begin{equation} \label{eq: Ooguri4Z}
2\pi i\, \frac\pa{\pa\tau} \log Z(\tau) = \E\,T \big(\equiv \eta_1 - \frac\pi{2\,\Im\,\tau}\big)
\end{equation}
in the $\T_\Lambda$-uniformization, see \cite{EO87}. 
Here, $\eta_1 = \zeta(1/2)$ and $\zeta$ is the Weierstrass $\zeta$-function.   

We extend the above well-known results to more general cases. 
For an imaginary modification parameter $b$ (i.e. in the theory with central charge $c> 1$), we show that the background charge modification $\Phi_{\bfs\beta}$ can be obtained by the Liouville action associated to $\bfs\beta:$
$$S(\varphi) = D(\psi) - i\int\psi\bfs\beta \big( = D(\psi)  +\frac1\pi\int\psi\pa\bp\varphi_{\bfs\beta}\big), \qquad \varphi = \varphi_* + \psi,$$
where $\varphi_*$ is a fixed reference $\PPS(ib,ib)$-form. 
See Theorem~\ref{Liouville action}.
By means of a proper regularization, the partition function $Z_{\bfs\beta}$ associated with the modification field $\Phi_{\bfs\beta}$ is defined by 
$$Z_{\bfs\beta} = Z \PP_{\bfs\beta},$$
where $\PP_{\bfs\beta}$ is the puncture operator defined in \eqref{eq: puncture} below.  
We extend the identity~\eqref{eq: Ooguri4Z} to the case $\bfs\beta\ne \bfs 0:$ in the $\T_\Lambda$-uniformization 
$$2\pi i\, \frac\pa{\pa\tau} \log Z_{\bfs\beta}(\tau) = \oint_{[0,1]}\E\,T_{\bfs\beta}(\xi)\,\dd\xi,$$ 
see Corollary~\ref{dtau log Zbeta}.

\introsubsec{Vertex operators and Coulomb gas correlation functions}
Given a background charge $\bfs\beta$ with the neutrality condition $(\NC_b)$ and a divisor $\bfs\tau$ with the neutrality condition $(\NC_0),$ we construct the OPE exponential $\OO_{\bfs\beta}[\bfs\tau]$ of the bosonic field $i\Phi_{\bfs\beta}[\bfs\tau].$  
If $\bfs\beta$ and $\bfs\tau$ do not have an overlapping support, then its construction is straightforward and it is in the OPE family $\FF_{\bfs\beta}^*.$
If they do, then the construction of $\OO_{\bfs\beta}[\bfs\tau]$ resorts to a certain renormalization procedure and it belongs to the (extended) OPE family $\FF_{\bfs\beta}.$  
Wick's exponential $\VV^\odot[\bfs\tau]:=\ee^{\odot i\Phi[\bfs\tau]}$ is a well-defined Fock's space functionals and we have 
$$\OO_{\bfs\beta}[\bfs\tau] = \VV^\odot[\bfs\tau]\,\E\,\OO_{\bfs\beta}[\bfs\tau].$$
One of our main goals is to compute the correlation $\E\,\OO_{\bfs\beta}[\bfs\tau].$
(A more traditional notation for the correlation function is $\langle \cdot \rangle.$)
We consider the genus zero case first. 
For this purpose, we introduce the Coulomb gas correlation functions $\CC_{(b)}[\bfs\sigma]$ on $\wh\C$ as the differentials 
with conformal dimensions $(\lambda_j,\lambda_j),$ 
$$\lambda_j = \frac12{\sigma_j^2} - \sigma_j b$$
at $z_j$'s (including infinity) and with values 
$$\prod_{\substack{j<k\\z_j,z_k\ne\infty}}|z_j-z_k|^{2\sigma_j\sigma_k},\qquad (z_j\in\wh\C)$$
in the identity chart of $\C$ and the usual chart $z\mapsto -1/z$ at infinity.
In the genus zero case, we have 
$$\E\,\OO_{\bfs\beta}[\bfs\tau] = \frac{\CC_{(b)}[\bfs\tau+\bfs\beta]}{\CC_{(b)}[\bfs\beta]}.$$
This formula can be extended to the higher genus case, see Theorem~\ref{main: EO} below. 
If $g\ge 2,$ then the Coulomb gas correlation functions $\CC_{(b)}[\bfs\sigma]$ can be expressed in terms of the theta function $\Theta$  in $\C^g$ and Abel-Jacobi map $\AA:$ for $\bfs\sigma = \sum \sigma_j\cdot z_j,$  
\begin{align*}
\CC_{(b)}[\bfs\sigma] = C_* & \prod_j |\omega(z_j)|^{2\lambda_j}
\prod_{j<k} |\theta(z_j-z_k)|^{2\sigma_j\sigma_k}
\prod_{j,l} |\theta(z_j-p_l)|^{2b\sigma_j} \\
& \exp \big(2\pi(\Im\,\PM)^{-1}\Im\,\AA[\bfs\sigma]\cdot\Im\,\AA[\bfs\sigma-2\bfs \beta_0]\big),
\end{align*}
where $\omega:=\sum_j \pa_j\Theta(e)\omega_j,$ $\{\omega_j\}$ is a canonical basis for the cohomology, $\theta(z) = \Theta(\AA(z)-e)$, $p_l$'s are the zeros of $\omega,$ $\bfs\beta_0 = -b\cdot(\omega),$ and $(\omega)$ is the divisor of $\omega,$ i.e. $\bfs\beta_0 = -\sum_{l=1}^{2g-2} b\cdot p_l.$

\begin{thm} \label{main: EO}
There is a unique map $\CC_{(b)}:\bfs\sigma\mapsto \CC_{(b)}[\bfs\sigma]$ (up to a multiplicative constant) which assigns a divisor $\bfs\sigma$ satisfying the neutrality condition $(\NC_b)$ to a differential $\CC_{(b)}[\bfs\sigma]$ such that 
for any background charge $\bfs\beta$ and any divisor $\bfs\tau$ with the neutrality condition $(\NC_0),$ we have 
\begin{equation} \label{eq: EO}
\OO_{\bfs\beta}[\bfs\tau] = \frac{\CC_{(b)}[\bfs\tau+\bfs\beta]}{\CC_{(b)}[\bfs\beta]}\,\VV^\odot[\bfs\tau], \quad \VV^\odot[\bfs\tau]:=\ee^{\odot i\Phi[\bfs\tau]}.
\end{equation}
\end{thm}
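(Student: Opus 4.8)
The statement packages two assertions — existence of a map $\CC_{(b)}$ with the factorization property \eqref{eq: EO}, and its uniqueness up to a multiplicative constant — and I would treat them separately, disposing of uniqueness first since it is purely formal. Using the relation $\OO_{\bfs\beta}[\bfs\tau] = \VV^\odot[\bfs\tau]\,\E\,\OO_{\bfs\beta}[\bfs\tau]$ recorded earlier, equation \eqref{eq: EO} is equivalent to the scalar identity $\E\,\OO_{\bfs\beta}[\bfs\tau] = \CC_{(b)}[\bfs\tau+\bfs\beta]/\CC_{(b)}[\bfs\beta]$, whose left side is intrinsic to the theory. Hence if $\CC_{(b)}$ and $\CC_{(b)}'$ both satisfy \eqref{eq: EO} then $\CC_{(b)}[\bfs\tau+\bfs\beta]/\CC_{(b)}[\bfs\beta] = \CC_{(b)}'[\bfs\tau+\bfs\beta]/\CC_{(b)}'[\bfs\beta]$ for all admissible $\bfs\beta,\bfs\tau$, so that the quotient $\CC_{(b)}/\CC_{(b)}'$ takes the same value at $\bfs\tau+\bfs\beta$ and at $\bfs\beta$. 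Fixing a reference background charge $\bfs\beta_0$ and writing an arbitrary $(\NC_b)$-divisor as $\bfs\sigma = (\bfs\sigma-\bfs\beta_0)+\bfs\beta_0$ with $\bfs\sigma-\bfs\beta_0$ satisfying $(\NC_0)$, I conclude $\CC_{(b)}[\bfs\sigma]/\CC_{(b)}'[\bfs\sigma] = \CC_{(b)}[\bfs\beta_0]/\CC_{(b)}'[\bfs\beta_0]$, a constant independent of $\bfs\sigma$; this is the asserted uniqueness.

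For existence I would take the explicit theta-function expression displayed just before the theorem as the \emph{definition} of $\CC_{(b)}[\bfs\sigma]$ in the case $g\ge 2$ (the genus zero expression being already recorded above, and the torus case being analogous), and then verify that with this choice \eqref{eq: EO} holds. The computation reduces to evaluating $\E\,\OO_{\bfs\beta}[\bfs\tau]$ directly from the correlation structure of the modified field $\Phi_{\bfs\beta} = \Phi + \varphi_{\bfs\beta}$: the Gaussian part contributes the mutual interactions $\prod_{j<k}|\theta(z_j-z_k)|^{2\sigma_j\sigma_k}$ through the bipolar Green's function, the deterministic $\PPS$-form $\varphi_{\bfs\beta}$ contributes the cross terms with the background, and the renormalization of coincident insertions produces the conformal-weight factors $\prod_j|\omega(z_j)|^{2\lambda_j}$ with $\lambda_j = \tfrac12\sigma_j^2 - \sigma_j b$.

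The conceptual reason the answer is a \emph{ratio} is a Coulomb-gas/cocycle structure: if one sets $G(\bfs\beta,\bfs\tau) := \E\,\OO_{\bfs\beta}[\bfs\tau]$, then the OPE multiplication of vertex operators gives the multiplicativity $G(\bfs\beta,\bfs\tau_1+\bfs\tau_2) = G(\bfs\beta,\bfs\tau_1)\,G(\bfs\beta+\bfs\tau_1,\bfs\tau_2)$, and any such cocycle is of the form $F(\bfs\beta+\bfs\tau)/F(\bfs\beta)$ with $F$ depending only on the total divisor. Concretely, $\CC_{(b)}[\bfs\sigma]$ plays the role of $\ee^{-\mathrm{energy}(\bfs\sigma)}$ for the full charge configuration $\bfs\sigma = \bfs\tau + \bfs\beta$, so that the self-interaction of $\bfs\beta$ cancels between numerator and denominator and only the $\bfs\tau$-$\bfs\tau$ and $\bfs\tau$-$\bfs\beta$ interactions survive — precisely $\E\,\OO_{\bfs\beta}[\bfs\tau]$. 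Checking the cocycle identity amounts to tracking the Wick contractions when two vertex insertions are fused, which is routine once the one-insertion correlation is known.

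The main obstacle, and the part that genuinely uses the higher-genus geometry, is to show that the proposed $\CC_{(b)}[\bfs\sigma]$ is a \emph{bona fide} differential on $M$ — single-valued (as a section of the appropriate line bundle) away from $\supp\,\bfs\sigma$, independent of the choice of the theta-divisor element $e$, and carrying the stated conformal dimensions $(\lambda_j,\lambda_j)$ at each $z_j$. The functions $\theta(z) = \Theta(\AA(z)-e)$ are only quasi-periodic under the homology generators, so each factor in the product picks up exponential automorphy factors; the role of the final exponential term involving $\Im\,\AA[\bfs\sigma]\cdot\Im\,\AA[\bfs\sigma - 2\bfs\beta_0]$ is exactly to restore single-valuedness, and the cancellation of the automorphy factors works out only because $\bfs\sigma$ obeys the neutrality condition $(\NC_b)$ (with $\bfs\beta_0 = -b\cdot(\omega)$ absorbing the weight of the canonical class). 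I would verify this invariance first, since without it the right-hand side of \eqref{eq: EO} is not even well defined, and only then match it against the field-theoretic computation of $\E\,\OO_{\bfs\beta}[\bfs\tau]$.
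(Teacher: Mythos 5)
Your proposal is correct and, at its core, is the paper's own argument. The uniqueness step is identical: write $\bfs\sigma=(\bfs\sigma-\bfs\beta_*)+\bfs\beta_*$ for a fixed reference background charge and conclude the ratio of two candidate maps is constant. For existence, your ``conceptual'' paragraph is in fact the proof the paper gives: setting $G(\bfs\beta,\bfs\tau)=\E\,\OO_{\bfs\beta}[\bfs\tau]$, the cocycle identity $G(\bfs\beta,\bfs\tau_1+\bfs\tau_2)=G(\bfs\beta,\bfs\tau_1)\,G(\bfs\beta+\bfs\tau_1,\bfs\tau_2)$ forces $G(\bfs\beta,\bfs\tau)=F(\bfs\beta+\bfs\tau)/F(\bfs\beta)$ with $F(\bfs\sigma):=G(\bfs\beta_*,\bfs\sigma-\bfs\beta_*)$, and the paper simply \emph{defines} $\CC_{(b)}[\bfs\sigma]:=\E\,\OO_{\bfs\beta_*}[\bfs\sigma-\bfs\beta_*]$, so that well-definedness as a differential (and $e$-independence) is automatic. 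The cocycle identity itself reduces to the two additivity identities $c[\bfs\tau+\bfs\tau']-c[\bfs\tau]-c[\bfs\tau']=\E\,\Phi[\bfs\tau]*\Phi[\bfs\tau']$ and $\varphi_{\bfs\beta}[\bfs\tau+\bfs\tau']-\varphi_{\bfs\beta}[\bfs\tau]-\varphi_{\bfs\beta+\bfs\tau}[\bfs\tau']=-i\,\E\,\Phi[\bfs\tau]*\Phi[\bfs\tau']$ --- your ``routine Wick contractions,'' though note that the renormalized $*$-products are genuinely needed here because the theorem allows $\supp\,\bfs\tau$ to meet $\supp\,\bfs\beta$. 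Where you diverge is in proposing, as the primary existence route, to take the explicit theta-function expression as the definition and verify \eqref{eq: EO} by direct computation; that route works but saddles you with the single-valuedness, $e$-independence, and automorphy-factor cancellations you correctly identify as the main obstacle, all of which the intrinsic definition sidesteps (the explicit formula then becomes an after-the-fact evaluation rather than a prerequisite). I would promote your cocycle paragraph from ``conceptual reason'' to the actual proof and demote the theta-function verification to a computation of the already-defined object.
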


OPE exponentials we consider in this paper are single-valued or non-chiral. 
We study chiral multi-vertex fields in a forthcoming paper. 

\introsubsec{Insertion formulas}
We explain how the insertion of Wick's exponential of the Gaussian free field gives rise to the change of background charge modifications. 
A similar type of insertion procedure plays an important role in establishing the relation between the chordal SLE theory and conformal field theory in a simply connected domain with two marked boundary points in \cite{KM13}.  
See \cite{KM12} for its radial version with one marked boundary point and one marked interior point.

\begin{thm} \label{main: Insertion}
Given two background charges $\bfs\beta_1, \bfs\beta_2$ with the neutrality condition $(\NC_b),$
the image of $\FF_{\bfs\beta_1}$ under the insertion of $\VV^{\odot}[\bfs\beta_2-\bfs\beta_1]$ is $\FF_{\bfs\beta_2}.$ 
\end{thm}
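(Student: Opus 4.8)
The plan is to verify the statement on the generators of the OPE family and to reduce the insertion to a deterministic shift of the Gaussian free field. Recall that the insertion of a Fock space field $\VV$ is defined on correlations by
$$\E\big[\wh X_1\cdots \wh X_n\big] = \frac{\E\big[\VV\, X_1\cdots X_n\big]}{\E\,\VV},$$
so it suffices to compute the action on the fields that generate $\FF_{\bfs\beta_1}$ (the bosonic field $\Phi_{\bfs\beta_1}$, the current $J_{\bfs\beta_1}$, the Virasoro field $T_{\bfs\beta_1}$, the vertex operators $\OO_{\bfs\beta_1}[\cdot]$, and the renormalized fields rooted at $\supp\,\bfs\beta_1$) and to identify the images with the corresponding generators of $\FF_{\bfs\beta_2}$. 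Since $\bfs\beta_1$ and $\bfs\beta_2$ both satisfy $(\NC_b),$ their difference $\bfs\tau:=\bfs\beta_2-\bfs\beta_1$ satisfies $(\NC_0),$ so $\VV^\odot[\bfs\tau]$ is a legitimate Wick exponential and all the pairings below are well defined.

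The key step is the insertion-equals-shift principle for Gaussian fields. First I would use Wick's calculus to show that inserting $\VV^{\odot}[\bfs\tau]=\ee^{\odot i\Phi[\bfs\tau]}$ amounts to the deterministic substitution $\Phi\mapsto\Phi+h,$ where $h:=i\,G[\bfs\tau]$ is the logarithmic potential $G[\bfs\tau](\cdot)=\sum_k\tau_k\,G(\cdot,q_k)$ of $\bfs\tau$ against the bipolar Green's function $G$ (the covariance of $\Phi$). The heart of the argument is then a short potential-theoretic computation: by the normalization of $G$ and the neutrality $(\NC_0)$ of $\bfs\tau,$ the shift satisfies $\frac{i}{\pi}\pa\bp h=\bfs\tau.$ Comparing with the defining property $\frac{i}{\pi}\pa\bp\varphi_{\bfs\beta}=\bfs\beta$ and invoking the uniqueness (up to an additive constant) of the PPS form attached to a divisor, I conclude $h=\varphi_{\bfs\beta_2}-\varphi_{\bfs\beta_1}$ up to a constant. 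Consequently the insertion realizes
$$\Phi_{\bfs\beta_1}=\Phi+\varphi_{\bfs\beta_1}\ \longmapsto\ \Phi+\varphi_{\bfs\beta_2}=\Phi_{\bfs\beta_2},$$
and in particular $J_{\bfs\beta_1}\mapsto J_{\bfs\beta_2}$ (the constant drops under $\pa$).

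Because $\Phi_{\bfs\beta_1}$ and $\Phi_{\bfs\beta_2}$ share the same random part $\Phi$ and differ only by a non-random field, their Wick self-contractions coincide, so away from $\supp\,\bfs\tau$ the shift commutes with differentiation and with OPE multiplication. Thus the whole family is transported by the field-level identification $\Phi_{\bfs\beta_1}\mapsto\Phi_{\bfs\beta_2}$ together with the naturality of the OPE operations that generate it; in particular the Virasoro field transforms as $T_{\bfs\beta_1}=-\frac12 J_{\bfs\beta_1}*J_{\bfs\beta_1}+ib\pa J_{\bfs\beta_1}\mapsto T_{\bfs\beta_2},$ the common parameter $b$ keeping the central charge $c=1-12b^2$ fixed, as it must. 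For the vertex operators I would check $\OO_{\bfs\beta_1}[\bfs\sigma]\mapsto\OO_{\bfs\beta_2}[\bfs\sigma]$ directly, combining the representation $\OO_{\bfs\beta}[\bfs\sigma]=\big(\CC_{(b)}[\bfs\sigma+\bfs\beta]/\CC_{(b)}[\bfs\beta]\big)\,\VV^\odot[\bfs\sigma]$ of Theorem~\ref{main: EO} with the multiplicativity of Wick exponentials; the Coulomb-gas prefactors recombine precisely because the defining formula of that theorem holds for both backgrounds. This yields the inclusion of the image of $\FF_{\bfs\beta_1}$ in $\FF_{\bfs\beta_2},$ and running the same argument with the roles of $\bfs\beta_1,\bfs\beta_2$ exchanged (insertion of $\VV^{\odot}[\bfs\beta_1-\bfs\beta_2],$ which inverts the shift) gives the reverse inclusion and hence equality.

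I expect the main obstacle to be the renormalized fields rooted at the punctures, since $\supp\,\bfs\beta_1$ and $\supp\,\bfs\beta_2$ need not coincide and the two families live over the different punctured surfaces $M\setminus\supp\,\bfs\beta_1$ and $M\setminus\supp\,\bfs\beta_2.$ At a point $q$ the insertion $\VV^{\odot}[\bfs\tau]$ creates a singularity of strength $\beta_2(q)-\beta_1(q),$ so a field regular at $q$ in one family acquires (or loses) a logarithmic singularity in the other; making the correspondence $J_q=J*1_q\mapsto$ (the rooted field of $\FF_{\bfs\beta_2}$ at $q$) precise requires matching the subtracted singular OPE part of the renormalization on both sides. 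The careful bookkeeping of these rooted fields, rather than the shift computation itself, is where the real work lies.
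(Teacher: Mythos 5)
Your proposal is correct and follows essentially the same route as the paper: the insertion of $\VV^{\odot}[\bfs\beta_2-\bfs\beta_1]$ is computed by Wick's calculus to be the deterministic shift $\Phi\mapsto\Phi+i\,\E\,\Phi[\bfs\beta_2-\bfs\beta_1]\Phi(\cdot)$, which is identified with $\varphi_{\bfs\beta_2}-\varphi_{\bfs\beta_1}$ (the paper does this via the relation \eqref{eq: PPSs}, itself a consequence of the uniqueness of the PPS form with prescribed background charge that you invoke), so $\Phi_{\bfs\beta_1}\mapsto\Phi_{\bfs\beta_2}$ and the whole OPE family is transported. The rooted-field difficulty you flag at the end is handled in the paper simply by passing to the renormalized pairing, i.e.\ defining $\wh\E[\XX]:=\E\,\VV^{\odot}[\bfs\beta_2-\bfs\beta_1]*\XX$ and extending \eqref{eq: PPSs} to $\varphi_{\bfs\beta_2}[\bfs\tau]=\varphi_{\bfs\beta_1}[\bfs\tau]+i\,\E\,\Phi[\bfs\beta_2-\bfs\beta_1]*\Phi[\bfs\tau]$.
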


\introsubsec{Puncture operator and Ward identity for the extended OPE family}
Given a non-random meromorphic vector field $v$ with poles at $\xi_k$'s, we define the Ward functional $W_v^+$ by 
$$W_v^+:=-\frac1{2\pi i}\sum_k \oint_{(\xi_k)} v T.$$
Given a background charge $\bfs\beta,$ we define the puncture operator $\PP_{\bfs\beta}$ by 
\begin{equation} \label{eq: puncture}
\PP_{\bfs\beta}=\CC_{(b)}[\bfs\beta].
\end{equation}

We represent the Ward functionals as the Lie derivative operators up to the conjugation by the puncture operators $\PP_{\bfs\beta}.$ 
The following Ward identity holds for any compact Riemann surface of genus $g.$

\begin{thm} \label{main: Ward identity} \index{Ward identity}
In correlations with any string of fields in the extend OPE family $\FF_{\bfs\beta},$ we have
$$W_v^+ = \PP_{\bfs\beta}^{-1}\LL_v^+ \PP_{\bfs\beta}.$$
\end{thm}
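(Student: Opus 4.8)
The plan is to prove the operator identity inside correlations. Fix a string $\mathcal{X}=X_1(z_1)\cdots X_n(z_n)$ of fields in $\FF_{\bfs\beta}$; it suffices to show $\E[W_v^+\mathcal{X}]=\PP_{\bfs\beta}^{-1}\LL_v^+(\PP_{\bfs\beta}\,\E[\mathcal{X}])$. Since $\LL_v^+$ is a derivation and $\PP_{\bfs\beta}=\CC_{(b)}[\bfs\beta]$ depends only on the background data, the right-hand side expands as $\LL_v^+\E[\mathcal{X}]+(\LL_v^+\log\PP_{\bfs\beta})\,\E[\mathcal{X}]$. Thus I must match the Ward functional against a ``field part'' $\LL_v^+\E[\mathcal{X}]$ (the Lie derivative in the insertions $z_j$) and a ``puncture part'' $\LL_v^+\log\PP_{\bfs\beta}$ (the Lie derivative in the background charges $q_l\in\supp\bfs\beta$). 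The single tool driving both matches is the residue theorem on $M$, applied to the meromorphic datum $\zeta\mapsto v(\zeta)\,\E[T_{\bfs\beta}(\zeta)\,\mathcal{X}]$.

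For the field part I would invoke the interior Ward identity for the modified family, writing $\LL_v^+$ at each insertion as a contour integral of $v\,T_{\bfs\beta}$ about $z_j$, so that $\LL_v^+\E[\mathcal{X}]=\sum_j\frac1{2\pi i}\oint_{(z_j)}v\,\E[T_{\bfs\beta}\mathcal{X}]$. The integrand $\zeta\mapsto v(\zeta)\E[T_{\bfs\beta}(\zeta)\mathcal{X}]$ is singular precisely at the $z_j$, at the poles $\xi_k$ of $v$, and at the punctures $q_l$ (where $j_{\bfs\beta}=\pa\varphi_{\bfs\beta}$, hence $T_{\bfs\beta}$, is singular). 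The residue theorem trades the $z_j$-residues for minus those at the $\xi_k$ and the $q_l$. At the $\xi_k$ I would substitute $T_{\bfs\beta}=T+ib\,\pa J-j_{\bfs\beta}J-\frac12 j_{\bfs\beta}^2+ib\,\pa j_{\bfs\beta}$; the $T$-contribution reconstitutes $W_v^+$ in correlation, while the remaining non-primary terms are carried over to the puncture bookkeeping.

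For the puncture part the key lemma is a variational formula identifying $\LL_v^+\log\PP_{\bfs\beta}$ with the residues of the Virasoro one-point function at the background charges, of the form $\LL_v^+\log\CC_{(b)}[\bfs\beta]=-\sum_l\frac1{2\pi i}\oint_{(q_l)}v\,\E\,T_{\bfs\beta}$, up to the corrections inherited from the $\xi_k$. This is the local refinement of the partition-function identity $2\pi i\,\pa_\tau\log Z_{\bfs\beta}=\oint\E\,T_{\bfs\beta}$ of Corollary~\ref{dtau log Zbeta}, and it can be verified directly from the explicit theta-function expression for $\CC_{(b)}[\bfs\beta]$ displayed before Theorem~\ref{main: EO}, using that the vertex insertion at $q_l$ carries conformal weight $\lambda_l=\frac12\beta_l^2-\beta_l b$. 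Assembling the two parts then collapses to the asserted identity.

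The main obstacle is the analysis at the punctures together with the failure of $T_{\bfs\beta}$ to be a genuine quadratic differential. Two points require care. First, each $q_l$ is simultaneously a pole of $T_{\bfs\beta}$ (a double pole from $-\frac12 j_{\bfs\beta}^2+ib\,\pa j_{\bfs\beta}$ and a simple pole from $-j_{\bfs\beta}J$) and the site of the puncture operator, so its residue must be read off from the full local expansion and matched to the weight $\lambda_l$. Second, $T_{\bfs\beta}$ is a Schwarzian (PS) form, so $v\,T_{\bfs\beta}$ is not a global $1$-form and the residue theorem acquires a projective-connection anomaly; this anomaly is precisely what reconciles the use of $T$ rather than $T_{\bfs\beta}$ in the definition of $W_v^+$ and what feeds the moduli dependence hidden in $\CC_{(b)}$. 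Finally I would extend the identity from $\FF_{\bfs\beta}^*$ to the full extended family $\FF_{\bfs\beta}$, treating the fields rooted at the $q_l$ either through the renormalization defining them or by transporting the identity across background charges via the insertion of $\VV^\odot[\bfs\beta_2-\bfs\beta_1]$ of Theorem~\ref{main: Insertion}.
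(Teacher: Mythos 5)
Your overall strategy is the paper's: apply the residue theorem to $\xi\mapsto v(\xi)\,\E\,T_{\bfs\beta}(\xi)\XX$, identify the residues at the insertions with Lie derivatives via Ward's OPE, and compute the residues at the punctures from the explicit/KZ form of $\CC_{(b)}[\bfs\beta]$. The paper's key lemma is exactly your ``variational formula,'' namely
$\frac1{2\pi i}\oint_{(q_k)} v\,\E\,T_{\bfs\beta} = \lambda(\beta_k)v'(q_k) + v(q_k)\,\PP_{\bfs\beta}^{-1}\pa_{q_k}\PP_{\bfs\beta}$,
proved from the expansion of $j_{\bfs\beta}$ near $q_k$ together with the KZ equations for $\E\,\OO_{\bfs\beta_*}[\bfs\beta-\bfs\beta_*]$. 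Note, however, that your version of this lemma carries the wrong sign: summing the displayed identity over $k$ gives $\sum_l\frac1{2\pi i}\oint_{(q_l)}v\,\E\,T_{\bfs\beta}=+\,\PP_{\bfs\beta}^{-1}\LL_v^+\PP_{\bfs\beta}$, not the negative of it; the minus sign belongs only to the residues at the poles $\xi_k$ of $v$, which is where $W_v^+$ lives.

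The more substantive gap is in your two-bucket accounting. You split the right-hand side into ``the Lie derivative in the insertions $z_j$'' plus ``$\LL_v^+\log\PP_{\bfs\beta}$,'' but the correlation $\E\,\XX_{\bfs\beta}$ itself depends on the puncture positions $q_l$ (through the background charge), so the total Lie derivative contains the extra terms $v(q_l)\pa_{q_l}\E\,\XX_{\bfs\beta}$. These are produced at the punctures by the \emph{random} part of $T_{\bfs\beta}$: even when $\XX$ has no node at $q_l$, the term $-j_{\bfs\beta}J$ in $A_{\bfs\beta}$ has a simple pole there with residue $i\beta_l\,\E\,J(q_l)\XX$, so that $\Res_{q_l}v\,\E\,A_{\bfs\beta}\XX=v(q_l)\pa_{q_l}\E\,\XX$. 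Your puncture lemma, which involves only the one-point function $\E\,T_{\bfs\beta}$, cannot see this contribution, and your field part, restricted to the $z_j$, does not contain it either; as stated the two parts do not add up to $\PP_{\bfs\beta}^{-1}\LL_v^+(\PP_{\bfs\beta}\,\E\,\XX)$. The fix is the paper's final step: at each $q_l$ one has
$\frac1{2\pi i}\oint_{(q_l)}v\,T_{\bfs\beta}=\LL_v^+(q_l)+\lambda(\beta_l)v'(q_l)+v(q_l)\PP_{\bfs\beta}^{-1}\pa_{q_l}\PP_{\bfs\beta}$
in correlations with $\FF_{\bfs\beta}(M)$, and since $\PP_{\bfs\beta}$ is a non-random $(\lambda(\beta_l),\lambda(\beta_l))$-differential at $q_l$, Leibniz's rule collapses this to $\PP_{\bfs\beta}^{-1}\LL_v^+(q_l)\PP_{\bfs\beta}$. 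Your remaining concerns (the Schwarzian anomaly of $vT_{\bfs\beta}$, which is holomorphic and hence harmless for residues, and the extension to fields rooted at the $q_l$, which the paper handles through the OPE identity $(vA_{\bfs\beta})*_{-1}(X*Y_{q_l})=\LL_v^+(q_l)(X*Y_{q_l})$) are real but routine once the bookkeeping above is in place.
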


Whereas the above theorem is somewhat abstract, its application to a specific choice of vector field gives rise to the rather concrete equations, the so-called Ward's equations. In this paper, we study Ward's equations in the cases of genus zero and one. 
We now state various Ward's equations on the torus $\T_\Lambda.$
In the theorem below, we consider the vector fields $v_{\xi,\xi_0}$ and $v_\xi$ in $\T_\Lambda$  given in terms of the Jacobi-theta function $\theta$ and the Weierstrass $\wp$-function as 
$$v_{\xi,\xi_0}(z) = \frac{\theta'(\xi-z)}{\theta(\xi-z)} - \frac{\theta'(\xi_0-z)}{\theta(\xi_0-z)}, \qquad v_\xi(z) = -\wp(\xi-z).$$

\begin{thm} \label{main: KM2} \index{Ward's equations}
For  any tensor product $\XX_{\bfs\beta}$ of fields in the OPE family $\FF_{\bfs\beta},$
\begin{align}
\E\,(T_{\bfs\beta}(\xi)-T_{\bfs\beta}(\xi_0) )\XX &=  \E\,(T_{\bfs\beta}(\xi)-T_{\bfs\beta}(\xi_0))\,\E\,\XX + \E\, \LL_{v_{\xi,\xi_0}}^+\XX, \nonumber\\
\E\,\pa T_{\bfs\beta}(\xi) \XX &=  \E\,\pa T_{\bfs\beta}(\xi)\,\E\,\XX + \E\, \LL_{v_\xi}^+\XX  \label{eq: Ward4paT}
\end{align}
in the $\T_\Lambda$-uniformization.
\end{thm}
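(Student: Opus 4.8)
The plan is to derive both identities from the abstract Ward identity of Theorem~\ref{main: Ward identity}, namely $W_v^+=\PP_{\bfs\beta}^{-1}\LL_v^+\PP_{\bfs\beta}$, by specializing the vector field $v$ to $v_{\xi,\xi_0}$ and to $v_\xi$ and then evaluating the contour integrals defining $W_v^+$ by residue calculus. Because the puncture operator $\PP_{\bfs\beta}=\CC_{(b)}[\bfs\beta]$ is a fixed non-random differential, the conjugation splits as $\PP_{\bfs\beta}^{-1}\LL_v^+\PP_{\bfs\beta}=\LL_v^++(\LL_v^+\PP_{\bfs\beta})/\PP_{\bfs\beta}$, so that in correlations with a string $\XX$ of fields from $\FF_{\bfs\beta}$ the identity reads
$$\E\,W_v^+\XX=\E\,\LL_v^+\XX+\frac{\LL_v^+\PP_{\bfs\beta}}{\PP_{\bfs\beta}}\,\E\,\XX.$$
The last summand is a single overall factor: setting $\XX=1$ and using $\LL_v^+1=0$ identifies $(\LL_v^+\PP_{\bfs\beta})/\PP_{\bfs\beta}$ with the one-point value of $\E\,W_v^+$, which the residue computation below evaluates to $\E\,(T_{\bfs\beta}(\xi)-T_{\bfs\beta}(\xi_0))$ for $v_{\xi,\xi_0}$ and to $\E\,\pa T_{\bfs\beta}(\xi)$ for $v_\xi$.

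The computational core is the residue evaluation of $W_v^+=-\frac1{2\pi i}\sum_k\oint_{(\xi_k)}v\,T_{\bfs\beta}$ inside correlations with $\XX$. Here $z\mapsto\E\,T_{\bfs\beta}(z)\XX$ is meromorphic with poles only at the insertions of $\XX$, so choosing $\xi,\xi_0$ off the support of $\XX$ makes it holomorphic across the integration contours. For $v_{\xi,\xi_0}$ the local expansion $\theta'(w)/\theta(w)\sim1/w$ as $w\to0$ produces simple poles at $z=\xi$ and $z=\xi_0$ with residues $-1$ and $+1$; pairing these against the holomorphic $\E\,T_{\bfs\beta}(z)\XX$ yields $\E\,(T_{\bfs\beta}(\xi)-T_{\bfs\beta}(\xi_0))\XX$. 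For $v_\xi=-\wp(\xi-\cdot)$ the expansion $\wp(w)=w^{-2}+O(w^2)$ gives a pure double pole at $z=\xi$ with vanishing residue, so the contour integral extracts a first derivative and returns $\pa_\xi\E\,T_{\bfs\beta}(\xi)\XX=\E\,\pa T_{\bfs\beta}(\xi)\XX$. Substituting these two evaluations, together with the overall factors obtained from the $\XX=1$ specialization, into the displayed identity gives the two asserted equations.

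A preliminary point to settle is that $v_{\xi,\xi_0}$ and $v_\xi$ are genuine single-valued meromorphic vector fields on $\T_\Lambda$. Since $\wp$ is doubly periodic, $v_\xi$ descends to the torus immediately. The individual terms $\theta'/\theta$ are only quasi-periodic, each acquiring the same additive constant under a lattice translation, so their difference in $v_{\xi,\xi_0}$ is periodic; this is exactly why one takes the difference rather than a single logarithmic-derivative term. I expect the main obstacle to be controlling the operator product expansion of $T_{\bfs\beta}$ with the constituents of $\XX$ finely enough to guarantee both that $\E\,T_{\bfs\beta}(z)\XX$ is holomorphic across the contours and that no spurious contributions arise from the background-charge punctures when the contours are deformed. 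This regularity is precisely what legitimizes the residue extraction, and it is here that the structure of the extended family $\FF_{\bfs\beta}$ underlying Theorem~\ref{main: Ward identity} is essential.
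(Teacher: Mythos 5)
Your proposal is correct and follows essentially the paper's own route: the paper proves the genus zero analogue (Proposition~\ref{Ward equation0}) by exactly this combination of the Ward identity $W_v^+=\PP_{\bfs\beta}^{-1}\LL_v^+\PP_{\bfs\beta}$, Leibniz's rule for the Lie derivative on tensor products, the trivial-string specialization $\XX\equiv 1$ to fix the one-point term, and residue evaluation of $W_v^+$, and then obtains the torus case by ``applying a similar argument'' to $v_{\xi,\xi_0}$ and $v_\xi$. Your residue computations (simple poles of $\theta'/\theta$ with residues $\mp1$, the pure double pole of $\wp$ extracting $\pa_\xi$) and the ellipticity check for the two vector fields supply precisely the details the paper leaves implicit.
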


\begin{eg*} 
Applying Ward's equation~\eqref{eq: Ward4paT} to $\XX = J(z)J(z_0)$ (with $b=0, \bfs\beta=\bfs0$), we derive the addition theorem for Weierstrass $\wp$-function:
$$\begin{vmatrix}
1 & \wp(z)  & \wp'(z) \\
1& \wp(w) & \wp'(w) \\
1 &\wp(z+w)& -\wp'(z+w) \\
\end{vmatrix} = 0.$$
\end{eg*}

\begin{eg*} 
Ward's equation~\eqref{eq: Ward4paT} with $\XX = T(z)$ (with $b=0, \bfs\beta=\bfs0$) gives rise to the following differential equation for $\wp:$
$$\wp\wp' = \frac1{12}\wp'''.$$
\end{eg*}

\begin{eg*} 
Applying Ward's equation~\eqref{eq: Ward4paT} to $\XX = \OO[\tau\cdot z_1- \tau\cdot z_2]$ (with $b=0, \bfs\beta=\bfs0$), we derive the addition theorem for Weierstrass $\zeta$-function:
$$\zeta(z+w) = \zeta(z) + \zeta(w) + \frac12\,\frac{\zeta''(z)-\zeta''(w)}{\zeta'(z)-\zeta'(w)}.$$
\end{eg*}

\begin{eg*} 
Ward's equation~\eqref{eq: Ward4paT} with $\XX_{\bfs\beta} = J_{\bfs\beta}(z), (\bfs\beta = 1\cdot q_1 - 1\cdot q_2)$ gives rise to
\begin{align}\label{eq: new addition0}
&\wp(z_1)\wp(z_2) - \big(\wp(z_1)+\wp(z_2)\big)\wp(z_1+z_2)-\frac12 \frac{\wp'(z_1)-\wp'(z_2)}{\wp(z_1)-\wp(z_2)}\wp'(z_1+z_2)=\\
&\wp(\ti z_1)\wp(\ti z_2) - \big(\wp(\ti z_1)+\wp(\ti z_2)\big)\wp(\ti z_1+\ti z_2)-\frac12 \frac{\wp'(\ti z_1)-\wp'(\ti z_2)}{\wp(\ti z_1)-\wp(\ti z_2)}\wp'(\ti z_1+\ti z_2) \nonumber
\end{align}
if $z_1+z_2 = \ti z_1+\ti z_2.$
It is clear that \eqref{eq: new addition0} follows from \eqref{eq: new addition} below. 
\end{eg*}

\introsubsec{Eguchi and Ooguri's version of Ward's equations}
To obtain Ward's equations on a torus for the insertion of the Virasoro field with a single node, one needs to consider a multivalued meromorphic vector field with a single simple pole. 
Let 
\begin{equation} \label{eq: Ooguri vector field}
\ti v_\xi(z) = \zeta(\xi-z) + 2\eta_1 z
\end{equation}
in the $\T_\Lambda$-uniformization. 

\begin{thm} \label{main: KM2'} \index{Ward's equations}
For  any tensor product $\XX_{\bfs\beta}$ of fields in the OPE family $\FF_{\bfs\beta},$ \begin{equation} \label{eq: EOKM_main} 
\E\,T_{\bfs\beta}(\xi)\XX_{\bfs\beta} =  \E\,T_{\bfs\beta}(\xi)\E\,\XX_{\bfs\beta} + \LL_{\ti v_\xi}^+\E\,\XX_{\bfs\beta}+2\pi i \, \frac\pa{\pa\tau} \,\E\,\XX_{\bfs\beta}
\end{equation}
in the $\T_\Lambda$-uniformization.
\end{thm}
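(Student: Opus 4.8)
\emph{Proof proposal.} The plan is to compute $\E\,T_{\bfs\beta}(\xi)\XX_{\bfs\beta}$ as a function of $\xi$ by a residue calculation on the period parallelogram $P$ with vertices $0,1,1+\tau,\tau,$ exploiting the fact that the vector field $\ti v_\xi$ of \eqref{eq: Ooguri vector field} fails to be elliptic by an explicit additive constant. First I would record the two lattice transformation laws of $\ti v_\xi.$ Since $\zeta$ has quasi-periods $2\eta_1,2\eta_2$ with $\eta_1=\zeta(1/2),$ the linear correction $2\eta_1 z$ is chosen precisely so that $\ti v_\xi(z+1)=\ti v_\xi(z),$ while Legendre's relation $2\eta_1\tau-2\eta_2=2\pi i$ gives the anomaly $\ti v_\xi(z+\tau)=\ti v_\xi(z)+2\pi i.$ This single constant $2\pi i$ is the source of the $\tau$-derivative term in \eqref{eq: EOKM_main}.

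Next, set $f(\eta):=\E\,T_{\bfs\beta}(\eta)\XX_{\bfs\beta}-\E\,T_{\bfs\beta}(\eta)\,\E\,\XX_{\bfs\beta}.$ Because translations of $\T_\Lambda$ are affine and hence have vanishing Schwarzian, $T_{\bfs\beta}$ is $\Lambda$-periodic in the $\T_\Lambda$-uniformization, so $f$ is an elliptic function of $\eta$ whose only poles (away from $\supp\,\bfs\beta$) are at the insertion points $z_j$ of $\XX_{\bfs\beta}.$ As $\ti v_\xi(\eta)\sim -1/(\eta-\xi)$ near its pole, one has $f(\xi)=-\Res_{\eta=\xi}\big[\ti v_\xi(\eta)f(\eta)\big].$ I would then deform the contour around $\xi$ to $\pa P,$ collecting the residues at the $z_j.$ At each $z_j$ the field $\ti v_\xi$ is a genuine holomorphic vector field, so the local Ward identity (Theorem~\ref{main: Ward identity}) identifies $\sum_j\Res_{\eta=z_j}[\ti v_\xi f]$ with $\LL_{\ti v_\xi}^+\E\,\XX_{\bfs\beta},$ the puncture-operator conjugation accounting for any insertions on $\supp\,\bfs\beta.$ The boundary integral over $\pa P$ is where the anomaly enters: the pair of sides related by $\eta\mapsto\eta+1$ cancels because $\ti v_\xi$ and $f$ are $1$-periodic, whereas on the pair related by $\eta\mapsto\eta+\tau$ the shift $\ti v_\xi(\eta+\tau)-\ti v_\xi(\eta)=2\pi i$ together with the $\tau$-periodicity of $f$ leaves exactly $-\oint_{[0,1]}f(\eta)\,\dd\eta.$ Collecting residues yields
\begin{equation*}
\E\,T_{\bfs\beta}(\xi)\XX_{\bfs\beta}-\E\,T_{\bfs\beta}(\xi)\,\E\,\XX_{\bfs\beta}=\LL_{\ti v_\xi}^+\E\,\XX_{\bfs\beta}+\oint_{[0,1]}\big(\E\,T_{\bfs\beta}(\eta)\XX_{\bfs\beta}-\E\,T_{\bfs\beta}(\eta)\,\E\,\XX_{\bfs\beta}\big)\,\dd\eta.
\end{equation*}

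It remains to evaluate the period integral. Using Corollary~\ref{dtau log Zbeta}, $\oint_{[0,1]}\E\,T_{\bfs\beta}(\eta)\,\dd\eta=2\pi i\,\pa_\tau\log Z_{\bfs\beta},$ so the assertion \eqref{eq: EOKM_main} is equivalent to the insertion formula
\begin{equation*}
\oint_{[0,1]}\E\,T_{\bfs\beta}(\eta)\XX_{\bfs\beta}\,\dd\eta=2\pi i\,\pa_\tau\log Z_{\bfs\beta}\cdot\E\,\XX_{\bfs\beta}+2\pi i\,\pa_\tau\E\,\XX_{\bfs\beta}=\frac{2\pi i\,\pa_\tau\big(Z_{\bfs\beta}\,\E\,\XX_{\bfs\beta}\big)}{Z_{\bfs\beta}},
\end{equation*}
which is precisely the generalization of Corollary~\ref{dtau log Zbeta} from the trivial string to an arbitrary $\XX_{\bfs\beta}.$

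I expect this moduli-variation identity to be the main obstacle, since everything preceding it is local and contour-theoretic, whereas here one must connect the abstract Virasoro field to the genuine $\tau$-dependence of the correlations; conceptually it says that inserting $\frac1{2\pi i}\oint_{[0,1]}T_{\bfs\beta}$ differentiates a correlation in the modulus $\tau,$ the stress tensor being the generator of the deformation that moves $\tau.$ I would establish it either by the explicit route, computing both sides for the vertex fields $\OO_{\bfs\beta}[\bfs\tau]$ of Theorem~\ref{main: EO} (where $\E\,\OO_{\bfs\beta}[\bfs\tau]$ and its $\tau$-derivative are available in closed form via $\theta,\wp,\zeta,\eta_1$) and extending to all of $\FF_{\bfs\beta}$ by OPE and linearity, or by the structural route, deriving the Polyakov-type formula $2\pi i\,\pa_\tau\log\langle\XX_{\bfs\beta}\rangle=\oint_{[0,1]}\langle T_{\bfs\beta}\,\XX_{\bfs\beta}\rangle/\langle\XX_{\bfs\beta}\rangle$ and noting that for the torus the Beltrami differential representing $\pa_\tau$ reduces the surface integral of $T_{\bfs\beta}$ to the period integral over $[0,1].$ As a consistency check, differentiating the asserted identity in $\xi$ and using $\pa_\xi\ti v_\xi=-\wp(\xi-\cdot)=v_\xi$ recovers Ward's equation \eqref{eq: Ward4paT}, confirming that the only remaining freedom is the $\xi$-independent term pinned down above.
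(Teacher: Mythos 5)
Your contour argument is, up to presentation, the paper's own proof of Theorem~\ref{EOKM}: the paper likewise works on the fundamental parallelogram with the multivalued field $\ti v_\xi$ (split there into $z$ and $\zeta(\xi-z)$, with Legendre's relation $\eta_1\tau-\eta_2=\pi i$ invoked at the end rather than at the outset), identifies the residues at the nodes and at $\supp\,\bfs\beta$ with $\LL_{\ti v_\xi}^+\E\,\XX_{\bfs\beta}$ via the Ward identity, and is left with exactly the boundary term $\oint_{[0,1]}\E\,A_{\bfs\beta}(\eta)\XX_{\bfs\beta}\,\dd\eta$ that you obtain. Your periodicity and anomaly computations for $\ti v_\xi$, the sign bookkeeping, and the observation that $\E\,A_{\bfs\beta}(\eta)\XX_{\bfs\beta}$ is elliptic in $\eta$ are all correct.

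The one genuine gap is the step you yourself flag: the identity $\frac1{2\pi i}\oint_{[0,1]}\E\,A_{\bfs\beta}(\eta)\XX_{\bfs\beta}\,\dd\eta=\pa_\tau\E\,\XX_{\bfs\beta}$ is not a consequence of Corollary~\ref{dtau log Zbeta} (which only treats the trivial string); it is precisely Theorem~\ref{main: KM} (= Theorem~\ref{lem: EOKM}), and in the paper it carries essentially all of the analytic weight of the result. The paper proves it first for $b=0$, $\bfs\beta=\bfs0$ by computing both sides explicitly for $\XX=\Phi(z,z_0)\Phi(z',z_0')$, using the Frobenius--Stickelberger pseudo-addition theorem for the Weierstrass $\zeta$-function together with the heat equation $2\pi i\,\pa_\tau\theta=\frac12\theta''$, then extends to the whole family $\FF$ by Wick's formula and by showing the class of fields satisfying the identity is closed under linear combinations, derivatives and OPE coefficients, and finally passes to general $\bfs\beta$ by inserting $\VV^\odot[\bfs\beta]$ (Theorem~\ref{main: Insertion}). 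Your proposed route (a) is workable but should take the two-point functions of the Gaussian free field, not the vertex fields, as the base case: the extension ``by OPE and linearity'' is only justified once closure under taking OPE coefficients is established, which is the argument the paper actually makes. Your route (b), via a Beltrami differential representing $\pa_\tau$, is not developed in the paper and would need independent justification. So the proposal has the right skeleton, but the moduli-variation lemma on which it rests still has to be proved.
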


The above theorem for the tensor product $\XX_{\bfs\beta}$ of differentials $X_j$ in the OPE family $\FF_{\bfs\beta}$ with $\bfs\beta = \bfs 0$ first appeared in the paper \cite{EO87} by Eguchi and Ooguri:
\begin{align} \label{main: EO87}
\E\,T_{\bfs0}(\xi)\XX_{\bfs0} &=  \E\,T_{\bfs0}(\xi)\,\E\,\XX_{\bfs0} +2\pi i \, \frac\pa{\pa\tau} \,\E\,\XX_{\bfs0}
 \\
&+\sum_j  \big((\zeta(\xi-z_j) +2\eta_1z_j)\pa_j  + \lambda_j( \wp(\xi-z_j) + 2\eta_1)\big)\E\,\XX_{\bfs0}, \nonumber
\end{align}
where $\XX_{\bfs0} = X_1(z_1)\cdots X_n(z_n).$

\begin{eg*} Applying $\XX_{\bfs0} = J(z)J(z_0)$ $(z\ne z_0)$ in the case $b=0$ to \eqref{main: EO87}, we obtain 
\begin{align}\label{eq: new addition}
\big(\wp(z_1+z_2)&-\wp(z_1)\big) \big(\wp(z_1+z_2)-\wp(z_2)\big) - \frac12\frac{\wp'(z_1)-\wp'(z_2)}{\wp(z_1)-\wp(z_2)}\,\wp'(z_1+z_2)\\ &= \frac12\wp''(z_1+z_2). \nonumber
\end{align}
\end{eg*}

It is quite remarkable to observe that this choice of $\ti v_\xi$ gives rise to the so-called Teichm\"uller term
$\pa_\tau \E\,\XX_{\bfs\beta}$ in \eqref{eq: EOKM_main}. 
In the physics literature, its appearance is explained in term of the Virasoro generator $L_0,$ see \cite{EO87}.
We rather present the following theorem (Eguchi-Ooguri equations) and prove it by using the pseudo-addition theorem for the Weierstrass $\zeta$-function.    

\begin{thm} \label{main: KM} \index{Eguchi-Ooguri equations}
For any tensor product $\XX_{\bfs\beta}$ of fields in the OPE family $\FF_{\bfs\beta},$
$$
\frac1{2\pi i}\oint_{[0,1]} \E\,T_{\bfs\beta}(\xi)\XX_{\bfs\beta}\,\dd \xi =  \E\,T_{\bfs\beta}(\xi)\E\,\XX_{\bfs\beta} + \frac{\pa}{\pa\tau}\,\E\,\XX_{\bfs\beta}.
$$
\end{thm}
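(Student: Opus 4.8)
The plan is to obtain the identity by averaging one of the Ward's equations over a full horizontal period. The cleanest route is to apply $\frac1{2\pi i}\oint_{[0,1]}(\cdot)\,\dd\xi$ to the Eguchi--Ooguri equation \eqref{eq: EOKM_main} of Theorem~\ref{main: KM2'}. Since $\oint_{[0,1]}\dd\xi=1$, the Teichm\"uller term $2\pi i\,\pa_\tau\E\,\XX_{\bfs\beta}$, being independent of $\xi$, immediately reproduces $\pa_\tau\E\,\XX_{\bfs\beta}$, while the disconnected term averages to the zero mode $\frac1{2\pi i}\oint_{[0,1]}\E\,T_{\bfs\beta}(\xi)\,\dd\xi\cdot\E\,\XX_{\bfs\beta}$, which by Corollary~\ref{dtau log Zbeta} equals $\pa_\tau\log Z_{\bfs\beta}\cdot\E\,\XX_{\bfs\beta}$---this is what the statement abbreviates as $\E\,T_{\bfs\beta}(\xi)\E\,\XX_{\bfs\beta}$. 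Everything then comes down to showing that the averaged Lie-derivative term $\frac1{2\pi i}\oint_{[0,1]}\LL_{\ti v_\xi}^+\E\,\XX_{\bfs\beta}\,\dd\xi$ contributes nothing.

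To evaluate that term I would expand $\LL_{\ti v_\xi}^+=\sum_j\big(\ti v_\xi(z_j)\pa_j+\lambda_j\ti v_\xi'(z_j)\big)$ with $\ti v_\xi(z)=\zeta(\xi-z)+2\eta_1 z$ and $\ti v_\xi'(z)=\wp(\xi-z)+2\eta_1$, and compute the two scalar period integrals using the quasi-periodicity of the Weierstrass and Jacobi functions---this is precisely where the pseudo-addition theorem for $\zeta$ enters. The conformal-weight contributions vanish on the nose because $\oint_{[0,1]}\wp(\xi-z_j)\,\dd\xi=\zeta(-z_j)-\zeta(1-z_j)=-2\eta_1$ (from $\zeta(u+1)=\zeta(u)+2\eta_1$), so $\oint_{[0,1]}\big(\wp(\xi-z_j)+2\eta_1\big)\,\dd\xi=0$. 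For the derivative contributions, writing $\zeta(u)=\frac{\theta'(u)}{\theta(u)}+2\eta_1 u$ and using $\theta(\xi+1)=-\theta(\xi)$ gives $\oint_{[0,1]}\frac{\theta'(\xi-z_j)}{\theta(\xi-z_j)}\,\dd\xi=i\pi$, a nonzero constant independent of $z_j$; hence $\oint_{[0,1]}\ti v_\xi(z_j)\,\dd\xi=\eta_1+i\pi$ is also independent of $z_j$, and the derivative part collapses to the single rigid-translation term $(\eta_1+i\pi)\sum_j\pa_j\E\,\XX_{\bfs\beta}$.

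It thus remains to kill the translation term $\sum_j\pa_j\E\,\XX_{\bfs\beta}$. For $\bfs\beta=\bfs0$ this is immediate from the translation invariance of the torus correlation. The instructive---and more delicate---way to see the same structure, and the one I expect matches the intended ``pseudo-addition'' proof, is to start instead from the two-node equation of Theorem~\ref{main: KM2} and integrate $v_{\xi,\xi_0}(z)=\frac{\theta'(\xi-z)}{\theta(\xi-z)}-\frac{\theta'(\xi_0-z)}{\theta(\xi_0-z)}$ in $\xi$; the same computation yields $\oint_{[0,1]}v_{\xi,\xi_0}(z)\,\dd\xi=i\pi-\frac{\theta'(\xi_0-z)}{\theta(\xi_0-z)}$, i.e.\ the averaged two-node field collapses to a field with a single simple pole at $\xi_0$ plus a constant. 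Because no single-valued elliptic vector field on $\T_\Lambda$ can have a single simple pole (its residues must sum to zero), this field is necessarily multivalued, and its constant defect under $z\mapsto z+\tau$---governed by the Legendre relation---is exactly what the moduli derivative must absorb; this is the remarkable mechanism by which $\pa_\tau$ appears.

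The main obstacle will be this last step: correctly separating the genuine Teichm\"uller deformation from trivial rigid translations, and, for $\bfs\beta\ne\bfs0$, accounting for the transport of the background charges. The translation term is not literally zero when $\bfs\beta\ne\bfs0$, since moving the insertion points while fixing $\supp\,\bfs\beta$ is not a symmetry; one must invoke Theorem~\ref{main: Ward identity} to see that $\LL_v^+$ acts through conjugation by the puncture operator $\PP_{\bfs\beta}=\CC_{(b)}[\bfs\beta]$, so that the dependence of $\PP_{\bfs\beta}$ on $\supp\,\bfs\beta$ and on $\tau$ supplies precisely the compensating contributions. Once the multivaluedness and this puncture-operator bookkeeping are handled, assembling the weight, translation, disconnected, and Teichm\"uller pieces yields the stated identity; the weight-term cancellation and the identification of the disconnected zero mode via Corollary~\ref{dtau log Zbeta} are routine by comparison.
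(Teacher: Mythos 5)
Your proposal is circular along its primary route, and the fallback route never closes the argument. In the paper's logical order, Theorem~\ref{main: KM2'} is a \emph{consequence} of Theorem~\ref{main: KM}: the Green's-formula computation with the multivalued vector field $\ti v_\xi$ produces a leftover boundary term proportional to $\int_{\gamma_1}\E\,A_{\bfs\beta}(z)\XX_{\bfs\beta}\,\dd z$, and it is precisely Theorem~\ref{main: KM} (together with the Legendre relation $\eta_1\tau-\eta_2=\pi i$) that identifies this term with $2\pi i\,\pa_\tau\E\,\XX_{\bfs\beta}$. So integrating \eqref{eq: EOKM_main} over $[0,1]$ to recover Theorem~\ref{main: KM} assumes exactly what is to be proved. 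Your alternative route through Theorem~\ref{main: KM2} has the same defect in a different guise: averaging $v_{\xi,\xi_0}$ over $\xi$ reduces the claim to the assertion that the period defect of the resulting multivalued field ``is exactly what the moduli derivative must absorb'' --- but that assertion \emph{is} the theorem, and you offer no argument for it. Nowhere in the proposal is the cycle integral $\oint_{[0,1]}\E\,A_{\bfs\beta}(\xi)\XX_{\bfs\beta}\,\dd\xi$ actually connected to a $\tau$-derivative.

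The missing ingredient is the heat equation $2\pi i\,\pa_\tau\theta=\tfrac12\theta''$ combined with a direct computation on a generating class of fields. The paper first treats $b=0$, $\bfs\beta=\bfs0$ and $\XX=\Phi(z,z_0)\Phi(z',z_0')$: Wick's calculus reduces the cycle integral of $\E\,A(\xi)\XX$ to integrals of products $\frac{\theta'(\xi-z)}{\theta(\xi-z)}\frac{\theta'(\xi-z')}{\theta(\xi-z')}$, the Frobenius--Stickelberger pseudo-addition theorem \eqref{eq: pseudo-addition4zeta} converts these into $\theta''/\theta$ terms, and the heat equation identifies the outcome with $2\pi i\,\pa_\tau$ of \eqref{eq: EPhi1}. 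The identity is then propagated to the whole OPE family by showing that the class of fields satisfying it is closed under Wick products, derivatives and OPE coefficients --- a step your proposal omits entirely --- and the case $\bfs\beta\ne\bfs0$ is handled by inserting $\VV^{\odot}[\bfs\beta]$ via Theorem~\ref{main: Insertion}, using $\wh A_{\bfs 0}=T_{\bfs\beta}-\E\,T$ and $\PP_{\bfs\beta}=\E\,\OO_{\bfs 0}[\bfs\beta]$, rather than by puncture-operator bookkeeping inside a Lie-derivative average. Your period computations ($\oint_{[0,1]}(\wp(\xi-z_j)+2\eta_1)\,\dd\xi=0$ and $\oint_{[0,1]}\ti v_\xi(z_j)\,\dd\xi=\eta_1+i\pi$, independent of $z_j$) are correct and would serve as a consistency check once the theorem is known, but they cannot substitute for the direct derivation.
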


\begin{eg*} 
Applying Theorem~\ref{main: KM} to $\XX = J(z)\overline{J(z)},$ we obtain the following well-known relation:
$$\eta_1 = -\frac16\frac{\theta'''(0)}{\theta'(0)}.$$  
\end{eg*}

\begin{eg*} 
A simple application of Theorem~\ref{main: KM} to $\XX = \Phi*\Phi(z)$ gives rise to 
$$ 2\pi i \, \frac \pa {\pa_\tau} \log {\theta'(0)} =-3\eta_1. $$
\end{eg*} 

\begin{eg*}
Applying Theorem~\ref{main: KM} to $\XX = J(z)J(z_0)$ with $z\ne z_0,$ we obtain the following well-known formula:
$$2\pi i \, \frac{\pa}{\pa\tau} \, \wp(z) = 2\wp(z)^2 + \zeta(z)\wp'(z) - \frac{g_2}3 - 2\eta_1(2\wp(z)+z\wp'(z)),$$
where $g_2$ appears in the expansion of $\wp$ around $z=0,$
$$\wp(z) = \frac1{z^2} + \frac{g_2}{20}z^2 + O(z^4).$$
\end{eg*}

\begin{eg*}
A simple application of Theorem~\ref{main: KM} to $\XX = T(z)$ gives rise to 
$$2 \pi i\, \frac{\pa}{\pa\tau}\eta_1 = -2\eta_1^2  + \frac1{24} g_2.$$
\end{eg*}

\introsubsec{Singular vectors and BPZ equations}
Given a modification parameter $b,$ let $a$ be one of the solutions to the quadratic equation $2x(x+b) = 1$ for $x.$ 
In terms of the action of Virasoro generators $L_n$ (the modes of $T_{\bfs\beta}$) defined by 
$$L_n(z):=\frac1{2\pi i}\oint_{(z)}(\zeta-z)^{n+1} T_{\bfs\beta}(\zeta)\,\dd \zeta,$$
the OPE exponentials $\OO_{\bfs\beta}(z)\equiv \OO_{\bfs\beta}^{(a,\bfs\tau)}(z) := \OO_{\bfs\beta}[a\cdot z + \bfs \tau]$ $(\bfs\tau = \sum \tau_j z_j, j\le0)$ with the neutrality condition $a + \sum \tau_j = 0$ satisfy the level two degeneracy equations 
$$\big(L_{-2}(z)-\frac1{2a^2} L_{-1}^2(z)\big)\OO_{\bfs\beta}(z)=0,$$
see Proposition~\ref{level2degeneracy}.
Combining these equations with Ward's equations~\eqref{eq: EOKM_main}, we derive the following types of Belavin-Polyakov-Zamolodchikov equations (BPZ equations) on a torus. \index{BPZ equations}
See Theorem~\ref{BPZ0} for the BPZ equations on the Riemann sphere. 

Let the vector field $\ti v_z$ be given by \eqref{eq: Ooguri vector field}.

\begin{thm} \label{BPZ1}
Let $\OO_{\bfs\beta}(z) = \OO_{\bfs\beta}^{(a,\bfs\tau)}(z)$ and $\lambda = \frac12 a^2 -ab.$ 
If $z\notin\supp\,\bfs\beta \cup \supp\,\bfs\tau,$ then for any tensor product $\XX_{\bfs\beta}= X_1(z_1)\cdots X_n(z_n)$ of fields $X_j$ in $\FF_{\bfs\beta},$ we have 
\begin{align*}
\frac1{2a^2}\pa_z^2\E\,\OO_{\bfs\beta}(z)\XX_{\bfs\beta} &=  \E\,T_{\bfs\beta}(z) \E\,\OO_{\bfs\beta}(z)\XX_{\bfs\beta}+ 2\pi i\,\frac{\pa}{\pa\tau} \E\,\OO_{\bfs\beta}(z)\XX_{\bfs\beta}\\
&+(2\eta_1z\pa_{z} + 2\lambda \eta_1)\E\,\OO_{\bfs\beta}(z)\XX_{\bfs\beta}+ \check\LL_{\ti v_z}^+ \E\,\OO_{\bfs\beta}(z)\XX_{\bfs\beta}
\end{align*}
in the $\T_\Lambda$-uniformization.
Here, the Lie derivative operator $\check\LL_{\ti v_z}^+$ does not apply to the $z$-variable. 
\end{thm}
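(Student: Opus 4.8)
The plan is to combine the level-two degeneracy of Proposition~\ref{level2degeneracy} with the Eguchi--Ooguri form of Ward's equation~\eqref{eq: EOKM_main}, extracting the $L_{-2}$-mode of $T_{\bfs\beta}$ by a residue computation. Since $\OO_{\bfs\beta}(z)$ is primary of dimension $\lambda = \frac12 a^2 - ab$, the mode $L_{-1}(z)$ acts as $\pa_z$, so $L_{-1}^2(z)\OO_{\bfs\beta}(z) = \pa_z^2\OO_{\bfs\beta}(z)$, and the degeneracy relation $\big(L_{-2}(z) - \frac1{2a^2}L_{-1}^2(z)\big)\OO_{\bfs\beta}(z) = 0$ turns the left-hand side into
$$\frac1{2a^2}\pa_z^2\,\E\,\OO_{\bfs\beta}(z)\XX_{\bfs\beta} = \E\,\big(L_{-2}(z)\OO_{\bfs\beta}(z)\big)\XX_{\bfs\beta} = \frac1{2\pi i}\oint_{(z)}\frac{\E\,T_{\bfs\beta}(\xi)\OO_{\bfs\beta}(z)\XX_{\bfs\beta}}{\xi - z}\,\dd\xi,$$
where the contour encircles $z$ only, which is possible since $z\notin\supp\,\bfs\beta\cup\supp\,\bfs\tau$ keeps the other insertions and the background charges away from $z$.

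Next I would substitute Ward's equation~\eqref{eq: EOKM_main}, with insertion point $\xi$ and string $\OO_{\bfs\beta}(z)\XX_{\bfs\beta}$, into the integrand, abbreviating $F = \E\,\OO_{\bfs\beta}(z)\XX_{\bfs\beta}$ (which does not depend on $\xi$). The term $\E\,T_{\bfs\beta}(\xi)\,F$ contributes $\E\,T_{\bfs\beta}(z)\,F$ by Cauchy's formula, because $\E\,T_{\bfs\beta}$ has its only singularities on $\supp\,\bfs\beta$ and is therefore holomorphic near $z$. The Teichm\"uller term $2\pi i\,\pa_\tau F$, also $\xi$-independent, contributes $2\pi i\,\pa_\tau F$ since $\frac1{2\pi i}\oint_{(z)}(\xi-z)^{-1}\,\dd\xi = 1$. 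These reproduce the first two terms of the asserted identity.

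The remaining and most delicate contribution is the Lie derivative term $\LL_{\ti v_\xi}^+F$, with $\ti v_\xi(w) = \zeta(\xi - w) + 2\eta_1 w$. Using the Leibniz rule I would split $\LL_{\ti v_\xi}^+ = \LL_{\ti v_\xi}^{+,(z)} + \check\LL_{\ti v_\xi}^+$ into the piece acting on the $\OO_{\bfs\beta}(z)$-slot and the piece $\check\LL_{\ti v_\xi}^+$ acting on $z_1,\dots,z_n$. Since $\ti v_\xi(z_j)$ is holomorphic in $\xi$ at $\xi = z$ for each $z_j\neq z$, the latter gives $\frac1{2\pi i}\oint_{(z)}(\xi - z)^{-1}\check\LL_{\ti v_\xi}^+F\,\dd\xi = \check\LL_{\ti v_z}^+F$, the last term of the statement. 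For the $z$-slot, primality yields $\LL_{\ti v_\xi}^{+,(z)}F = \big(\ti v_\xi(z)\,\pa_z + \lambda\,(\pa_w\ti v_\xi)(z)\big)F$ with $\ti v_\xi(z) = \zeta(\xi - z) + 2\eta_1 z$ and $(\pa_w\ti v_\xi)(z) = \wp(\xi - z) + 2\eta_1$. The main obstacle is the residue at $\xi = z$: writing $u = \xi - z$, the expansions $\zeta(u) = u^{-1} + O(u^3)$ and $\wp(u) = u^{-2} + O(u^2)$ give $\zeta(u)/u = u^{-2} + O(u^2)$ and $\wp(u)/u = u^{-3} + O(u)$, neither of which has a $u^{-1}$ term, so the singular Weierstrass contributions are residue-free and only the affine pieces $2\eta_1 z$ and $2\eta_1$ survive, producing $(2\eta_1 z\,\pa_z + 2\lambda\eta_1)F$. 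Collecting the four contributions gives the claim. The crucial feature, and the reason one must use this particular anomaly-corrected field $\ti v_\xi$, is exactly that these Weierstrass poles carry no residue, so the equation closes with precisely the stated $2\eta_1$-corrections alongside the Teichm\"uller derivative.
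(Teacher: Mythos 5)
Your argument is correct and is essentially the paper's own proof: both combine the level-two degeneracy equation with Ward's equation \eqref{eq: EOKM_main} and extract the zeroth OPE coefficient $L_{-2}(z)\OO_{\bfs\beta}(z)$, the paper by subtracting $\Sing_{\xi\to z}$ and letting $\xi\to z$, you by the equivalent contour integral $\frac1{2\pi i}\oint_{(z)}(\xi-z)^{-1}(\cdot)\,\dd\xi$. Your residue bookkeeping (the Weierstrass poles of $\zeta(\xi-z)/(\xi-z)$ and $\wp(\xi-z)/(\xi-z)$ carrying no residue, leaving only the $2\eta_1$-corrections) is exactly the paper's observation that $\zeta(u)-1/u$ and $\wp(u)-1/u^2$ vanish as $u\to0$.
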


For example, if $X_j$'s are $(\lambda_j,\lambda_{*j})$-differentials and if $\supp\,\bfs\beta \cap \supp\,\bfs\tau = \emptyset,$ then $\check\LL_{\ti v_z}^+ \E\,\OO_{\bfs\beta}(z)\XX_{\bfs\beta}$ reads as 
\begin{align*}
\check\LL_{\ti v_z}^+ \E\,\OO_{\bfs\beta}(z)\XX_{\bfs\beta} &= \sum_j  \big((\zeta(z-z_j) +2\eta_1z_j)\pa_j  + \lambda_j( \wp(z-z_j) + 2\eta_1)\big) \E\,\OO_{\bfs\beta}(z)\XX_{\bfs\beta} \\
&+ \sum_k (\zeta(z-q_k) +2\eta_1q_k)\pa_{q_k} \E\,\OO_{\bfs\beta}(z)\XX_{\bfs\beta} .
\end{align*}
To the best of our knowledge, the above BPZ equation for the differentials in the case of the Ising model ($a^2 = 2/3, \lambda =1/2$) with $\bfs\beta = \bfs0$ first appeared in \cite{EO87}. 

It follows from the so-called fusion rule (e.g. see \cite{CFT,Dubedat15}) that the OPE exponentials $\OO_{\bfs\beta}(z) \equiv \OO_{\bfs\beta}^{(2a,\bfs\tau)}(z):=\OO_{\bfs\beta}[2a\cdot z + \bfs\tau]$  satisfy the level three degeneracy equations 
$$\Big(\frac1{8a^2}L_{-1}^3 -  L_{-2}L_{-1} + (2a^2-\frac12)L_{-3}\Big)\OO_{\bfs\beta}(z),$$
see Proposition~\ref{level3degeneracy}.
We combine Ward's equations with the above level three degeneracy equations to derive certain types of BPZ equations in the genus zero case (Theorem~\ref{BPZl3g0}) and in the genus one case (Theorem~\ref{BPZl3g1}). 
See \cite{EO88,EO89} for the BPZ equations associated with the level three/six degeneracy equations for the differentials in the case of the Ising model.

\endgroup
\setcounter{thm}{\thetmp}

\section{Gaussian free field on a compact Riemann surface} \label{s: GFF}

We define the Gaussian free field $\Phi$ on a compact Riemann surface $M$ of genus $g$ as a bi-variant Fock space field. We extend this definition to a divisor with a certain neutrality condition. 
The Gaussian free field is the simplest example of distributional Fock space fields. 
It can be represented by a linear map $f \mapsto \Phi(f)$ from a space of test functions $f$ with the neutrality condition $\NC_{(0)},$ or with integral zero, to the $L^2$ space of random variables. 

\subsection{Bipolar Green's function}
In general, a Green's function does not exist for a compact Riemann surface $M.$
However, it has a bipolar Green's function. \index{bipolar Green's function}
Let $p,q$ be distinct marked points of $M.$
By definition, \emph{bipolar Green's function} $z\mapsto G_{p,q}(z)$ with poles at $p$ and $q$ is harmonic on $M\setminus\{p,q\},$ 
and satisfies
\begin{align*}
G_{p,q}(z) &= \phantom{-}\log\frac1{|z-p|} + O(1) \qquad (z\to p),\\
 G_{p,q}(z) &= -\log\frac1{|z-q|} + O(1) \qquad (z\to q)
\end{align*}
(in some/any chart $\phi.$).
We often use the same letter for the point $p$ on $M$ and $\phi(p)$ on the coordinate plane.
Note that a bipolar Green's function is not uniquely determined. 
However, it is unique up to additive constants.

We denote by $\DD^{(0,0)}$ the collection of non-random $(0,0)$-differentials. 
See Subsection~\ref{ss: Conformal Fock space fields} for basic properties of differentials. 
The \emph{Sobolev space} \index{Sobolev space} $\WW(M)$ is defined as the completion of $\DD^{(0,0)} / \C$ with respect to the scalar product
$$(f,g)_{\nabla} = \int \pa f \overline{\pa g} +\int \bp f \pa \bar g.$$
Since $M$ is closed, the integrals in the right-hand side are equal and 
$$(f,g)_{\nabla} = 2\int f(-\pa\bp\bar g).$$

We write $\DD_{0}^{(1,1)}$ for the space of $(1,1)$-differentials with integral zero.
The \emph{energy space} \index{energy space} $\EE(M)$ is the completion of the space $\DD_{0}^{(1,1)}$ with respect to scalar product inherited from $ \WW(M)$ such that 
$$-2\pa\bp : \WW(M) \to \EE(M) $$
is a unitary operator. 

We can think of $G_{p,q}$ as ``generalized" elements of $\WW(M)$ so $\pa\bp$ maps it to the ``generalized" elements of $\EE(M).$
Indeed, we have 
\begin{equation} \label{eq: pabpG}
-\pa\bp G_{p,q} = \frac\pi2(\delta_p-\delta_q),
\end{equation}
in the sense of distribution. 
Bipolar Green's function satisfies the bilinear relation:
\begin{equation} \label{eq: bilinear relation}
\Big(\,\frac1\pi(G_{p,q},G_{\ti p,\ti q})_\nabla =\Big) G_{p,q}(\ti p)-G_{p,q}(\ti q) = G_{\ti p,\ti q}(p)-G_{\ti p,\ti q}(q).
\end{equation}
Again, we formally take the scalar product of ``generalized" elements $G_{p,q}$ and $G_{\ti p,\ti q}$ of $\WW(M)$.

\subsubsec{The genus zero case} 
Bipolar Green's function for the Riemann sphere $M = \widehat \C$ is given by 
$$G_{p,q}(z) = \log\left|\frac{z-q}{z-p}\right|, \quad G_{p,\infty}(z) = \log \frac1{|z-p|}$$
up to an additive constant.
Note that 
$(G_{p,q},G_{\ti p,\ti q})_\nabla/\pi$ is the logarithm of the modulus of the cross-ratio $\lambda:$
$$\lambda(p,q;\ti p,\ti q)=\frac{(\ti p -q)(\ti q-p)}{(\ti p-p)(\ti q-q)}.$$

\subsubsec{The genus one case} 
Let us consider the complex torus $\T_\Lambda:=\C/\Lambda$ of genus one, where $\Lambda = \Z + \tau\Z$ is the group generated by $z\mapsto z+1, z\mapsto z + \tau.$ 
Here, the modular parameter $\tau$ is in the upper half-plane $\H:=\{z\in\C\,|\, \Im\, z > 0\}.$
Bipolar Green's function for $M = \T_\Lambda$ is given by 
$$G_{p,q}(z) = \log \left|\frac{\theta(z-q)}{\theta(z-p)}\right| -2 \pi\,\frac{\Im(p-q)\,\Im\,z}{\Im\,\tau},$$
up to an additive constant, see \cite{EO87}.
Here $\theta$ is one of the four \emph{Jacobi theta functions}: \index{Jacobi theta function}
$$\theta(z) \equiv \theta_1(z\,|\,\tau) = 2\sum_{n=1}^\infty (-1)^{n-1} \ee^{\pi i\tau(n-\frac12)^2}\sin(2n-1)\pi z.$$
In the $\T_\Lambda$-uniformization, 
$$\frac1\pi\,(G_{p,q},G_{\ti p,\ti q})_\nabla = \log |\lambda(p,q;\ti p,\ti q)| - 2 \pi\,\frac{\Im(p-q)\Im(\ti p-\ti q)}{\Im\,\tau},$$
where the cross-ratio $\lambda(p,q;\ti p,\ti q)$ is defined by 
\begin{equation}\label{eq: cross-ratio1}
\lambda(p,q;\ti p,\ti q)=\frac{\theta(\ti p -q)\theta(\ti q-p)}{\theta(\ti p-p)\theta(\ti q-q)}.
\end{equation}

\subsubsec{The higher genus case}
The previous representation for $(G_{p,q},G_{\ti p,\ti q})_\nabla/\pi$ can be extended to a compact Riemann surface $M$ of genus $g\ge 2.$
We review some basic properties of Riemann theta function $\Theta$ in Appendix~\ref{sec: Theta}. 
We consider the lattice $\Lambda = \Z^g + \PM \Z^g$ in $\C^g$ associated with the period matrix $\PM$ of $M$ and set 
$$\T_\Lambda \equiv \T_\Lambda^{g}:=\C^{g}/\Lambda.$$
We now express bipolar Green's function for $M$ in terms of the Riemann theta function $\Theta$: up to an additive constant, 
$$G_{p,q}(z) =\log\Big|\frac{\Theta(\AA(z) - \AA(q)-e)}{\Theta(\AA(z) - \AA(p)-e)}\Big| - 2\pi (\Im\,\PM)^{-1}\Im(\AA(p) - \AA(q))\cdot \Im\,\AA(z),$$
see Theorem~\ref{Green M}. 
Here, $\AA:M\to \T_\Lambda$ is the Abel-Jacobi map. 
The point $e \in \T_\Lambda$ is chosen to be an arbitrary element of the theta divisor, i.e. $\Theta(e) = 0$ such that neither of maps $z\mapsto\Theta(\AA(z)-\AA(p)-e), z\mapsto\Theta(\AA(z)-\AA(q)-e)$ is identically zero. 

It is well known that as a multivalued function, the generalized cross-ratio 
\begin{equation}\label{eq: cross-ratio2}
\lambda(p,q;\ti p,\ti q) = \frac{\Theta(\AA(\ti p) - \AA(q)-e)\Theta(\AA(\ti q) - \AA(p)-e)}{\Theta(\AA(\ti p) - \AA(p)-e)\Theta(\AA(\ti q) - \AA(q)-e)}
\end{equation}
of four points $p,q,\ti p,\ti q$ on $M$ does not depend on the choice of $e,$ see \cite[VII]{FK}. 
Let
$$\theta(z) = \Theta(\AA(z) - e).$$
Since the Abel-Jacobi map can be extended to the group of divisors on $M,$ we rewrite the generalized cross-ratio in terms of $\theta:$
$$\lambda(p,q;\ti p,\ti q)=\frac{\theta(\ti p-q)\theta(\ti q-p)}{\theta(\ti p-p)\theta(\ti q-q)}.$$ 
With this notation, we have 
$$\frac1\pi\,(G_{p,q},G_{\ti p,\ti q})_\nabla = \log |\lambda(p,q;\ti p,\ti q)| - 2 \pi\,(\Im\,\PM)^{-1}\, \Im(\AA(p) - \AA(q))\cdot\Im(\AA(\ti p)-\AA(\ti q)).$$

We can formally think of bipolar Green's function $G_{p,q}$ as the difference of two formal elements $G_p$ and $G_q,$
$$G_{p,q}(z) = G_p(z) - G_q(z).$$
For example, on the Riemann sphere $\wh\C,$ one can choose 
$$G_p(z) = -\log|z-p|.$$
In the torus $ \T_\Lambda, (\Lambda = \Z + \tau\Z),$ one can choose 
$$G_p(z) = \log \frac1{|\theta(z-p)|}-2\pi\, \frac
{\Im\,p\,\Im\,z}{\Im\,\tau}.$$
In the higher genus case, 
$$G_p(z) = \log \frac1{|\Theta(\AA(z) - \AA(p)-e)|}- 2\pi\,(\Im\,\PM)^{-1}\Im\,\AA(p)\cdot\Im\,\AA(z)$$ 
with a generic $e$ in the theta divisor. 

The bilinear relation 
$$G_{p,q}(\ti p)-G_{p,q}(\ti q) = G_{\ti p,\ti q}(p)-G_{\ti p,\ti q}(q)$$
for bipolar Green's function 
can be verified by formal representation $G_{p,q}(z) = G_p(z) - G_q(z).$
In a similar way, the Jacobi identity 
$$G_{p,q}(z)+G_{q,z}(p)+G_{z,p}(q)=0$$
for bipolar Green's function can be verified. 

For all $\mu\in\EE(M), q\in M,$ we have 
\begin{equation}\label{eq: E-norm}
\|\mu\|_{\EE(M)}^2 = \iint_{(z,p)} 2 G_{p,q}(z) \mu(z)\overline{\mu(p)},
\end{equation}
where the integral is taken over the variables over $z$ and $p.$
To verify this, let $\mu = -2\pa\bp f$ for some $f\in\WW(M).$
It follows from \eqref{eq: pabpG} that 
$$\iint_{(z,p)} 2 G_{p,q}(z) \mu(z)\overline{\mu(p)} = \int f(p)\overline{\mu(p)} = 2\int f(-\pa\bp \bar f).$$
Clearly, we have $\|f\|_{\WW(M)}^2 = \|\mu\|_{\EE(M)}^2.$

\subsection{Gaussian free field} \label{ss: GFF}

Recall that the energy space $\EE=\EE(M)$ is the completion of the space $\DD_0^{(1,1)}(M)$ 
of $(1,1)$-differentials $\mu$ satisfying the neutrality condition $(\NC_0):$
$$\int\mu = 0$$
with respect to the norm $\|\cdot\|_{\EE}:$
$$\|\mu\|_{\EE}^2 = \iint 2 G_{\zeta,\eta}(z) \mu(z)\overline{\mu(\zeta)}$$
for all $\eta\in M.$

The \emph{Gaussian free field} \index{Gaussian free field} $\Phi$ is a Gaussian field indexed by the energy space $\EE(M),$
$$\Phi:\EE(M) \to L^2(\Omega,\P),$$
where $(\Omega,\P)$ is some probability space.
By definition, $\Phi$ is an isometry such that the image consists of centered Gaussian random variables. 

We introduce the Fock space functionals $\Phi(z,z_0)$ as ``generalized" elements of Fock space 
$$\Phi(z,z_0) = \Phi(\delta_z - \delta_{z_0}),$$
where $\delta_z - \delta_{z_0}$ is the ``generalized" elements of $\EE(M).$
For each $z\in M,$ we choose a family of test functions $f_{\ve,z}$ supported in a disc of radius $\ve$ about $z$ such that $f_{\ve,z} \to \delta_z$ as $\ve\to 0.$
We now define Gaussian random variables
$$\Phi_\ve(z,z_0) = \Phi(f_{\ve,z}-f_{\ve,z_0}).$$
Then they are centered and their covariances are 
$$\E[\Phi_\ve(p,q)\Phi_\ve(\ti p,\ti q)] = (f_{\ve,p}-f_{\ve,q},f_{\ve,\ti p}-f_{\ve,\ti q})_\EE$$
for $\ti p, \ti q \notin\{p,q\}.$
It follows from \eqref{eq: E-norm} and the bilinear relation~\eqref{eq: bilinear relation} that
$$\E[\Phi_\ve(p,q)\Phi_\ve(\ti p,\ti q)] \to 2(G_{p,q}(\ti p)-G_{p,q}(\ti q))= \frac2\pi\,(G_{p,q},G_{\ti p,\ti q}) _\nabla$$
as $\ve\to0.$
We now define the correlation function of Gaussian free field by
\begin{equation} \label{eq: correlator of Phi}
\E[\Phi(p,q)\Phi(\ti p,\ti q)] = 2(G_{p,q}(\ti p)-G_{p,q}(\ti q)),\quad (\ti p, \ti q \notin\{p,q\})
\end{equation}
so that correlation functionals can be approximated by genuine random variables.

\subsubsec{The genus zero case} 
In the $\wh\C$-uniformization, we have 
\begin{equation} \label{eq: EPhi0}
\E\, \Phi(p,q)\Phi(\ti p,\ti q) = \log|\lambda(p,q;\ti p,\ti q)|^2 = 2\log\left|\frac{(\ti p -q)(\ti q-p)}{(\ti p-p)(\ti q-q)}\right|.
\end{equation}
\subsubsec{The genus one case} 
In the $\T_\Lambda$-uniformization, we have 
\begin{equation}\label{eq: EPhi1}
\E\, \Phi(p,q)\Phi(\ti p,\ti q) = \log|\lambda(p,q;\ti p,\ti q)|^2 -4\pi\,\frac{\Im(p-q)\,\Im\,(\ti p-\ti q)}{\Im\,\tau},
\end{equation}
where $\lambda(p,q;\ti p,\ti q)$ is the cross-ratio, see \eqref{eq: cross-ratio1}.

\subsubsec{The higher genus case} We have 
\begin{align}\label{eq: EPhi2}
\E\, \Phi(p,q)\Phi(\ti p,\ti q) &= \log|\lambda(p,q;\ti p,\ti q)|^2 \\
&-4\pi (\Im\,\PM)^{-1}\Im(\AA(p) - \AA(q))\cdot \Im\,(\AA(\ti p) - \AA(\ti q)), \nonumber
\end{align}
where $\lambda(p,q;\ti p,\ti q)$ is the cross-ratio, see \eqref{eq: cross-ratio2}.

\begin{rmk*}
As a Fock space field, the Gaussian free field $\Phi_D$ in a simply connected domain $D$ can be constructed from the Gaussian free field $\Phi_M$ on the Schottky double $M$ of $D$. 
More precisely, 
$$\Phi_D(z) = \Phi_M^+(z,z^*),$$
where $z\mapsto z^*$ is the canonical involution on $M.$ 
As a (multivalued) bi-variant field, $\Phi^+ \equiv \Phi_M^+$ is defined by
$$\Phi^+(p,q) = \{\Phi^+(\zeta)\,:\, \gamma \textrm{ is a path from } q \textrm{ to } p\} $$  
and 
$$\Phi^+(\gamma) = \int_\gamma J(\zeta)\,\dd\zeta, \quad J(\zeta) = \pa_\zeta \Phi(\zeta,z).$$
For example, in the upper half-plane $\H,$ we have  
$$\E\,\Phi_\H(\zeta) \Phi_\H(z) = \E\,\Phi_{\hat\C}(\zeta,\bar\zeta) \Phi_{\hat\C}(z,\bar z) =\log\frac{(\zeta -\bar z)(\bar\zeta-z)}{(\zeta-z)(\bar\zeta -\bar z)}=2\log\left|\frac{\zeta-\bar z}{\zeta-z}\right| = 2 G_\H(\zeta,z)$$
and in the unit disk $\D,$ we have  
$$\E\,\Phi_\D(\zeta) \Phi_\D(z) = \log\frac{(\zeta -z^*)(\zeta^*-z)}{(\zeta-z)(\zeta^*-z^*)}=2\log\left|\frac{1-\zeta\bar z}{\zeta-z}\right|  \Big(= \log\Big|\frac{\zeta-z^*}{(\zeta-z)z^*}\Big|\,\Big)= 2 G_\D(\zeta,z),$$
where $z^* = 1/\bar z$ is the symmetric point of $z$ with respect to the unit circle $\pa\D.$
A chordal/radial version of conformal field theory developed in \cite{KM13,KM12} can be constructed from the theory on a compact Riemann surface of genus zero.
More details will appear in a forthcoming paper. 
\end{rmk*}

\subsubsec{Formal Gaussian free field} 
We introduce $\Phi$ as a centered Gaussian formal field with formal correlation
$$\E\,\Phi(z)\Phi(z_0) = 2G_{z_0}(z).$$
Let
$$\bfs\tau= \sum_j \tau_j\cdot z_j,$$ 
where $\{z_j\}_{j=1}^N$ is a finite set of (distinct) points on $M$ and $\tau_j$'s are real
numbers, (``charges" at $z_j$'s),
$\tau_j =\tau_{z_j} =\bfs\tau(z_j).$
We can think of $\bfs\tau$ as a \emph{divisor}, \index{divisor} a function $\bfs\tau:M \to \R$ which takes the value $0$ at all points except for finitely many points. 
Sometimes it is convenient to consider $\bfs\tau$ as an atomic measure, 
$\bfs\tau= \sum_j \tau_j\cdot \delta_{z_j}.$ 
For example, 
$$\int \bfs\tau= \sum_j \tau_j.$$
Some of $\tau_j$'s can be zero, and in any case $\tau_z = 0$ if $z$ is not one of $z_j$'s.

In the genus zero case, we construct the formal 1-point field $\Phi$ alternately via the so-called ``rooting procedure."
We set 
$$\Phi_*(z):=\Phi(z,z_*)$$
as a centered Gaussian field and define 
$$\E[\Phi_*(\zeta)\Phi_*(z)]$$
as a normalized limit of $\log|\lambda(\zeta,\zeta_*,z,z_*)|^2,$ namely
$$\lim_{\zeta_*,z_*\to\infty} \Big(\log|\lambda(\zeta,\zeta_*,z,z_*)|^2+\log\Big|\frac{\zeta_*-z_*}{\zeta_* z_*}\Big|^2\Big) = \log\frac1{|\zeta-z|^2}.$$ 

Given a divisor $\bfs\tau= \sum_j \tau_j\cdot z_j,$ we define a formal bosonic field $\Phi[\bfs\tau]$ as a linear combination 
$$\Phi[\bfs\tau] := \sum_j\tau_j\Phi(z_j).$$
It is not possible to define the formal 1-point field $\Phi$ as a Fock space field, but 
under the \emph{neutrality condition} $(\NC_0):$ \index{neutrality condition!(NC$_0$)} 
$$ \sum_j\tau_j = 0,$$ 
$\Phi[\bfs\tau]$ is well-defined as a Fock space field.
For example, we have a representation $ \Phi(z,z_0) = \Phi(z) - \Phi(z_0).$

\begin{lem}
If a divisor $\bfs\tau$ satisfies $(\NC_0)$, then the formal bosonic field $\Phi[\bfs\tau]$ is a well-defined single-valued Fock space functional (in the complement of the support of $\bfs\tau$).
\end{lem}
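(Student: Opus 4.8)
The plan is to reduce the multi-point object $\Phi[\bfs\tau]$ to a finite combination of the bivariate fields $\Phi(z,z_0)=\Phi(\delta_z-\delta_{z_0})$, whose well-definedness as Fock space functionals has already been established (they arise as $L^2$-limits of the genuine random variables $\Phi_\ve(z,z_0)$, with covariances given by the bipolar Green's function). The sole role of the neutrality condition $(\NC_0)$ is to permit the introduction — and the subsequent cancellation — of an auxiliary base point.

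First I would fix a base point $z_*\in M\sm\supp\,\bfs\tau$ and write, using $\sum_j\tau_j=0$,
$$\Phi[\bfs\tau]=\sum_j\tau_j\Phi(z_j)=\sum_j\tau_j\big(\Phi(z_j)-\Phi(z_*)\big)+\Big(\sum_j\tau_j\Big)\Phi(z_*)=\sum_j\tau_j\,\Phi(z_j,z_*),$$
where the term proportional to $\Phi(z_*)$ drops out by neutrality. The right-hand side is a finite linear combination of the well-defined single-valued Fock space functionals $\Phi(z_j,z_*)$, hence is itself a well-defined single-valued Fock space functional. Concretely, its correlations against a second neutral divisor $\bfs\sigma=\sum_k\sigma_k\cdot w_k$ (with base point $w_*$) are the finite sums
$$\E\,\Phi[\bfs\tau]\,\Phi[\bfs\sigma]=\sum_{j,k}\tau_j\sigma_k\,\E\,\Phi(z_j,z_*)\Phi(w_k,w_*)=\sum_{j,k}\tau_j\sigma_k\cdot2\big(G_{z_j,z_*}(w_k)-G_{z_j,z_*}(w_*)\big),$$
which are finite whenever the supports are disjoint, so the covariance structure is genuine.

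Next I would verify independence of the choice of $z_*$: replacing $z_*$ by a second base point $z_{**}$ changes each term by $\tau_j\big(\Phi(z_j,z_{**})-\Phi(z_j,z_*)\big)=\tau_j\,\Phi(z_*,z_{**})$, so the total change is $\big(\sum_j\tau_j\big)\Phi(z_*,z_{**})=0$ by neutrality. The same cancellation shows, via linearity of $\Phi$, that the ``generalized'' energy-space element attached to $\Phi[\bfs\tau]$ is $\Phi\big(\sum_j\tau_j\delta_{z_j}-(\sum_j\tau_j)\delta_{z_*}\big)=\Phi\big(\sum_j\tau_j\delta_{z_j}\big)$: the base-point atoms at $z_*$ disappear, leaving an element of integral zero supported on $\supp\,\bfs\tau$. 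Since each $G_{z_j,z_*}$ is single-valued on $M$ away from its two poles, I conclude that $\Phi[\bfs\tau]$ is single-valued on $M\sm\supp\,\bfs\tau$.

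The one point requiring care — and where I expect the real content to sit — is the bookkeeping of the additive ambiguities: the formal one-point field $\Phi$ and the formal Green's element $G_p$ entering $\E\,\Phi(z)\Phi(z_0)=2G_{z_0}(z)$ are defined only up to the choice of splitting $G_{p,q}=G_p-G_q$, equivalently up to additive constants. One must confirm that these ambiguities are annihilated by the (double) neutrality of $\bfs\tau$ and $\bfs\sigma$, which is exactly what the two base-point cancellations above accomplish. Thus no genuine obstruction remains; the lemma is really the assertion that $(\NC_0)$ is precisely the condition collapsing the formal sum onto the well-defined bivariate theory.
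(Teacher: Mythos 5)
Your proof is correct and follows essentially the same route as the paper: fix an auxiliary base point, use $(\NC_0)$ to reduce $\Phi[\bfs\tau]$ to the finite sum $\sum_j\tau_j\Phi(z_j,z_*)$ of well-defined bivariate functionals, and then check independence of the base point (the paper does this by observing that the correlation $\E\,\Phi[\bfs\tau]\Phi(\zeta,\zeta_0)=2\sum_j\tau_jG_{\zeta,\zeta_0}(z_j)$ does not involve $z_*$, while you telescope directly via $\Phi(z_j,z_{**})-\Phi(z_j,z_*)=\Phi(z_*,z_{**})$ — the same cancellation). No gaps.
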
 

\begin{proof}
Let us choose any point $z_0\in M$. (It can be one of $z_j$'s.) 
Then
$$ \Phi[\bfs\tau]=\Phi(z_0)\int\bfs\tau + \sum_j\tau_j\Phi(z_j,z_0).$$
Under the neutrality condition, the first term in the right-hand side vanishes.
The representation is not unique, of course, but it is unique in the sense of Fock space fields. 
Indeed, if neither $\zeta$ nor $\zeta_0$ is in the support of $\bfs\tau,$ then 
$$\E\,\Phi[\bfs\tau]\Phi(\zeta,\zeta_0) = 2\sum_j \tau_j G_{\zeta,\zeta_0}(z_j)$$
does not depend on the choice of $z_0.$
\end{proof}

If both $\bfs\sigma = \sum \sigma_j\cdot z_j $ and $\bfs\tau = \sum \tau_k \cdot \zeta_k$ satisfy the neutrality condition $(\NC_0)$ and if they have disjoint supports, then
$$\E\,\Phi[\bfs\sigma]\Phi[\bfs\tau] = 2\sum_{j,k} \sigma_j\tau_k G_{\zeta_k,\zeta_0}(z_j)$$
does not depend on the choice of $z_0,\zeta_0.$

\subsection{(Conformal) Fock space fields} \label{ss: Conformal Fock space fields}
In this subsection we recall several notions introduced in \cite{KM13} to define conformal Fock space fields as linear combinations of basic Fock space fields. By construction, basic Fock space fields are equipped with certain conformal structures. Thus we further require any of coefficients to have an assignment of a smooth function to each local chart. 
 
 \subsubsec{Fock space correlation functionals}
By definition, a \emph{Fock space correlation functional} \index{Fock space correlation functional} is a linear combination over $\C$ of the constant $1$ and formal expressions 
$$\XX = X_1(\zeta_1,z_1)\odot\cdots\odot X_n(\zeta_n,z_n),$$
where $X_j$'s are derivatives of the Gaussian free field
$$X_j(\zeta_j,z_j)=\pa_{\zeta_j}^{\alpha_j}\bp_{\zeta_j}^{\alpha_{*j}}\pa_{z_j}^{\beta_j}\bp_{z_j}^{\beta_{*j}}\Phi(\zeta_j,z_j)$$
and $\odot$ denotes Wick's product.  
The points $\zeta_j,z_j$ in the expression of $\XX$ are called the \emph{node} \index{node} of $\XX$ and we write $S(\XX)$ or $S_\XX$ for the set of nodes of $\XX.$ By definition, $S(\sum c_j \XX_j) = \bigcup S(\XX_j).$
Some infinite linear combinations are allowed, e.g. Wick's exponential
$$\ee^{\odot \alpha\Phi(z,z_0)}=\sum_{n=0}^\infty \frac{\alpha^n}{n!}\Phi^{\odot n}(z,z_0).$$

For $\XX_j = X_{j1}(\zeta_{j1},z_{j1}) \odot \cdots \odot X_{jn_j}(\zeta_{jn_j},z_{jn_j}),$ we define their \emph{tensor products} \index{tensor product} by Wick's formula,
$$\XX_1\cdots \XX_m = \sum\prod_{\{v,v'\}} \E[X_{v}(\zeta_{v},z_{v})X_{v'}(\zeta_{v'},z_{v'})] \underset{v''}{\textstyle\bigodot} X_{v''}(\zeta_{v''},,z_{v''}),$$
where an unordered pair $\{v,v'\}$ of indices $jn_j$ is allowed for different $j$'s ($v = jn_j, v' = kn_k$ for some $j\ne k$) and the Wick's product is taken over unpaired $v''$'s. 
Here the sum is taken over all possible Feynman diagrams with vertices $v$ and edges $\{v,v'\}.$ 
The correlation $\E[X_{v}(\zeta_{v},z_{v})X_{v'}(\zeta_{v'},z_{v'})]$ is defined as the derivatives of $\E[\Phi(\zeta_{v},z_{v})\Phi(\zeta_{v'},z_{v'})].$

We define the correlation $\E\,\XX_1\cdots \XX_m $ of tensor products of $\XX_j$ in terms of the chaos decomposition of $\XX_1\cdots \XX_m:$
\begin{equation} \label{def: tensor product}
\E\,\XX_1\cdots \XX_m = \sum\prod_{\{v,v'\}} \E[X_{v}(\zeta_{v},z_{v})X_{v'}(\zeta_{v'},z_{v'})] \E\,\underset{v''}{\textstyle\bigodot} X_{v''}(\zeta_{v''},,z_{v''}),
\end{equation}
where 
$$\E\,\underset{v''}{\textstyle\bigodot} X_{v''}(\zeta_{v''},,z_{v''}) = \begin{cases} 1 &\textrm{if there is no unpaired vertex } v'', \\ 0 &\textrm{otherwise}.\end{cases}.$$
For example, $\E\, 1 = 1,$ $\E\, X_1(\zeta_1,z_1)\odot\cdots\odot X_n(\zeta_n,z_n) = 0,$ $\E\,\ee^{\odot \alpha\Phi(z,z_0)}  = 1$ and 
$$\E\,\Phi(\zeta_1,z_1) \cdots\Phi(\zeta_n,z_n) = \sum \prod_k 2\big(G_{\zeta_{i_k},z_{i_k}}(\zeta_{j_k}) - G_{\zeta_{i_k},z_{i_k}}(z_{j_k})\big),$$
where the sum is over all partitions of the set $\{1,\cdots,n\}$ into disjoint paris $\{i_k,j_k\}.$

\subsubsec{Fock space fields}
For derivatives $X_j$ of the Gaussian free field, the map 
$$(\zeta_1,z_1, \cdots,\zeta_n,z_n)\mapsto X_1(\zeta_1,z_1)\odot\cdots\odot X_n(\zeta_n,z_n)$$ 
is called a \emph{basic Fock space field}. \index{Fock space field!basic}
A \emph{(general) Fock space space field} is a linear combination of basic fields $X_j,$ \index{Fock space field!general}
$$X = \sum_j f_j X_j,$$  
where coefficients $f_j$ are non-random conformal fields, see their definition below. 
Recall that a local chart on a compact Riemann surface $M$ is a conformal map $\phi$ from an open subset $U$ of $M$ to the complex plane $\C$ such that the transition maps
$$h = \wt\phi\circ\phi^{-1}: \phi(U\cap\wt U) \to \wt\phi(U\cap\wt U)$$
are conformal transformations whenever $U\cap\wt U\ne\emptyset.$
A \emph{non-random conformal field} $f$ \index{non-random conformal field}  is defined as an assignment of a smooth function
$(f\,\|\,\phi):\phi U\to\C$ to each local chart $\phi:U\to\C.$
By definition, a conformal Fock space field is a linear combination of basic fields with non-random conformal fields as coefficients. 

We define the differential operators $\pa,\bp$ on Fock space fields such that their action on basic Fock space fields is consistent with the definition of derivatives of the Gaussian free field and their extension to general Fock space fields is defined by linearity and by Leibniz's rule with respect to the multiplication by non-random conformal fields. By definition, a field $X$ is \emph{holomorphic} in $M$ if all correlation functions $\E\,X(\zeta)\YY$ are holomorphic in $\zeta\in M\setminus S_\YY$ for any Fock space correlation functional $\YY.$  
\index{Fock space field!holomorphic}
 
\subsubsec{Differentials and forms}
For a pair of non-negative integers $(\lambda,\lambda_*),$ a non-random conformal field $f$ is called a \emph{differential} \index{differential} of conformal dimension $(\lambda,\lambda_*)$ if for any two overlapping local charts $\phi$ and $\wt\phi,$ we have 
$$ f = (h')^\lambda(\overline{h'})^{\lambda_*}\wt f\circ h,$$
where we write $f, \wt f$ for $(f\,\|\,\phi),(f\,\|\,\wt\phi),$ respectively and $h$ is the transition map between $\phi$ and $\wt\phi.$
We write $\DD^{(\lambda,\lambda_*)} = \DD^{(\lambda,\lambda_*)} (M)$ for the collection of non-random $(\lambda,\lambda_*)$-differentials. 
By definition, a non-random field $f$ is a \emph{Schwarzian form} (a \emph{pre-Schwarzian form} (a PS form), a \emph{pre-pre-Schwarzian form} (a PPS form)) of order $\mu$ if it has the transformation law
\index{Schwarzian form}
\index{pre-Schwarzian form} \index{PS form}
\index{pre-pre-Schwarzian form} \index{PPS form}
$$f=(h')^2\widetilde{f}\circ h + \mu S_h, \quad (f=h'\widetilde{f}\circ h + \mu N_h, \quad f= \widetilde{f}\circ h + \mu \log h'),$$
respectively, where 
$S_h = N_h' -{N_h^2}/2, (N_h =(\log h')')$
is Schwarzian derivative of $h.$ 
A notion of $(\lambda,\lambda_*)$-differentials, and (pre-)Schwarzian forms can be extended to conformal Fock space fields. 
For example, we say that a conformal Fock space field $X$ is a differential if the non-random field
$p\mapsto \E[X(p)\YY]$ is a differential in $p\in M$ for $\YY = \Phi(p_1,q_1)\odot \cdots \odot\Phi(p_n,q_n).$

We define the product $fg$ of a $(\lambda,\lambda_*)$-differential $f$ and a $(\mu,\mu_*)$-differential $g$ by
$$(fg\,\|\,\phi) := (f\,\|\,\phi)(g\,\|\,\phi).$$
Then $fg$ is $(\lambda+\mu,\lambda_*+\mu_*)$-differential.
There is a natural operation of complex conjugation on conformal fields:
$$(\bar f\,\|\,\phi) := \overline{(f\,\|\,\phi)}.$$
For a $(\lambda,\lambda_*)$-differential $f$, $\bar f$ is a $(\lambda_*,\lambda)$-differential.

We define derivatives of non-random conformal fields by differentiating in local charts:
$$(\pa f\,\|\,\phi) := \pa (f\,\|\,\phi),\qquad (\bp f\,\|\,\phi) := \bp (f\,\|\,\phi).$$
For example, 
$\pa \oplus \bp: \DD^{(0,0)} \to \DD^{(1,0)} \oplus \DD^{(0,1)}$
and 
$\pa \bp: \DD^{(0,0)} \to \DD^{(1,1)}.$

\begin{rmk*} One can identify the differential $a\in \DD^{(1,0)}$ with the differential form $a(z)\,\dd z$ (where $a(z)$ is the value of $a$ in $z$-coordinate), and $b \in \DD^{(0,1)}$ with the differential form $b(z)\,\dd \bar z.$ 
One can also identify $F\in \DD^{(1,1)}$ with 
$$F(z)\,\dd x\wedge \dd y = \frac i2 F(z)\,\dd z\wedge \dd \bar z.$$
This identification leads to the definition of integration of $(1,1)$-differentials;
$$\int_M F = \int_M F(z) \,\dd x\wedge \dd y.$$
\end{rmk*}

\subsection{Modular invariance of Gaussian free field} \label{ss: Modular invariance of GFF}

In this subsection we discuss modular invariance of Gaussian free field in the genus one case. 
Suppose that two tori $\T_\Lambda, \T_{\Lambda'}$ are conformally equivalent, where the lattices  $\Lambda,\Lambda' $ are generated by the modular parameters $\tau, \tau', (\Im\,\tau,\Im\,\tau'>0)$ respectively, i.e. $\Lambda=\Z + \tau\Z, \Lambda'= \Z + \tau'\Z.$ 
Then there is $\gamma \in \C$ such that $\Lambda = \gamma \Lambda',$ hence $\gamma,\gamma\tau'$ generate $\Lambda.$ 
Thus there are $a,b,c,d\in\Z$ such that $\gamma\tau' = a\tau + b$ and $\gamma = c\tau +d.$
Eliminating $\gamma,$ we find the relation between $\tau$ and $\tau'$ as 
\begin{equation}\label{eq: tau'}
\tau ' = \frac{a\tau+b}{c\tau+d}.
\end{equation}

A conformal transformation $f$ from $(\T_\Lambda,0)$ onto $(\T_{\Lambda'},0)$ can be lifted to the universal covering space and we get a linear map $\ti f$
\begin{equation}\label{eq: lift}
\ti f:z\mapsto z' := \,\frac z\gamma,\qquad (\gamma = c\tau +d).
\end{equation}
Let $G$ ($G'$) be the covering group of $\Lambda$ ($\Lambda'$) generated by the translations $z\mapsto z+1, z\mapsto z + \tau$ $(z\mapsto z + \tau'),$ respectively. 
The linear map $\ti f$  induces an isomorphism $F$ between the covering group $G$ of $\Lambda$ and $G'$ of $\Lambda':$ 
$$F: g \mapsto \ti f \circ g \circ \ti f^{-1}.$$
Since $F$ is an isomorphism, the matrix representing \eqref{eq: tau'} is invertible and we have the determinant $ad-bc = \pm1.$
However, both $\tau$ and $\tau'$ are in the upper half-plane and therefore $ad-bc =1.$

The modular invariance of the Gaussian free field is clear by its construction. Indeed, its constructions does not resort to coordinate dependence. 
For example, we have
\begin{equation}\label{eq: modular invariance GFF}
\E\,\Phi_{_{\T_{\Lambda'}}}(p',q')\Phi_{_{\T_{\Lambda'}}}(\ti p', \ti q') = \E\,\Phi_{_{\T_\Lambda}}(p,q)\Phi_{_{\T_\Lambda}}(\ti p, \ti q),
\end{equation}
where $\tau'$ is given by \eqref{eq: tau'} and $p' = p/(c\tau+d),$ etc. Cf. \eqref{eq: lift}.

\begin{eg*}
Using  \eqref{eq: modular invariance GFF} and the identity 
$$\frac{\Im(z/\tau) \, \Im(w/\tau)}{\Im\,1/\tau}  + \frac{\Im\,z\,\Im\,w}{\Im\,\tau} = \Im\,\frac{zw}\tau,$$
we derive 
\begin{align*}
\log&\left|\frac {\theta(\ti p'-q'|\tau')\theta(\ti q'-p'|\tau')} {\theta(\ti p'-p'|\tau')\theta(\ti q'- q'|\tau')} \right|
-\log\left|\frac {\theta(\ti p-q|\tau)\theta(\ti q-p|\tau)} {\theta(\ti p-p|\tau)\theta(\ti q- q|\tau)} \right| \\
&=-2\pi\, \Im\big(\frac1\tau (\ti p-\ti q)(p-q)\big),
\end{align*}
where $\tau' = -1/\tau$ and $p' = p/\tau,$ etc. 
Differentiating the above, and then integrating,  
\begin{align*}
\log&\,\frac {\theta(\ti p'-q'|\tau')\theta(\ti q'-p'|\tau')} {\theta(\ti p'-p'|\tau')\theta(\ti q'- q'|\tau')} 
-\log\frac {\theta(\ti p-q|\tau)\theta(\ti q-p|\tau)} {\theta(\ti p-p|\tau)\theta(\ti q- q|\tau)} \\
&=2\pi i\, \big(\frac1\tau (\ti p-\ti q)(p-q)\big).
\end{align*}
Exponentiating the above, we have the transformation formula for the cross-ratios, 
\begin{align*}
\frac {\theta(\ti p'-q'|\tau')\theta(\ti q'-p'|\tau')} {\theta(\ti p'-p'|\tau')\theta(\ti q'- q'|\tau')} 
&=\frac {\theta(\ti p-q|\tau)\theta(\ti q-p|\tau)} {\theta(\ti p-p|\tau)\theta(\ti q- q|\tau)} 
\,\ee^{2\pi i\, (\ti p-\ti q)(p-q)/\tau }\\
&=\frac {\theta(\ti p-q|\tau)\theta(\ti q-p|\tau)} {\theta(\ti p-p|\tau)\theta(\ti q- q|\tau)} 
\,\ee^{\pi i \, ((\ti p-q)^2 + (\ti q-p)^2 -(\ti p-p)^2 - (\ti q-q)^2 )/\tau}.
\end{align*}
It is well known (e.g. see \cite[p.~75]{Chandrasekharan}) that  if $\sqrt{\tau/i} =1$ for $\tau = i,$ then
\begin{equation*} \label{eq: theta invariance}
\theta(z'|\tau') =  -i\sqrt{\tau/i\,}\,\ee^{\pi i z^2/\tau } \theta(z|\tau).
\end{equation*}
\end{eg*}

\section{Conformal metrics and Gaussian fields} \label{s: Conformal metrics}
Given a conformal metric $\rho,$ we construct the bosonic field $\Phi_\rho$ associated with $\rho.$
Formally, one can think of the Fock space functional $\Phi_\rho(z)$ as $\Phi(\delta_z-\rho)$ if $\int \rho=1.$ 
Its 2-point function is computed in terms of the resolvent kernel. 
In the last subsection we represent the distributional field $\Phi_\rho$ as a random series in the Sobolev space. 
This representation is used to compute the field produced by the Dirichlet action. 

\subsection{Spectral theory and Laplacian} \label{ss: spectral theory}

A \emph{conformal metric} \index{conformal metric} $\rho$ on $M$ is a positive $(1,1)$-differential on $M.$ 
We can identify the differential $\rho$ with the Riemannian metric
$$\dd s^2 = \rho(z)\,|\dd z|^2.$$
Then $\rho\,\dd x\wedge \dd y$ is the volume form and $\rho\,\dd x\, \dd y$ is the corresponding volume.
The positive \emph{Laplace operator} \index{Laplace operator} $\Delta \equiv \Delta_\rho:C^\infty(M) \to C^\infty(M)$
is a differential operator 
$$f \mapsto -\frac{2\pa\bp f}{\rho}.$$
(Then $\varphi = \log \rho$ is a pre-pre-Schwarzian form (PPS form). 
See Subsection~\ref{ss: Conformal Fock space fields} or \ref{ss: Holomorphic PPS forms} for the definition of PPS form.)
The \emph{curvature} of the conformal metric $\rho$ on $M$ is defined as \index{curvature}
$$\kappa = \Delta_\rho\log\rho = -\frac{2\pa\bp\log \rho}{\rho}.$$

\begin{eg*} For the Riemann sphere $M = \widehat \C,$ the conformal metric 
$$\rho(z) = \frac4{(1+|z|^2)^2} \quad \textrm{ in } (\C, \id)$$
is the only metric such that its group of orientation-preserving isometries is $SO(3).$
In this case, $M$ has the positive constant curvature $\kappa\equiv 1.$ 
\end{eg*}

The positive Laplace operator $\Delta\equiv\Delta_\rho:C^\infty\to C^\infty$ extends to a self-adjoint operator in $L^2(\rho).$
This extension is a non-negative operator with compact resolvent.
We use the same notation $\Delta_\rho$ to denote this extension. 
Let 
$$0 = \lambda_0 < \lambda_1 \le \cdots $$
be the eigenvalues of $\Delta_\rho$, and let $\{\phi_n\}_{n\ge0}$ be orthonormal eigenfunctions,
$$\pa\bp\phi_n = -\frac12\lambda_n\phi_n\rho. $$
We can choose eigenfunctions real. From now on, we assume that $\int\rho = 1.$
Clearly, $\phi_0$ is constant $1$ and 
\begin{equation}\label{eq: phirho0} 
\int\phi_n\rho = 0 \qquad (n\ge1).
\end{equation}
We write $\phi_0^{\perp}$ for the orthogonal complement of $\langle\phi_0\rangle$ in $L^2(\rho).$

We now introduce the \emph{resolvent kernel} \index{resolvent kernel} $R\equiv R_\rho:$
$$R_\rho(z,w) := \sum_{n\ge1}\frac1{\lambda_n}\,\phi_n(z)\phi_n(w).$$
This is the kernel of the integral operator $R\equiv R_\rho:$
$$(R_\rho f)(z) := \int R_\rho(z,w) f(w)\rho(w).$$
\begin{rmk*}
Restricting to $\phi_0^{\perp},$ $R:\phi_0^{\perp}\to\phi_0^{\perp}$ is the inverse of $\Delta:\phi_0^{\perp}\to\phi_0^{\perp}.$
Both $\Delta$ and $R$ are zero on $\langle\phi_0\rangle.$
\end{rmk*}

\begin{prop} \label{R}The resolvent kernel $R$ has the following basic properties: 
\renewcommand{\theenumi}{\alph{enumi}}
{\setlength{\leftmargini}{1.8em}
\begin{enumerate}
\item \label{R1} $R(z,w) = R(w,z);$ 
\item \label{R2} $\displaystyle\int R(z,\cdot)\rho = 0;$
\item \label{R3} $\pa\bp R(z,\cdot) = -\frac12(\delta_z - \rho);$
\item \label{R4} $R(z,w) = -\frac1\pi\log|z-w| + O(1)$ in every chart as $z\to w.$
\end{enumerate}}
\end{prop}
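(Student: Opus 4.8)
The plan is to read off properties~(\ref{R1})--(\ref{R3}) directly from the spectral series $R(z,w)=\sum_{n\ge1}\lambda_n^{-1}\phi_n(z)\phi_n(w)$ and to reduce~(\ref{R4}) to elliptic regularity. Throughout, all four identities should be understood distributionally: by Weyl's law $\lambda_n^{-2}$ is summable, so the series converges in $L^2(M\times M)$, but it does not converge pointwise on the diagonal, which is exactly the content of~(\ref{R4}). Property~(\ref{R1}) is then immediate, since the summand $\lambda_n^{-1}\phi_n(z)\phi_n(w)$ is symmetric in $(z,w)$. For~(\ref{R2}) I would integrate the series against $\rho$ term by term and invoke $\int\phi_n\rho=0$ for $n\ge1$ from \eqref{eq: phirho0}; every term vanishes, so $\int R(z,\cdot)\rho=0$.

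The substance is in~(\ref{R3}). I would apply $\pa\bp$ in the second variable termwise, using the eigenvalue equation $\pa\bp\phi_n=-\tfrac12\lambda_n\phi_n\rho$, which cancels the factor $\lambda_n^{-1}$ and gives $\pa\bp R(z,\cdot)=-\tfrac12\rho\sum_{n\ge1}\phi_n(z)\phi_n$. The key input is the completeness relation (resolution of the identity) for the orthonormal basis $\{\phi_n\}_{n\ge0}$ of $L^2(\rho)$, namely $\sum_{n\ge0}\phi_n(z)\phi_n(w)\rho(w)=\delta_z$ as $(1,1)$-currents in $w$. Isolating the constant mode $\phi_0\equiv1$ turns this into $\sum_{n\ge1}\phi_n(z)\phi_n(w)\rho(w)=\delta_z-\rho$, whence $\pa\bp R(z,\cdot)=-\tfrac12(\delta_z-\rho)$. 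The care needed here is to justify the termwise $\pa\bp$ as a distributional statement: pair against a smooth test function and move $\pa\bp$ onto the test function, so that the nowhere-pointwise-convergent series is only ever manipulated after pairing.

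Finally, for~(\ref{R4}) I would localize in a chart about the diagonal and compare with the standard fundamental solution of $\pa\bp$. Since $\pa\bp\log|z-w|=\tfrac\pi2\delta_z$ in the plane (the same normalization already used in \eqref{eq: pabpG}), the function $g(w):=R(z,w)+\tfrac1\pi\log|z-w|$ satisfies, by~(\ref{R3}), $\pa\bp g=\tfrac12\rho$, which is smooth. Interior elliptic regularity for the elliptic operator $\pa\bp$ then forces $g$ to be smooth near $w=z$, hence bounded, so $R(z,w)=-\tfrac1\pi\log|z-w|+O(1)$ in that chart. Changing charts alters only the $O(1)$ term, since transition maps are conformal and smooth near the diagonal (the discrepancy $\log|z-w|-\log|\wt z-\wt w|$ stays bounded there).

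I expect the main obstacle to be the analytic bookkeeping in~(\ref{R3})--(\ref{R4}) rather than any conceptual difficulty: making both the term-by-term differentiation and the completeness relation rigorous as distributional identities, and then cleanly upgrading the distributional conclusion $\pa\bp g\in C^\infty$ to a genuine pointwise asymptotic through interior elliptic regularity. The remaining steps~(\ref{R1})--(\ref{R2}) are purely formal consequences of the series definition and of \eqref{eq: phirho0}.
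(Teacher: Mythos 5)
Your proposal is correct, and parts (a), (b), and (d) coincide with the paper's argument essentially verbatim: symmetry and the vanishing of $\int R(z,\cdot)\rho$ are read off the series using \eqref{eq: phirho0}, and for (d) the paper likewise sets $K(z,w)=R(z,w)+\frac1\pi\log|z-w|$, observes $\pa\bp K(z,\cdot)=\frac12\rho$, and concludes boundedness (you supply the elliptic-regularity justification the paper leaves implicit). The only real divergence is in (c). The paper never differentiates the series termwise: it tests against a smooth $f$, uses the resolvent identity $R\Delta f=f^{\perp}=f-\int f\rho$ (from the remark that $R$ inverts $\Delta$ on $\phi_0^{\perp}$), and integrates by parts twice to move $\pa\bp$ from $f$ onto $R(z,\cdot)$, which yields $-2\int f\,\pa\bp R(z,\cdot)=f(z)-\int f\rho$ directly. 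You instead apply $\pa\bp$ to the series term by term and invoke the completeness relation $\sum_{n\ge0}\phi_n(z)\phi_n(w)\rho(w)=\delta_z$. These are two faces of the same spectral fact, and your version is sound once the termwise manipulation is done after pairing with a test function, as you indicate; but note that the paper's duality route dispenses entirely with the analytic bookkeeping you identify as the main burden, since the only series identity it needs is $R\Delta f=f^{\perp}$ for smooth $f$, which converges unproblematically. Either way the proof goes through.
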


\begin{proof} 
\eqref{R1} The symmetric property is clear from the definition of the resolvent kernel. \\
\eqref{R2} It follows from \eqref{eq: phirho0}.\\
\eqref{R3} For any smooth function $f,$ by the previous remark, we have 
$$R\Delta f = f^{\perp} = f - (f,\phi_0)\phi_0 = f - {\int f\rho}.$$
Here, $f^{\perp}$ is the orthogonal projection of $f$ onto $\phi_0^{\perp}.$
On the other hand, integration by parts gives
$$(R\Delta f)(z) = \int R(z,w) (\Delta f)(w) \rho(w) = -2 \int R(z,\cdot)\pa\bp f = -2\int f \pa\bp R(z,\cdot). $$
\eqref{R4} Let $K(z,w) = R(z,w) + \frac1\pi \log|z-w|.$ Then $\pa\bp K(z,\cdot) = \frac12 \rho.$ 
It is clear that $K(z,w) = O(1)$ in every chart as $z\to w.$
\end{proof}

Let us define $R_p(z) = R(z,p)$ and 
$$R_{p,q}(z) = R_p(z) - R_q(z).$$
Then by Proposition~\ref{R} \eqref{R1} and \eqref{R2}, we have 
\begin{equation} \label{eq: Rrho}
\int R_{p,q} \rho = 0.
\end{equation}
\begin{cor} \label{cor: RG} We have 
$$R_{p,q} = \frac1\pi \, G_{p,q} \qquad (\mathrm{mod}\,\, \R).$$
\end{cor}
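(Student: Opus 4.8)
The plan is to show that the difference $R_{p,q}-\frac1\pi G_{p,q}$ is a constant on $M$ by verifying that it is a globally harmonic function. The key comparison is between the two singularity structures and the two ``source'' equations. On the one hand, equation~\eqref{eq: pabpG} gives $-\pa\bp G_{p,q}=\frac\pi2(\delta_p-\delta_q)$, so that $-\pa\bp\big(\tfrac1\pi G_{p,q}\big)=\tfrac12(\delta_p-\delta_q)$. On the other hand, from Proposition~\ref{R}\eqref{R3} applied to $R_p$ and $R_q$ and subtracted, I get $\pa\bp R_{p,q}=-\tfrac12(\delta_p-\delta_q)$, i.e. $-\pa\bp R_{p,q}=\tfrac12(\delta_p-\delta_q)$. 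Thus the two distributions have the \emph{same} distributional Laplacian, so $\pa\bp\big(R_{p,q}-\tfrac1\pi G_{p,q}\big)=0$ in the sense of distributions on all of $M$.

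Next I would check that this difference has no singularities at $p$ and $q$, so that it is a genuine smooth harmonic function rather than merely harmonic away from the poles. Near $p$, the defining asymptotics of bipolar Green's function give $\tfrac1\pi G_{p,q}(z)=\tfrac1\pi\log\frac1{|z-p|}+O(1)=-\tfrac1\pi\log|z-p|+O(1)$, while Proposition~\ref{R}\eqref{R4} gives $R_{p,q}(z)=R_p(z)-R_q(z)=-\tfrac1\pi\log|z-p|+O(1)$ since $R_q$ is smooth near $p$. Hence the logarithmic singularities cancel and the difference is bounded near $p$; the same argument applies at $q$. A bounded function whose distributional $\pa\bp$ vanishes is, by elliptic regularity (Weyl's lemma), smooth and harmonic across the apparent singularities.

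I would then conclude using compactness: $R_{p,q}-\tfrac1\pi G_{p,q}$ is a harmonic function on the closed surface $M$, and by the maximum principle every such function is constant. This is exactly the assertion $R_{p,q}=\tfrac1\pi G_{p,q}\pmod{\R}$.

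The step I expect to require the most care is the passage from ``equal distributional Laplacians with matching logarithmic poles'' to ``genuinely smooth and harmonic on all of $M$.'' One must be careful that the $O(1)$ error terms near the poles are not merely bounded but combine into something to which Weyl's lemma legitimately applies; the clean way is to note that the difference is locally bounded and annihilated by $\pa\bp$ as a distribution, whence elliptic regularity upgrades it to a smooth harmonic function. Everything else — the subtraction of the two source identities, the cancellation of singularities, and the final maximum-principle argument — is routine, so this regularity point is the only genuine obstacle. Note also that the additive constant is unavoidable and expected, since both $G_{p,q}$ (by the remark following its definition) and $R_{p,q}$ are themselves only determined up to additive constants, which is precisely why the statement is phrased modulo $\R$.
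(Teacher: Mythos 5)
Your proposal is correct and follows exactly the paper's own argument: subtract the two source identities \eqref{eq: pabpG} and Proposition~\ref{R}\eqref{R3} to see that $R_{p,q}-\frac1\pi G_{p,q}$ has vanishing distributional Laplacian, observe that the logarithmic singularities at $p$ and $q$ cancel so the difference is a bounded harmonic function on the compact surface $M$, and conclude it is constant. The extra care you take with Weyl's lemma is a legitimate elaboration of the paper's one-line remark that the singularities are removable, but it is not a different route.
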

\begin{proof} It follows from \eqref{eq: pabpG} and the previous proposition that 
$$-\pa\bp R_{p,q} = -\pa\bp R(p,\cdot)+\pa\bp R(q,\cdot) = \frac12\big(\delta_p-\delta_q\big) = -\frac1\pi\pa\bp G_{p,q}.$$
The logarithmic singularities of $R_{p,q} - \frac1\pi \, G_{p,q}$ at both $p$ and $q$ are removable. 
Thus $R_{p,q} - \frac1\pi \, G_{p,q}$ is a bounded harmonic function on $M,$ which is constant. 
\end{proof}
By the previous corollary, the function $R_{p,q}$ is independent of the metric up to an additive constant (depending on $p,q$). 
The constant is determined by the equation~\eqref{eq: Rrho}.

\begin{rmk*} By definition and Proposition~\ref{R} \eqref{R2}, we have 
$$R(z,p) = \int_{(q)} R_{p,q}(z) \,\rho(q).$$
This identity has the following physical interpretation: $R(z,p)$ has a potential at $z$ for the field with a (unit) source at $p$ and with sinks distributed according to $\rho.$
\end{rmk*}

\subsection{Sobolev space}
Recall that the elements of $\WW \equiv \WW(M)$ are functions $f$ modulo $\C, [f] = f + \C.$ 
The Sobolev space \index{Sobolev space} $\WW$ comes with the scalar product
$$(f,g)_\nabla :=-2\int f(\pa\bp \bar g) = (\Delta f, g)_{L^2(\rho)}.$$
It is more accurate to write $([f],[g])_\nabla$ in the left-hand side. 
For convenience, we sometimes refer to the elements in $\WW$ as functions and denote them by $f$ rather than $[f].$
Given a metric $\rho,$ we define the space $\WW(\rho)$ as the completion of the space of functions $f$ on $M$ 
satisfying the neutrality condition
$$\int f\rho=0$$
with respect to the scalar product 
$$(f,g)_\nabla = (\Delta f,g)_{L^2(\rho)}.$$

Clearly, the Hilbert spaces $\WW(\rho)$ and $\WW(M)$ are canonically equivalent using the map 
$$f\mapsto [f].$$
For $[f_0] \in \WW(M),$ there is $f \in \WW(\rho)$ such that $[f]=[f_0].$ Indeed, 
$$f = f_0 - \int f_0\rho.$$

Let $0 = \lambda_0 < \lambda_1 \le \cdots $ be the eigenvalues of $\Delta_\rho$ and $\{\phi_n\}_{n\ge0}$ be orthonormal (real) eigenfunctions,
$$\pa\bp\phi_n = -\frac12\lambda_n\phi_n\rho.$$
It is well known that 
$$\Big\{\frac{\phi_n}{\sqrt{\lambda_n}}\Big\}_{n\ge1}$$
forms an orthonormal basis for $\WW(\rho).$

\subsubsec{Dual spaces} 
Let us define the spaces $\WW'\equiv\WW'(M)$ and $\WW'(\rho)$ by 
$$\WW'(M) = \{\mu \in \DD^{(1,1)}(M): \|\mu\|_\Delta^2 < \infty, \int \mu = 0\},$$
and 
$$\WW'(\rho) = \{[\mu] : \mu \in \WW'(M)\}, \qquad [\mu] = \mu + \C\rho, $$
both with the norm
$$\|\mu\|_\Delta^2 = \iint R(z,w) \mu(z) \overline{\mu(w)}.$$
Then $\WW'(M)$ is the dual space of $\WW(M)$ and $\WW'(\rho)$ is the dual space of $\WW(\rho)$ with respect to the pairings
$$([f],\mu) \mapsto \int f \mu, \quad (f,[\mu]) \mapsto \int f \mu,$$ 
respectively.

\begin{prop}
We have isomorphisms
$-2\pa\bp: \WW \to \WW'$
and 
$\rho\Delta: \WW(\rho) \to \WW'(\rho).$
\end{prop}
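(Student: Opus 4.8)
The plan is to show that the first map $-2\pa\bp\colon\WW\to\WW'$ is an isometric isomorphism via a single integration-by-parts identity resting on Proposition~\ref{R}, and then to deduce the second map $\rho\Delta\colon\WW(\rho)\to\WW'(\rho)$ by transporting this result through the canonical identifications $\WW(\rho)\cong\WW$ and $\WW'(\rho)\cong\WW'$.

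First I would check that $-2\pa\bp$ maps $\WW$ into $\WW'$. For smooth $f$ the $(1,1)$-form $-2\pa\bp f=-2\,d(\bp f)$ is exact, so $\int_M(-2\pa\bp f)=0$ by Stokes' theorem on the closed surface $M$, and finiteness of $\|-2\pa\bp f\|_\Delta$ will drop out of the isometry computation; both facts then extend to all of $\WW$ by continuity. For the isometry itself, write $\mu:=-2\pa\bp f$ and integrate by parts twice (legitimate on a closed surface), using the symmetry Proposition~\ref{R} \eqref{R1} and the distributional identity Proposition~\ref{R} \eqref{R3} in the $z$-variable:
\[
\int_z R(z,w)\,\mu(z)=-2\int_z f(z)\,\pa\bp R(z,w)=f(w)-\int f\rho .
\]
Pairing against $\overline{\mu(w)}$ and using $\int\mu=0$ to annihilate the constant $\int f\rho$ gives
\[
\|\mu\|_\Delta^2=\iint R(z,w)\mu(z)\overline{\mu(w)}=\int f\,\overline{\mu}=-2\int f\,\pa\bp\bar f=(f,f)_\nabla,
\]
so $-2\pa\bp$ is isometric, hence injective with closed range. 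For surjectivity I would, given $\mu\in\WW'$, set $g(w):=\int_z R(z,w)\mu(z)$; differentiating under the integral and applying Proposition~\ref{R} \eqref{R3} together with $\int\mu=0$ yields $-2\pa\bp g=\mu$, while $(g,g)_\nabla=\|\mu\|_\Delta^2<\infty$ places $g$ in $\WW$. This exhibits $-2\pa\bp$ as a surjective isometry.

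For the second map I would first note the pointwise identity $\rho\Delta f=\rho\cdot(-2\pa\bp f/\rho)=-2\pa\bp f$, so $\rho\Delta$ and $-2\pa\bp$ are literally the same operator and only the domains and targets differ. The map $\WW(\rho)\to\WW$, $f\mapsto[f]$, is an isometry since both carry the scalar product $(\cdot,\cdot)_\nabla$. On the dual side, each class $[\mu]\in\WW'(\rho)$ has the distinguished representative $\mu-(\int\mu)\rho$, which lies in $\WW'$ because $\int\rho=1$; adding a multiple of $\rho$ alters neither the pairing against $\WW(\rho)$ (where $\int f\rho=0$) nor $\|\cdot\|_\Delta$ (since $\int R(z,\cdot)\rho=0$ by Proposition~\ref{R} \eqref{R2}), so this is an isometric identification $\WW'(\rho)\cong\WW'$ and at the same time shows $\|\cdot\|_\Delta$ is a genuine norm on the quotient. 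Because $-2\pa\bp f$ already has integral zero, the square with top arrow $\rho\Delta$ and bottom arrow $-2\pa\bp$ commutes, and the second isomorphism follows from the first.

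I expect the principal obstacle to be justifying the integration by parts and the differentiation under the integral sign on the \emph{completed} spaces, whose elements are only ``generalized'' functions and differentials; the remedy is to prove the identities on the dense subspaces of smooth $f$ and of $\mu\in\DD_{0}^{(1,1)}$ and then extend by the isometry just established. A secondary check is the nondegeneracy of the seminorm $\|\cdot\|_\Delta$ exactly modulo $\C\rho$, which follows from the spectral expansion $R(z,w)=\sum_{n\ge1}\lambda_n^{-1}\phi_n(z)\phi_n(w)$, whose associated quadratic form vanishes only on constants.
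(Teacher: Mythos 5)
Your argument is correct, but it takes a genuinely different route from the paper's. The paper disposes of both isomorphisms in three lines by working in the eigenbasis of $\Delta_\rho$: the first map is immediate from the pairing identity $([\phi_m],\lambda_n\phi_n\rho)=\langle\phi_m,\phi_n\rangle_\nabla$, and for the second map one simply observes that $\rho\Delta$ sends the orthonormal basis $\{\phi_n/\sqrt{\lambda_n}\}_{n\ge1}$ of $\WW(\rho)$ onto the orthonormal set $\{\sqrt{\lambda_n}[\phi_n\rho]\}_{n\ge1}$, which is the dual basis in $\WW'(\rho)$; completeness of the eigenfunctions then gives surjectivity for free. You instead prove the first map is a surjective isometry by a direct resolvent-kernel computation — integrating by parts twice and invoking the symmetry, the normalization $\int R(z,\cdot)\rho=0$, and the distributional identity $\pa\bp R(z,\cdot)=-\tfrac12(\delta_z-\rho)$ — and produce an explicit inverse $\mu\mapsto\int R(\cdot,w)\mu(w)$; you then transport the result to the pair $(\WW(\rho),\WW'(\rho))$ via the pointwise identity $\rho\Delta f=-2\pa\bp f$ and the canonical isometric identifications. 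Your computations check out (including the verification that adding $c\rho$ changes neither the pairing nor $\|\cdot\|_\Delta$), and you correctly flag that the identities should be established on the dense smooth subspaces and extended by the isometry. What the paper's route buys is brevity and the automatic identification of the dual basis, which it reuses immediately afterward (e.g.\ in Corollary~\ref{cor: Gaussian fields}); what your route buys is an explicit integral formula for the inverse and independence from the spectral decomposition except for the final nondegeneracy check, which you in any case reduce to the expansion $R(z,w)=\sum_{n\ge1}\lambda_n^{-1}\phi_n(z)\phi_n(w)$.
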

\begin{proof}
The first part is obvious since $([\phi_m], \lambda_n\phi_n\rho) = \langle \phi_m,\phi_n \rangle_\nabla.$
For the second part, we observe that an orthonormal set $\{\sqrt{\lambda_n}[\phi_n\rho]\}_{n\ge1}$ in the space $\WW'(\rho)$
is the dual basis of 
$\{\phi_n/\sqrt{\lambda_n}\}_{n\ge1}.$
Proposition now follows since $\rho\Delta ( \phi_n/\sqrt{\lambda_n} ) = \sqrt{\lambda_n}[\phi_n\rho].$
\end{proof}
We remark that these two isomorphisms are actually the same up to canonical identifications.

\begin{cor} \label{cor: EW'}
The multiplication by $\sqrt{2\pi}$ is a unitary operator
$\sqrt{2\pi}:\EE\to\WW'.$
\end{cor}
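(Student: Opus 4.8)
The plan is to exhibit the scaling map $\mu\mapsto\sqrt{2\pi}\,\mu$ as a surjective isometry. Both $\EE(M)$ and $\WW'(M)$ are built on the space $\DD_0^{(1,1)}(M)$ of smooth $(1,1)$-differentials with vanishing integral — $\EE(M)$ is by definition its completion in the energy norm, and $\DD_0^{(1,1)}(M)$ sits as a dense subspace of the (complete) dual space $\WW'(M)$ — so it suffices to compare the two norms on this common dense domain and then appeal to completeness.

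First I would rewrite $\|\mu\|_\Delta^2 = \iint R(z,w)\,\mu(z)\overline{\mu(w)}$ for $\mu\in\DD_0^{(1,1)}(M)$ in terms of bipolar Green's function. Fixing an arbitrary reference point $q$ and using $\int\mu=0$, I can replace the kernel $R(z,w)$ by $R_{w,q}(z)=R(z,w)-R(z,q)$, since the subtracted term depends only on $z$ and hence integrates to zero against $\overline{\mu(w)}$. This yields
$$\|\mu\|_\Delta^2 = \iint R_{w,q}(z)\,\mu(z)\overline{\mu(w)}.$$
By Corollary~\ref{cor: RG}, $R_{w,q}=\tfrac1\pi G_{w,q}$ modulo a constant that is independent of $z$; applying $\int\mu(z)=0$ once more kills that constant, leaving
$$\|\mu\|_\Delta^2 = \frac1\pi\iint G_{w,q}(z)\,\mu(z)\overline{\mu(w)}.$$
Comparing with the energy-norm formula \eqref{eq: E-norm}, namely $\|\mu\|_\EE^2=\iint 2\,G_{w,q}(z)\,\mu(z)\overline{\mu(w)}$, I obtain the exact relation $\|\mu\|_\EE^2=2\pi\,\|\mu\|_\Delta^2$, i.e. $\|\sqrt{2\pi}\,\mu\|_{\WW'}=\|\mu\|_\EE$ on $\DD_0^{(1,1)}(M)$.

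Thus multiplication by $\sqrt{2\pi}$ is an isometry from the dense subspace $\DD_0^{(1,1)}(M)$ of $\EE(M)$ into $\WW'(M)$, and it maps this subspace onto itself. Since $\EE(M)$ is complete, the isometry extends uniquely to all of $\EE(M)$ with closed image; as its image is dense in $\WW'(M)$, it is in fact all of $\WW'(M)$, giving the desired unitary. The only point requiring care is the bookkeeping of the two additive normalizations — the choice of $q$ in passing from $R(z,w)$ to $R_{w,q}$, and the $z$-independent constant of Corollary~\ref{cor: RG} — and checking that each is annihilated by the neutrality condition $\int\mu=0$; once this is verified the identification of the completions and the unitarity follow at once.
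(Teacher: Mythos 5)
Your proof is correct and follows essentially the same route as the paper: reduce the norm comparison on test differentials to the identity $R(z,w)-\tfrac1\pi G_{w,q}(z)=R(z,q)+\mathrm{const}$ and observe that both the $R(z,q)$ term and the additive constants are killed by the neutrality condition $\int\mu=0$. The only addition is your explicit density/completeness argument for extending the isometry to the completions, which the paper leaves implicit but which is a harmless (and correct) piece of bookkeeping.
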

\begin{proof} 
We need to show that 
$2\pi \|\mu\|_\Delta^2 = \|\mu\|_\EE^2,$
or 
$$2\pi\iint R(z,p) \,\mu(z)\overline{\mu(p)} = 2\iint G_{p,q}(z) \,\mu(z)\overline{\mu(p)}$$ 
for all $q, $ equivalently, 
$$\iint R(z,q) \,\mu(z)\overline{\mu(p)} = 0,$$
which is clear from the neutrality condition of $\mu:$ $\int\mu = 0.$ 
\end{proof}
As an alternative statement of the previous corollary, 
$-\sqrt{2/\pi}\,\pa\bp:\WW\to\EE$
is a unitary operator.

\subsection{Bosonic field and Gaussian fields}  \label{ss: Gaussian fields} 
We now define the \emph{bosonic field} $\Phi_\rho$ associated with the metric $\rho$ by \index{bosonic field associated with the metric}
$$\Phi_\rho(z) = \int \Phi(z,\zeta) \rho(\zeta).$$
Then the Fock space functionals $\Phi_\rho(z)$ can be viewed as ``generalized" elements $\Phi(\delta_z - \rho) $ of Fock space.
Clearly, we have 
$$\Phi(z,w) = \Phi_\rho(z)-\Phi_\rho(w).$$

We now compute the 2-point correlation function of $\Phi_\rho$ in terms of the resolvent kernel. 

\begin{prop} \label{prop: E Phi rho}
We have 
$$\E\,\Phi_\rho(z)\Phi_\rho(w) = 2\pi R(z,w). $$
\end{prop}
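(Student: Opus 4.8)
The plan is to compute the correlation $\E\,\Phi_\rho(z)\Phi_\rho(w)$ by unwinding the definition $\Phi_\rho(z) = \int \Phi(z,\zeta)\rho(\zeta)$ and reducing everything to the known two-point function of the bi-variant Gaussian free field, namely $\E[\Phi(p,q)\Phi(\ti p,\ti q)] = 2(G_{p,q}(\ti p) - G_{p,q}(\ti q))$ from \eqref{eq: correlator of Phi}. Writing both $\Phi_\rho(z)$ and $\Phi_\rho(w)$ as averages against $\rho$, bilinearity of the covariance gives
$$\E\,\Phi_\rho(z)\Phi_\rho(w) = \iint_{(\zeta,\eta)} \E[\Phi(z,\zeta)\Phi(w,\eta)]\,\rho(\zeta)\rho(\eta) = \iint_{(\zeta,\eta)} 2\big(G_{z,\zeta}(w) - G_{z,\zeta}(\eta)\big)\,\rho(\zeta)\rho(\eta).$$
The key is then to convert the bipolar Green's function into the resolvent kernel via Corollary~\ref{cor: RG}, which states $G_{p,q} = \pi R_{p,q} \pmod{\R}$, where $R_{p,q}(z) = R(z,p) - R(z,q)$.

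First I would substitute the relation $G_{z,\zeta}(\cdot) = \pi\big(R(\cdot,z) - R(\cdot,\zeta)\big) + \mathrm{const}$ into the integrand. The additive constant depends on the pair $(z,\zeta)$ but not on the evaluation point, so the difference $G_{z,\zeta}(w) - G_{z,\zeta}(\eta)$ eliminates it cleanly, yielding $\pi\big(R(w,z) - R(w,\zeta) - R(\eta,z) + R(\eta,\zeta)\big)$. Then I would integrate term by term against $\rho(\zeta)\rho(\eta)$, using the normalization $\int\rho = 1$ together with the vanishing property $\int R(z,\cdot)\rho = 0$ from Proposition~\ref{R}\eqref{R2} and the symmetry $R(z,w)=R(w,z)$ from \eqref{R1}. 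Three of the four terms involve an integration that kills them by Proposition~\ref{R}\eqref{R2}: the terms $\int R(w,\zeta)\rho(\zeta)$, $\int R(\eta,z)\rho(\eta)$, and $\int R(\eta,\zeta)\rho(\zeta)\rho(\eta)$ all vanish. Only the $\rho$-independent term $R(w,z) = R(z,w)$ survives, and since $\iint \rho(\zeta)\rho(\eta) = 1$, I obtain $\E\,\Phi_\rho(z)\Phi_\rho(w) = 2\pi R(z,w)$.

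The main obstacle, such as it is, is bookkeeping the additive constant arising from Corollary~\ref{cor: RG} and confirming it genuinely drops out — one must be careful that the ``$\pmod \R$'' ambiguity affects only the base-point-dependent constant and not the functional dependence on the evaluation variable, so that the difference structure $G_{z,\zeta}(w)-G_{z,\zeta}(\eta)$ annihilates it exactly. A secondary technical point is justifying the interchange of the $\E$ operation with the double integral over $\rho$, which is legitimate because $\Phi_\rho(z)=\Phi(\delta_z - \rho)$ is a genuine element of the energy space $\EE(M)$ (the neutrality condition $\int(\delta_z - \rho) = 0$ holds by $\int\rho = 1$) and the covariance is continuous; alternatively one appeals directly to the defining isometry $\Phi:\EE(M)\to L^2(\Omega,\P)$ and computes $\E\,\Phi_\rho(z)\Phi_\rho(w) = (\delta_z-\rho,\delta_w-\rho)_\EE$, then identifies this inner product with $2\pi R(z,w)$ through Corollary~\ref{cor: EW'}, which furnishes the unitary scaling $\sqrt{2\pi}:\EE\to\WW'$ relating the energy norm to the $\|\cdot\|_\Delta$ norm built from $R$. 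Either route is routine once the vanishing integrals are invoked.
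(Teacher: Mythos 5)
Your proof is correct and follows essentially the same route as the paper: both expand $\E\,\Phi_\rho(z)\Phi_\rho(w)$ as a double integral of $2(G_{z,\zeta}(w)-G_{z,\zeta}(\eta))$ against $\rho(\zeta)\rho(\eta)$, convert $G$ to the resolvent kernel via Corollary~\ref{cor: RG}, and kill all but the $R(z,w)$ term using $\int R(z,\cdot)\rho=0$ and $\int\rho=1$. Your version just spells out the term-by-term bookkeeping that the paper compresses into citations of \eqref{eq: Rrho} and the remark $R(z,p)=\int R_{p,q}(z)\,\rho(q)$.
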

\begin{proof}
By the correlation \eqref{eq: correlator of Phi} of Gaussian free field and Corollary~\ref{cor: RG}, we have 
$$\E\,\Phi_\rho(z)\Phi_\rho(w) = 2\pi \iint \big(R_{z,\zeta}(w) - R_{z,\zeta}(\eta)\big)\,\rho(\zeta)\rho(\eta).$$
Proposition now follows from \eqref{eq: Rrho} and the last remark in Subsection~\ref{ss: spectral theory}.
\end{proof}

A \emph{Gaussian field} indexed by a Hilbert space $\HH$ is an isometry \index{Gaussian field indexed by a Hilbert space}
$$\HH~\to~ L^2(\Omega, \P)$$
such that the image consists of centered (i.e. mean zero) Gaussian variables; here $(\Omega,\P)$ is some probability space.

\begin{prop} The map 
$$\frac1{\sqrt{2\pi}}\,\Phi:\WW'\to L^2(\Omega, \P)$$
is a Gaussian field indexed by $\WW'.$
\end{prop}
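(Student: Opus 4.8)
The plan is to verify the two defining properties of a Gaussian field indexed by $\WW'$: that the map $\frac1{\sqrt{2\pi}}\Phi$ is an isometry from $\WW'$ into $L^2(\Omega,\P)$, and that its image consists of centered Gaussian random variables. The centered property is immediate, since $\Phi$ is by construction a Gaussian field indexed by $\EE(M)$ whose image already consists of centered Gaussian variables, and precomposing with a linear isomorphism preserves both Gaussianity and the vanishing of means. So the real content is the isometry statement.

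For the isometry, I would exploit the chain of identifications already assembled in the excerpt. By definition the Gaussian free field $\Phi:\EE(M)\to L^2(\Omega,\P)$ is an isometry (this is its defining property in Subsection~\ref{ss: GFF}). Corollary~\ref{cor: EW'} provides a unitary operator $\sqrt{2\pi}:\EE\to\WW'$, or equivalently its inverse $\frac1{\sqrt{2\pi}}:\WW'\to\EE$ is unitary. The composition
$$\WW' \xrightarrow{\ \frac1{\sqrt{2\pi}}\ } \EE(M) \xrightarrow{\ \Phi\ } L^2(\Omega,\P)$$
is then a composition of an isometry with a unitary, hence an isometry. Concretely, for $\mu\in\WW'$ one computes
$$\Big\|\frac1{\sqrt{2\pi}}\,\Phi(\mu)\Big\|_{L^2(\Omega,\P)}^2 = \frac1{2\pi}\,\|\Phi(\mu)\|_{L^2(\Omega,\P)}^2 = \frac1{2\pi}\,\|\mu\|_{\EE}^2 = \|\mu\|_\Delta^2,$$
where the middle equality is the isometry of $\Phi$ and the last equality is exactly the identity $\|\mu\|_\EE^2 = 2\pi\|\mu\|_\Delta^2$ established in the proof of Corollary~\ref{cor: EW'}.

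The only point requiring a small remark is a compatibility of domains: $\Phi$ was defined as an isometry on $\EE(M)$, whose elements are $(1,1)$-differentials with the neutrality condition $\int\mu=0$, while here the index set is $\WW'$. But $\WW'(M)$ consists precisely of $(1,1)$-differentials $\mu$ with $\int\mu=0$ and $\|\mu\|_\Delta^2<\infty$, and under the norm identity $\|\mu\|_\EE^2 = 2\pi\|\mu\|_\Delta^2$ the finite-$\Delta$-norm elements are exactly the finite-$\EE$-norm ones; thus $\WW'$ and $\EE$ are the same space carried by the unitary scaling, and the interpretation of $\Phi(\mu)$ is unambiguous. I expect no genuine obstacle here — the proposition is essentially a repackaging of Corollary~\ref{cor: EW'} together with the defining isometry of $\Phi$ — so the main (very minor) care is simply to state clearly which norm lives on which space and to cite the norm comparison from Corollary~\ref{cor: EW'} rather than re-deriving it.
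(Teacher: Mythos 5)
Your proposal is correct and follows exactly the paper's argument: the proposition is obtained by composing the definitional isometry $\Phi:\EE(M)\to L^2(\Omega,\P)$ with the unitary operator of Corollary~\ref{cor: EW'}. Your additional remarks on the norm identity and the compatibility of the index spaces are just a more explicit spelling-out of the same two-line proof.
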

\begin{proof}
By definition, the Gaussian free field $\Phi$ on $M$ is a Gaussian field indexed by the energy space $\EE(M).$
Proposition now follows from Corollary~\ref{cor: EW'}.
\end{proof}

\begin{prop} The map 
$$\frac1{\sqrt{2\pi}}\,\Phi_\rho:\WW'(\rho)\to L^2(\Omega, \P)$$
 is a Gaussian field indexed by $\WW'(\rho).$
\end{prop}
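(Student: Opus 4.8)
The plan is to reduce the statement to the previous proposition, which already supplies a Gaussian field $\frac1{\sqrt{2\pi}}\Phi\colon\WW'\to L^2(\Omega,\P)$ indexed by $\WW'=\WW'(M)$. The bridge is the observation that on $(1,1)$-differentials with vanishing integral the two fields $\Phi_\rho$ and $\Phi$ coincide. Concretely, I would first read $\Phi_\rho$ as a linear functional on differentials by $\Phi_\rho(\mu):=\int\Phi_\rho(z)\,\mu(z)$, and record that, since $\Phi_\rho(z)=\Phi(\delta_z-\rho)$ and $\int\rho=1$, linearity of $\Phi$ gives $\Phi_\rho(\mu)=\Phi\big(\mu-\rho\int\mu\big)$. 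In particular $\int\Phi_\rho\,\rho=\Phi(\rho-\rho)=0$, so $\Phi_\rho$ annihilates $\C\rho$ and therefore descends to a well-defined functional on the quotient $\WW'(\rho)=\{[\mu]:\mu\in\WW'(M)\}$, $[\mu]=\mu+\C\rho$.

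The key identity is then that for the canonical representative $\mu\in\WW'(M)$ of a class $[\mu]$, which satisfies $\int\mu=0$, one has $\Phi_\rho(\mu)=\Phi(\mu)$. Since every class $[\mu]\in\WW'(\rho)$ has a unique representative with vanishing integral, the assignment $[\mu]\mapsto\mu$ is a linear isomorphism $\WW'(\rho)\to\WW'(M)$, and it is isometric because both spaces carry the same norm $\|\cdot\|_\Delta$. Under this identification $\frac1{\sqrt{2\pi}}\Phi_\rho$ is literally the composition of $[\mu]\mapsto\mu$ with $\frac1{\sqrt{2\pi}}\Phi$, and hence inherits from the previous proposition both the isometry property and the fact that its image consists of centered Gaussian variables.

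Alternatively, and more transparently, I would verify the isometry by a direct second-moment computation: for $\mu,\nu\in\WW'(M)$,
$$\E\big[\Phi_\rho(\mu)\,\overline{\Phi_\rho(\nu)}\big]=\iint\E\big[\Phi_\rho(z)\Phi_\rho(w)\big]\,\mu(z)\overline{\nu(w)}=2\pi\iint R(z,w)\,\mu(z)\overline{\nu(w)}=2\pi\,\langle\mu,\nu\rangle_\Delta,$$
using Proposition~\ref{prop: E Phi rho} together with the reality of $\Phi_\rho$ (so that $\overline{\Phi_\rho(\nu)}=\int\Phi_\rho(w)\overline{\nu(w)}$); dividing by $2\pi$ yields exactly $\langle\cdot,\cdot\rangle_{\WW'(\rho)}$. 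Centeredness and Gaussianity are then clear, since each $\Phi_\rho(\mu)$ is a limit of linear combinations of values of the Gaussian free field.

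The only genuinely nonroutine points, which I would treat with care, are the passage to the quotient $\WW'(\rho)$—ensuring $\Phi_\rho$ kills $\C\rho$, handled above—and the density/completion step: the second-moment formula is first established for smooth finite-norm differentials, after which one must extend $\frac1{\sqrt{2\pi}}\Phi_\rho$ by continuity to the full Hilbert-space completion $\WW'(\rho)$, observing that $L^2$-limits of centered Gaussians remain centered Gaussian. In the reduction route this extension is automatic, as it is already built into the previous proposition via Corollary~\ref{cor: EW'}.
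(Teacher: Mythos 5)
Your proposal is correct, and your ``alternative'' second-moment computation is exactly the paper's proof: the paper simply verifies $\frac1{2\pi}\,\E\,\Phi_\rho(\mu_1)\Phi_\rho(\mu_2)=(\mu_1,\bar\mu_2)_\Delta$ by integrating the two-point function $\E\,\Phi_\rho(z)\Phi_\rho(w)=2\pi R(z,w)$ from Proposition~\ref{prop: E Phi rho} against $\mu_1\otimes\mu_2$. Your primary reduction route --- identifying $\WW'(\rho)$ with $\WW'(M)$ via the unique zero-integral representative of each class $[\mu]=\mu+\C\rho$ and invoking the preceding proposition for $\Phi$ --- is also sound (the norm $\|\cdot\|_\Delta$ descends to the quotient because $\int R(z,\cdot)\rho=0$), and is consistent with the paper's remark that the two isomorphisms $-2\pa\bp$ and $\rho\Delta$ agree up to canonical identifications, but it is not needed given the direct computation.
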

\begin{proof}
We need to show that 
$$\frac1{2\pi}\, \E\, \Phi_\rho(\mu_1) \Phi_\rho(\mu_2) = (\mu_1,\bar\mu_2)_\Delta.$$
By Proposition~\ref{prop: E Phi rho}, the expectation in the left-hand side is
$$\int\E\, \Phi_\rho(z) \Phi_\rho(w) \mu_1(z) \mu_2(w) = 2\pi \int R(z,w) \mu_1(z)\mu_2(w),$$
which completes the proof.
\end{proof}

\begin{cor} \label{cor: Gaussian fields}
We have
$$ \Phi_\rho(z) = \sqrt{2\pi} \sum_{n=1}^\infty \frac1{\sqrt{\lambda_n}} \chi_n(\omega) \phi_n(z),$$
where $\chi_n$'s are independent standard normal random variables.
\end{cor}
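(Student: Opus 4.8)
The plan is to recognize the claimed identity as the orthonormal (Karhunen–Loève) expansion of the Gaussian field $\frac1{\sqrt{2\pi}}\Phi_\rho$ with respect to the orthonormal basis of $\WW'(\rho)$ exhibited in the proof of the isomorphism $\rho\Delta:\WW(\rho)\to\WW'(\rho)$. Concretely, I would first set
$$\chi_n := \frac1{\sqrt{2\pi}}\,\Phi_\rho\big(\sqrt{\lambda_n}\,[\phi_n\rho]\big),$$
the images under the Gaussian field of the orthonormal basis vectors $e_n=\sqrt{\lambda_n}\,[\phi_n\rho]$ of $\WW'(\rho)$. Since the preceding proposition shows that $\frac1{\sqrt{2\pi}}\Phi_\rho$ is an isometry onto centered Gaussian variables, the family $\{\chi_n\}$ is centered, jointly Gaussian, and orthonormal in $L^2(\Omega,\P)$, so that $\E[\chi_n\chi_m]=\delta_{nm}$. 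Being jointly Gaussian and uncorrelated, the $\chi_n$ are therefore independent standard normal variables, which settles the distributional claim about the $\chi_n$.

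The next step is to identify the random series with $\Phi_\rho$ itself. The quickest route is to compare two-point functions: writing $\Psi(z):=\sqrt{2\pi}\sum_n \lambda_n^{-1/2}\chi_n\phi_n(z)$ and using $\E[\chi_n\chi_m]=\delta_{nm}$, one gets
$$\E\,\Psi(z)\Psi(w) = 2\pi\sum_{n\ge1}\frac1{\lambda_n}\phi_n(z)\phi_n(w) = 2\pi R(z,w)$$
by the very definition of the resolvent kernel $R$, and by Proposition~\ref{prop: E Phi rho} this equals $\E\,\Phi_\rho(z)\Phi_\rho(w)$. Since both objects are centered Gaussian, matching covariances yields the identity at the level of Fock space functionals.

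For full rigor at the level of distributional fields I would instead test against an arbitrary $\mu\in\WW'(\rho)$. Expanding $\mu=\sum_n\langle\mu,e_n\rangle_{\WW'(\rho)}\,e_n$ and applying the continuous linear map $\Phi_\rho$ gives $\Phi_\rho(\mu)=\sqrt{2\pi}\sum_n\langle\mu,e_n\rangle_{\WW'(\rho)}\,\chi_n$, with convergence in $L^2(\Omega,\P)$. It then remains to identify the coefficient $\langle\mu,e_n\rangle_{\WW'(\rho)}$ with $\lambda_n^{-1/2}\int\phi_n\mu$, which follows from the duality pairing $(f,[\mu])\mapsto\int f\mu$ between $\WW(\rho)$ and $\WW'(\rho)$ together with the fact that $\rho\Delta$ is a unitary isomorphism carrying $\phi_n/\sqrt{\lambda_n}$ to $e_n$. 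This gives $\Phi_\rho(\mu)=\int\Psi\,\mu$ for every test $\mu$, i.e. $\Phi_\rho=\Psi$ as Fock space fields.

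The main obstacle I anticipate is interpretive rather than computational: the series $\Psi(z)$ does not converge pointwise, since $\sum_n\lambda_n^{-1}\phi_n(z)^2=R(z,z)=+\infty$ by the logarithmic singularity of $R$ on the diagonal (Proposition~\ref{R}, property (d)). Hence the equality asserted in the corollary must be read in the sense of generalized (distributional) Fock space functionals, and the honest content of the argument is the $L^2(\Omega,\P)$-convergence after pairing with $\WW'(\rho)$; the covariance computation then serves as the bookkeeping check that the limit is indeed $\Phi_\rho$.
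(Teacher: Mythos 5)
Your proposal is correct and follows essentially the same route as the paper: the paper deduces the corollary from the proposition that $\tfrac1{\sqrt{2\pi}}\Phi_\rho$ is a Gaussian field indexed by $\WW'(\rho)$ together with the general dual-basis construction stated in the remark immediately after (an orthonormal basis $\{h_n\}$ of $\HH$ with dual basis $\{\mu_n\}$ of $\HH'$ yields the Gaussian field $\Psi=\sum\chi_n h_n$ via $\Psi(\mu)=\sum(\int h_n\mu)\chi_n$), which is exactly your "test against $\mu\in\WW'(\rho)$ and expand" argument. Your additional covariance check against the resolvent kernel and your remark on the purely distributional meaning of the series are sound but not needed beyond what the paper records.
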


\begin{rmk*}
The previous corollary can be obtained by the following general construction.
Suppose $\HH$ and $\HH'$ are dual Hilbert spaces with respect to the pairing $\int h\mu$, e.g. $\WW$ and $\WW'.$
Let $\{h_n\}$ be an orthonormal basis for $\HH,$ and $\{\mu_n\}$ be its dual basis for $\HH'.$ 
Then
$$\Psi = \sum \chi_n(\omega) h_n $$
is a Gaussian field indexed by $\HH'.$ 
Indeed, we have 
$$\Psi(\mu) = \sum \Big(\int h_n \mu \Big)\chi_n(\omega),$$
so $\Psi$ maps $\mu_n$ to $\chi_n$ and $\Psi$ is an isometry $\HH'\to L^2(\Omega,\P).$
\end{rmk*}

\subsection{Dirichlet action and its partition function} \label{ss: Dirichlet action} 

In this subsection we explain that the bosonic field $\Phi_\rho$ can be obtained from the Dirichlet action. 
This action determines the distributions of random functions $\psi_n$ on some finite dimensional subspace of $\WW(\rho).$ 
We remark that a choice of the localization $\psi_n \perp 1$ resolves the divergence in the partition function.
We show that a sequence of random functions $\psi_n$ converges to $\Phi_\rho.$ 
In Section~\ref{s: Liouville action} we discuss the Liouville action and obtain the background charge modification of the Gaussian free field by modifying the method developed below. 

\subsubsec{Dirichlet action}
Let us define the \emph{Dirichlet action} \index{Dirichlet action} $\DD$ on $\WW(\rho)$ as 
$$\DD(\psi) = \frac1{4\pi} (\psi,\Delta \psi)_{L^2(\rho)} = \frac1{4\pi}(\psi,\psi)_\nabla,\quad \psi\in\WW(\rho).$$
Let $\lambda_k$ be the eigenvalues and $e_k$ the normalized eigenfunctions of the positive Laplace operator $\Delta_\rho,$ with $\lambda_0=0$ and $e_0=1.$
Recall that $\{e_n/\sqrt{\lambda_n}\}_{n\ge1}$ forms an orthonormal basis for $\WW(\rho).$
For a random function $\psi_n = \sum_{j=1}^n a_j e_j,$ the function 
$$\ee^{-\DD(\psi_n)} = \prod_{j=1}^n \ee^{-\frac{\lambda_ja_j^2}{4\pi}}$$
gives rise to the joint distribution measure $\mu_n$ for random variables $a_1,\cdots,a_n:$
$$\dd\mu_n = \frac1{Z_n}\prod_{j=1}^n \ee^{-\frac{\lambda_j x_j^2}{4\pi}}\dd x_j,\quad Z_n=\prod_{j=1}^n\int_{-\infty}^\infty \ee^{-\frac{\lambda_j x_j^2}{4\pi}}\,\dd x_j.$$
Then $a_j$'s are independent centered normal random variables with variances $2\pi/\lambda_j.$
It follows from Corollary~\ref{cor: Gaussian fields} that 
$$\psi_n= \sqrt{2\pi}\sum_{j=1}^n\frac1{\sqrt{\lambda_j}}\,\chi_je_j\to\Phi_\rho=\sqrt{2\pi}\sum_{j=1}^\infty \frac1{\sqrt{\lambda_j}}\,\chi_je_j,$$
where $\chi_j$'s are independent standard normal random variables.

\subsubsec{Partition function of Dirichlet action} \index{partition function! of Dirichlet action}
By means of a simple computation,
$$Z_n = \prod_{j=1}^n\sqrt{\frac{2\pi}{\lambda_j}}.$$ 
In general, $Z_n$ diverges as $n\to\infty.$
To define a partition function, this product should be properly regularized. 
It can be achieved by the so-called $\zeta$-function regularization.
We discuss this method in the genus one case below based on the materials in \cite[Appendix~A~1.1]{Sarnak}.

We consider a torus $\T_\Lambda, \Lambda = \Z + \tau Z, \Im\,\tau > 0$ with a flat metric $\rho = 1/\Im\,\tau$.
The eigenfunctions of the Laplacian $\Delta_\rho$ associated with this metric $\rho$ are
$$\varphi_w(z) = e^{2\pi i \langle z, w\rangle} $$ 
for some $w \in \Lambda^* := \{w\in\C\,|\,\langle z,w \rangle \in \Z \textrm{ for all } z \in \Lambda\}.$
It is easy to see that
$$ \Lambda^* = \frac{i}{\Im\,\tau}\Lambda.$$
(Indeed, due to the requirement $\langle 1, w\rangle \in \Z,$ we have  $ u = \Re\,w \in \Z.$ Write $u =-n$ for some $n\in \Z.$ 
Let $v = \Im\,w.$ On the other hand, the requirement $\langle \tau, w\rangle \in \Z$ leads to $-n\, \Re\,\tau + v\, \Im\,\tau = m$ for some $m\in\Z$ or $v = (m + n \Re\,\tau)/\Im\,\tau.$  Thus we have 
$$w = u + iv = -n +  i\frac{m + n \Re\,\tau}{\Im\,\tau} = i\,\frac{m + n\tau}{\Im\,\tau}.$$
Conversely, for $m,n,m',n'\in\Z,$ we have 
$$\frac1{\Im\,\tau}\langle m + n\tau,  i(m' + n' \tau)\rangle = \frac1{\Im\,\tau}\,\Im\big((m + n\tau)(m' + n' \bar\tau)\big) = m'n-mn' \in \Z.)$$

Let $w_{m,n} = i(m + n\tau)/\Im\,\tau$ and $\phi_{m,n}:=\varphi_{w_{m,n}}.$ 
Then $\phi_{m,n}$ form the eigenfunctions of $\Delta_\rho$ 
$$\pa\bp \phi_{m,n} = -\frac{\pi^2 |m+n\tau|^2}{(\Im\,\tau)^2}\phi_{m,n}  \Big(= -\frac12 \lambda_{m,n} \phi_{m,n} \rho\Big)$$
with eigenvalues
$$\lambda_{m,n} = \frac{2\pi^2|m+n\tau|^2}{\Im\,\tau}.$$
Thus we have 
$$\sqrt{\frac{2\pi}{\lambda_{m,n}}} = \frac{\sqrt{\Im\,\tau}}{\sqrt\pi\,|m+n\tau|}.$$

We introduce the normalized Eisenstein series $E^*(z,s)$ as
$$E^*(z,s) = \frac12 \pi^{-s}\Gamma(s){\sum_{m,n}}'\frac{y^s}{|m+nz|^{2s}},\qquad y = \Im\,z,$$
where $\sum_{m,n}'$ means $\sum_{(m,n)\in\Z^2\setminus\{(0,0)\}}.$ 
It is well known (e.g. see \cite[Appendix~A]{FGKP} or \cite[pp. 214~--~216]{CKM04}) that $E^*(z,s)$ has the Fourier expansion
\begin{align*}
E^*(z,s) &= \pi^{-s}\Gamma(s)\zeta(2s)y^s + \pi^{s-1}\Gamma(1-s) \zeta(2-2s) y^{1-s} \\
&+ 4\sqrt{y}\sum_{n=1}^\infty n^{s-1/2} \sigma_{1-2s}(n)K_{s-1/2}(2\pi n y)\cos(2\pi nx),
\end{align*}
where $x = \Re\,z,$ $\sigma_s(n) = \sum_{d|n} d^s$ and $K$ is the modified Bessel function of the second kind 
$$K_\nu(y) = \int_0^\infty  e^{-y\cosh t} \cosh(\nu t) \,\dd t.$$

Let 
$$\zeta_{\Delta_\rho}(s) = {\sum_{m,n}}' \frac{y^s}{|m+n\tau|^{2s}},\qquad y = \Im\,\tau.$$
Then it follows from the Fourier expansion of $E^*(z,s)$ that  
\begin{align*}
\zeta_{\Delta_\rho}(s)  &=2\zeta(2s)y^s + 2\pi^{2s-1}\frac{\Gamma(1-s)}{\Gamma(s)} \zeta(2-2s) y^{1-s} \\
&+ \frac{8\pi^s\sqrt{y}}{\Gamma(s)}\sum_{n=1}^\infty n^{s-1/2} \sigma_{1-2s}(n)K_{s-1/2}(2\pi n y)\cos(2\pi nx),
\end{align*}
where $x = \Re\,\tau.$
We now define the partition function $Z$ of the Dirichlet action by
\begin{equation}\label{eq: defZ4T}
Z := {\prod}'_\zeta \frac{\sqrt{\Im\,\tau}}{\sqrt\pi\,|m+n\tau|} \quad \textrm{ or } \quad Z:= \pi^{-\frac12 \zeta_{\Delta_\rho}(0) }\ee^{\frac12\zeta_{\Delta_\rho}'(0)}.
\end{equation}
See \cite{QHS93} for the definition and some basic properties of the $\zeta$-regularized products. 

It follows from $\zeta(0) =-1/2$ and $\Gamma(0) = \infty$ that $\zeta_{\Delta_\rho}(0) = -1.$
Using
$$\frac d{ds}\Big|_{s=0}\frac1{\Gamma(s)} = 1,$$
we compute
\begin{align*}
\zeta_{\Delta_\rho}'(0)  &= 4\zeta'(0) +2\zeta(0) \log y + 2\pi^{-1} \zeta(2) y\\
&+ 8\sqrt{y}\sum_{n=1}^\infty n^{-1/2} \sigma_{1}(n)K_{-1/2}(2\pi n y)\cos(2\pi nx).
\end{align*}
We use the well-known formulas
$$\zeta'(0)=-\frac12 \log(2\pi), \quad \zeta(0) = -\frac12, \quad \zeta(2) = \frac{\pi^2}6, \quad K_{-1/2}(y) = \sqrt{\frac\pi{2y}}\ee^{-y}$$
to obtain 
$$\zeta_{\Delta_\rho}'(0)  = -2 \log(2\pi)- \log y + \frac\pi3  y + 4\,\Re\sum_{n=1}^\infty \frac{\ee^{2\pi i n z}}n \sum_{d|n} d.$$

We now express the partition function $Z$ in terms of the Dedekind $\eta$-function
$$\eta(z) = \ee^{\pi i z/12}\prod_{n=1}^\infty (1-\ee^{2\pi i nz}).$$
It follows from Taylor series expansion of $\log(1-x)$ that
$$\log|\eta(z)| = -\frac\pi{12} y- \Re\,\sum_{n=1}^\infty \sum_{m=1}^\infty \frac{\ee^{2\pi i mn z}}m = -\frac\pi{12} y- \Re\sum_{l=1}^\infty\frac{\ee^{2\pi i l z}}l \sum_{n|l} n.$$
Thus we have 
$$ \zeta_{\Delta_\rho}'(0) = -2 \log(2\pi)- \log \Im\,\tau -4 \log|\eta(\tau)|.$$
By definition \eqref{eq: defZ4T}, the above formula for $\zeta_{\Delta_\rho}'(0),$ and   $\zeta_{\Delta_\rho}(0) = -1,$ we derive 
$$Z =  \frac1{2\sqrt\pi}\,\frac1{\sqrt{\Im\,\tau}|\eta(\tau)|^2}.$$
We emphasize that $\sqrt{\Im\,\tau}|\eta(\tau)|^2$ is modular invariant.

\section{Currents fields and Virasoro fields} 

One can think of the current fields $J = \pa \Phi, \bar J = \bp \Phi$ as Gaussian distributional fields. 
For example, $J(f) = -\Phi(\pa f)$ for $f\in C^\infty(M),$ so $J$ is a map from $C^\infty(M)$ to $L^2(\Omega,\P).$
In this section we define the current fields $J$ and $\bar J$ as single-variable Fock space fields such that $J(z) = \pa_z \Phi(z,z_0), \bar J(z) = \bp_z \Phi(z,z_0)$ within correlations. 
It is noteworthy that some correlations can be identified with fundamental object in function theory. 
For example, the 2-point function $-\E\,J(\zeta)J(z)$ is related to Weyl's form, the fundamental (normalized) bi-differential.
A $(1,1)$-differential $\E\,J(\zeta) \overline{J(z)}$ is represented by a matrix $(\Im\,\PM)^{-1},$ where $\PM$ is the period matrix. 

We introduce the Virasoro field $T$ in terms of the operator product expansion of the current field $J$ and itself. In Theorem~\ref{SET} we show that the Gaussian free field $\Phi$ has a stress energy tensor (with central charge $1$) which represents the Lie derivative operator within correlations of fields in a whole OPE family $\FF.$
(In Section~\ref{s: OPE EXP} we define the OPE products of two Fock space fields as the coefficients of their operator product expansion.) 
The proof of Theorem~\ref{SET} will be given for all background charge modifications (see Section~\ref{s: GFF modifications}) of $\FF.$
It is a fundamental fact that fields in $\FF$ have the same stress tensor (and thus the same Virasoro field) as $\Phi$ does.

\subsection{Current fields}
We define the \emph{current field} \index{current field} $J$ by 
$$J(z) = \pa_z \Phi(z,z_0).$$
Then $J$ is a well-defined single-variable Fock space field. 
The above definition does not depend on the choice of $z_0.$ 
Furthermore, it is single-valued and holomorphic (in the sense that the correlation $z\mapsto \E\, J(z)\XX$ is so on $M\sm S(\XX)$ for any string $\XX = \Phi(\zeta_1,z_1) \cdots \Phi(\zeta_n,z_n)$).
To see $J$ is holomorphic, it suffices to check that $\E\,J(p)\Phi(\zeta,z)$ is holomorphic on $M\sm\{\zeta,z\}$
Indeed, we have
\begin{align*} 
\E\,J(p)\Phi(\zeta,z) &= 2\pa_p (G_{p,q}(\zeta)-G_{p,q}(z))=2\pi\pa_p (R_p(\zeta)-R_p(z)) \\&= 2\pi\pa_p R_{\zeta,z}(p) = 2\pa_p G_{\zeta,z}(p). 
\end{align*}
It is obvious that $J$ is 1-differential. 
We also define $\bar J$ by 
$$\bar J(z) = \bp_z \Phi(z,z_0).$$
\begin{rmks*}
\renewcommand{\theenumi}{\alph{enumi}}
{\setlength{\leftmargini}{1.8em}
\begin{enumerate}
\item
In a sense, $J = \pa\Phi$ where $\Phi$ is a formal 1-point field with the formal correlation
$\E\,\Phi(p)\Phi(z) = 2G_p(z).$
\item For fixed $p$ and $q,$ $\pa G_{p,q}$ is a meromorphic 1-differential of third type with two simple poles $p,q.$
Its residue is $-1$ at $p$ and $1$ at $q.$ 
Furthermore, $\pa G_{p,q}$ has all imaginary periods.
\end{enumerate}
}
\end{rmks*}

In terms of harmonic function 
$$u_{\zeta_0,z_0}(\zeta,z) := G_{\zeta,\zeta_0}(z)-G_{\zeta,\zeta_0}(z_0) +\log|\zeta-z| + \log|\zeta_0-z_0|,$$
we have
\begin{align} \label{eq: EJJ}
\E\,J(\zeta) J(z) &= 2\pa_\zeta\pa_z G_{\zeta,\zeta_0}(z)
= 2\pa_\zeta\pa_z(- \log|\zeta-z|+u_{\zeta_0,z_0}(\zeta,z) ) \\
&= -\frac1{(\zeta-z)^2} - \frac16 S(\zeta,z), \nonumber
\end{align}
where $S(\zeta,z) := -12 \pa_\zeta\pa_z u_{\zeta_0,z_0}(\zeta,z)$ does not depend on $\zeta_0,z_0.$
If we fix $p$ and a chart at $p,$ then $\E\,J(z)J(p)$ is a meromorphic 1-differential of second type with a double pole at $p.$

\subsubsec{The genus zero case} 
On the Riemann sphere $\wh\C$ we have 
$$\E\,\Phi(\zeta,\zeta _0)\,\Phi(z,z_0) = 2\log\left|\frac{(\zeta _0-z)(\zeta-z_0)}{(\zeta-z)(\zeta _0-z_0)}\right|.$$
As a scalar, conformal invariance is manifest in the cross-ratio $\lambda(\zeta,\zeta _0,z,z_0)$ in the right-hand side.
Differentiating the above correlation function, we have 
$$\E\,J(\zeta)\Phi(z,z_0) = -\frac1{\zeta-z} +\frac1{\zeta-z_0}, \qquad \E\,J(\zeta)J(z) = -\frac1{(\zeta-z)^2} $$
and we can say the holomorphic sector $J$ and anti-holomorphic sector $\bar J$ is independent in the sense that 
\begin{equation} \label{eq: J bar J}
\E\,J(\zeta)\overline{J(z)} = 0.
\end{equation}
If $\zeta =z,$ the last formula is understood in the OPE sense.

\subsubsec{The genus one case}
We first recall the definition of \emph{Weierstrass $\wp$-function} \index{Weierstrass $\wp$-function} and its some basic properties.
Weierstrass $\wp$-function
$$\wp(z) \equiv \wp(z|\tau):= \frac1{z^2} + {\sum_{m,n}}' \Big(\frac1{(z+m+n\tau)^2}-\frac1{(m+n\tau)^2}\Big)$$
is an elliptic function with periods $1$ and $\tau.$
Here, ${\sum_{m,n}}'$ means $\sum_{(m,n)\in\Z^2\setminus\{(0,0)\}}.$ 
It has the Laurent series expansion around the only pole $0$ of $\wp:$
\begin{equation}\label{eq: wp expansion}
\wp(z) = \frac1{z^2} + \frac1{20} g_2 z^2 + O(z^4),
\end{equation}
where
$$g_2 = 60\,{\sum_{m,n}}' \frac1{(m+n\tau)^4}.$$
Weierstrass $\wp$-function is related to the Jacobi $\theta$-function as 
\begin{equation} \label{eq: wp theta}
\wp(z) = -\frac{\pa^2}{\pa z^2}\log \theta(z) +\frac13 \frac{\theta'''(0)}{\theta'(0)}
\end{equation}

In the identity chart of torus $\T_\Lambda,$ we have 
\begin{equation} \label{eq: EJPhi1}
\E\,J(z)\Phi(p,q) = -\frac{\theta'(z-p)}{\theta(z-p)} + \frac{\theta'(z-q)}{\theta(z-q)} + 2\pi i \,\frac{\Im(p-q)}{\Im\,\tau}
\end{equation}
and
$$\E\,J(\zeta)J(z) = \pa_z \Big(-\frac{\theta'(\zeta-z)}{\theta(\zeta-z)} + \frac\pi{\Im\,\tau}z\Big) =- \wp(\zeta-z) +\frac13 \frac{\theta'''(0)}{\theta'(0)}+\frac\pi{\Im\,\tau}.$$
Here we use \eqref{eq: wp theta}.
Taking $\bp_z$-derivative of $\E\,J(\zeta)\Phi(z,z_0),$  we have 
\begin{equation} \label{eq: EJbarJ torus}
\E\,J(\zeta)\overline{J(z)} = -\frac\pi{\Im\,\tau}.
\end{equation}
It follows from the Laurent series expansion \eqref{eq: wp expansion} of $\wp$ that 
\begin{equation} \label{eq: OPE JJ torus}
\E\,J(\zeta)J(z)= -\frac1{(\zeta-z)^2} +\frac13 \frac{\theta'''(0)}{\theta'(0)}+\frac\pi{\Im\,\tau} + o(1)
\end{equation}
as $\zeta\to z.$

\begin{eg*}
In Subsection~\ref{ss: Modular invariance of GFF} we discuss the modular invariance of the Gaussian free field. 
In this example, we use the modular invariance of the current to derive the transformation formula for Weierstrass $\wp$-function and $\eta_1.$
Differentiating \eqref{eq: modular invariance GFF}, we have
$$\frac1{(c\tau+d)^2}\,\E\, J_{_{\T_{\Lambda'}}}(\xi')J_{_{\T_{\Lambda'}}}(z') =  \E\, J_{_{\T_{\Lambda}}}(\xi)J_{_{\T_{\Lambda}}}(z).$$
In terms of two generators for the modular group $\mathrm{PSL}(2,\Z):=\mathrm{SL}(2,\Z)/\{\pm I\}$, 
$$\wp(z|\tau+1) +2\eta_1(\tau+1) = \wp(z|\tau) +2\eta_1(\tau)$$
and
\begin{equation}\label{eq: wp modular invariance}
\frac1{\tau^2}\Big(\wp\Big(\frac z\tau \Big|-\frac1\tau\Big) + 2\eta_1\Big(\!-\frac1\tau\Big) \Big)= \wp(z|\tau)+2\eta_1(\tau)
+\frac\pi{\tau^2 \Im\,(-1/\tau)} - \frac\pi{\Im\,\tau}.
\end{equation}
Comparing the coefficients of the expansions around the origin, we have 
$$\frac1{\tau^2}\wp\Big(\frac z\tau \Big|-\frac1\tau\Big)= \wp(z|\tau)$$
and
$$\frac1{\tau^2}\eta_1\Big(\!-\frac1\tau\Big) = \eta_1(\tau) - \frac{\pi i}{\tau}.$$
Here, we use
\begin{align*}
\frac\pi{2\tau^2 \,\Im\,(-1/\tau)} - \frac\pi{2\,\Im\,\tau} &=  \frac{\pi|\tau|^2}{2\tau^2 \,\Im\,\tau} - \frac\pi{2\,\Im\,\tau}=-\frac{\pi i}{\tau}.
\end{align*}
Of course, $\wp(z|\tau+1) = \wp(z|\tau)$ and $\eta_1(\tau+1) = -\eta_1(\tau).$
\end{eg*}

\subsubsec{The higher genus case}
We have 
\begin{align*}
\E\,J(\zeta) \Phi(p,q) &= \sum_j \Big(-\frac{\pa_j\Theta}{\Theta}\big(\AA(\zeta)-\AA(p)-e\big)+\frac{\pa_j\Theta}{\Theta}\big(\AA(\zeta)-\AA(q)-e\big) \Big)\,\omega_j(\zeta)\\
&+ 2\pi i\langle (\Im\,\PM)^{-1} \vec\omega(\zeta), \Im\,\AA(p)-\Im\,\AA(q) \rangle
\end{align*}
and 
\begin{align*}
\E\,J(\zeta) J(z) &= \sum_{j,k} \Big(\frac{\pa_{jk}\Theta}{\Theta}-\frac{\pa_j\Theta\pa_k\Theta}{\Theta^2}\Big)\big(\AA(\zeta)-\AA(z)-e\big)\,\omega_j(\zeta)\,\omega_k(z)\\
&+ \pi \langle (\Im\,\PM)^{-1} \vec\omega(\zeta), \overline{\vec\omega(z)} \rangle,
\end{align*}
where $\langle \cdot, \cdot \rangle$ is the usual inner product in $\C^g.$

By definition, the \emph{fundamental normalized bi-differential} \index{fundamental normalized bi-differential} $\Omega(\zeta,z)$ is a differential in both $\zeta$ and $z$ with the following properties
\begin{enumerate} 
\renewcommand{\theenumi}{\alph{enumi}}
{\setlength{\leftmargini}{1.8em}
\setlength\itemsep{.5em}
\item $\Omega(\zeta,z) = \Omega(z,\zeta);$
\item for fixed $z,$ $\Omega(\zeta,z)$ is meromorphic in $\zeta$ with a sole double pole at $\zeta = z;$ 
\item there exists a Schwarzian form $S$ of order 1 such that in a chart $\phi,$
$$\Omega(\phi^{-1}\zeta,\phi^{-1}z) = \dfrac1{(\zeta-z)^2} + \dfrac16S(\phi^{-1}z) + o(1)$$ as $\zeta\to z;$
\item $\displaystyle\oint_{\zeta\in a_j} \Omega(\zeta,z)= 0$ for each $a_j$-cycle.
}
\end{enumerate}
Clearly, $-\E\,J(\zeta) J(z)- \pi \langle (\Im\,\PM)^{-1} \vec\omega(\zeta), \overline{\vec\omega(z)} \rangle$ is the fundamental normalized bi-differential $\Omega(\zeta,z).$ 
See Proposition~\ref{prop: T0} in the next subsection.

Taking $\bp_z$-derivative of $\E\,J(\zeta)\Phi(z,z_0),$ we have 
$$\E\,J(\zeta) \overline{J(z)} = -\pi \langle (\Im\,\PM)^{-1} \vec\omega(\zeta), \vec\omega(z) \rangle.$$
We remark that $\E\,J(\zeta) \overline{J(z)}$ is non-trivial unless the genus $g$ is zero and it is a canonical version of 
$-\pi(\Im\,\PM)^{-1}$ in the sense that it does not require the specific basis $\{a_j,b_j\}$ of cycles and the basis $\{\omega_j\}$ of holomorphic 1-differentials associated with $\{a_j,b_j\}.$

\subsection{Virasoro fields} \label{ss: T}
We define the Virasoro field $T$ by the operator product expansion
$$J(\zeta) J(z) = -\frac1{(\zeta-z)^2} -2\,T(z) + o(1)$$
as $\zeta\to z.$
More precisely, 
$$\E\,J(\zeta) J(z)\XX = -\frac1{(\zeta-z)^2}\,\E\,\XX -2\,\E\,T(z)\XX + o(1)$$
holds for every Fock space correlation functional $\XX$ on $M$ such that $z\notin S_\XX.$
Here, $J(\zeta) J(z)$ and $T(z)$ mean of course $J(\phi^{-1}\zeta) J(\phi^{-1}z)$ and $T(\phi^{-1}z),$ respectively according to our convention.
In Section~\ref{s: OPE EXP} we study operator product expansion in more details.

\begin{prop} \label{prop: T0}
The Virasoro field $T$ is a holomorphic Schwarzian form of order $\frac1{12}.$
We have
$$T = -\frac12 J\odot J + \E\,T, $$
and
$$\,\E\,T(z) = \lim_{\zeta\to z} \Big(-\pa_\zeta\pa_z G_{z,z_0}(\zeta) -\frac12 \frac1{(\zeta-z)^2}\Big).$$
\end{prop}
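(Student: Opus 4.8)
The plan is to extract all three assertions from the defining operator product expansion $J(\zeta)J(z) = -(\zeta-z)^{-2} - 2T(z) + o(1)$ by combining it with the Wick calculus and the explicit two-point function \eqref{eq: EJJ}. Since $J$ is a derivative of the Gaussian free field, the $m=2$ instance of the tensor-product formula \eqref{def: tensor product} splits the product at the level of functionals as
$$J(\zeta)J(z) = \E[J(\zeta)J(z)] + J(\zeta)\odot J(z),$$
where the first term is the fully contracted (scalar) part and the second is the Wick product. By \eqref{eq: EJJ} the contracted term equals $-(\zeta-z)^{-2} - \frac16 S(\zeta,z)$, whose only singularity as $\zeta\to z$ is the double pole, while $J(\zeta)\odot J(z)$ is regular and tends to $J\odot J(z)$. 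Matching the resulting expansion with the definition of $T$ gives $-2T(z) = -\frac16 S(z,z) + J\odot J(z)$, i.e. $T = -\frac12 J\odot J + \frac1{12}S(z,z)$. Taking expectations and using that a Wick monomial has vanishing mean, $\E[J\odot J]=0$, identifies $\frac1{12}S(z,z)=\E\,T$, which is the second displayed identity.

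To obtain the closed form for $\E\,T$ I would expand $S(z,z)=-12\lim_{\zeta\to z}\pa_\zeta\pa_z u_{\zeta_0,z_0}(\zeta,z)$. Substituting the definition of $u_{\zeta_0,z_0}$ and using $\pa_\zeta\pa_z\log|\zeta-z| = \tfrac12(\zeta-z)^{-2}$ (the other terms carry no mixed derivative) leaves
$$\E\,T(z) = -\lim_{\zeta\to z}\Big(\pa_\zeta\pa_z G_{\zeta,\zeta_0}(z) + \tfrac12 (\zeta-z)^{-2}\Big).$$
Finally, writing $G_{\zeta,\zeta_0}(z) = G_\zeta(z) - G_{\zeta_0}(z)$ and invoking the symmetry $G_p(w)=G_w(p)$ of the formal one-point kernel (which is just the symmetry of $\E\,\Phi(p)\Phi(w)$) shows that $\pa_\zeta\pa_z G$ is a symmetric function of its two arguments, so $\pa_\zeta\pa_z G_{\zeta,\zeta_0}(z) = \pa_\zeta\pa_z G_{z,z_0}(\zeta)$; this converts the expression into the form stated in the proposition.

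For the structural claim I would treat holomorphicity and the transformation law separately. Holomorphicity: in any correlation the function $F(\zeta,z):=\E\,J(\zeta)J(z)\XX + (\zeta-z)^{-2}\E\,\XX$ is separately holomorphic off the diagonal (because $J$ is holomorphic) and bounded across it (by the OPE), hence extends holomorphically, so its diagonal restriction $z\mapsto -\tfrac12 F(z,z) = \E\,T(z)\XX$ is holomorphic. For the Schwarzian transformation law I would use that $J$ is a holomorphic $1$-differential, $J(\zeta)=h'(\zeta)\,\tilde J(h\zeta)$, substitute into the OPE written in the $\tilde\phi$-chart, and apply the classical expansion $\frac{h'(\zeta)h'(z)}{(h\zeta-hz)^2} = (\zeta-z)^{-2} + \tfrac16 S_h(z) + o(1)$; comparing the $\zeta\to z$ limits yields $T = (h')^2\tilde T\circ h + \tfrac1{12}S_h$, a Schwarzian form of order $\tfrac1{12}$.

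The main obstacle I expect is the transformation-law step: one must carry out the coordinate change of the double-pole term carefully enough to isolate the coefficient $\tfrac1{12}$ of the Schwarzian derivative, and check that the anomaly carried by $-\tfrac12 J\odot J$ (reflecting that $J$ is a genuine $1$-differential rather than a scalar) combines with the transformation of $\E\,T$ to produce exactly $\tfrac1{12}S_h$. By contrast, the Wick-decomposition and explicit-formula steps are essentially immediate once \eqref{eq: EJJ} is in hand.
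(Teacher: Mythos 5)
Your proposal is correct and follows essentially the same route as the paper: the Wick decomposition $J(\zeta)J(z)=\E[J(\zeta)J(z)]+J(\zeta)\odot J(z)$ together with \eqref{eq: EJJ} yields $T=-\frac12 J\odot J+\frac1{12}S(z,z)$, and the Schwarzian law is extracted exactly as in the paper by transforming $\E\,J(\zeta)J(z)$ as a bi-differential and using $\frac{h'(\zeta)h'(z)}{(h(\zeta)-h(z))^2}=\frac1{(\zeta-z)^2}+\frac16 S_h(z)+o(1)$. The only difference is that you spell out the steps the paper dismisses as obvious (the explicit formula for $\E\,T$ via the mixed derivative of $u_{\zeta_0,z_0}$ and the symmetry/bilinear relation for $G$, and the removable-singularity argument for holomorphicity), and these details are all sound.
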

\begin{proof}
It follows from \eqref{eq: EJJ} that 
$$T = -\frac12 J\odot J + \frac1{12}S,$$
where $S(z) = S(z,z).$
We claim that $S=12\,\E\,T$ is a Schwarzian form of order $1.$
Indeed, for two overlapping charts $\phi,\ti\phi,$ and their transition map $h,$ it follows from \eqref{eq: EJJ} that 
$$ - \frac16 (S\|\phi)(\zeta,z) = -\frac{h'(\zeta)h'(z)}{(h(\zeta)-h(z))^2} +\frac1{(\zeta-z)^2}- \frac16 h'(\zeta)h'(z)(S\|\ti\phi)(h(\zeta),h(z)).$$
Taking the limits on both sides as $\zeta$ to $z,$ we have 
$$(S\|\phi)(z) = h'(z)^2 (S\|\ti\phi)(h(z)) + S_h(z).$$
The other properties are obvious.
\end{proof}

\begin{rmk*} A non-random Schwarzian form of order $\frac1{12}$ exists always and is unique up to a holomorphic quadratic differential. 
\end{rmk*}

\begin{eg*} 
In the genus zero case, there are no nontrivial quadratic differentials, so $\E\,T = 0$ in the identity chart of $\C$ is the only such form.
\end{eg*}

\begin{eg*} 
In the genus one case, the dimension of the space of holomorphic quadratic differentials is 1 and $\E\,T$ is constant in the $\T_\Lambda$-uniformization. 
It follows from \eqref{eq: OPE JJ torus} that
$$\E\,T = -\frac16 \frac{\theta'''(0)}{\theta'(0)}-\frac12\frac\pi{\Im\,\tau}$$
in the identity chart of $\T_\Lambda.$
\end{eg*}

\begin{eg*} In the higher genus case, we have 
$$\E\,T = \frac1{12} S_{\omega^{(0)}}+ \frac18 \Big(\frac{\omega^{(2)}}{\omega^{(1)}}\Big)^2-\frac16 \frac{\omega^{(3)}}{\omega^{(1)}}-\frac\pi2 \langle (\Im\,\PM)^{-1} \vec\omega, \overline{\vec\omega} \rangle,$$
where $S_{\omega^{(0)}}$ is the Schwarzian derivative of the scalar $ \omega^{(0)},$ 
and $\omega^{(j)}$ is the $j$-differential ($j=0,1,2,3$) given by 
\begin{align} \label{eq: omega}
\omega^{(0)} = \int \omega^{(1)}, \qquad 
\omega^{(1)} &= \sum_j \pa_j\Theta(e)\,\omega_j \qquad 
\omega^{(2)} = \sum_{j,k} \pa_j\pa_k\Theta(e)\,\omega_j\omega_k, \qquad\\
\omega^{(3)} &= \sum_{j,k,l} \pa_j\pa_k\pa_l\Theta(e)\,\omega_j\omega_k\omega_l. \nonumber
\end{align}

The Schwarzian form 
$$S_{\omega^{(0)}}+ \frac32 \Big(\frac{\omega^{(2)}}{\omega^{(1)}}\Big)^2-2 \frac{\omega^{(3)}}{\omega^{(1)}} $$
is called the \emph{Bergman projective connection}.
We remark its important properties that it is \emph{holomorphic} and independent of the point $e$ in theta divisor, see \cite[p.19]{Fay}. 
From the point of the view of conformal field theory, these properties follow automatically from the construction of $T.$ 

\end{eg*}

\begin{prop} \label{OPE4T}
As $\zeta\to z,$ we have 
\renewcommand{\theenumi}{\alph{enumi}}
{\setlength{\leftmargini}{1.8em}
\begin{enumerate}
\ms \item \label{item: OPE(T,Phi)}
$T(\zeta)\Phi(z,z_0)= \dfrac{J(z)}{\zeta-z} + O(1),$
\ms \item \label{item: OPE(T,J)}
$T(\zeta)J(z)= \dfrac{J(z)}{(\zeta-z)^2}+ \dfrac{\partial J(z)}{\zeta-z}+ O(1),$
\ms \item \label{item: OPE(T,T)}
$T(\zeta)T(z)= \dfrac{1/2}{(\zeta-z)^4}+\dfrac{2T(z)}{(\zeta-z)^2}+ \dfrac{\partial T(z)}{\zeta-z}+ O(1).$
\end{enumerate}
}
\end{prop}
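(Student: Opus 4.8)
The plan is to reduce all three expansions to the decomposition $T=-\frac12 J\odot J+\E\,T$ from Proposition~\ref{prop: T0} and then compute the products by Wick's formula~\eqref{def: tensor product}, reading off the singular terms from the known two-point functions. Since $\E\,T$ is a holomorphic Schwarzian form of order $\frac1{12}$, any factor $\E\,T(\zeta)$ or $\E\,T(z)$ is bounded near the diagonal and feeds only into the $O(1)$ remainder; hence in every part the singular part comes solely from the Wick-quadratic piece $-\frac12 J\odot J$. The two inputs I would use repeatedly are $\E\,J(\zeta)\Phi(z,z_0)=-\frac1{\zeta-z}+O(1)$ and, from \eqref{eq: EJJ}, $\E\,J(\zeta)J(z)=-\frac1{(\zeta-z)^2}-\frac16 S(\zeta,z)$ with $S(\zeta,z)$ holomorphic and symmetric in $(\zeta,z)$ and $S(z,z)=12\,\E\,T(z)$, together with the Taylor expansion of the holomorphic field $J(\zeta)$ about $z$.

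For (a) and (b) I would expand $\big(-\frac12 J(\zeta)\odot J(\zeta)\big)\Phi(z,z_0)$ and $\big(-\frac12 J(\zeta)\odot J(\zeta)\big)J(z)$ by Wick's formula. In each case the fully Wick-ordered term ($J(\zeta)\odot J(\zeta)\odot\Phi(z,z_0)$, resp.\ the triple product $J(\zeta)^{\odot2}\odot J(z)$) is regular, so the only singular contribution is the single contraction. Because there are two identical factors $J(\zeta)$, this contraction carries a combinatorial factor $2$ that cancels the $-\frac12$, leaving $-\E[J(\zeta)\Phi(z,z_0)]\,J(\zeta)$ for (a) and $-\E[J(\zeta)J(z)]\,J(\zeta)$ for (b). Substituting the two-point singularities and expanding $J(\zeta)=J(z)+\pa J(z)(\zeta-z)+O((\zeta-z)^2)$ then gives $\frac{J(z)}{\zeta-z}+O(1)$ for (a) and $\frac{J(z)}{(\zeta-z)^2}+\frac{\pa J(z)}{\zeta-z}+O(1)$ for (b).

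Part (c) is the substantive one. Writing $T=-\frac12 J\odot J+\E\,T$ in both slots and dropping the $O(1)$ terms involving $\E\,T$, the singular part is $\frac14[J(\zeta)\odot J(\zeta)][J(z)\odot J(z)]$, which Wick's formula splits into the regular quartic Wick product, a double-contraction term $\frac12\,\E[J(\zeta)J(z)]^2$ (from the two perfect matchings), and a single-contraction term $\E[J(\zeta)J(z)]\,J(\zeta)\odot J(z)$ (from the four mixed pairings). Squaring the double pole produces the central term $\frac{1/2}{(\zeta-z)^4}$, reflecting central charge $c=1$; and there is no $\frac1{(\zeta-z)^3}$ term since $\E\,J(\zeta)J(z)$ has no simple pole. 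For the $\frac1{(\zeta-z)^2}$ coefficient, the double contraction contributes $\frac16 S(z,z)=2\,\E\,T(z)$ while the single contraction contributes $-J(z)^{\odot2}$, and these add to $2T(z)$. The hard part will be the $\frac1{\zeta-z}$ coefficient, where two pieces compete: the single contraction supplies $-(\pa J\odot J)(z)$, and the double contraction supplies a residual simple pole with coefficient $\frac16(\pa_\zeta S)(z,z)$. Here I would invoke the symmetry $S(\zeta,z)=S(z,\zeta)$, which gives $(\pa_\zeta S)(z,z)=\frac12\frac{d}{dz}S(z,z)$; combined with $S(z,z)=12\,\E\,T(z)$ this yields $\frac16(\pa_\zeta S)(z,z)=\pa\E\,T(z)$. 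Adding the two contributions produces $-(\pa J\odot J)(z)+\pa\E\,T(z)=\pa T(z)$, which completes the expansion.
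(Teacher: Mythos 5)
Your proposal is correct and follows essentially the same route as the paper: decompose $T=-\tfrac12 J\odot J+\E\,T$, apply Wick's formula, and read off the single- and double-contraction terms (the paper's terms $\mathrm{I}$ and $\mathrm{II}$ in its proof of part (c)) from the expansion of $\E\,J(\zeta)J(z)$. Your explicit use of the symmetry $S(\zeta,z)=S(z,\zeta)$ to identify the residual simple pole of the double contraction with $\pa\,\E\,T(z)$ is exactly the step the paper performs implicitly when it writes the linear term of $\E[J(\zeta)J(z)]$ as $-(\zeta-z)\pa_z\E\,T(z)$.
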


\begin{proof}

The proof is similar to that in the simply connected situation, see \cite[Proposition~3.5]{KM13}.
For reader's convenience, we only explain \eqref{item: OPE(T,T)}.
Use Wick's formula to obtain
$$T(\zeta)T(z) = \mathrm{I} + \mathrm{II} + T^{\odot2}(z) + o(1), \qquad(\zeta\to z),$$
where the terms $\mathrm{I}$ and $\mathrm{II}$ come from 1 and 2 contractions, respectively. 
We compute 
\begin{align*} 
\mathrm{II} &= \frac12 (\E[J(\zeta)J(z)])^2 \\
&= \frac12\Big(-\frac1{(\zeta-z)^2}-2\,\E\,T(z) -(\zeta-z)\pa_z\E\,T(z)+ O(|\zeta-z|^2)\Big)^2 \\
&= \frac{1/2}{(\zeta-z)^4} + \frac{2\,\E\,T(z)}{(\zeta-z)^2}  + \frac{\pa_z\E\,T(z)}{\zeta-z} + O(1).
\end{align*}
Next, we find $\mathrm{I} = \E[J(\zeta)J(z)] J(\zeta)\odot J(z)$ as
\begin{gather*}
\mathrm{I} =\Big(-\frac1{(\zeta-z)^2} + O(1)\Big)(J\odot J(z) + (\pa J)\odot J(z)(\zeta-z) + O(|\zeta-z|^2))\\
= -\frac{J\odot J(z)}{(\zeta-z)^2} - \frac{\pa J\odot J(z)}{\zeta-z} + O(1). 
\end{gather*}
Combining the above two, we obtain \eqref{item: OPE(T,T)}.
\end{proof}

\section{OPE exponentials} \label{s: OPE EXP}

To exponentiate the Gaussian free field $\Phi,$ one can approximate $\Phi$ by genuine random variables and then take the limit of the properly normalized exponentials of these random variables. 
However, a normalization procedure is not unique. 
One can identify this limit with the OPE exponential of $\Phi$ if one requires further that it has the same stress tensor as $\Phi$ does. We study basic properties of stress tensors and Ward's equations in Section~\ref{s: Ward}.

\subsection{Operator product expansion} \label{ss: OPE}
Suppose the field $X$ is holomorphic. 
By definition, it means that for each Fock space correlation functional $\YY$ on $M,$ the correlation function $\E[X(\zeta)\YY]$ is holomorphic with respect to $\zeta$ in $M\setminus S_\YY.$ 
Then the \emph{operator product expansion} \index{operator product expansion} (OPE) is defined as a Laurent series expansion within correlations:
$$X(\zeta) Y(z) = \sum C_n(z)(\zeta-z)^n, \quad \zeta\to z.$$ 
Thus for each Fock space correlation functional  $\ZZ,$ the function $\zeta\mapsto \E[X(\zeta)Y(z)\ZZ]$ is holomorphic in a punctured neighborhood of $z$ and has a Laurent series expansion. 
We use the symbol $\sim$ for the singular part of the operator product expansion,
$$X(\zeta) Y(z) \sim \sum_{n<0} C_n(z)(\zeta-z)^n.$$
For example, it follows from \eqref{eq: EJJ} that we have the following singular OPE for $J$ and itself:
$$J(\zeta)J(z) \sim -\frac1{(\zeta-z)^2}.$$

In this case, we define $*_n$-product by $X*_nY:=C_n$ and write $*$ for $*_0.$
We call $X*Y$ the OPE product, or the OPE multiplication of $X$ and $Y.$
Unlike Wick's multiplication, the OPE multiplication is neither commutative nor associative, see example below.
We define the OPE powers of $X$ by 
$$X^{*n} = X*(X^{*(n-1)})$$
so that the multiplications are evaluated from right to left.

\begin{eg*} 
In Subsection~\ref{ss: T} we define the Virasoro field $T$ by the OPE multiplication
$$T := -\frac12 J*J = -\frac12 J\odot J + \frac1{12}S,$$ 
where $S(z) = S(z,z),$ $S(\zeta,z) = -12\pa_\zeta\pa_zu_{\zeta_0,z_0}(\zeta,z),$ and $u_{\zeta_0,z_0}(\zeta,z) := \log|\zeta-z| + \log|\zeta_0-z_0|+\frac12\E\,\Phi(\zeta,\zeta_0)\Phi(z,z_0).$ 

We now compute $J*J^{*2}$ and $J^{*2}*J:$
$$J*J^{*2} = J^{\odot3} - \frac12 S J,\qquad J^{*2}*J = J^{\odot3} - \frac12 S J - \pa^2J.$$ 
To see the last formula, we use Wick's formula:
\begin{align*}
J^{\odot 2}(&\zeta)J(z) - J^{\odot 2}(\zeta)\odot J(z) = 2\E[J(\zeta)J(z)]J(\zeta) \\
&= 
-2\Big(\frac1{(\zeta-z)^2}+ \frac16 S(z)\Big)\Big(J(z) + \pa J(z)(\zeta-z) + \frac12\pa^2 J(z)(\zeta-z)^2+\cdots\Big)\\
&=-\frac{2J(z)}{(\zeta-z)^2} -\frac{2\pa J(z)}{\zeta-z} - \frac13 S J(z)-\pa^2 J(z) + o(1)
\end{align*}
as $\zeta\to z.$ 
Thus $J*J^{*2}\ne J^{*2}*J.$
\end{eg*}

The operator product expansion of the Gaussian free field and itself is an asymptotic expansion of the correlation functionals $\Phi(\zeta,\zeta_0)\Phi(z,z_0)$ as $(\zeta,\zeta_0)\to (z,z_0),\zeta\ne z, \zeta_0\ne z_0:$
$$\Phi(\zeta,\zeta_0)\Phi(z,z_0) = \log\frac1{|\zeta-z|^2} +\log\frac1{|\zeta_0-z_0|^2} + 2 c(z,z_0) + \Phi^{\odot 2}(z,z_0) + o(1).$$
Here, $c(z,z_0)$ is the limit of $u_{\zeta_0,z_0}(\zeta,z)$ as $(\zeta,\zeta_0)\to (z,z_0).$
In Subsection~\ref{ss: c} we study this non-random field $c(z,z_0)$ in more details.
The above statement means (by definition) that
\begin{align*}\E[\Phi(\zeta,\zeta_0)\Phi(z,z_0)\XX] &= \Big(\log\frac1{|\zeta-z|^2} +\log\frac1{|\zeta_0-z_0|^2} + 2 c(z,z_0)\Big)\E[\XX] \\
&+ \E[\Phi^{\odot 2}(z,z_0)\XX] + o(1)
\end{align*}
holds for every Fock space correlation functional $\XX$ on $M$ such that $z,z_0 \notin S_\XX.$
We have 
$$\Phi^{*2}(z,z_0)\equiv\Phi*\Phi(z,z_0) = \Phi^{\odot2}(z,z_0) + 2c(z,z_0).$$
\begin{rmk*}
Suppose that $X(\zeta)X(z)$ has a logarithmic type singular operator product expansion as $\Phi(\zeta,\zeta_0)\Phi(z,z_0)$ does.
Then the OPE multiplication is (power-)associative in the subalgebra generated by the singleton $X,$ e.g. $\Phi^{*3}*\Phi = \Phi^{*2}*\Phi^{*2}.$ 
\end{rmk*}

\subsection{OPE powers and OPE exponentials}

Suppose $X$ is a centered (multi-variant) Gaussian field. e.g. $X = \Phi[\bfs\tau]$ with a divisor $\bfs\tau$ satisfying the neutrality condition $(\NC_0).$
Let $$c_X := \frac12\,\E[X*X].$$
Then by Wick's calculus, as $\bfs\zeta\to \bfs z,$
\begin{align*}
X(\bfs\zeta)X(\bfs z) &= \E[X(\bfs\zeta)X(\bfs z)] + X(\bfs\zeta)\odot X(\bfs z) \\
&= \textrm{singular terms} + 2\,c_X(\bfs z) + X\odot X(\bfs z) + o(1).
\end{align*}

For a polynomial $p(z) = \sum_{k=0}^n a_k z^k$ and a non-random field $f,$ we set
$$p^\odot(f X) = \sum_{k=0}^n a_k f^k X^{\odot k}.$$
Let us denote by $H_n$ the Hermite polynomials.
Also set $\wt H_n(z):= i^{-n}H_n(iz).$ 
Then 
$$\wt H_0(z) = 1, \quad \wt H_1(z) = z, \quad \wt H_2(z) = z^2+1, \quad \wt H_3(z) = z^3+3z.$$

\begin{prop} 
We have
\begin{equation} \label{eq: OPE^n X}
X^{* n} = (2c_X)^{n/2}\wt H_n^\odot\left(\frac{X}{\sqrt{2c_X}}\right).
\end{equation}
\end{prop}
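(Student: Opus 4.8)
The plan is to argue by induction on $n$, using the recursive definition $X^{*(n+1)}=X*X^{*n}$ together with a two-term recurrence for the modified Hermite polynomials $\wt H_n$. The base cases are immediate: $\wt H_0=1$ and $\wt H_1(z)=z$ give $X^{*0}=1$ and $X^{*1}=X$, while $n=2$ is precisely the identity $X*X=X^{\odot2}+2c_X$ recorded above, which matches $(2c_X)\,\wt H_2^\odot(X/\sqrt{2c_X})=X^{\odot2}+2c_X$.

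First I would establish the key \emph{ladder relation} for the OPE product of $X$ against a Wick power. Writing $a:=2c_X$, the claim is
$$X*X^{\odot k}=X^{\odot(k+1)}+k\,a\,X^{\odot(k-1)}.$$
This comes from Wick's formula applied to $X(\bfs\zeta)\,X^{\odot k}(\bfs z)$, which splits into the zero-contraction term $X(\bfs\zeta)\odot X^{\odot k}(\bfs z)$ and the one-contraction term $k\,\E[X(\bfs\zeta)X(\bfs z)]\,X^{\odot(k-1)}(\bfs z)$. The Wick product is regular as $\bfs\zeta\to\bfs z$, so its constant term is $X^{\odot(k+1)}$; and since $\E[X(\bfs\zeta)X(\bfs z)]=(\textrm{singular})+2c_X+o(1)$, the finite part of the second term contributes $k\,a\,X^{\odot(k-1)}$. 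Because the OPE product is linear over non-random coefficients in its second slot, this extends verbatim to any Wick polynomial in $X$.

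Next I would record the recurrence for the modified Hermite polynomials. From $\wt H_n(z)=i^{-n}H_n(iz)$ and the classical Hermite identities one obtains $\wt H_{n+1}(z)=z\,\wt H_n(z)+\wt H_n'(z)$, equivalently, in terms of the coefficients $\wt H_n(z)=\sum_k b_{n,k}z^k$, the relation $b_{n+1,j}=b_{n,j-1}+(j+1)\,b_{n,j+1}$.

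For the induction step, set $f_n(X):=a^{n/2}\,\wt H_n^\odot(X/\sqrt a)=\sum_k b_{n,k}\,a^{(n-k)/2}\,X^{\odot k}$ and assume $X^{*n}=f_n(X)$. Applying the ladder relation termwise,
$$X*f_n(X)=\sum_k b_{n,k}\,a^{(n-k)/2}\big(X^{\odot(k+1)}+k\,a\,X^{\odot(k-1)}\big),$$
and reindexing the two sums shows that the coefficient of $a^{((n+1)-j)/2}X^{\odot j}$ is exactly $b_{n,j-1}+(j+1)b_{n,j+1}=b_{n+1,j}$, so $X*f_n(X)=f_{n+1}(X)$. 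Since $X^{*(n+1)}=X*X^{*n}=X*f_n(X)$, this closes the induction. I expect the main obstacle to be the careful bookkeeping in the ladder relation — confirming that the zero-contraction Wick term carries no singular part and has constant term exactly $X^{\odot(k+1)}$, and that only the finite part $2c_X$ of the covariance survives into the constant term of the one-contraction piece; the remainder is routine manipulation of Hermite coefficients. Note that no associativity of $*$ is required, since the argument uses only the right-to-left recursion $X^{*(n+1)}=X*X^{*n}$.
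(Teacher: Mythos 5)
Your proof is correct and follows essentially the same route as the paper: the ladder relation $X*X^{\odot k}=X^{\odot(k+1)}+2c_X\,k\,X^{\odot(k-1)}$ obtained from Wick's formula, followed by induction via the recurrence $\wt H_{n+1}(z)=z\,\wt H_n(z)+\wt H_n'(z)$. The only cosmetic difference is that you carry out the coefficient bookkeeping explicitly, whereas the paper phrases the same step directly in terms of $(\wt H_n')^\odot$ and $z\,\wt H_n$.
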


\begin{proof} By Wick's formula, as $\bfs\zeta\to \bfs z,$
$$X(\bfs\zeta)X^{\odot k}(\bfs z)=k \E[X(\bfs\zeta)X(\bfs z) ]X^{\odot (k-1)}(\bfs z)+X^{\odot (k+1)}(\bfs z)+o(1).$$
From the above relation, we find
\begin{equation} \label{eq: X*X^odot k}
X*X^{\odot k}=2c_XkX^{\odot (k-1)}+X^{\odot (k+1)}.
\end{equation}
Assuming that \eqref{eq: OPE^n X} holds for $X^{* n}$ and using \eqref{eq: X*X^odot k} for $0\le k \le n,$
we have
$$X*(X^{*n}) = (2c_X)^{(n+1)/2} \left( (\wt H_n')^\odot\left(\frac{X}{\sqrt{2c_X}}\right) +\frac{X}{\sqrt{2c_X}}\odot \wt H_n^\odot\left(\frac{X}{\sqrt{2c_X}}\right)\right).$$
Proposition now follows from the recurrence relation for the Hermite polynomials, or 
$\wt H_{n+1}(z)= z \wt H_{n}(z) + \wt H'_{n}(z).$
\end{proof}

We define the OPE exponential of $X$ by 
$$\ee^{*\alpha X} = \sum_{n=0}^\infty \alpha^n \frac{X^{*n}}{n!}.$$
Here, $X$ is not necessarily centered.

\begin{lem} \label{lem: ef eX}
If $f$ is a non-random field, then 
$$\ee^{*(X+f)} = \ee^f \ee^{*X}.$$
\end{lem}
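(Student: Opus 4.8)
The plan is to reduce the identity to a binomial-type expansion of the OPE powers $(X+f)^{*n}$ and then recognize the resulting double series as a Cauchy product. The genuine input is understanding how a non-random field interacts with the OPE product $*$. Since the $*_n$-products are linear in each argument (the OPE coefficients are linear in $X$ and in $Y$ because correlations are), the product $*$ is bilinear. Moreover, because $f$ carries no randomness, in any correlation the factor $f(\zeta)$ may be pulled out as a scalar: $\E[f(\zeta)Z(z)\XX] = f(\zeta)\E[Z(z)\XX]$, so $f(\zeta)Z(z)$ has no singular part and its $*_0$-coefficient is the coincidence value $f(z)Z(z)$. This gives two facts I will use repeatedly: $f * Z = fZ$ (the OPE product with a non-random field is ordinary scalar multiplication), and, for any non-random $g$, $X * (gZ) = g\,(X*Z)$ (non-random factors pull through the left OPE product, since multiplying $Z$ by $g(z)$ scales every OPE coefficient of $X(\zeta)Z(z)$ by $g(z)$). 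I first note that $X+f$ has the same logarithmic singular OPE with itself as $X$ does, since the cross terms and the $f(\zeta)f(z)$ term are regular; hence its OPE powers $(X+f)^{*n}$ are defined by the same recursion.

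Next I claim that
\begin{equation*}
(X+f)^{*n} = \sum_{k=0}^{n}\binom{n}{k} f^{\,n-k}\,X^{*k},
\end{equation*}
and prove it by induction on $n$, the case $n=0$ being $1 = X^{*0}$. Writing $Y = X+f$ and using the defining recursion $Y^{*n} = Y*(Y^{*(n-1)})$ together with bilinearity, I split $Y^{*n} = X*(Y^{*(n-1)}) + f*(Y^{*(n-1)})$. Applying the inductive hypothesis and the two facts above, the first term becomes $\sum_{k}\binom{n-1}{k} f^{\,n-1-k} X^{*(k+1)}$ (pulling out the non-random $f^{\,n-1-k}$ and using $X*X^{*k} = X^{*(k+1)}$), while the second becomes $\sum_{k}\binom{n-1}{k} f^{\,n-k} X^{*k}$ (since $f*(\cdot)$ is scalar multiplication by $f$). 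Reindexing the first sum and combining via Pascal's rule $\binom{n-1}{k-1}+\binom{n-1}{k}=\binom{n}{k}$ yields the claimed formula.

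Finally I substitute this into the definition of the OPE exponential and resum:
\begin{equation*}
\ee^{*(X+f)} = \sum_{n=0}^\infty \frac{(X+f)^{*n}}{n!}
= \sum_{n=0}^\infty \sum_{k=0}^{n} \frac{f^{\,n-k}}{(n-k)!}\,\frac{X^{*k}}{k!}
= \Big(\sum_{m=0}^\infty \frac{f^m}{m!}\Big)\Big(\sum_{k=0}^\infty \frac{X^{*k}}{k!}\Big)
= \ee^{f}\,\ee^{*X},
\end{equation*}
where the middle rearrangement is the Cauchy product (legitimate at the level of the formal power series in which these exponentials are defined) and $\ee^{f} = \sum_m f^m/m!$ is the ordinary exponential of the non-random field, multiplying $\ee^{*X}$ as a scalar. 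The main obstacle is not the combinatorics, which is the plain binomial theorem, but the first step: verifying cleanly that a non-random field acts as a scalar under the OPE product — that it contributes no singular terms and pulls through $*$ — so that the recursion collapses to Pascal's triangle.
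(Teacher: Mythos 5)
Your proof is correct and follows essentially the same route as the paper: the paper likewise reduces the lemma to the binomial identity $(X+f)^{*n}=\sum_k\binom nk f^{n-k}X^{*k}$, proved by induction from the facts $f*Z=Z*f=fZ$ and $X*(fZ)=f(X*Z)$, and then concludes by the same Cauchy-product resummation. You have merely written out the induction and the justification of the scalar behaviour of non-random fields under $*$ in more detail than the paper does.
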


\begin{proof}
We first note that 
$$(X+f)^{*n} = \sum_{k=0}^n \binom n k f^{n-k}X^{*k}.$$
It can be shown by induction argument together with the fact that $X*(fY) = f(X*Y) (\ne (fX)*Y$ in general) and $f*X = X*f = fX$ for a non-random field $f.$
Lemma now follows from the basic combinatorics, 
 $$\ee^{*(X+f)} = \sum_{n=0}^\infty \frac{(X+f)^{*n}}{n!} =\sum_{n=0}^\infty \sum_{k=0}^n \frac{f^{n-k}}{(n-k)!}\frac{X^{*k}}{k!}= \sum_{m=0}^\infty \frac{f^{m}}{m!}\sum_{k=0}^\infty\frac{X^{*k}}{k!}=\ee^f \ee^{*X},$$
which completes the proof. 
\end{proof}

Let $C_X = \ee^{c_X}.$ 
\begin{prop} \label{prop: OPE exp}
For any complex number $\alpha,$ we have 
\begin{equation} \label{eq: exp(*Phi)}
\ee^{*\alpha X}=C_X^{\alpha^2}\,\ee^{\odot\alpha X}.
\end{equation}
\end{prop}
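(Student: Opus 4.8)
The plan is to evaluate the defining series of $\ee^{*\alpha X}$ by substituting the closed form~\eqref{eq: OPE^n X} for the OPE powers $X^{*n}$ and recognizing the outcome as a Hermite generating function lifted to Wick powers. First I would record the exponential generating function of the modified Hermite polynomials $\wt H_n(z) = i^{-n}H_n(iz)$. Since the $H_n$ are the probabilists' Hermite polynomials with $\sum_{n\ge0} H_n(x)\,t^n/n! = \ee^{xt - t^2/2}$, the substitution $x = iz$ followed by replacing the series variable by $t/i$ yields
$$\sum_{n=0}^\infty \wt H_n(z)\,\frac{t^n}{n!} = \ee^{zt + t^2/2},$$
which is consistent with $\wt H_0 = 1,\ \wt H_1(z) = z,\ \wt H_2(z) = z^2+1,\ \wt H_3(z) = z^3 + 3z$.

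Next I would lift this identity to Wick powers. Because Wick multiplication is commutative and power-associative within the subalgebra generated by the single field $X$ (cf.\ the remark in Subsection~\ref{ss: OPE}), any polynomial identity in one commuting variable transfers verbatim when ordinary powers are replaced by Wick powers. Factoring the scalar $\ee^{t^2/2}$ out of the generating function and applying $(\cdot)^\odot$ gives
$$\sum_{n=0}^\infty \frac{t^n}{n!}\,\wt H_n^\odot(Y) = \ee^{t^2/2}\,\ee^{\odot tY}$$
for any field $Y$ in that subalgebra, where $\ee^{t^2/2}$ is a genuine scalar factor.

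Finally I would combine the pieces. By the definition of $\ee^{*\alpha X}$ and by~\eqref{eq: OPE^n X},
$$\ee^{*\alpha X} = \sum_{n=0}^\infty \frac{\alpha^n}{n!}\,X^{*n} = \sum_{n=0}^\infty \frac{\alpha^n}{n!}\,(2c_X)^{n/2}\,\wt H_n^\odot\!\left(\frac{X}{\sqrt{2c_X}}\right).$$
Setting $Y = X/\sqrt{2c_X}$ and $t = \alpha\sqrt{2c_X}$ makes the $n$-th summand equal to $t^n\,\wt H_n^\odot(Y)/n!$, so the lifted generating function gives $\ee^{*\alpha X} = \ee^{t^2/2}\ee^{\odot tY} = \ee^{\alpha^2 c_X}\,\ee^{\odot\alpha X} = C_X^{\alpha^2}\,\ee^{\odot\alpha X}$, using $tY = \alpha X$, $t^2/2 = \alpha^2 c_X$, and $C_X = \ee^{c_X}$.

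The only point requiring care is the square root $\sqrt{2c_X}$. Writing $\wt H_n(z) = \sum_k a_{n,k} z^k$, each contribution is $a_{n,k}(2c_X)^{n/2}(X/\sqrt{2c_X})^{\odot k} = a_{n,k}(2c_X)^{(n-k)/2}X^{\odot k}$, and since $\wt H_n$ has the parity of $n$ the exponent $n-k$ is always even; thus every term is an honest integer power of $2c_X$ and the choice of branch is immaterial. Equivalently, one may simply compare the two sides as formal power series in $\alpha$ with coefficients in the Wick algebra, sidestepping the square root altogether. This is the sole subtlety; the remainder is a direct generating-function computation.
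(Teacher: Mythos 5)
Your proposal is correct and follows essentially the same route as the paper: substitute the closed form \eqref{eq: OPE^n X} into the defining series and apply the Hermite generating function $\ee^{tx+t^2/2}=\sum_n \wt H_n(x)t^n/n!$ with $t=\alpha\sqrt{2c_X}$. Your additional remarks on lifting the identity to Wick powers and on the parity argument that removes any branch ambiguity in $\sqrt{2c_X}$ are sound elaborations of steps the paper leaves implicit.
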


\begin{proof}
Using \eqref{eq: OPE^n X} and the generating function 
$$\ee^{tx-{t^2}/2}=\sum_{n=0}^\infty \frac{t^n}{n!}H_n(x), \quad \Big(\textrm{or } \ee^{tx+{t^2}/2}=\sum_{n=0}^\infty \frac{t^n}{n!}\wt H_n(x)\Big)$$
for the Hermite polynomials, we get
$$\ee^{*\alpha X} = \sum_{n=0}^\infty \frac{\alpha^n}{n!} X^{* n} = \sum_{n=0}^\infty \frac{\alpha^n(2c_X)^{n/2}}{n!} \wt H_n^\odot\left(\frac{X}{\sqrt{2c_X}}\right)=\ee^{\odot\alpha X} \ee^{c_X\alpha^2},$$
which completes the proof. 
\end{proof}

\begin{eg*}
We set $u(\zeta,z) = G_\zeta(z) + \log|\zeta-z|$ formally 
and define the formal field $c(z)$ by 
$$(c\|\phi)(z) =\lim_{\zeta\to z} u(\zeta,z).$$
\end{eg*}
We define a formal 1-point field $\Phi^{*2}\equiv\Phi*\Phi$ by 
$$\Phi*\Phi(z) := \Phi\odot\Phi(z) + 2c(z).$$
In the next subsection we study a well-defined non-random field 
$$c(z,z_0) = \frac12\,\E\,\Phi*\Phi(z,z_0).$$

For two divisors $\bfs\sigma=\sum \sigma_j\cdot z_j,\bfs\tau=\sum\tau_k\cdot\zeta_k$ satisfying the neutrality condition $(\NC_0)$ such that $\supp\,\bfs\sigma\cap\supp\,\bfs\tau = \emptyset,$ the correlation $\E\,\Phi[\bfs\sigma]\Phi[\bfs\tau]$ is well-defined and we compute it in Subsection~\ref{ss: GFF}.
If $\supp\,\bfs\sigma\cap\supp\,\bfs\tau \ne \emptyset,$ then we need a normalization procedure via operator product expansion. 
At each point $z$ in $\supp\,\bfs\sigma\,\cap\, \supp\,\bfs\tau,$ we replace the divergent term $\Phi(z)\Phi(z)$ in the expansion of $\Phi[\bfs\sigma]\Phi[\bfs\tau]$ by the 1-point formal field $\Phi*\Phi(z).$
We denote by $\Phi[\bfs\sigma]*\Phi[\bfs\tau]$ this renormalized field:
\begin{equation}\label{eq: Phi*}
\Phi[\bfs\sigma]*\Phi[\bfs\tau] = \sum_{z_j\ne\zeta_k} \sigma_j\tau_k \,\Phi(z_j)\Phi(\zeta_k) +\sum_{z_j=\zeta_k} \sigma_j\tau_k \,\Phi*\Phi(z_j).
\end{equation}
It is a well-defined Fock space field as long as both $\bfs\sigma$ and $\bfs\tau$ satisfy the neutrality condition $(\NC_0).$ 

We define the formal 1-point field $\VV^{(\sigma)}$ by 
\begin{equation} \label{eq: formal 1V}
\VV^{(\sigma)}(z)\equiv \ee^{*i\sigma\Phi(z)} = C(z)^{-\sigma^2} \ee^{\odot i\sigma\Phi(z)}, \quad C(z)=\ee^{c(z)}.
\end{equation}
This can be extended to the formal multi-vertex field $\VV[\bfs\sigma]$ and it is a well-defined Fock space field if $\bfs\sigma$ satisfies the neutrality condition $(\NC_0),$ see Subsection~\ref{ss: V}.

\subsection{Logarithmic conformal radius} \label{ss: c}
As a correlational field, the Gaussian free field $\Phi_D$ in a simply connected domain $D\subsetneq\C$ is a single variable field.
In this case, $c_D:=\frac12 \E\,\Phi_D*\Phi_D$ is the \emph{logarithm of conformal radius} \index{logarithm of conformal radius} $C_D.$
In terms of a conformal map $w$ from $D$ onto the upper half-plane $\H,$
$$c_D(z) = \log C_D(z), \quad C_D(z) = \Big|\frac{w(z)-\overline{w(z)}}{w'(z)}\Big|.$$ 
On the other hand, $c_D$ can be expressed in terms of the Green's function as
$$c_D(z) = u_D(z,z), \quad u_D(\zeta,z) = G_D(\zeta,z) + \log|\zeta-z|.$$ 
Motivated from the simply connected case, we set 
$$u_{\zeta_0,z_0}(\zeta,z) = G_{\zeta,\zeta_0}(z)-G_{\zeta,\zeta_0}(z_0) +\log|\zeta-z| + \log|\zeta_0-z_0|$$
and define the non-random field $c(z,z_0)$ by
$$(c\|\phi,\phi_0)(z,z_0) =\!\!\!\!\lim_{(\zeta,\zeta_0)\to (z,z_0)}\!\!\!\!u_{\zeta_0,z_0}(\zeta,z).$$
Here, $G$ is bipolar Green's function and $G_{\zeta,\zeta_0}(z)$ means $G_{\phi^{-1}\zeta,\phi_0^{-1}\zeta_0}(\phi^{-1}z)$ according to our convention. 

It is easy to see that the field is a PPS form, i.e.
$$(c\|\phi,\phi_0)(z,z_0) = (c\|\ti\phi,\ti\phi_0)(h(z),h_0(z_0)) - \log|h'(z)| - \log|h_0'(z_0)|,$$
where $h, (h_0)$ is the transition map between the overlapping charts $\phi,\ti\phi, (\phi_0,\ti\phi_0),$ respectively.
In terms of the formal field $c(z)$ in the previous example, we have 
\begin{equation}\label{eq: ccc}
c(z,z_0) = c(z)+c(z_0) - 2G_{z_0}(z).
\end{equation}
Of course, $c(z,z_0)$ does not depend on the choice of $G_\zeta(z).$

\subsubsec{The genus zero case}
One can choose $G_p(z) = -\log|z-p|$ on the Riemann sphere $\wh\C.$
With this choice, we have $c(z) \equiv 0$ and $c(z,z_0) = \log |z-z_0|^2.$
Thus in terms of a uniformization $w:M\to\C,$ we have 
\begin{equation} \label{eq: c0}
c(z,z_0) = \log\frac{|w(z)-w(z_0)|^2}{|w'(z)w'(z_0)|}.
\end{equation}

\subsubsec{The genus one case}
In the identity chart of torus $\T_\Lambda,$ one can choose
$$G_p(z) = \log\frac1{|\theta(z-p)|}-2\pi\,\frac{\Im\,p\,\Im\,z}{\Im\,\tau}.$$ 
With this choice, we have 
$$c(z) = \log\frac1{|\theta'(0)|}-2\pi\,\frac{(\Im\,z)^2}{\Im\,\tau}.$$
In terms of a uniformization $w:M\to\T_\Lambda,$ 
\begin{equation} \label{eq: c1formal}
c(z) = \log\frac1{|\theta'(0)w'(z)|}-2\pi\,\frac{(\Im\,w(z))^2}{\Im\,\tau}.
\end{equation}
Thus we have 
\begin{equation} \label{eq: c1}
c(z,z_0) = \log\frac{|\theta(w(z)-w(z_0))|^2} {|\theta'(0)^2w'(z)w'(z_0)|} -2\pi\,\frac{(\Im\,(w(z)-w(z_0)))^2}{\Im\,\tau}.
\end{equation}

\subsubsec{The higher genus case}
One can choose 
$$G_p(z) = \log \frac1{|\Theta(\AA(z) - \AA(p)-e)|}- 2\pi\,(\Im\,\PM)^{-1}\Im\,\AA(p)\cdot\Im\,\AA(z)$$ 
with a generic $e$ in the theta divisor. 
With this choice, we have 
$$c(z) = \log\frac1{|\omega(z)|}-2\pi\,(\Im\,\PM)^{-1}\Im\,\AA(z)\cdot\Im\,\AA(z), \quad \omega = \sum \pa_j\Theta(e)\,\omega_j.$$
From this and \eqref{eq: ccc}, we obtain
\begin{equation} \label{eq: c2}
c(z,z_0) = \log\frac{|\theta(z-z_0)|^2} {|\omega(z)\omega(z_0)|} -2\pi\,(\Im\,\PM)^{-1}\Im\,(\AA(z)-\AA(z_0))\cdot\Im\,(\AA(z)-\AA(z_0)).
\end{equation}

Suppose that a divisor $\bfs\sigma = \sum \sigma_j \cdot z_j $ satisfies the neutrality condition $(\NC_0).$
Let $c[\bfs\sigma ] = c_{\Phi[\bfs\sigma]}.$
Then we have
\begin{equation} \label{eq: cPhiSigma}
c[\bfs\sigma] = \sum_j \sigma_j^2 c(z_j) + \sum_{j<k} 2\sigma_j\sigma_k G_{z_k}(z_j).
\end{equation}
To see this, let $\bfs\sigma = \sum \sigma_j\cdot z_j$ and $\bfs\sigma' = \sum \sigma_j\cdot z_j'.$ 
As $\bfs z'\to\bfs z,$
\begin{align*}
\frac12 \E\,\Phi[\bfs\sigma' ]\Phi[\bfs\sigma] &= \frac12 \sum_{j,k} \sigma_j\sigma_k \,\E\,\Phi(z_j')\Phi(z_k) = \sum_j \sigma_j^2 \log\frac1{|z_j'-z_j|}\\
&+ \sum_j\sigma_j^2 c(z_j) + \sum_{j<k} \sigma_j\sigma_k \,\E\,\Phi(z_j')\Phi(z_k) + o(1).
\end{align*}

\subsection{Non-chiral vertex fields} \label{ss: V}

For a divisor $\bfs\sigma$ satisfying the neutrality condition $(\NC_0),$ we define the \emph{Coulomb gas correlation function}  \index{Coulomb gas correlation function}  $\CC[\bfs\sigma]$ by
$$\CC[\bfs\sigma]:=\ee^{-c[\bfs\sigma]}.$$
It is a differential of conformal dimension $(\lambda_j,\lambda_j)$ at $z_j,$ where $\lambda_j=\sigma_j^2/2.$
 
We define the (non-chiral) \emph{multi-vertex field} \index{multi-vertex field} $\VV[\bfs\sigma]$ by
$$\VV[\bfs\sigma] = \CC[\bfs\sigma] \VV^\odot[\bfs\sigma], \qquad \VV^\odot[\bfs\sigma]:=\ee^{\odot i\Phi[\bfs\sigma]}.$$ 

\begin{prop}
Suppose that both $\bfs\sigma$ and $\bfs\tau$ satisfy the neutrality condition $(\NC_0)$ and that they have disjoint supports.
Then the tensor product of $\VV[\bfs\sigma]$ and $\VV[\bfs\tau]$ is $\VV[\bfs\sigma+\bfs\tau]:$
$$\VV[\bfs\sigma] \VV[\bfs\tau] = \VV[\bfs\sigma+\bfs\tau].$$
\end{prop}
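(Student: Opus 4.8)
The plan is to reduce the claimed identity of Fock space fields to a single scalar identity for the logarithm $c[\cdot]$ of the Coulomb gas correlation functions, and then to verify that scalar identity from the explicit formula~\eqref{eq: cPhiSigma}. First I would expand the Wick exponentials via Wick's formula~\eqref{def: tensor product}. Since $\bfs\sigma\mapsto\Phi[\bfs\sigma]$ is linear and $\supp\,\bfs\sigma\cap\supp\,\bfs\tau=\emptyset$, the correlation $\E\,\Phi[\bfs\sigma]\Phi[\bfs\tau]$ is a well-defined finite number (computed in Subsection~\ref{ss: GFF}), so the closed-form rule for the product of two Wick exponentials of a centered Gaussian field applies. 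With $X=i\Phi[\bfs\sigma]$ and $Y=i\Phi[\bfs\tau]$ one has $\E\,XY=-\E\,\Phi[\bfs\sigma]\Phi[\bfs\tau]$ and $X+Y=i\Phi[\bfs\sigma+\bfs\tau]$, whence
$$\VV^\odot[\bfs\sigma]\,\VV^\odot[\bfs\tau]=\ee^{-\E\,\Phi[\bfs\sigma]\Phi[\bfs\tau]}\,\VV^\odot[\bfs\sigma+\bfs\tau].$$

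Next, because $\CC[\bfs\sigma]$ and $\CC[\bfs\tau]$ are non-random differentials, they factor out of the tensor product, giving
$$\VV[\bfs\sigma]\,\VV[\bfs\tau]=\CC[\bfs\sigma]\,\CC[\bfs\tau]\,\ee^{-\E\,\Phi[\bfs\sigma]\Phi[\bfs\tau]}\,\VV^\odot[\bfs\sigma+\bfs\tau].$$
Comparing with $\VV[\bfs\sigma+\bfs\tau]=\CC[\bfs\sigma+\bfs\tau]\,\VV^\odot[\bfs\sigma+\bfs\tau]$ and using $\CC[\cdot]=\ee^{-c[\cdot]}$, the proposition reduces to the scalar identity
$$c[\bfs\sigma+\bfs\tau]=c[\bfs\sigma]+c[\bfs\tau]+\E\,\Phi[\bfs\sigma]\Phi[\bfs\tau].$$

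To establish this I would invoke the explicit formula~\eqref{eq: cPhiSigma}. Writing $\bfs\sigma=\sum_j\sigma_j\cdot z_j$ and $\bfs\tau=\sum_k\tau_k\cdot\zeta_k$ with disjoint node sets, the nodes of $\bfs\sigma+\bfs\tau$ form the disjoint union of the two. Expanding $c[\bfs\sigma+\bfs\tau]$ by~\eqref{eq: cPhiSigma}, the diagonal terms $\sigma_j^2c(z_j)$ and $\tau_k^2c(\zeta_k)$ together with the off-diagonal terms internal to each divisor reassemble precisely into $c[\bfs\sigma]+c[\bfs\tau]$, while the remaining mixed terms are $\sum_{j,k}2\sigma_j\tau_k G_{\zeta_k}(z_j)$, which by $\E\,\Phi(z)\Phi(z_0)=2G_{z_0}(z)$ equals $\E\,\Phi[\bfs\sigma]\Phi[\bfs\tau]$. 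Conceptually this is just the statement that the quadratic form $c[\cdot]=\frac12\E\big(\Phi[\cdot]*\Phi[\cdot]\big)$ has the covariance pairing as its polarization.

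The one point requiring care — and the one I regard as the main obstacle — is the consistency of these mixed terms: one must check that the cross contributions in~\eqref{eq: cPhiSigma} are symmetric in the two node sets and independent of the auxiliary choices implicit in $G_{\zeta_k}(z_j)$. This is exactly where the hypotheses enter. The neutrality conditions $(\NC_0)$ on both $\bfs\sigma$ and $\bfs\tau$ together with the disjointness of supports guarantee that $\E\,\Phi[\bfs\sigma]\Phi[\bfs\tau]=2\sum_{j,k}\sigma_j\tau_k G_{\zeta_k}(z_j)$ is a genuine basepoint-independent finite number, so that no renormalization of the cross terms is needed and the term-by-term matching above is legitimate.
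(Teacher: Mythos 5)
Your proposal is correct and follows essentially the same route as the paper: reduce the identity to $\VV^\odot[\bfs\sigma]\VV^\odot[\bfs\tau]=\ee^{-\E\,\Phi[\bfs\sigma]\Phi[\bfs\tau]}\VV^\odot[\bfs\sigma+\bfs\tau]$ plus the additivity relation $c[\bfs\sigma+\bfs\tau]=c[\bfs\sigma]+c[\bfs\tau]+\E\,\Phi[\bfs\sigma]\Phi[\bfs\tau]$, which the paper declares obvious from \eqref{eq: cPhiSigma} and which you verify by the same term-by-term splitting of the diagonal and cross sums. Your extra discussion of basepoint-independence of the mixed terms is a reasonable elaboration of why the hypotheses are needed, but adds nothing beyond what the paper's argument already uses implicitly.
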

\begin{proof}
Since $\VV^\odot[\bfs\sigma] \VV^\odot[\bfs\tau] = \VV^\odot[\bfs\sigma+\bfs\tau]\ee^{-\E\,\Phi[\bfs\sigma]\Phi[\bfs\tau]},$
we need to check that 
$$\CC[\bfs\sigma]\CC[\bfs\tau] \ee^{-\E\,\Phi[\bfs\sigma]\Phi[\bfs\tau]} = \CC[\bfs\sigma+\bfs\tau].$$
Equivalently, 
\begin{equation} \label{eq: c addition}
c[\bfs\sigma+\bfs\tau] = c[\bfs\sigma]+c[\bfs\tau] + \E\,\Phi[\bfs\sigma]\Phi[\bfs\tau].
\end{equation}
It is obvious by \eqref{eq: cPhiSigma}. 
\end{proof}

We now have OPE nature of $\VV[\bfs\sigma].$
\begin{prop} Suppose that a divisor $\bfs\sigma$ satisfies the neutrality condition $(\NC_0).$ Then
$$\VV[\bfs\sigma] = \ee^{*i\Phi[\bfs\sigma]} \equiv \sum_{n=0}^\infty \frac{i^n\Phi^{*n}[\bfs\sigma]}{n!}.$$
\end{prop}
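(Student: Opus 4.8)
The plan is to recognize this proposition as nothing more than the specialization $\alpha = i$ of Proposition~\ref{prop: OPE exp} applied to the Gaussian field $X = \Phi[\bfs\sigma]$, followed by an unwinding of the definitions of $\VV[\bfs\sigma]$, $\CC[\bfs\sigma]$, and $\VV^\odot[\bfs\sigma]$ recorded at the start of this subsection. No new analytic content is needed: the whole substance is already packaged in Proposition~\ref{prop: OPE exp}, whose setup explicitly takes $X = \Phi[\bfs\tau]$ with $\bfs\tau$ satisfying $(\NC_0)$ as its motivating example, so the machine was built for exactly this input.

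First I would confirm that $\Phi[\bfs\sigma]$ meets the hypotheses. This is precisely where $(\NC_0)$ is used: by the Lemma of Subsection~\ref{ss: GFF}, $\Phi[\bfs\sigma]$ is a well-defined single-valued Fock space functional because $\int\bfs\sigma = 0$, and it is centered since each $\Phi(z_j)$ is. As a finite linear combination of derivatives of the Gaussian free field it is a (multi-variant) centered Gaussian field, so the constant $c_X = \frac12\,\E[X*X]$ attached to it in Proposition~\ref{prop: OPE exp} is well-defined. By the convention fixed in this subsection this constant is exactly $c[\bfs\sigma] = c_{\Phi[\bfs\sigma]}$, whence $C_X = \ee^{c_X} = \ee^{c[\bfs\sigma]}$.

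Next I would carry out the specialization. Putting $X = \Phi[\bfs\sigma]$ and $\alpha = i$ in the identity $\ee^{*\alpha X} = C_X^{\alpha^2}\,\ee^{\odot\alpha X}$ of Proposition~\ref{prop: OPE exp}, and using $i^2 = -1$, yields
$$\ee^{*i\Phi[\bfs\sigma]} = C_{\Phi[\bfs\sigma]}^{-1}\,\ee^{\odot i\Phi[\bfs\sigma]} = \ee^{-c[\bfs\sigma]}\,\VV^\odot[\bfs\sigma] = \CC[\bfs\sigma]\,\VV^\odot[\bfs\sigma] = \VV[\bfs\sigma],$$
where the final two equalities are the definitions $\CC[\bfs\sigma] = \ee^{-c[\bfs\sigma]}$ and $\VV[\bfs\sigma] = \CC[\bfs\sigma]\,\VV^\odot[\bfs\sigma]$. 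Expanding the left-hand side as the series $\ee^{*i\Phi[\bfs\sigma]} = \sum_{n\ge0} i^n\Phi^{*n}[\bfs\sigma]/n!$ gives the asserted identity.

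The only point I expect to require care—and the sole place where any friction arises—is that $\Phi[\bfs\sigma]$ is genuinely multi-variant, so the OPE self-powers $\Phi^{*n}[\bfs\sigma]$ must be read through the renormalized product \eqref{eq: Phi*}, in which coinciding nodes $z_j = z_k$ contribute the regularized field $\Phi*\Phi(z_j)$ rather than a divergent term. Since $\Phi[\bfs\sigma](\zeta)\Phi[\bfs\sigma](z)$ has a logarithmic-type singular OPE just as $\Phi(\zeta,\zeta_0)\Phi(z,z_0)$ does, the OPE multiplication is power-associative on the subalgebra generated by $\Phi[\bfs\sigma]$ (cf.\ the remark following \eqref{eq: Phi*}); this is what makes $\ee^{*i\Phi[\bfs\sigma]}$, and hence the appeal to Proposition~\ref{prop: OPE exp}, unambiguous. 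With that observation in place the argument is complete.
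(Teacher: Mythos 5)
Your proof is correct and is essentially identical to the paper's: both simply specialize Proposition~\ref{prop: OPE exp} to $X=\Phi[\bfs\sigma]$, $\alpha=i$, so that $C_X^{\alpha^2}=\ee^{-c[\bfs\sigma]}=\CC[\bfs\sigma]$ and the definition $\VV[\bfs\sigma]=\CC[\bfs\sigma]\VV^\odot[\bfs\sigma]$ finishes the job. Your additional remarks on the neutrality condition and the renormalized self-products are sensible but not needed beyond what the paper already records.
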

\begin{proof}
By Proposition~\ref{prop: OPE exp}, we have 
$$ \ee^{*i\Phi[\bfs\sigma]} = \ee^{-c[\bfs\sigma]} \, \VV^\odot[\bfs\sigma] = \CC[\bfs\sigma]\, \VV^\odot[\bfs\sigma] = \VV[\bfs\sigma],$$
which completes the proof.
\end{proof}

Recall the definition~\eqref{eq: formal 1V} of the formal 1-point field $\VV^{(\sigma)},$ 
$$\VV^{(\sigma)}(z)= \ee^{-\sigma^2c(z)} \ee^{\odot i \sigma\Phi(z)}.$$

\begin{prop} \label{prop: V=VV}
Suppose that a divisor $\bfs\sigma$ satisfies the neutrality condition $(\NC_0).$ Then we have 
$$\VV[\bfs\sigma] = \VV^{(\sigma_1)}(z_1) \cdots \VV^{(\sigma_n)}(z_n).$$
\end{prop}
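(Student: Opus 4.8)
The plan is to evaluate the right-hand side directly as a tensor product and match it against the definition $\VV[\bfs\sigma] = \CC[\bfs\sigma]\VV^\odot[\bfs\sigma]$. Since each factor splits as $\VV^{(\sigma_j)}(z_j) = \ee^{-\sigma_j^2 c(z_j)}\ee^{\odot i\sigma_j\Phi(z_j)}$ with a non-random coefficient $\ee^{-\sigma_j^2 c(z_j)}$, I would first pull all of these coefficients out of the tensor product, leaving the product of Wick exponentials $\prod_j \ee^{\odot i\sigma_j\Phi(z_j)}$ to be computed by Wick's formula.

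The central computation is the vertex-operator identity for Wick exponentials, which is exactly the two-point identity $\VV^\odot[\bfs\sigma]\VV^\odot[\bfs\tau] = \VV^\odot[\bfs\sigma+\bfs\tau]\,\ee^{-\E\,\Phi[\bfs\sigma]\Phi[\bfs\tau]}$ used in the proof of the preceding proposition, now iterated over the $n$ factors (or stated once in the general form). For the formal one-point fields $X_j = i\sigma_j\Phi(z_j)$ it reads
$$\prod_{j=1}^n \ee^{\odot i\sigma_j\Phi(z_j)} = \ee^{\odot i\Phi[\bfs\sigma]}\exp\Big(-\sum_{j<k}\sigma_j\sigma_k\,\E\,\Phi(z_j)\Phi(z_k)\Big),$$
where I use $\sum_j \sigma_j\Phi(z_j) = \Phi[\bfs\sigma]$, so that the leading Wick exponential is exactly $\VV^\odot[\bfs\sigma]$. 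Next I would substitute the formal correlation $\E\,\Phi(z_j)\Phi(z_k) = 2G_{z_k}(z_j)$ of the formal Gaussian free field, turning the pairwise-contraction factor into $\ee^{-\sum_{j<k}2\sigma_j\sigma_k G_{z_k}(z_j)}$.

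Combining the coefficients, the total non-random prefactor carries the exponent $-\sum_j\sigma_j^2 c(z_j) - \sum_{j<k}2\sigma_j\sigma_k G_{z_k}(z_j)$, which is precisely $-c[\bfs\sigma]$ by the formula \eqref{eq: cPhiSigma}. Hence the product equals $\ee^{-c[\bfs\sigma]}\VV^\odot[\bfs\sigma] = \CC[\bfs\sigma]\VV^\odot[\bfs\sigma] = \VV[\bfs\sigma]$, which is the claim.

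I expect the only delicate point to be the bookkeeping of formal objects: the one-point fields $\Phi(z_j)$, the numbers $c(z_j)$, and the values $G_{z_k}(z_j)$ are each tied to a choice of rooting/splitting $G_p$ and are not individually well-defined as Fock space data. The resolution is that the neutrality condition $(\NC_0)$ guarantees, via the earlier lemma, that $\Phi[\bfs\sigma]$ and hence $\VV^\odot[\bfs\sigma]$ is a genuine single-valued Fock space functional, and that the choice-dependent pieces assemble into the choice-independent quantity $c[\bfs\sigma] = c_{\Phi[\bfs\sigma]}$. Thus the right-hand side, although built from formal ingredients, is a well-defined Fock space field equal to $\VV[\bfs\sigma]$; I would state this explicitly so that the formal computation above is legitimate.
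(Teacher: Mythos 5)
Your proposal is correct and follows essentially the same route as the paper: apply Wick's formula to the product of Wick exponentials to collect the pairwise contraction factors $\ee^{-\sigma_j\sigma_k\,\E\,\Phi(z_j)\Phi(z_k)}$, then recognize the total non-random prefactor as $\ee^{-c[\bfs\sigma]}=\CC[\bfs\sigma]$ via \eqref{eq: cPhiSigma}. Your closing remark on how the neutrality condition makes the formal ingredients assemble into well-defined quantities is a reasonable elaboration of what the paper leaves implicit, but it does not change the argument.
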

\begin{proof}
By Wick's formula, we have 
\begin{align*}
\prod_j \VV^{(\sigma_j)}(z_j) &= \prod_j \ee^{-\sigma_j^2c(z_j)} \ee^{\odot i\sigma_j \Phi(z_j)} \\
&= \prod_j \ee^{-\sigma_j^2c(z_j)}\prod _{j<k} \ee^{-\sigma_j\sigma_k \E\,\Phi(z_j)\Phi(z_k)} \ee^{\odot i\sum \sigma_j \Phi(z_j)}.
\end{align*}
It follows from \eqref{eq: cPhiSigma} that
$$\prod_j \VV^{(\sigma_j)}(z_j) = \ee^{-c[\bfs\sigma]} \ee^{\odot i\Phi[\bfs\sigma]} = \CC[\bfs\sigma] \VV^\odot[\bfs\sigma]= \VV[\bfs\sigma],$$
which completes the proof.
\end{proof}

\subsubsec{The genus zero case}
In the identity chart of $\C$ and the usual chart $z\mapsto -1/z$ at infinity, the evaluation of $\CC[\bfs\sigma]$ is 
$$\prod_{j<k}|z_j-z_k|^{2\sigma_j\sigma_k},$$
where the product is taken over finite $z_j$'s and $z_k$'s, and as usual $0^0:=1.$

\begin{eg*} We define the (non-chiral) bi-vertex field $\VV^{(\sigma)}(z,z_0)$ by 
$$\VV^{(\sigma)}(z,z_0) = \VV[\bfs\sigma],\quad \bfs\sigma = \sigma \cdot z - \sigma \cdot z_0.$$ 
Then we have 
$$\VV^{(\sigma)}(z,z_0) = \Big|\frac{w'(z)w'(z_0)}{(w(z)-w(z_0))^2}\Big|^{\sigma^2} \ee^{\odot i\sigma \Phi(z,z_0)},$$
where $w:M\to\wh\C$ is a conformal map.
\end{eg*}

In terms of a uniformization $w:M\to \wh\C,$ we have $c(z) = -\log|w'(z)|$ and 
$$\VV^{(\sigma)}(z)= |w'(z)|^{\sigma^2} \ee^{\odot i \sigma\Phi(z)}.$$
Alternately, we construct $\VV^{(\sigma)}$ via the rooting procedure.
Fix any point $z_*\in M$ and define
$$\VV^{(\sigma)}_*(z) = \VV^{(\sigma)}(z,z_*).$$
In terms of a uniformization $w:(M,z_*) \mapsto (\wh\C,\infty),$ we have
$$\VV^{(\sigma)}_*(z)= |w'(z)|^{\sigma^2} \ee^{\odot i \sigma\Phi(z)},$$
where we require that $w(\zeta) \sim -1/(\zeta-z_*)$ as $\zeta\to z_*$ in a fixed chart at $z_*.$
However, it hides the fact that the rooted field $\VV^{(\sigma)}_*(z) = \VV^{(\sigma)}(z,z_*)$ is a differential of conformal dimension $(\sigma^2/2,\sigma^2/2)$ at a rooted point $z_*.$
The tensor product $\VV^{(\sigma_1)}_*(z_1) \VV^{(\sigma_2)}_*(z_2)$ has a covariance at a marked point $z_*:$ it is a a differential of conformal dimension $(\sigma^2/2,\sigma^2/2)$ at $z_*,$ where $\sigma = \sigma_1+\sigma_2.$

In principle, under the neutrality condition $(\NC_0),$ the (non-chiral) multi-vertex fields $\VV[\bfs\sigma]\, (\bfs\sigma = \sum \sigma_j\cdot z_j)$ can be represented as the tensor products of formal 1-point vertex fields $\VV^{(\sigma_j)}(z_j)$ by means of formal application of Wick's rule to the tensor products, see Proposition~\ref{prop: V=VV}.

\begin{eg*} We have 
$$\VV^{(\sigma)}(z_1,z_2) = \VV^{(\sigma)}(z_1)\VV^{(-\sigma)}(z_2) = \Big|\frac{w'(z_1)w'(z_2)}{(w(z_1)-w(z_2))^2}\Big|^{\sigma^2} \ee^{\odot i\sigma\Phi(z_1,z_2)}.$$
\end{eg*}

\subsubsec{The genus one case} 
If a divisor $\bfs\sigma$ satisfies the neutrality condition $(\NC_0),$ then by \eqref{eq: c1formal} and \eqref{eq: cPhiSigma}
$$c[\bfs\sigma] = \sum_j \sigma_j^2\log\frac1{|\theta'(0)w'_j|} + 2\sum_{j<k} \sigma_j\sigma_k\log\frac1{|\theta(w_j-w_k)|} - \frac{2\pi}{\Im\,\tau} \big(\Im\sum_j \sigma_j w_j \big)^2,$$
where $w_j = w(z_j)$ and $w_j' = w'(z_j),$ etc. 
The Coulomb gas correlation function $\CC[\bfs\sigma]$ is
\begin{align}\label{eq: C4torus}
\CC[\bfs\sigma] &= |\theta'(0)|^{\sum_j \sigma_j^2}
\prod_j |w_j'|^{\sigma_j^2} \prod_{j<k} |\theta(w_j-w_k)|^{2\sigma_j\sigma_k} \\
&\exp \Big(\frac{2\pi}{\Im\,\tau}
\big(\Im\sum_j \sigma_j w_j \big)^2
\Big).\nonumber
\end{align}

\subsubsec{The higher genus case}
Recall that 
$$c(z) = \log\frac1{|\omega(z)|}-2\pi\,(\Im\,\PM)^{-1}\Im\,\AA(z)\cdot\Im\,\AA(z), \quad \omega = \sum \pa_j\Theta(e)\,\omega_j.$$
For a divisor $\bfs\sigma = \sum \sigma_j \cdot z_j$ satisfying the neutrality condition $(\NC_0),$ we have 
\begin{align*}
c[\bfs\sigma] &=
\sum_j \sigma_j^2\log\frac1{|\omega(z_j)|}
+2\sum_{j<k} \sigma_j\sigma_k\log\frac1{|\theta(z_j-z_k)|} \\
&- 2\pi(\Im\,\PM)^{-1} \Im\,\AA[\bfs\sigma]\cdot\Im\,\AA[\bfs\sigma],
\end{align*}
where $\theta(z) = \Theta(\AA(z)-e)$ and $\AA[\bfs\sigma] = \sum_j \sigma_j \AA(z_j).$
The Coulomb gas correlation function $\CC[\bfs\sigma]$ is
\begin{equation} \label{eq: C4higher}
\CC[\bfs\sigma] =
\prod_j |\omega(z_j)|^{\sigma_j^2} 
\prod_{j<k} |\theta(z_j-z_k)|^{2\sigma_j\sigma_k} \exp \big(2\pi
(\Im\,\PM)^{-1} \Im\,\AA[\bfs\sigma]\cdot\Im\,\AA[\bfs\sigma]\big). 
\end{equation}

\section{Background charge modifications of the Gaussian free field } \label{s: GFF modifications} 
In this section we discuss central/background charge modifications of the Gaussian free field on a compact Riemann surface $M.$
These modifications are constructed by adding certain non-random pre-pre-Schwarzian forms (PPS forms) $\varphi$ to the Gaussian free field. 
The reason to consider PPS forms becomes clear in Sections~\ref{s: O} and \ref{s: Ward}. 
The modifications of Gaussian free field are parametrized by real parameter $b\in\mathbb{R}$ and the modified Virasoro fields $T$ are Schwarzian form of order $c/12,$ where $c$ is the central charge $c = 1-12b^2.$ 
As exponentials of a PPS form $\varphi$ are differentials, so are (properly normalized) exponentials of the modifications of the Gaussian free field. 

The PPS form $\varphi$ we consider is harmonic such that $\bfs\beta = i/\pi \pa\bp \varphi$ is a finite linear combination of $\delta$-measures. By a certain extension of Gauss-Bonnet formula to a flat conformal metric with conical singularities, it turns out that the total mass of $\bfs\beta$ depends both on the Euler characteristic $\chi(M)$ and on the modification parameter $b.$ 
It is known as the neutrality condition of the background charge $\bfs\beta$ in the physics literature.

\subsection{Holomorphic, harmonic, and simple PPS forms} \label{ss: Holomorphic PPS forms}
\subsubsec{Holomorphic PPS forms} 
By definition, a holomorphic non-random field $\varphi$ is a $\PPS(ib)$-form (on a given Riemann surface), a \emph{pre-pre-Schwarzian form} \index{pre-pre-Schwarzian form} \index{PPS form} of order $ib$ if the transformation law is
$$\varphi = \ti \varphi \circ h + ib \log h',$$
where $\varphi = (\varphi\,\|\,\phi)$ in a chart $\phi,$ $\ti\varphi = (\varphi\,\|\,\ti\phi)$ in a chart $\ti\phi,$ and $h$ is the transition map between two overlapping charts $\phi,\ti\phi.$
In this case, $\pa\varphi$ is a $\PS(ib)$ form (a \emph{pre-Schwarzian form} \index{pre-Schwarzian form} \index{PS form} of order $ib$), i.e. it has the transformation law 
$$\pa\varphi = \pa\ti \varphi \circ h + ib \frac{h''}{h'}.$$
For example, $\varphi = ib \log\omega$ is a holomorphic $\PPS(ib)$-form if $\omega$ is a holomorphic 1-differential.

Note that a PPS form $\varphi$ is defined as an assignment of $(\varphi\,\|\,\phi)$ to each local chart $\phi:U\to\phi U.$
Globally, we allow isolated singularities of logarithmic type
$$\varphi \sim c_j \log(z-z_j), \quad z\to z_j $$
in the sense that $\pa\varphi$ is meromorphic with only simple poles.
The class of forms such that $\pa\varphi$ is meromorphic may be of interest.
In general, a holomorphic PPS form is multivalued and its branch depends on local coordinates.

\subsubsec{Harmonic PPS forms} 
A non-random field $\varphi$ is a harmonic form if there is a representation
$\varphi = \varphi^+ + \varphi^-$
with holomorphic forms $\varphi^+$ and $\overline{\varphi^-},$ which are not necessarily of the
same order.

\begin{def*} A non-random field $\varphi$ is called a $\PPS(ib,ib)$ form (or a  \emph{PPS form of $\log|\cdot|$-type}) \index{PPS form!of log-type} if the
transformation law is
$$\varphi = \ti \varphi \circ h + ib \log |h'|^2,$$
and $\varphi$ is called a $\PPS(ib,-ib)$ form (or a \emph{PPS form of $\arg$-type}) \index{PPS form!of arg-type} if 
$$\varphi = \ti \varphi \circ h -2b \arg h'.$$
More generally, $\varphi$ is called a $\PPS(\mu,\nu)$ form if $\varphi$ satisfies the following transformation law
$$\varphi = \ti \varphi \circ h +\mu \log h' + \nu \log \overline{h'}.$$
\end{def*}
In general, a PPS form $\varphi$ of arg-type is multivalued but $\pa\varphi$ can be single-valued.
The notation $\PPS(ib,ib)$ refers to conformal dimensions of the differential $\ee^\varphi.$
Lemma~\ref{lem: varphi pm} below follows from simple algebraic facts that
$\ee^{-2b\arg h'} = (h'/\overline{h'})^{ib}$ and $\ee^{ib\log |h'|^2} = ({h'}{\,\overline{h'}\,})^{ib}.$

\begin{lem} \label{lem: varphi pm}
Let $\varphi$ be a harmonic PPS form.
\renewcommand{\theenumi}{\alph{enumi}}
{\setlength{\leftmargini}{1.8em}
\begin{enumerate} 
\item If $\varphi$ is a PPS form of arg-type, then
$\varphi^+$ is a $\PPS(ib)$  form and $\overline{\varphi^-}$ is a $\PPS(ib)$ form. 
\item
If $\varphi$ is a PPS form of $\log|\cdot|$-type, then $\varphi^+$ is a $\PPS(ib)$ form and $\overline{\varphi^-}$ is a $\PPS(-ib)$ form. 
\end{enumerate}
}
\end{lem}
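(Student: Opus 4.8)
The plan is to reduce both transformation laws to the general $\PPS(\mu,\nu)$ law $\varphi = \ti\varphi\circ h + \mu\log h' + \nu\log\overline{h'}$ and then separate holomorphic from antiholomorphic contributions. First I would rewrite the orders using the two algebraic identities stated just before the lemma. Since $\arg h' = \tfrac1{2i}(\log h' - \log\overline{h'})$, the arg-type law $\varphi = \ti\varphi\circ h - 2b\arg h'$ becomes
$$\varphi = \ti\varphi\circ h + ib\log h' - ib\log\overline{h'},$$
so $\varphi$ is a $\PPS(ib,-ib)$ form; and since $\log|h'|^2 = \log h' + \log\overline{h'}$, the $\log|\cdot|$-type law becomes
$$\varphi = \ti\varphi\circ h + ib\log h' + ib\log\overline{h'},$$
so $\varphi$ is a $\PPS(ib,ib)$ form. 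Thus in both cases $\mu = ib$, while $\nu = -ib$ for arg-type and $\nu = ib$ for $\log|\cdot|$-type.

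Next I would insert the harmonic splittings $\varphi = \varphi^+ + \varphi^-$ and $\ti\varphi = \ti\varphi^+ + \ti\varphi^-$ into the general law. Because $h$ is holomorphic, $\ti\varphi^+\circ h$ and $\mu\log h'$ are holomorphic while $\ti\varphi^-\circ h$ and $\nu\log\overline{h'}$ are antiholomorphic, whence
$$\varphi^+ - \ti\varphi^+\circ h - \mu\log h' = -\big(\varphi^- - \ti\varphi^-\circ h - \nu\log\overline{h'}\big).$$
The left-hand side is holomorphic and the right-hand side antiholomorphic, so both are locally constant; fixing the splitting compatibly across charts (equivalently, passing to $\pa\varphi = \pa\varphi^+$, which kills the constant) yields
$$\varphi^+ = \ti\varphi^+\circ h + \mu\log h', \qquad \varphi^- = \ti\varphi^-\circ h + \nu\log\overline{h'}.$$
The first relation says $\varphi^+$ is a holomorphic $\PPS(\mu)=\PPS(ib)$ form in both cases.

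For the antiholomorphic part I would conjugate the second relation. Using $\overline{\ti\varphi^-\circ h} = \overline{\ti\varphi^-}\circ h$ (valid since $h(z)$ is a point and $\overline{\ti\varphi^-}$ is holomorphic), conjugation gives
$$\overline{\varphi^-} = \overline{\ti\varphi^-}\circ h + \bar\nu\log h',$$
so $\overline{\varphi^-}$ is a holomorphic $\PPS(\bar\nu)$ form. In the arg-type case $\nu = -ib$, hence $\bar\nu = ib$ (as $b$ is real) and $\overline{\varphi^-}$ is a $\PPS(ib)$ form, proving (a); in the $\log|\cdot|$-type case $\nu = ib$, hence $\bar\nu = -ib$ and $\overline{\varphi^-}$ is a $\PPS(-ib)$ form, proving (b). The only genuinely delicate point is the additive-constant ambiguity of the harmonic decomposition: the matching determines $\varphi^+,\varphi^-$ only up to a chart-dependent constant, which I expect to dispose of by working with $\pa\varphi^+ = (\pa\ti\varphi^+)\circ h\cdot h' + ib\,h''/h'$ (manifestly the $\PS(ib)$ law, integrating back to the claimed transformation) rather than with $\varphi^+$ directly; everything else is the bookkeeping of conjugation that turns the order $\nu$ into $\bar\nu$.
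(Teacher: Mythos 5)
Your argument is correct and is essentially the route the paper intends: the paper gives no written proof, only the remark that the lemma follows from the identities $\ee^{-2b\arg h'} = (h'/\overline{h'})^{ib}$ and $\ee^{ib\log|h'|^2}=(h'\overline{h'})^{ib}$, which are exactly your rewriting of the two transformation laws as $\PPS(ib,-ib)$ and $\PPS(ib,ib)$ before splitting into holomorphic and antiholomorphic parts. Your handling of the additive-constant ambiguity by passing to $\pa\varphi^+$ correctly fills in the one detail the paper leaves implicit.
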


\begin{eg*} Let $\omega$ be a meromorphic 1-differential. Then
$\varphi = ib \log |\omega|^2$ and $\varphi = -2b\arg \omega$
are harmonic PPS-forms. We have
$-2b\arg \omega = ib\log\omega - ib \log\bar\omega,$
and
$ib\log |\omega|^2 = ib\log\omega + ib \log\bar\omega.$
\end{eg*}

\subsubsec{Simple PPS forms}
We mostly concern with (harmonic) PPS forms $\varphi$ such that $\pa\bp\varphi$ is a finite linear combination of $\delta$-measures, and we call such forms \emph{simple}. \index{simple PPS form} \index{PPS form!simple}

One way to think of simple forms is as relatives of the forms $\PPS(ib,ib)$ and $\PPS(ib,-ib)$, respectively
$$ib\log\rho,\qquad -ib\log\frac{\bp f}{\pa f},$$
where $\rho$ is a conformal metric (a positive $(1,1)$-differential) on $M$ and $\bp f/\pa f$ is the Beltrami coefficient of a scalar function $f.$

We focus on the single-valued case, i.e. on PPS forms of $\log|\cdot|$-type  (corresponding to flat conformal metrics with conical singularities), and we leave the equally important case of PPS forms of arg-type as an exercise.

Let us recall the \emph{Gauss-Bonnet formula}: for each conformal metric $\rho,$ we have
$$-\int_M \pa\bp \log \rho = \pi \chi(M).$$

\begin{thm} \label{Gauss-Bonnet} 
The Gauss-Bonnet formula extends to simple $\PPS(1,1)$ forms $\varphi:$
$$-\int_M \pa\bp \varphi = \pi \chi(M).$$
\end{thm}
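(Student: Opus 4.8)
The plan is to reduce the statement to the classical Gauss--Bonnet formula recalled just above, by comparing $\varphi$ with the logarithm of an honest smooth metric and checking that the difference contributes nothing to the integral. First I would fix any smooth conformal metric $\rho$ on $M$ and set $u:=\varphi-\log\rho$. Since $\rho$ is a $(1,1)$-differential, $\log\rho$ satisfies $(\log\rho\,\|\,\phi)=(\log\rho\,\|\,\ti\phi)\circ h+\log|h'|^2$, which is precisely the transformation law of a $\PPS(1,1)$ form; hence the inhomogeneous terms cancel in the difference and $u$ is a genuine single-valued $(0,0)$-differential, i.e. an ordinary function on $M$. It is smooth off $\supp\,\bfs\beta$ and inherits the logarithmic singularities of $\varphi$ at the points of $\supp\,\bfs\beta$, so it is locally integrable and defines a distribution on the closed surface $M$.

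Next I would note that $\pa\bp\varphi$ is a well-defined $(1,1)$-current, so that the left-hand side makes sense at all: the inhomogeneous term $\log|h'|^2=\log h'+\log\overline{h'}$ is harmonic, whence $\pa\bp\varphi$ obeys the transformation law of a $(1,1)$-differential. Writing $\pa\bp\varphi=\pa\bp\log\rho+\pa\bp u$ and inserting the classical Gauss--Bonnet identity $-\int_M\pa\bp\log\rho=\pi\chi(M)$ for the smooth metric $\rho$, I get
$$-\int_M\pa\bp\varphi=\pi\chi(M)-\int_M\pa\bp u.$$
It therefore remains to prove that $\int_M\pa\bp u=0$.

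This vanishing is the crux, and it holds because $u$ is a single-valued distribution on a surface without boundary. Integrating by parts against the constant function $1$ (legitimate since $M$ is compact, so $1$ is a valid test function, and $\pa\bp=\tfrac14\Delta$ is formally self-adjoint with no boundary term) moves the operator onto $1$:
$$\int_M\pa\bp u=\int_M u\,\pa\bp 1=0,\qquad\textrm{since }\pa\bp 1=0.$$
Equivalently, excising small coordinate disks $D_\ve(q_k)$ about the singular points and applying the divergence theorem on $M\setminus\bigcup_kD_\ve(q_k)$, the circle integrals over the $\pa D_\ve(q_k)$ produced by the smooth part of $u$ cancel in the limit against the $\delta$-masses that $\pa\bp u$ carries at the $q_k$, the error terms being of order $\ve\log\ve\to0$.

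The essential point to get right is precisely this bookkeeping at $\supp\,\bfs\beta$: one must verify that, $\varphi$ being of $\log|\cdot|$-type and hence single-valued with only admissible logarithmic singularities, the correction $u$ is genuinely single-valued and $L^1_{\mathrm{loc}}$, so that the distributional integration by parts against the constant leaves no residual boundary or point contribution. Granting this, the theorem follows at once. I would also highlight, as a byproduct, that no hypothesis on the individual masses is needed: since $\varphi$ is harmonic off $\supp\,\bfs\beta$, the current $\pa\bp\varphi$ is a pure combination of $\delta$-masses, and the identity forces its total mass to equal $-\pi\chi(M)$, which is the source of the neutrality condition $(\NC_b)$ discussed in the introduction.
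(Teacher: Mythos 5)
Your proof is correct and follows essentially the same route as the paper's: both subtract a reference $\PPS(1,1)$ form and reduce the claim to the vanishing of $\int_M\pa\bp u$ for a single-valued function $u$ with logarithmic singularities, established by Green/Stokes on the punctured surface. The only (immaterial) difference is that the paper takes $\varphi_*=\log|\omega|^2$ with $\omega$ a holomorphic $1$-differential and phrases the vanishing as the residue theorem for the meromorphic $1$-form $\pa(\varphi-\varphi_*)$, whereas you take a smooth metric as reference and phrase the vanishing as distributional self-adjointness of $\pa\bp$ on the closed surface.
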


We present the simplest example first. 
\begin{eg*} 
We consider the genus zero case and choose a conformal metric $\rho = |\omega_q|^2,$ where $\omega_q$ is an Abelian differential with a sole double pole at $q \in M.$ 
In the identity chart of $\C$ with $q=0,$ we can take $\rho(z) = |z|^{-4},$ so
$$ \pa\bp \log \rho = -\Delta\log|z| = -2\pi \delta_q$$
and 
$$\int_M \pa\bp \varphi = -2\pi = -\pi \chi(M).$$
\end{eg*}

\begin{proof}[Proof of Theorem~\ref{Gauss-Bonnet}]
Let us choose a conformal metric $\rho = |\omega|^2,$ where $\omega$ is a holomorphic 1-differential. 
Fix a simple $\PPS(1,1)$ form $\varphi$ and let $\varphi_* = \log \rho = \log |\omega|^2.$ 
We want to show that 
$$\int_M\pa\bp(\varphi-\varphi_*) = 0.$$
By Green's theorem, the integral is the sum of all residues of the meromorphic differential $\pa(\varphi-\varphi_*)$
(up to a multiplicative constant), which is zero. 
\end{proof}

\begin{def*}
Let $\varphi$ be a simple $\PPS(ib, ib)$ form (which determines a Gaussian free field modification). 
Denote
$$\bfs\beta = \frac i\pi \,\pa\bp \varphi.$$
We think of $\bfs\beta$ as a measure,
$$\bfs\beta = \sum \beta_k \delta_{q_k},$$ 
or (1,1)-differential, or a divisor, and call it the \emph{background charge} of $\varphi.$
\end{def*}
Clearly, the background charge determines the modification.

\begin{cor} \label{cor: NC}
We have the neutrality condition $(\NC_b)$, \index{neutrality condition!(NC$_b$)}
$$\int\bfs\beta = b \chi(M). $$
\end{cor}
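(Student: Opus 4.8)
The plan is to reduce the statement to the Gauss-Bonnet formula for simple $\PPS(1,1)$ forms, Theorem~\ref{Gauss-Bonnet}, by an elementary rescaling. Unwinding the definition of the background charge, we have
$$\int\bfs\beta = \frac i\pi \int_M \pa\bp\varphi,$$
so it suffices to evaluate $\int_M\pa\bp\varphi$ for a simple $\PPS(ib,ib)$ form $\varphi.$ Since we are in the single-valued ($\log|\cdot|$-type) case, $\varphi$ is globally single-valued and $\pa\bp\varphi$ is a well-defined distribution, so this integral makes sense.

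First I would observe that, for $b\neq0,$ the rescaled form $\psi:=\varphi/(ib)$ is again a simple $\PPS(1,1)$ form. Indeed, dividing the defining transformation law $\varphi = \ti\varphi\circ h + ib\log|h'|^2$ by $ib$ gives
$$\psi = \ti\psi\circ h + \log|h'|^2,$$
which is exactly the $\PPS(1,1)$ transformation law; moreover $\pa\bp\psi = \tfrac1{ib}\pa\bp\varphi$ is still a finite linear combination of $\delta$-measures, so $\psi$ remains simple. Applying Theorem~\ref{Gauss-Bonnet} to $\psi$ then yields $-\int_M\pa\bp\psi = \pi\chi(M),$ that is $\int_M\pa\bp\varphi = -ib\,\pi\,\chi(M).$ Substituting into the formula above and using $i\cdot(-i)=-i^2=1,$
$$\int\bfs\beta = \frac i\pi\bigl(-ib\,\pi\,\chi(M)\bigr) = b\,\chi(M),$$
which is $(\NC_b).$

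The only degenerate point is $b=0,$ where the rescaling is unavailable; there the claim reads $\int\bfs\beta=0,$ and I would argue it directly. When $b=0$ the form $\varphi$ is a simple scalar (harmonic away from finitely many logarithmic singularities), so $\pa\varphi$ is a meromorphic $1$-differential with at worst simple poles; by the residue/Green's theorem argument used in the proof of Theorem~\ref{Gauss-Bonnet}, $\int_M\pa\bp\varphi$ equals a constant multiple of the sum of residues and hence vanishes. I do not anticipate any genuine obstacle: the entire content is the reduction to Theorem~\ref{Gauss-Bonnet}, and the one point deserving a moment's care is confirming that scalar multiplication by $1/(ib)$ preserves both the $\PPS$ transformation type and the simplicity of the form, which the displayed transformation law settles at once.
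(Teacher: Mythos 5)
Your proposal is correct and follows essentially the same route as the paper, which likewise reduces the claim to Theorem~\ref{Gauss-Bonnet} via the observation that $\varphi_{\bfs\beta}$ is $ib$ times a simple $\PPS(1,1)$ form and then computes $\int\bfs\beta = \frac i\pi\, ib\,(-\pi\chi(M)) = b\chi(M)$. Your separate treatment of the degenerate case $b=0$ by the residue argument is a minor extra care the paper omits, but it does not change the substance.
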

\begin{proof}
It follows from Theorem~\ref{Gauss-Bonnet} that 
$$\int\bfs\beta = \frac i\pi \int \pa\bp\varphi = \frac i\pi ib (-\pi\chi(M)). $$
\end{proof}

\subsection{Basic forms and modifications of the Gaussian free field} \label{ss: Basic forms} \index{basic (PPS) form} \index{PPS form!basic}
For $q\in M$ let $\varphi_q$ denote a simple form with a sole singularity at $q,$
$$\varphi_q(z)\sim\log\frac1{|z-q|^2} $$
in any chart, i.e.
$$\pa\bp\varphi_q = -\pi\delta_q.$$

\begin{thm} \label{thm: PPS} Let $g\ne 1.$ Then for any $(M_g,q),$ 
there exists a unique (up to an additive constant) basic form $\varphi_q.$
This form is $\PPS(1/\chi,1/\chi).$
\end{thm}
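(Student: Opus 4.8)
The plan is to build $\varphi_q$ explicitly by correcting a scaled reference metric with the resolvent kernel, and then to extract both the order $1/\chi$ and uniqueness from Gauss--Bonnet and the maximum principle. Throughout, the hypothesis $g\ne1$ enters as $\chi\ne0$.

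First I would fix a smooth conformal metric $\rho$ with $\int_M\rho=1$ and set $\varphi_0:=\frac1\chi\log\rho$. Since $\log\rho$ is a $\PPS(1,1)$ form, $\varphi_0$ is a $\PPS(1/\chi,1/\chi)$ form, and by the Gauss--Bonnet formula its curvature integrates to
$$\int_M\pa\bp\varphi_0=\frac1\chi\int_M\pa\bp\log\rho=-\pi .$$
The normalization by $1/\chi$ is made precisely so that this matches $\int_M(-\pi\delta_q)=-\pi$; this balancing is possible only because $\chi\ne0$. I would then correct $\varphi_0$ by a genuine scalar: writing $R_q:=R(\cdot,q)$ for the resolvent kernel, Proposition~\ref{R}(c) gives $\pa\bp(2\pi R_q)=-\pi\delta_q+\pi\rho$, so with $\varphi_q:=\varphi_0+2\pi R_q+v$ the requirement $\pa\bp\varphi_q=-\pi\delta_q$ reduces to the Poisson equation
$$\pa\bp v=-\pa\bp\varphi_0-\pi\rho ,$$
whose right-hand side is smooth with integral $\pi-\pi=0$ (again the matching of masses). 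Vanishing integral is exactly the orthogonality-to-constants solvability condition, so a smooth $v$ exists, unique up to a constant. Since $2\pi R_q$ and $v$ are scalars, adding them preserves the transformation law, so $\varphi_q$ is a $\PPS(1/\chi,1/\chi)$ form; by construction $\pa\bp\varphi_q=-\pi\delta_q$, and Proposition~\ref{R}(d) gives $2\pi R_q(z)=\log\frac1{|z-q|^2}+O(1)$, so $\varphi_q$ is a basic form.

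For the order statement, suppose $\varphi$ is any simple $\PPS(\mu,\mu)$ form of $\log|\cdot|$-type with $\pa\bp\varphi=-\pi\delta_q$. Then $\frac1\mu\varphi$ is a simple $\PPS(1,1)$ form, so Theorem~\ref{Gauss-Bonnet} forces $-\frac1\mu\int_M\pa\bp\varphi=\pi\chi$, i.e. $\frac\pi\mu=\pi\chi$ and hence $\mu=1/\chi$. Thus every basic form has order $1/\chi$. Uniqueness now follows from the maximum principle: given two basic forms $\varphi_q,\varphi_q'$, both have order $1/\chi$, so their difference is a genuine scalar function; as $\pa\bp(\varphi_q-\varphi_q')=0$ and the logarithmic poles at $q$ cancel, it extends to a harmonic function on the compact surface $M$ and is therefore constant.

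The main obstacle is conceptual rather than computational: one must recognize that the prescribed single mass $\int(-\pi\delta_q)=-\pi$ can be realized as the curvature of a globally defined PPS form only when $\chi\ne0$, and that forcing this balance simultaneously pins the order to $1/\chi$ and guarantees solvability of the defining Poisson equation. The genus-one case $\chi=0$ is excluded for exactly this reason, which is why bipolar objects such as $G_{p,q}$ are needed there instead.
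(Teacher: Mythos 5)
Your proof is correct, and the order and uniqueness parts coincide with the paper's: the identity $s\pi\chi=\pi$ forced by the extended Gauss--Bonnet theorem (Theorem~\ref{Gauss-Bonnet}) pins the order to $1/\chi$, and the difference of two basic forms is a globally defined harmonic function on the compact surface $M$, hence constant. Where you genuinely diverge is the existence step. The paper constructs $\varphi_q$ explicitly from Abelian differentials, with a case analysis on the genus: $\log|\omega_q|$ for a meromorphic $1$-differential with a sole double pole at $q$ when $g=0$; $-\log|\omega_q|$ for a holomorphic differential with a double zero at a Weierstrass point when $g=2$, transported to other base points via $\varphi_q-\varphi_{\ti q}=2G_{q,\ti q}$; and the average $N\varphi_q=-\log|\omega|^2+\sum_{k=1}^N 2G_{q,q_k}$ over the $N=2g-2$ zeros of a holomorphic differential for higher genera. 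You instead start from $\frac1\chi\log\rho$ for an arbitrary normalized smooth metric, add the resolvent term $2\pi R_q$ to manufacture the prescribed singularity $-\pi\delta_q+\pi\rho$ via Proposition~\ref{R}, and absorb the remaining smooth discrepancy into a Poisson solve whose solvability is exactly the mass balance $\int(-\pi\delta_q)=-\pi=\int\pa\bp\varphi_0$ made possible by $\chi\ne0$. Your route is uniform in the genus, avoids Weierstrass points entirely, and uses only the spectral machinery already developed in Section~\ref{s: Conformal metrics}; what it gives up is the explicit representation of $\varphi_q$ in terms of $\omega$ and the bipolar Green's function, which the paper exploits immediately afterwards (in Corollary~\ref{cor: PPS}, Proposition~\ref{prop: PPS}, and the ensuing formulas for $j_{\bfs\beta}$ and $\CC_{(b)}$). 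Both arguments are complete proofs of the stated theorem.
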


\begin{proof}
Let's start with the last statement. If $\varphi_q$ is a $\PPS(s,s)$ form, then
$$\pi = -\int\pa\bp \varphi_q = s\pi\chi,$$
so $s = 1/\chi.$

Suppose we have two simple forms $\varphi_q,\wt\varphi_q$ with a sole singularity at $q$ such that $\pa\bp\varphi_q = \pa\bp\wt\varphi_q = -\pi\delta_q.$
Then $\varphi_q-\wt\varphi_q$ is a harmonic function (without singularities), hence is a constant. 

Now let's prove the existence. In the genus zero case, we can take
$$\varphi_q = \log|\omega_q|,$$
where $\omega_q$ is a meromorphic 1-differential with a double pole at $q$ and neither zeros nor poles elsewhere. 
For example, if $M = \wh\C$ and $q=\infty,$ then in the identity chart of  $\C,$ we have
$\varphi_q(z) = 0$ (up to an additive constant), so if $f:(M,q)\to(\wh\C,\infty),$ then
\begin{equation} \label{eq: basic form}
\varphi_q(z) = \log |f'(z)|,\qquad (z\ne q)
\end{equation}
because $\varphi_q$ is a $\PPS(1/2,1/2)$ form. 
In particular, if $M = \wh\C$ and $q\ne\infty,$ then in the identity chart of  $\C,$ we have
\begin{equation} 
\label{eq: varphi_q0}
\varphi_q(z) = -2\log|z-q|.
\end{equation}
To see this, apply \eqref{eq: basic form} to $f(z) = 1/(z-q).$ 
Then for $h:M\to\wh\C,$ and $h(q)\ne \infty,$ we have 
$$\varphi_q(z) = \log |h'(z)|-2\log|h(z)-h(q)|.$$

If $g=2$ and $q$ is a Weierstrass point, then there exists a holomorphic Abelian differential $\omega_q$ with a double zero at $q,$ so we can define
$$\varphi_q = -\log|\omega_q|.$$
To construct $\varphi_{\ti q}$ for a non-Weierstrass base point $\ti q,$ we use the general formula
$$\varphi_q - \varphi_{\ti q} = 2G_{q,\ti q}.$$

For higher genera, the construction is slightly less explicit. 
Take any holomorphic Abelian differential $\omega$ and let us assume that all zeros $q_1,\cdots,q_N$ of $\omega$ are simple. Here $N = 2g -2.$ Define
$$N\varphi_q = -\log|\omega|^2 + \sum_{k=1}^N 2G_{q,q_k},$$
which completes the proof.
\end{proof}

\begin{cor} \label{cor: PPS}
Let $g\ne 1.$ 
Given a background charge $\bfs\beta=\sum\beta_k\cdot q_k$ with the neutrality condition
\begin{equation} \tag{$\NC_b$}
\sum_k\beta_k = b\chi(M),
\end{equation}
there is a unique (up to an additive constant) simple $\PPS(ib,ib)$ form $\varphi_{\bfs\beta}$ with background charge $\bfs\beta,$ i.e. $ \bfs\beta = \frac i\pi \pa\bp\varphi_{\bfs\beta}.$
Furthermore, we have 
$$\varphi_{\bfs\beta} = i\sum_k\beta_k\varphi_{q_k} $$
modulo an additive constant. 
\end{cor}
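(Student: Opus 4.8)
The plan is to take the explicit candidate $\varphi_{\bfs\beta} = i\sum_k \beta_k\varphi_{q_k}$ — which exists because Theorem~\ref{thm: PPS} supplies each basic form $\varphi_{q_k}$ in the regime $g\ne 1$ — and to check that it meets every requirement, deferring uniqueness to a short harmonic-function argument at the end. Being a complex-linear combination of the harmonic basic forms $\varphi_{q_k}$, the candidate is automatically harmonic, so only the singular structure and the transformation law remain to be verified.

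First I would read off the background charge. Using $\pa\bp\varphi_{q_k} = -\pi\delta_{q_k}$ from the definition of the basic forms, linearity gives
$$\frac i\pi\,\pa\bp\varphi_{\bfs\beta} = \frac{i^2}\pi\sum_k\beta_k\,\pa\bp\varphi_{q_k} = \sum_k\beta_k\delta_{q_k} = \bfs\beta .$$
In particular $\pa\bp\varphi_{\bfs\beta}$ is a finite combination of $\delta$-measures, so $\varphi_{\bfs\beta}$ is simple, with logarithmic singularity $i\beta_k\log(1/|z-q_k|^2)$ at each $q_k$.

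The one genuinely load-bearing step is the transformation law, and this is exactly where the neutrality condition is consumed. Each $\varphi_{q_k}$ is a $\PPS(1/\chi,1/\chi)$ form by Theorem~\ref{thm: PPS}, so across a transition map $h$ one has $\varphi_{q_k} = \ti\varphi_{q_k}\circ h + \frac1\chi\log|h'|^2$. Summing with the weights $i\beta_k$ yields
$$\varphi_{\bfs\beta} = \ti\varphi_{\bfs\beta}\circ h + \frac i\chi\Big(\sum_k\beta_k\Big)\log|h'|^2 = \ti\varphi_{\bfs\beta}\circ h + ib\log|h'|^2,$$
where the final equality invokes $(\NC_b)$, $\sum_k\beta_k = b\chi(M)$, to collapse the coefficient to exactly $ib$. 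This is precisely the $\PPS(ib,ib)$ law, so the candidate is of the asserted type; it is the neutrality condition, and nothing else, that pins the order at $ib$.

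Finally I would settle uniqueness as in Theorem~\ref{thm: PPS}. If $\wt\varphi_{\bfs\beta}$ is another simple $\PPS(ib,ib)$ form with background charge $\bfs\beta$, then the two obey the same transformation law, so their difference is a genuine scalar function, and $\pa\bp(\varphi_{\bfs\beta}-\wt\varphi_{\bfs\beta}) = 0$ as a distribution since the $\delta$-masses cancel. Hence the difference is harmonic on all of the compact surface $M$ and therefore constant, giving the claimed uniqueness up to an additive constant.
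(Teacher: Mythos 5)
Your proof is correct and follows essentially the same route as the paper: take the explicit candidate $i\sum_k\beta_k\varphi_{q_k}$, verify its background charge, and settle uniqueness by noting the difference of two such forms is a harmonic function on the compact surface $M$, hence constant. The only difference is that you spell out the verification of the $\PPS(ib,ib)$ transformation law via the $\PPS(1/\chi,1/\chi)$ property of the basic forms and the neutrality condition, a step the paper's proof leaves implicit.
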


\begin{proof}
Since $\pa\bp \varphi_{q_k} = -\pi \delta_{q_k},$ we have 
$$\frac i\pi \,\pa\bp\Big(i\sum_k\beta_k\varphi_{q_k}\Big) = \sum_k\beta_k\delta_{q_k},$$
or $\varphi = i\sum_k\beta_k\varphi_{q_k} $ is a simple $\PPS(ib,ib)$ form with background charge $\bfs\beta.$ 
If $\wt\varphi$ is another such simple $\PPS(ib,ib)$ form, then $\varphi-\wt\varphi$ is a harmonic function (without singularities) on $M,$ hence it is a constant. 
\end{proof}

\begin{rmk*}
A continuous version of the last statement would be
$$\varphi(z) = i\int \varphi_q(z)\,\dd \bfs\beta(q),\qquad \int \bfs\beta = b\chi(M).$$
\end{rmk*}

The last corollary shows how to construct the modified field from the background charge using the basic PPS forms: we define the $\bfs\beta$-modified field $\Phi_{\bfs\beta}$ by 
$$\Phi_{\bfs\beta}(z,z_0) = \Phi(z,z_0) + \varphi_{\bfs\beta}(z) - \varphi_{\bfs\beta}(z_0),$$
where $\Phi$ is the Gaussian free field. 

The space of all $\PPS(ib, ib)$ forms and the space of scalar functions on $M$ are \emph{non-canonically} equivalent. 
Indeed, we can fix some form $\varphi_*$ (e.g. with just one singular point, $\varphi_*= ib\chi\varphi_{q_*},$ or the form $\varphi_*= ib\log\rho_*$ with a constant curvature metric $\rho_*$), and call it the \emph{reference form} (or call $\rho_*$ the \emph{reference metric}). \index{reference form} \index{reference metric}
The reference form $\varphi_*= ib\chi\varphi_{q_*}$ is said to be \emph{basic}.  
Then all other $\PPS(ib, ib)$ forms differ from $\varphi_*$ by some scalar functions $\psi,$
$$\varphi = \varphi_* + \psi.$$

\begin{prop} \label{prop: PPS}
Let $g\ne 1.$
Given a basic reference form $\varphi_*= ib\chi\varphi_{q_*}$ and a background charge $\bfs\beta=\sum\beta_k\cdot q_k$ with the neutrality condition $(\NC_b),$ we have 
$$
\varphi_{\bfs\beta}(z) - \varphi_{\bfs\beta}(w) = \varphi_*(z) - \varphi_*(w) + 2i\sum_k \beta_k (G_{q_k,q_*}(z) - G_{q_k,q_*}(w)).
$$
Furthermore, $\varphi_{\bfs\beta}(z) - \varphi_{\bfs\beta}(w)$ does not depend on the choice of $q_*.$ 
\end{prop}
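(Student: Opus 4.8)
The plan is to reduce everything to the explicit representation of $\varphi_{\bfs\beta}$ supplied by Corollary~\ref{cor: PPS}, together with the elementary comparison of two basic forms based at different points. First I would invoke Corollary~\ref{cor: PPS} to write $\varphi_{\bfs\beta} = i\sum_k \beta_k\varphi_{q_k}$ modulo an additive constant; since the statement concerns only the difference $\varphi_{\bfs\beta}(z)-\varphi_{\bfs\beta}(w)$, this constant plays no role and I obtain
$$\varphi_{\bfs\beta}(z)-\varphi_{\bfs\beta}(w) = i\sum_k \beta_k\big(\varphi_{q_k}(z)-\varphi_{q_k}(w)\big).$$

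Next I would record the identity $\varphi_{q_k} = \varphi_{q_*} + 2G_{q_k,q_*}$ modulo a constant, which is exactly the relation used in the proof of Theorem~\ref{thm: PPS}: it holds because $\pa\bp\varphi_{q_k} = -\pi\delta_{q_k}$, $\pa\bp\varphi_{q_*} = -\pi\delta_{q_*}$, and $\pa\bp(2G_{q_k,q_*}) = -\pi(\delta_{q_k}-\delta_{q_*})$ by \eqref{eq: pabpG}, so that $\varphi_{q_k}-\varphi_{q_*}-2G_{q_k,q_*}$ is harmonic on all of $M$ and hence constant. Substituting this into the previous display and separating the $\varphi_{q_*}$ contribution gives
$$\varphi_{\bfs\beta}(z)-\varphi_{\bfs\beta}(w) = i\Big(\sum_k\beta_k\Big)\big(\varphi_{q_*}(z)-\varphi_{q_*}(w)\big) + 2i\sum_k\beta_k\big(G_{q_k,q_*}(z)-G_{q_k,q_*}(w)\big).$$
Applying the neutrality condition $(\NC_b)$ in the form $\sum_k\beta_k = b\chi(M)$ together with the definition $\varphi_* = ib\chi(M)\varphi_{q_*}$ turns the first summand into $\varphi_*(z)-\varphi_*(w)$, which is precisely the asserted formula.

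For independence of $q_*$, the quickest route is to observe that the left-hand side $\varphi_{\bfs\beta}(z)-\varphi_{\bfs\beta}(w)$ is constructed with no reference to $q_*$ at all, so the right-hand side, being equal to it, cannot depend on $q_*$ either. If a self-contained check is preferred, I would replace $q_*$ by a second base point $p_*$ and compute the difference of the two right-hand sides using the formal decomposition $G_{p,q} = G_p - G_q$: via $\varphi_{q_*}-\varphi_{p_*} = 2G_{q_*,p_*}$ the $\varphi_*$-term changes by $2ib\chi(M)\big(G_{q_*,p_*}(z)-G_{q_*,p_*}(w)\big)$, while $G_{q_k,q_*}-G_{q_k,p_*} = G_{p_*,q_*}$ makes the sum-term change by $-2i\big(\sum_k\beta_k\big)\big(G_{q_*,p_*}(z)-G_{q_*,p_*}(w)\big)$, and these cancel exactly because $\sum_k\beta_k = b\chi(M)$. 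I expect no genuine obstacle here: the entire content is the careful bookkeeping of additive constants and the consistent use of the neutrality condition, which is the only place where care is required.
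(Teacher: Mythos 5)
Your proposal is correct and follows essentially the same route as the paper: both arguments rest on Corollary~\ref{cor: PPS} together with the identity $\varphi_{q_k}-\varphi_{q_*}=2G_{q_k,q_*}$ (modulo a constant) and the neutrality condition, and both settle the $q_*$-independence by appealing to the fact that $\varphi_{\bfs\beta}$ is determined up to an additive constant independently of any reference point. Your explicit Laplacian check of the key identity and the optional direct cancellation computation are harmless elaborations of what the paper leaves implicit.
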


\begin{proof}
By the neutrality condition $b\chi = \sum_k\beta_k,$ we have $\varphi_*=i(\sum_k\beta_k) \varphi_{q_*}.$
It follows from the previous corollary that 
$$\varphi_{\bfs\beta}- \varphi_* = i\sum_k \beta_k(\varphi_{q_k} - \varphi_{q_*}) = 2i\sum_k \beta_k G_{q_k,q_*}.$$
The last claim follows from the uniqueness of $\varphi_{\bfs\beta}$ modulo an additive constant, see the previous corollary. 
\end{proof}

Let us consider the genus one case. 
Denote $\varphi_* = 2ib \log|\omega|,$ where $\omega$ is a holomorphic form.

\begin{prop} \label{prop: PPS1} 
Given a background charge $\bfs\beta = \sum \beta_k\cdot q_k$ with the neutrality condition $(\NC_b)$ and a point $q_\infty\in M$ we define the scalar function
$$\psi_{\bfs\beta,q_\infty}(z,z_0) := 2i\sum \beta_k (G_{q_k,q_\infty}(z) -G_{q_k,q_\infty}(z_0)).$$
Then the form
$$\varphi_{\bfs\beta}(z,z_0) = \varphi_*(z) - \varphi_*(z_0)  + \psi_{\bfs\beta,q_\infty}(z,z_0)$$
does not depend on the choice of the point $q_\infty.$
\end{prop}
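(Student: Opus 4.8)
The plan is to notice that the reference term $\varphi_*(z)-\varphi_*(z_0)$ carries no $q_\infty$-dependence at all, so the whole assertion collapses to showing that the scalar $\psi_{\bfs\beta,q_\infty}(z,z_0)$ is independent of the choice of $q_\infty$. Before anything else I would record the single structural fact that drives the argument: on the torus $\chi(M)=0$, so the neutrality condition $(\NC_b)$ that an admissible background charge satisfies reads simply $\sum_k\beta_k=0$.

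The heart of the argument is a cocycle identity for bipolar Green's function. Fix two candidate points $q_\infty$ and $q_\infty'$ and, for each $k$, compare $G_{q_k,q_\infty}(z)$ with $G_{q_k,q_\infty'}(z)$. Both have the same logarithmic pole $-\log|z-q_k|+O(1)$ at $q_k$, so in the difference $G_{q_k,q_\infty}(z)-G_{q_k,q_\infty'}(z)$ the singularities at $q_k$ cancel, leaving a function harmonic on $M\setminus\{q_\infty,q_\infty'\}$ with a $+\log|z-q_\infty|$ singularity at $q_\infty$ and a $-\log|z-q_\infty'|$ singularity at $q_\infty'$. By uniqueness of bipolar Green's function up to an additive constant, this difference equals $G_{q_\infty',q_\infty}(z)$ modulo a constant; the decisive point is that the resulting expression no longer depends on $k$. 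Equivalently, one may simply quote the formal splitting $G_{p,q}=G_p-G_q$ recorded earlier in this section, from which $G_{q_k,q_\infty}-G_{q_k,q_\infty'}=G_{q_\infty',q_\infty}$ is immediate.

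With this in hand I would subtract the two versions of $\psi$, regroup by $z$ and $z_0$, and factor the common $k$-independent Green's function out of the sum:
\begin{equation*}
\psi_{\bfs\beta,q_\infty}(z,z_0)-\psi_{\bfs\beta,q_\infty'}(z,z_0)
=2i\Big(\sum_k\beta_k\Big)\big(G_{q_\infty',q_\infty}(z)-G_{q_\infty',q_\infty}(z_0)\big),
\end{equation*}
which vanishes by neutrality, $\sum_k\beta_k=0$. Hence $\psi_{\bfs\beta,q_\infty}=\psi_{\bfs\beta,q_\infty'}$ and $\varphi_{\bfs\beta}(z,z_0)$ is independent of $q_\infty$. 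The only subtlety I would flag explicitly is the additive-constant ambiguity inherent in each individual $G_{q_k,q_\infty}$: it is harmless here precisely because $\psi$ is built only from differences $G(\cdot)(z)-G(\cdot)(z_0)$, in which any such constant cancels, so every identity above is invoked only after forming these differences. This constant-bookkeeping is the one place requiring care; the analytic content is entirely the singularity-cancellation step, and it is exactly the genus-one value $\chi(M)=0$ (hence $\sum_k\beta_k=0$) that closes the argument.
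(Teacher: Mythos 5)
Your proof is correct. The overall strategy coincides with the paper's — isolate a $k$-independent quantity in the difference $\psi_{\bfs\beta,q_\infty}-\psi_{\bfs\beta,q_\infty'}$ and let the neutrality condition $\sum_k\beta_k=0$ (valid since $\chi(M)=0$ in genus one) annihilate it — but you reach that point through a different Green's-function identity. The paper applies the bilinear relation \eqref{eq: bilinear relation} to rewrite $G_{q_k,q_\infty}(z)-G_{q_k,q_\infty}(z_0)$ as $G_{z,z_0}(q_k)-G_{z,z_0}(q_\infty)$, so that subtracting the primed version cancels the $G_{z,z_0}(q_k)$ term and leaves $-2i(\sum_k\beta_k)(G_{z,z_0}(q_\infty)-G_{z,z_0}(q_\infty'))$. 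You instead subtract in the pole variable first and use the cocycle identity $G_{q_k,q_\infty}-G_{q_k,q_\infty'}=G_{q_\infty',q_\infty}$ (mod constants), justified either by singularity cancellation plus uniqueness of bipolar Green's functions or by the formal splitting $G_{p,q}=G_p-G_q$. The two routes are of essentially equal length and rest on closely related facts — indeed the paper verifies its bilinear relation via the same formal splitting you quote — so neither buys much over the other; your remark that the additive-constant ambiguity is harmless because only differences in $z$ versus $z_0$ ever appear is a worthwhile point that the paper leaves implicit.
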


\begin{proof}
Given points $q_\infty, q'_\infty \in M,$ we have 
$$\psi_{\bfs\beta,q_\infty}(z,z_0) - \psi_{\bfs\beta,q'_\infty}(z,z_0) = - 2i(\sum \beta_k) (G_{z,z_0}(q_\infty) - G_{z,z_0}(q'_\infty)) = 0$$
by the neutrality condition $\sum \beta_k= 0.$
Here we use the bilinear relation~\eqref{eq: bilinear relation} for bipolar Green's function. 
\end{proof}

\begin{rmk*}
Given a divisor $\bfs\tau = \sum \tau_j\cdot z_j$ satisfying the neutrality condition $(\NC_0)$, we define
$$\varphi_{\bfs\beta}[\bfs\tau]:= \sum_j \tau_j \varphi_{\bfs\beta}(z_j), \quad \Phi_{\bfs\beta}[\bfs\tau]:=\Phi[\bfs\tau]+ \varphi_{\bfs\beta}[\bfs\tau].$$
Let us consider the case $g\ne 1$ first. Given a point $q_*\in M,$ let $\bfs\beta_*:= b\chi \cdot q_*.$
Then the basic reference form $\varphi_*= ib\chi\varphi_{q_*}$ is $\varphi_{\bfs\beta_*}.$
By Proposition~\ref{prop: PPS}, if $\supp\,\bfs\tau \cap (\supp\,\bfs\beta \cup \supp\,\bfs\beta_* )= \emptyset,$ then we have 
\begin{equation} \label{eq: PPS}
\varphi_{\bfs\beta}[\bfs\tau] = \varphi_{\bfs\beta_*}[\bfs\tau] + i\,\E\,\Phi[\bfs\beta-\bfs\beta_*] \Phi[\bfs\tau].
\end{equation}
Next, we consider the case $g = 1.$ For $\bfs\beta_* = \bfs 0,$ we have $\varphi_{\bfs\beta_*} = \varphi_*.$ 
By Proposition~\ref{prop: PPS1}, \eqref{eq: PPS} holds if $\supp\,\bfs\tau \cap \supp\,\bfs\beta = \emptyset.$

Given two background charges $\bfs\beta_1$ and $\bfs\beta_2,$ it follows from \eqref{eq: PPS} that the modifications $\varphi_{\bfs\beta_1}$ and $\varphi_{\bfs\beta_2}$ are related as 
\begin{equation} \label{eq: PPSs}
\varphi_{\bfs\beta_2}[\bfs\tau] = \varphi_{\bfs\beta_1}[\bfs\tau] + i\,\E\,\Phi[\bfs\beta_2-\bfs\beta_1] \Phi[\bfs\tau]
\end{equation}
if $\supp\,\bfs\tau \cap (\supp\,\bfs\beta_1 \cup \supp\,\bfs\beta_2)= \emptyset.$
\end{rmk*}

\begin{rmk*}
Let $g\ge 2.$ Given a background charge $\bfs\beta,$ the representation 
$$\varphi_{\bfs\beta}= \varphi_* + 2i\sum_k\beta_k G_{q_k,q_*}, \quad \varphi_*=ib\big(\log|\omega|^2 - 2\sum_{j=1}^{2g-2} G_{q_*,p_j}\big)$$
does not depend on the choice of a holomorphic 1-differential $\omega$ and a reference point $q_*$ (modulo an additive constant), see Corollary~\ref{cor: PPS}.
Here $p_j$'s are zeros of $\omega.$ 
Therefore, we have
\begin{equation}\label{eq: varphi beta} 
\varphi_{\bfs\beta}(z)-\varphi_{\bfs\beta}(z_0)= 2ib \log\Big|\frac{\omega(z)}{\omega(z_0)}\Big| + i\,\E\,\Phi[\bfs\beta-\bfs \beta_0]\Phi(z,z_0),
\end{equation}
where $\bfs\beta_0 = -b\cdot(\omega)$ and $(\omega)$ is the divisor of $\omega,$ i.e. $\bfs \beta_0 = -\sum_{j=1}^{2g-2} b\cdot p_j.$

More generally, for any genus $g,$ 
given a meromorphic 1-differential $\omega,$ 
let $\bfs\beta_0 = -b \cdot (\omega).$
Then $\varphi_{\bfs\beta_0} = ib \log |\omega|^2.$
If $\supp\,\bfs\tau \cap (\supp\,\bfs\beta \cup \supp\,\bfs\beta_0) = \emptyset,$ then by \eqref{eq: PPSs}, we have 
\begin{equation}\label{eq: varphi beta tau} 
\varphi_{\bfs\beta}[\bfs\tau] = 2ib \sum_{j} \tau_j\log|\omega(z_j)| + i\,\E\,\Phi[\bfs\beta-\bfs\beta_0]\Phi[\bfs\tau].
\end{equation}
\end{rmk*}

\subsection{Modifications of the current field} 
Given a background charge $\bfs\beta=\sum_k\beta_k\cdot q_k,$ we define 
$$J_{\bfs\beta}(z) := \pa_z \Phi_{\bfs\beta}(z,z_0), \quad j_{\bfs\beta} := \E\, J_{\bfs\beta}.$$ 
Then $J_{\bfs\beta} = J + j_{\bfs\beta}$ and $J_{\bfs\beta}$ is a pre-Schwarzian form of order $ib.$

\subsubsec{The genus zero case}
In the identity chart of $\C,$ we have 
$$j_{\bfs\beta}(z)= - i\sum_k \frac{\beta_k}{z -q_k}.$$ 
It follows from \eqref{eq: varphi_q0} and Proposition~\ref{prop: PPS}.

\subsubsec{The genus one case}
In the identity chart of torus $\T_\Lambda,$ we have
\begin{equation} \label{eq: jbeta4torus}
j_{\bfs\beta}(z) =- i \sum_k \beta_k \frac{\theta'(z-q_k)}{\theta(z-q_k)} -\frac{2\pi}{\Im\,\tau}\,\sum_k \beta_k\, \Im\,q_k.
\end{equation}
It follows from Proposition~\ref{prop: PPS1}.

\subsubsec{The higher genus case}
Fix a homomorphic 1-differential $\omega$ with zeros at $p_j$ $(j=1,\cdots,N=2g-2).$
It follows from \eqref{eq: varphi beta} that 
\begin{equation}\label{eq: EJbeta} 
j_{\bfs\beta}(z) = ib\, \frac{\pa\omega(z)}{\omega(z)} + i\,\E\, J(z)\Phi[\bfs\beta-\bfs\beta_0].
\end{equation}
The non-random field $j_{\bfs\beta}$ has a removable singularity at each zero $p_j$ of $\omega$ and a simple pole at $q_k$ with the residue $-i\beta_k.$

\subsection{Modifications of the Virasoro field}
Given a background charge $\bfs\beta = \sum \beta_k\cdot q_k,$ we define 
$$T_{\bfs\beta} := -\frac12 J_{\bfs\beta}*J_{\bfs\beta} + ib \pa J _{\bfs\beta}.$$
Then $T_{\bfs\beta}$ is a Schwarzian form of order $c/{12}$ ($c = 1-12b^2$ is the central charge), see Theorem~\ref{SET}.
From the relation $J_{\bfs\beta} = J + j_{\bfs\beta},$ we find
$$T_{\bfs\beta} = T + ib\pa J - j_{\bfs\beta}J - \frac12 j_{\bfs\beta}^2 + ib\pa j_{\bfs\beta}.$$
In particular,
\begin{equation}\label{eq: ETbeta} 
\E\,T_{\bfs\beta} = \E\, T -\frac12 j_{\bfs\beta}^2 + ib\pa j_{\bfs\beta}.
\end{equation}

\subsubsec{The genus zero case} 
In the $\wh\C$-uniformization, we have 
\begin{equation} \label{eq: ETbeta0}
\E\,T_{\bfs\beta}(z) = \sum _k \frac{\lambda_{q_k}}{(z-q_k)^2} + \sum_{j < k}\frac{\beta_j\beta_k}{(z-q_j)(z-q_k)},
\end{equation}
where $\lambda_{q_k} = \frac12\beta_k^2 - b\beta_k.$

\subsubsec{The genus one case} 
In the $\T_{\Lambda}$-uniformization, we have 
\begin{align*}
\E\,T_{\bfs\beta}(z) & =-\frac16 \frac{\theta'''(0)}{\theta'(0)}-\frac12\frac\pi{\Im\,\tau}+ \sum_k \lambda_{q_k} \Big(\frac{\theta'(z-q_k)}{\theta(z-q_k)}\Big)^2
+\sum_k b\beta_k\frac{\theta''(z-q_k)}{\theta(z-q_k)}\\
&+\sum_{j < k}\beta_j\beta_k\frac{\theta'(z-q_j)\theta'(z-q_k)}{\theta(z-q_j)\theta(z-q_k)} \\
&-\frac{2\pi i}{\Im\,\tau}\big(\sum_k\beta_k\,\Im\,q_k\big)\Big(\sum_k\beta_k\frac{\theta'(z-q_k)}{\theta(z-q_k)}\Big)
-\frac{2\pi^2}{(\Im\,\tau)^2}\big(\sum_k\beta_k\,\Im\,q_k\big)^2.
\end{align*}

\subsubsec{The higher genus case}
It is convenient to choose $\omega = \omega^{(1)},$ see \eqref{eq: omega} for the definition of the $j$-differential $\omega^{(j)}.$ 
We have 
\begin{align*}
\E\,T_{\bfs\beta}(z) &=\frac{c}{12} S_{\omega^{(0)}}(z)+ \frac18 \Big(\frac{\omega^{(2)}(z)}{\omega^{(1)}(z)}\Big)^2-\frac16 \frac{\omega^{(3)}(z)}{\omega^{(1)}(z)}-\frac\pi2 \langle (\Im\,\tau)^{-1} \vec\omega(z), \overline{\vec\omega(z)} \rangle\\
&+b\,\E\,\Big(\frac{\pa\omega^{(1)}(z)}{\omega^{(1)}(z)}J(z)-\pa J(z) \Big)\Phi[\bfs\beta-\bfs\beta_0]+\frac12\big(\E\,J(z) \Phi[\bfs\beta-\bfs\beta_0]\big)^2.
\end{align*}
By construction, it has poles only at $q_k$'s.

\section{Liouville action} \label{s: Liouville action}
In Subsection~\ref{ss: Dirichlet action} we argue that the Gaussian free field $\Phi$ can be produced by the Dirichlet action. 
In this section we modify this argument to show that the background charge modification $\Phi_{\bfs\beta}$ can be obtained from the Liouville action $S_{\bfs\beta}$ associated with a background charge $\bfs\beta$ on the PPS forms in the case that the modification parameter $b$ is imaginary, hence $c>1.$

\subsection{Liouville action on forms and localization} \index{Liouville action}
Given an imaginary $b,$ fix a $\PPS(ib,ib)$ form $\phi.$ 
For example, $\phi = ib \log \rho$ is a $\PPS(ib,ib)$ form if $\rho$ is a conformal Riemann metric. 
Another typical example is the simple PPS form $\phi = \varphi_{\bfs\beta}$ associated with a background charge $\bfs\beta,$ i.e. $\bfs\beta = i/\pi \,\pa\bp \varphi_{\bfs\beta}.$

Let us fix a reference $\PPS(ib,ib)$ form $\varphi_*.$ 
Our result does not depend on the choice of a reference form. 
We only consider the case that a reference form is basic: $\varphi_* = ib\chi(M) \varphi_{q_*}, \pa\bp\varphi_{q_*} = -\pi \delta_{q_*}$ for a fixed marked point $q_*.$
 
We define the ``action" associated with $\phi$ on the space of some forms by  \index{Liouville action}
$$S(\varphi) \equiv S_\phi(\varphi) = \DD(\psi) + \frac1\pi \int \psi \pa\bp \phi, \quad \varphi = \varphi_* + \psi.$$
Later, we specify the space via localization and consider the action on its finite dimensional subspace. 
If we consider $\phi = ib \log \rho$, then the second term becomes
$$\frac{ib}{\pi}\int \psi \pa\bp\log\rho = \frac b{2\pi i}\int \psi \kappa\rho,$$
where $\kappa$ is the curvature of the conformal metric, see Subsection~\ref{ss: spectral theory}.
When we consider $\phi = \varphi_{\bfs\beta}$ associated with a background charge $\bfs\beta,$ we write $S_{\bfs\beta}$ for $S_\phi.$ 
In this case, the action becomes
$$S_{\bfs\beta}(\varphi) = \DD(\psi) -i\int\psi\bfs\beta, \quad \DD(\psi) = \frac1{4\pi}(\psi,\psi)_\nabla.$$ 

To resolve the divergence in the partition function, we need to introduce
some localization, e.g.
\begin{equation} \label{eq: localization} 
\int \psi\mu_* = 0,
\end{equation}
where $\mu_*$ is a reference measure, or a (1,1)-form. 
For example, the choice $\mu_* = \bfs\beta$ reduces the action to the Dirichlet integral. 
We assume that
\begin{equation} \label{eq: mass one}
\int \mu_* = 1.
\end{equation}

\subsection{Fields produced by the Liouville action}
Let $\rho$ be an arbitrary conformal Riemann metric on $M.$
(This $\rho$ must be sufficiently smooth.) 
We assume that 
$$\int\rho = 1.$$
(One of the goals is to show that the result does not depend on the choice of $\rho.$)
Denote
$$H_n = \textrm{span}\{e_0,\cdots,e_n\} \cap \{\psi:\int\psi\mu_*=0\},$$
where $e_k$'s are the normalized eigenfunctions of the positive Laplace operator $\Delta_\rho,$ with $e_0=1.$ 

The Liouville action $S(\varphi),$ $(\varphi=\varphi_*+\psi)$ determines the probability measures
$$\frac1{Z} \ee^{-S(\varphi_n)}[\dd\varphi_n], \qquad \varphi_n = \varphi_*+\psi_n, \quad\psi_n=\sum_{k=0}^na_ke_k\in H_n, \quad [\dd\varphi_n]=\dd a_1\cdots \dd a_n,$$
where $Z$ is a normalizing constant or the partition function. 
Therefore the Liouville action determines random fields $\psi_n$ on $M$ as explained in Subsection~\ref{ss: Dirichlet action}. 
We compute the limit field $\Phi_S,$
$$\varphi_n \to \Phi_S$$
as distributional fields (or within correlations).

\begin{lem} Given $\lambda > 0, c\in\R,$
if $\mu$ is the probability measure on $\R$ corresponding to the partition function
$$Z = \int_\R \ee^{-\lambda x^2 - cx}\,\dd x,$$
then $\mu$ has the same moments $\displaystyle\int_\R x^k\,\dd \mu(x)$ as the Gaussian random variable 
$$X \sim N(-\frac c{2\lambda},\frac1{2\lambda}).$$
\end{lem}

\begin{proof}
For real $m,$ $\dfrac1{\sqrt{2\pi}\sigma}\ee^{-\frac{(x-m)^2}{2\sigma^2}}$ is the density of $N(m,\sigma^2)$, so we equate
$$\lambda x^2 + cx = \frac1{2\sigma^2}(x^2-2mx).$$
Comparing the coefficients, we have $m = -c/{(2\lambda)}$ and $\sigma^2 = 1/{(2\lambda)}.$
\end{proof}

\begin{thm} \label{Liouville action}
The limit field $\Phi_S$ is 
$$\Phi_S(z) = \Phi_\rho(z) + \varphi_*(z) - \int \Phi_\rho\mu_* - 2\iint R_{z,q}(\zeta)\nu_*(\zeta)\mu_*(q), $$
where
\begin{equation} \label{eq: nu*}
\nu_* = \pa\bp \phi - \mu_* \int \pa\bp\phi.
\end{equation}
\end{thm}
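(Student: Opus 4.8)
The plan is to compute the law of the random field $\psi_n = \sum_{k=0}^n a_k e_k$ explicitly by diagonalizing the action in the eigenbasis, and then to pass to the limit term by term. First I would eliminate the constrained coefficient: since $e_0 = 1$ and $\int \mu_* = 1$, the localization $\int \psi_n \mu_* = 0$ reads $a_0 = -\sum_{k=1}^n a_k c_k$ with $c_k := \int e_k \mu_*$. Substituting this into $S(\varphi_n) = \DD(\psi_n) + \frac1\pi\int \psi_n \pa\bp\phi$, and using $\DD(\psi_n) = \frac1{4\pi}\sum_k \lambda_k a_k^2$ together with $\lambda_0 = 0$, the quadratic part loses its dependence on $a_0$ while the linear part collapses to $\sum_{k=1}^n (b_k - b_0 c_k)a_k$, where $b_k := \frac1\pi\int e_k\,\pa\bp\phi$ and $b_0 = \frac1\pi\int \pa\bp\phi$. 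The key algebraic observation is that $b_k - b_0 c_k = \frac1\pi\int e_k\,\nu_*$, which is exactly the definition \eqref{eq: nu*} of $\nu_*$ tested against $e_k$. Thus the action diagonalizes,
\[
S(\varphi_n) = \sum_{k=1}^n\Big(\frac{\lambda_k}{4\pi}a_k^2 + \big(\tfrac1\pi\textstyle\int e_k\,\nu_*\big)a_k\Big),
\]
so that under $\frac1Z\ee^{-S(\varphi_n)}[\dd\varphi_n]$ the free coefficients $a_1,\dots,a_n$ become independent.

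Next I would apply the previous Lemma to each factor, with $\lambda = \lambda_k/(4\pi)$ and $c = \frac1\pi\int e_k\,\nu_*$, to conclude that $a_k$ has the moments of a Gaussian with variance $2\pi/\lambda_k$ and mean $\E\,a_k = -\frac2{\lambda_k}\int e_k\,\nu_*$. The variance here matches precisely the coefficient variances in Corollary~\ref{cor: Gaussian fields}, which is what will make the fluctuation identification work. Writing $a_k = \E\,a_k + \sqrt{2\pi/\lambda_k}\,\chi_k$ with $\chi_k$ independent standard normals, and reinserting the constraint so that $\psi_n = a_0 + \sum_{k=1}^n a_k e_k = \sum_{k=1}^n a_k(e_k - c_k)$, I would split $\psi_n$ into its mean and its fluctuation.

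For the fluctuation $\sqrt{2\pi}\sum_{k=1}^n \lambda_k^{-1/2}\chi_k(e_k - c_k)$, I would pass to the limit and recognize, via Corollary~\ref{cor: Gaussian fields} together with $\int\Phi_\rho\mu_* = \sqrt{2\pi}\sum_k \lambda_k^{-1/2}\chi_k c_k$, that it converges to $\Phi_\rho - \int\Phi_\rho\mu_*$. For the mean $-2\sum_{k=1}^n \lambda_k^{-1}(\int e_k\,\nu_*)(e_k - c_k)$, I would use the spectral expansion $R(\zeta,w) = \sum_{k\ge1}\lambda_k^{-1}e_k(\zeta)e_k(w)$ to rewrite $\sum_k \lambda_k^{-1}(\int e_k\,\nu_*)e_k(z) = \int R(\zeta,z)\nu_*(\zeta)$ and $\sum_k \lambda_k^{-1}(\int e_k\,\nu_*)c_k = \iint R(\zeta,q)\nu_*(\zeta)\mu_*(q)$; since $\int\mu_* = 1$, combining these and inserting $R_{z,q}(\zeta) = R(\zeta,z) - R(\zeta,q)$ identifies the limiting mean with $-2\iint R_{z,q}(\zeta)\nu_*(\zeta)\mu_*(q)$. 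Adding back the fixed reference form $\varphi_*$ then yields the claimed expression for $\Phi_S$. The step requiring the most care is not the diagonalization but the bookkeeping of the localization constraint: it is responsible for the centered eigenfunctions $e_k - c_k$, and hence for both the subtracted term $\int\Phi_\rho\mu_*$ in the fluctuation and the bi-integral structure $R_{z,q}$ in the drift, so one must track it consistently through both parts rather than treating $a_0$ as an independent Gaussian mode; convergence of the series is to be understood within correlations, exactly as in Subsection~\ref{ss: Dirichlet action}.
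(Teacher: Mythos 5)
Your proposal is correct and follows essentially the same route as the paper's own proof: eliminate $a_0$ via the localization constraint, observe that the linear coefficient becomes $\tfrac1\pi\int e_k\,\nu_*$, apply the Gaussian lemma to each decoupled mode, and then identify the fluctuation with $\Phi_\rho-\int\Phi_\rho\mu_*$ and the mean with the resolvent double integral via the spectral expansion of $R$. The only differences are notational (your $b_k, c_k$ versus the paper's $c_k, d_k$).
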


\begin{proof}
For $\psi_n = \sum_{k=0}^n a_k e_k \in H_n,$ we have 
$$S(\varphi_n) = \frac1{4\pi}\sum_{k=1}^n \lambda_ka_k^2 + \frac1{\pi}\sum_{k=0}^n c_ka_k, $$
where $\varphi_n = \varphi_*+\psi_n$ and 
$$c_k = \int e_k \,\pa\bp \phi.$$
By localization \eqref{eq: localization} and the assumption \eqref{eq: mass one}, we have 
$$ a_0 = -\sum_{k=1}^n a_k \int e_k \mu_*,$$
so
\begin{equation} \label{eq: action}
S(\varphi_n) = \frac1{4\pi}\sum_{k=1}^n \lambda_ka_k^2 + \frac1{\pi}\sum_{k=1}^n d_ka_k,
\end{equation}
where
\begin{equation}\label{eq: dk}
d_k = c_k-c_0 \int e_k \mu_* = \int e_k\nu_*,\qquad (k\ge1).
\end{equation}
Since the variables $\{a_k\}_{k=1}^n$ in the probability measure 
\begin{equation} \label{eq: mu Z}
\dd\mu_n = \frac{\ee^{-S(\varphi_n)}}{Z_n} \,\dd a_1\cdots \dd a_n, \quad Z_n = \int_{\R^n } \ee^{-S(\varphi_n)} \,\dd a_1\cdots \dd a_n
\end{equation}
determined by the action~\eqref{eq: action} are separated, we can consider the coefficients $\{a_k\}_{k=1}^n$ of $\psi_n$ as independent random variables such that
$$a_k \overset{d}{=} \frac{\sqrt{2\pi}}{\sqrt{\lambda_k}}\chi_k - \frac{2d_k}{\lambda_k},\quad \chi_k\sim N(0,1),$$
see the previous lemma. 
It follows that
\begin{align*}
\Phi_S &= \varphi_*+ a_0 + \sum_{k=1}^\infty a_k e_k = \varphi_*+ \sum_{k=1}^\infty a_k\Big(e_k-\int e_k\mu_*\Big)\\
&=\varphi_*+ \sqrt{2\pi}\sum_{k=1}^\infty \frac{\chi_k}{\sqrt{\lambda_k}} \Big(e_k-\int e_k\mu_*\Big) - 2\sum_{k=1}^\infty \frac{d_k}{\lambda_k}\Big(e_k-\int e_k\mu_*\Big): = \varphi_*+ \mathrm{I}+\mathrm{II}.
\end{align*}
By Corollary~\ref{cor: Gaussian fields}, we have 
$$\mathrm{I} = \Phi_\rho - \int \Phi_\rho\mu_*.$$
To compute the second term, we first note that
$$\sum\frac{d_k}{\lambda_k} e_k(z) = \sum \int \frac{e_k(z)e_k(\zeta)\nu_*(\zeta)}{\lambda_k} = \int R(z,\zeta)\nu_*(\zeta),$$
and therefore
\begin{align*}
\mathrm{II}&= -2\int R(z,\zeta)\nu_*(\zeta) +2\iint R(q,\zeta)\nu_*(\zeta) \mu_*(q) \\
&= -2\iint \big(R(z,\zeta)-R(q,\zeta)\big)\nu_*(\zeta) \mu_*(q),
\end{align*}
which completes the proof.
\end{proof}

\begin{rmk*}
\renewcommand{\theenumi}{\alph{enumi}}
{\setlength{\leftmargini}{1.8em}
\begin{enumerate}
\item We can rewrite I and II as follows:
$$ \textrm{I} = \int (\Phi_\rho(z)-\Phi_\rho(\zeta))\mu_*(\zeta) = \int\Phi(z,\zeta)\mu_*(\zeta) =:\Phi_{\mu_*}(z)$$
and
$$\textrm{II} = -\frac2\pi\iint G_{z,q}(\zeta)\nu_*(\zeta)\mu_*(q)$$
because $\int \nu_* = 0$. 
(Recall that $G_{z,q}$ is defined up to an additive constant which depends on $z$ and $q.$)
Thus
$$\Phi_S(z) = \varphi_*(z) + \Phi_{\mu_*}(z) -\frac2\pi \iint G_{z,q}(\zeta)\nu_*(\zeta)\mu_*(q) $$
does not depend on the choice of $\rho.$
\item The second term II or $\psi:=\E\,\Phi_S$ requires some smoothness of $\mu_*,$ 
e.g. if $\mu_* = \delta_{q*},$ then 
$$\textrm{II} = -\frac2\pi\int G_{z,q_*}(\zeta)\nu_*(\zeta)$$ 
and typically $\nu_*$ has an atom at $q_*$ hence the divergence. 
This divergence disappears when we consider the 2-point fields:
$$\Phi_S(z,z_0) = \Phi(z,z_0) +\varphi_*(z)- \varphi_*(z_0)+ \psi(z) - \psi(z_0),$$ 
where 
$$\psi(z) - \psi(z_0) = -\frac2\pi \int G_{z,z_0}\nu_*.$$ 
\end{enumerate}
}
\end{rmk*}

We can now choose $\mu_* = \delta_{q_*}$ and obtain the following corollary.
\begin{cor}
If $\mu_* = \delta_{q_*},$ then
$$\Phi_S(z,z_0) = \Phi(z,z_0)+ \varphi_*(z)- \varphi_*(z_0) -\frac2\pi \int (G_{\zeta,q_*}(z)- G_{\zeta,q_*}(z_0))\pa\bp\phi(\zeta).$$
(The case of general normalization measure can be treated as an integral over $q_*.$)
\end{cor}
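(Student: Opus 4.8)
The plan is to specialize Theorem~\ref{Liouville action} to $\mu_* = \delta_{q_*}$ and then to convert the resulting integral against $\nu_*$ into one against $\pa\bp\phi$ by means of the bilinear relation for bipolar Green's function. By the remark following Theorem~\ref{Liouville action}, passing to the two-point field already removes the divergence of the one-point field, so that
$$\Phi_S(z,z_0) = \Phi(z,z_0) + \varphi_*(z) - \varphi_*(z_0) + \psi(z) - \psi(z_0), \qquad \psi(z) - \psi(z_0) = -\frac2\pi \int G_{z,z_0}(\zeta)\,\nu_*(\zeta),$$
where $\nu_* = \pa\bp\phi - \delta_{q_*}\int\pa\bp\phi$ is given by \eqref{eq: nu*}. (The leading term is $\Phi(z,z_0)$ because term $\mathrm{I}$ contributes $\int\Phi(z,z_0)\,\mu_* = \Phi(z,z_0)$ after using $\int\mu_* = 1$.) Thus the whole task reduces to identifying $\int G_{z,z_0}(\zeta)\,\nu_*(\zeta)$ with $\int (G_{\zeta,q_*}(z) - G_{\zeta,q_*}(z_0))\,\pa\bp\phi(\zeta)$.

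First I would invoke the bilinear relation~\eqref{eq: bilinear relation} with $(p,q,\ti p,\ti q) = (z,z_0,\zeta,q_*)$, which gives the pointwise identity
$$G_{z,z_0}(\zeta) - G_{z,z_0}(q_*) = G_{\zeta,q_*}(z) - G_{\zeta,q_*}(z_0).$$
Integrating against $\nu_*$ and using $\int\nu_* = 0$ (which kills the $\zeta$-independent constant $G_{z,z_0}(q_*)$), I obtain
$$\int G_{z,z_0}(\zeta)\,\nu_*(\zeta) = \int (G_{\zeta,q_*}(z) - G_{\zeta,q_*}(z_0))\,\nu_*(\zeta).$$
Next I would substitute $\nu_* = \pa\bp\phi - (\int\pa\bp\phi)\,\delta_{q_*}$. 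The atomic part multiplies $G_{q_*,q_*}(z) - G_{q_*,q_*}(z_0)$, which vanishes since $G_{q_*,q_*} \equiv 0$ in the formal representation $G_{p,q} = G_p - G_q$; hence only the $\pa\bp\phi$ term survives, and substituting the outcome into $\psi(z) - \psi(z_0)$ yields the claimed formula. The general normalization measure is then recovered by integrating the displayed identity in the parameter $q_*$ against $\mu_*$.

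The only delicate point, and the analogue of the main obstacle, is the legitimacy of discarding the atom of $\nu_*$ at $q_*$. Taken separately, $G_{\zeta,q_*}(z)$ carries a logarithmic singularity at $\zeta = q_*$, so pairing it with $\delta_{q_*}$ appears divergent --- this is precisely the one-point divergence noted in the remark. The bilinear identity resolves this: the two-point combination $G_{\zeta,q_*}(z) - G_{\zeta,q_*}(z_0)$ equals $G_{z,z_0}(\zeta) - G_{z,z_0}(q_*)$, which is smooth at $\zeta = q_*$ and in fact vanishes there whenever $q_* \notin \{z,z_0\}$. Consequently every integral in the argument is finite, the interchange is justified, and the cancellation of the atomic contribution is exact rather than merely formal.
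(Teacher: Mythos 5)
Your proof is correct and takes essentially the same route as the paper's: both reduce to the two-point expression $\psi(z)-\psi(z_0)=-\frac2\pi\int G_{z,z_0}\,\nu_*$, then combine the neutrality $\int\nu_*=0$ with the bilinear relation~\eqref{eq: bilinear relation} to convert the integral against $\nu_*$ into one against $\pa\bp\phi$. The only cosmetic difference is the order of the two steps (the paper substitutes $\nu_*$ first and then applies the bilinear relation pointwise), and your observation that the atom at $q_*$ pairs with $G_{z,z_0}(q_*)-G_{z,z_0}(q_*)=0$ is exactly the cancellation the paper exploits.
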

\begin{proof}
By \eqref{eq: nu*}, we have 
$$\psi(z) - \psi(z_0) = -\frac2\pi \int G_{z,z_0}\nu_*  =  -\frac2\pi\int G_{z,z_0}(\zeta)\Big(\pa\bp \phi (\zeta)- \delta_{q_*}(\zeta) \int \pa\bp\phi \Big).$$
It follows from the bilinear relation~\eqref{eq: bilinear relation} for bipolar Green's function that 
\begin{align*}
\psi(z) - \psi(z_0)  
&= -\frac2\pi \int (G_{z,z_0}(\zeta)- G_{z,z_0}(q_*))\pa\bp\phi(\zeta)\\
&= -\frac2\pi \int (G_{\zeta,q_*}(z)- G_{\zeta,q_*}(z_0))\pa\bp\phi(\zeta),
\end{align*}
which completes the proof.
\end{proof}

Let us consider the case that $\phi$ is the simple $\PPS(ib,ib)$ form $\varphi_{\bfs\beta}$ associated with a background charge $\bfs\beta,$ i.e. $\pa\bp\phi = \pi/i\bfs\beta.$

\begin{cor}
Given a background charge $\bfs\beta,$ we identify the modified field $\Phi_{\bfs\beta}(z,z_0)$ with the the field $\Phi_S(z,z_0)$ given by the action $S = S_{\bfs\beta}.$ 
The bi-variant field $\Phi_S(z,z_0)$ does not depend on $\rho,$ the choice of a reference form $\varphi_*,$ and a marked point $q_*.$ 
\end{cor}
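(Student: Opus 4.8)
The plan is to read off the explicit formula for $\Phi_S(z,z_0)$ furnished by the previous corollary and to recognize its right-hand side as the defining expression $\Phi(z,z_0)+\varphi_{\bfs\beta}(z)-\varphi_{\bfs\beta}(z_0)$ for $\Phi_{\bfs\beta}(z,z_0)$. Once this identification is in place, the three asserted independences cost nothing more: by Corollary~\ref{cor: PPS} the form $\varphi_{\bfs\beta}$ is pinned down by $\bfs\beta$ alone up to an additive constant, and that constant cancels in the difference $\varphi_{\bfs\beta}(z)-\varphi_{\bfs\beta}(z_0)$, so $\Phi_{\bfs\beta}(z,z_0)$ mentions neither $\rho$, nor $\varphi_*$, nor $q_*$.

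Concretely, I would specialize the previous corollary to $\phi=\varphi_{\bfs\beta}$. Since $\bfs\beta=(i/\pi)\pa\bp\varphi_{\bfs\beta}$ we have $\pa\bp\phi=(\pi/i)\bfs\beta=-i\pi\bfs\beta$, and integrating the atomic measure $\bfs\beta=\sum_k\beta_k\delta_{q_k}$ against the kernel collapses the correction term, leaving
$$\Phi_S(z,z_0)=\Phi(z,z_0)+\varphi_*(z)-\varphi_*(z_0)+2i\sum_k\beta_k\big(G_{q_k,q_*}(z)-G_{q_k,q_*}(z_0)\big).$$
Then I would invoke Proposition~\ref{prop: PPS} when $g\ne1$ (with the basic reference form $\varphi_*=ib\chi\varphi_{q_*}$) and Proposition~\ref{prop: PPS1} when $g=1$ (taking $q_\infty=q_*$): each states precisely that the sum $\varphi_*(z)-\varphi_*(z_0)+2i\sum_k\beta_k(G_{q_k,q_*}(z)-G_{q_k,q_*}(z_0))$ equals $\varphi_{\bfs\beta}(z)-\varphi_{\bfs\beta}(z_0)$, the neutrality condition $(\NC_b)$ on $\bfs\beta$ being exactly the hypothesis they require. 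Substituting yields $\Phi_S(z,z_0)=\Phi_{\bfs\beta}(z,z_0)$.

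For the independences, the metric $\rho$ has already disappeared from the formula for $\Phi_S$ in the remark following Theorem~\ref{Liouville action}, so $\rho$-independence is immediate; the remaining two follow from the identification together with the intrinsic nature of $\Phi_{\bfs\beta}$ just noted. The computation is routine, and I expect the only real point of care to be conceptual rather than technical: the displayed formula for $\Phi_S$ visibly contains $\varphi_*$, and what actually makes the reference-form and $q_*$ independences hold is that this $\varphi_*$-dependence is absorbed exactly into the coordinate-intrinsic quantity $\varphi_{\bfs\beta}(z)-\varphi_{\bfs\beta}(z_0)$ via Propositions~\ref{prop: PPS}/\ref{prop: PPS1}. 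For this reason I would present the independences as consequences of the identification rather than check them term by term, and I would keep the two genus cases separate, since for $g=1$ there is no basic form and one must instead set $\varphi_*=2ib\log|\omega|$ and rely on the $q_\infty$-independence built into Proposition~\ref{prop: PPS1}.
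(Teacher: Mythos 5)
Your proposal is correct and follows essentially the same route the paper intends: the corollary is an immediate specialization of the preceding corollary to $\phi=\varphi_{\bfs\beta}$ (so that $\pa\bp\phi=-i\pi\bfs\beta$ collapses the integral to the sum $2i\sum_k\beta_k(G_{q_k,q_*}(z)-G_{q_k,q_*}(z_0))$), followed by the identification of the result with $\varphi_{\bfs\beta}(z)-\varphi_{\bfs\beta}(z_0)$ via Proposition~\ref{prop: PPS} (resp.\ Proposition~\ref{prop: PPS1} for $g=1$), with $\rho$-independence already recorded in the remark after Theorem~\ref{Liouville action}. Your sign bookkeeping and the separation of the two genus cases are both consistent with the paper.
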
 

\subsection{Partition functions of Liouville actions on a torus} \index{partition function!of Liouville action}
In this subsection we consider the genus one case only. 
For the action $S_{\bfs\beta}$ associated with a background charge $\bfs\beta,$ 
it follows from \eqref{eq: mu Z} that 
$$Z_n^{\bfs\beta} = \int_{\R^n} e^{-\frac1{4\pi} \sum_{k=1}^n \lambda_ka_k^2 - \frac1\pi \sum_{k=1}^n d_k a_k}\,\dd a_1\cdots \dd a_n,$$
where $d_k$ is given by \eqref{eq: dk} and $\nu_*$ in \eqref{eq: dk}  is given by \eqref{eq: nu*}.
It follows from the neutrality condition $\int \bfs\beta = 0$ that
$$\nu_* =  -i \pi \bfs\beta.$$
Thus we have 
$$d_n = -i\pi \int e_n \bfs\beta = -i\pi \sum_k \beta_k e_n(q_k).$$
Using the equation
$$\int_\R e^{-\lambda x^2 - cx}\,dx = \sqrt{\frac\pi\lambda} e^{c^2/4\lambda},$$
we obtain
$$\log\frac{Z_n^{\bfs\beta}}{Z_n}= \frac1\pi \sum_{k=1}^n \frac{d_k^2}{\lambda_k}.$$ 
However, the right-hand side diverges as $n\to\infty.$ 
Indeed, by the neutrality condition, we have 
\begin{align*}
\frac1\pi\sum_{n=1}^\infty \frac{d_n^2}{\lambda_n} &= -\pi\sum_{k,l}\beta_k\beta_l \sum_{n=1}^\infty \frac{e_n(q_k)e_n(q_l)}{\lambda_n} \\
&= -\pi\sum_{k,l}\beta_k\beta_l R_\rho(q_k,q_l) = - \sum_{k,l}\beta_k\beta_l G_{q_k}(q_l).
\end{align*}
Subtracting a divergent term, we define $Z_{\bfs\beta}$ by 
$$\log\frac{Z_{\bfs\beta}}Z  = -\sum_k \beta_k^2 c(q_k) - 2\sum_{k<l} \beta_k\beta_l G_{q_k}(q_l).$$
In terms of $c[\bfs\beta]= c_{\Phi[\bfs\beta]}$ (see \eqref{eq: cPhiSigma}), we have 
$$Z_{\bfs\beta}=Z  \,\ee^{-c[\bfs\beta]}.$$
Later, we introduce a puncture operator $\PP_{\bfs\beta}$ and identify it with $\ee^{-c[\bfs\beta]}$ in the genus one case.  
In this case, we have
\begin{equation} \label{eq: Zbeta=ZPbeta}
Z_{\bfs\beta}=Z  \,\PP_ {\bfs\beta}.
\end{equation}

\section{OPE exponentials of modified Gaussian free fields} \label{s: O}
In this section we extend the concept of multi-vertex fields to the case $b \ne 0.$
Given a divisor $\bfs\tau$ with the neutrality condition $(\NC_0)$ we define the modified multi-vertex fields $\OO_{\bfs\beta}[\bfs\tau]$ as the OPE exponentials of $i\Phi_{\bfs\beta}[\bfs\tau].$
We represent their correlations $\E\,\OO_{\bfs\beta}[\bfs\tau]$ in terms of the Coulomb gas correlation functions $\CC_{(b)}$ with the modification parameter $b$ and the puncture operators $\PP_{\bfs\beta}.$

\subsection{Definition of OPE exponentials}
Fix a meromorphic 1-differential $\omega$ and let $\bfs\beta_0 = -b\cdot(\omega).$ 
Given a background charge $\bfs\beta,$ if a divisor $\bfs\tau$ with the neutrality condition $(\NC_0)$ satisfies $\supp\,\bfs\tau \cap (\supp\,\bfs\beta \cup \supp\,\bfs\beta_0) = \emptyset,$ then we have the representation~\eqref{eq: varphi beta tau} for $\varphi_{\bfs\beta}[\bfs\tau].$ 
It can be extended to the general case:
\begin{equation}\label{eq: varphi beta*tau} 
\varphi_{\bfs\beta}[\bfs\tau] = 2ib \sum_{j} \tau_j\log|\omega(z_j)| + i\,\E\,\Phi[\bfs\beta-\bfs\beta_0]*\Phi[\bfs\tau].
\end{equation}
See \eqref{eq: Phi*} for the definition of $\Phi[\bfs\sigma]*\Phi[\bfs\tau].$

We use the puncture operators $\PP_{\bfs \beta}$ to modify the multi-vertex fields so that the OPE calculus of the modified multi-vertex fields has a very simple and natural form.
We call the modified multi-vertex fields the OPE exponentials; the reason for these terminologies becomes clear in the definition of $\OO_{\bfs\beta}[\bfs\tau]$ below and that of multi-vertex fields $\VV[\bfs\sigma]$ in Subsection~\ref{ss: OC}.

We now define the \emph{OPE exponentials} \index{OPE exponentials} (the \emph{modified multi-vertex fields}) \index{multi-vertex field!modified} $\OO_{\bfs\beta}[\bfs\tau]$ by 
$$\OO_{\bfs\beta}[\bfs\tau] = \ee^{*i\Phi_{\bfs\beta}[\bfs\tau]}, \quad \Phi_{\bfs\beta}[\bfs\tau] = \Phi[\bfs\tau] + \varphi_{\bfs\beta}[\bfs\tau].$$
Then it follows from Lemma~\ref{lem: ef eX} and Proposition~\ref{prop: OPE exp} that 
$$\OO_{\bfs\beta}[\bfs\tau] = \ee^{i \varphi_{\bfs\beta}[\bfs\tau]}\CC[\bfs\tau]\VV^{\odot}[\bfs\tau].$$
Recall that $\CC[\bfs\tau] = \ee^{-c[\bfs\tau]}$ and $\VV^{\odot}[\bfs\tau] = \ee^{\odot i \Phi[\bfs\tau]}.$

\subsection{Coulomb gas correlation functions} \label{ss: C}

Let us fix the (background charge) parameter $b\in\R.$ 
This parameter $b$ is related to the central charge $c$ in the following way:
$$c = 1 - 12b^2.$$

Suppose a divisor $\bfs\sigma$ on $M$ satisfies the neutrality condition $(\NC_b).$
Given a reference point $q_*,$ let $\bfs\beta_* = b\chi\cdot q_*.$  
We now define the \emph{Coulomb gas correlation functions} \index{Coulomb gas correlation function} $\CC_{(b)}[\bfs\sigma]$ by
\begin{equation} \label{def: Cb}
\CC_{(b)}[\bfs\sigma] := \E\,\OO_{\bfs\beta_*}[\bfs\sigma-\bfs\beta_*].
\end{equation}
They are well-defined differentials of conformal dimension $(\lambda_j,\lambda_j)$ at $z_j,$ where $\lambda_j = \sigma_j^2/2-b\sigma_j.$ 
Formally, one can extend \eqref{eq: cPhiSigma}~--~\eqref{eq: c addition} and \eqref{eq: varphi beta*tau} to arbitrary divisors (which do not necessarily satisfy the neutrality condition $(\NC_0)$). 
At least formally, we have 
\begin{align*}
\log\CC_{(b)}[\bfs\sigma] 
&= - c[\bfs\sigma-\bfs\beta_*] + i\varphi_{\bfs\beta_*}[\bfs\sigma-\bfs\beta_*]\\
&= \log C_* - c[\bfs\sigma] + \E\,\Phi[\bfs\beta_0]\Phi[\bfs\sigma] - 2b\sum\sigma_j\log|\omega(z_j)|,
\end{align*}
where $\log C_* = c[\bfs\beta_*] -\E\,\Phi[\bfs\beta_0]\Phi[\bfs\beta_*] + 2b^2\chi \log|\omega(q_*)|$ is a constant and $\bfs\beta_0 = -b\cdot(\omega).$ 
We often ignore the constant $C_*.$
Sometimes it is convenient to write \eqref{def: Cb} in terms of formal expressions as
\begin{equation} \label{eq: Cb}
\CC_{(b)}[\bfs\sigma] = C_*\,\CC[\bfs\sigma] \prod_j |\omega(z_j)|^{-2b\sigma_j}\ee^{\E\,\Phi[\bfs\beta_0]\Phi[\bfs\sigma]}.
\end{equation}

\subsubsec{The genus zero case} 
If we choose $\omega = \omega_q$ as a meromorphic 1-differential with a double pole at $q$ and neither zeros nor poles elsewhere and if $(M,q) = (\wh\C,\infty),$ then $\log|\omega(z)| = 0$ (up to an additive constant) in the identity chart of $\C.$ 
Thus the evaluation of $\CC_{(b)}[\bfs\sigma]$ coincides with that of $\CC_{(0)}[\bfs\sigma]$ (up to a multiplicative constant) in the identity chart of $\C.$ 
Fix $q_*$ in $M.$ 
In terms of a conformal map $w: (M,q_*)\to(\wh\C,\infty),$ (up to a multiplicative constant)
$$\CC_{(b)}[\bfs\sigma] = \prod_j |w'_j|^{2\lambda_j}\prod_{j<k}|w_j-w_k|^{2\sigma_j\sigma_k},$$
where $w_j = w(z_j)$ and $w'_j = w'(z_j).$ 

\subsubsec{The genus one case} 
If we choose $\omega$ as a holomorphic 1-differential, then $\bfs\beta_0 (= -b(\omega)) = \bfs0$ and $\omega$ is constant in the identity chart of torus $\T_\Lambda.$
Thus the evaluation of $\CC_{(b)}[\bfs\sigma]$ coincides with that of $\CC_{(0)}[\bfs\sigma]$ (up to a multiplicative constant) in the identity chart of torus $\T_\Lambda.$
In terms of a conformal map $w:M\to\T_\Lambda,$ (up to a multiplicative constant)
$$\CC_{(b)}[\bfs\sigma] = |\theta'(0)|^{\sum_j \sigma_j^2} \prod_j |w'_j|^{2\lambda_j}\prod_{j<k} |\theta(w_j-w_k)|^{2\sigma_j\sigma_k} 
\exp \Big(\frac{2\pi}{\Im\,\tau}
\big(\Im\sum_j \sigma_j w_j \big)^2\Big).$$
 
\subsubsec{The higher genus case}
It is convenient to choose $\omega:=\sum_j \pa_j\Theta(e)\omega_j.$
Let $p_l$'s be the zeros of $\omega.$ 
Then it follows from \eqref{eq: C4higher} and \eqref{eq: Cb} that (up to a multiplicative constant)
\begin{align*}
\CC_{(b)}[\bfs\sigma] = C_* & \prod_j |\omega(z_j)|^{2\lambda_j}
\prod_{j<k} |\theta(z_j-z_k)|^{2\sigma_j\sigma_k}
\prod_{j,l} |\theta(z_j-p_l)|^{2b\sigma_j} \\
& \exp \big(2\pi(\Im\,\PM)^{-1}\Im\,\AA[\bfs\sigma]\cdot\Im\,\AA[\bfs\sigma-2\bfs \beta_0]\big),
\end{align*}
where $\bfs\beta_0 = -b\cdot(\omega).$

\subsection{OPE exponentials and Coulomb gas correlation functions} \label{ss: OC}
We now prove Theorem~\ref{main: EO}.
\begin{proof}[Proof of Theorem~\ref{main: EO}] 
By definition, 
$$\frac{\CC_{(b)}[\bfs\tau+\bfs\beta]}{\CC_{(b)}[\bfs\beta]} = \frac{\E\, \OO_{\bfs\beta_*}[\bfs\tau+\bfs\beta - \bfs\beta_*]}{\E\, \OO_{\bfs\beta_*}[\bfs\beta - \bfs\beta_*]}.$$
Equation~\eqref{eq: EO} now follows from the cocycle identity: if both $\bfs\tau$ and $\bfs\tau'$ satisfy the neutrality condition $(\NC_0),$ then
$$\E\, \OO_{\bfs\beta}[\bfs\tau + \bfs\tau'] = \E\, \OO_{\bfs\beta}[\bfs\tau] \, \E\, \OO_{\bfs\beta+\bfs\tau}[\bfs\tau'].$$ 
Equivalently, 
$$i\varphi_{\bfs\beta}[\bfs\tau + \bfs\tau'] -c[\bfs\tau + \bfs\tau'] = i\varphi_{\bfs\beta}[\bfs\tau] -c[\bfs\tau] + i\varphi_{\bfs\beta+\bfs\tau}[\bfs\tau'] -c[\bfs\tau'].$$
To see this, we need the following identities,
\begin{equation}\label{eq: c addition*} 
c[\bfs\tau + \bfs\tau'] -c[\bfs\tau]-c[\bfs\tau'] = \E\,\Phi[\bfs\tau]*\Phi[\bfs\tau']
\end{equation}
and 
$$\varphi_{\bfs\beta}[\bfs\tau + \bfs\tau'] - \varphi_{\bfs\beta}[\bfs\tau] - \varphi_{\bfs\beta+\bfs\tau}[\bfs\tau'] 
=  -i \E\,\Phi[\bfs\tau]*\Phi[\bfs\tau'].$$
Applying a normalization procedure via operator product expansion to \eqref{eq: c addition}, we have the first identity.
The second one follows from \eqref{eq: varphi beta*tau}. 
Indeed, we have 
\begin{align*}
i\varphi_{\bfs\beta}&[\bfs\tau + \bfs\tau'] - i\varphi_{\bfs\beta}[\bfs\tau] - i\varphi_{\bfs\beta+\bfs\tau}[\bfs\tau'] 
\\&= -\E\, \Phi[\bfs\beta-\bfs\beta_0]*\Phi[\bfs\tau+\bfs\tau'] +\,\E\, \Phi[\bfs\beta-\bfs\beta_0]*\Phi[\bfs\tau] +\,\E\, \Phi[\bfs\beta-\bfs\beta_0+\bfs\tau]*\Phi[\bfs\tau']\\
&= \E\,\Phi[\bfs\tau]*\Phi[\bfs\tau'].
\end{align*}
Suppose that there exist two differentials $\CC_{(b)}[\bfs\sigma]$ and $\wt\CC_{(b)}[\bfs\sigma]$ satisfying \eqref{eq: EO}.
Fix a reference background charge $\bfs\beta_*.$
Given a divisor $\bfs\sigma$ satisfying the neutrality condition $(\NC_b),$ let $\bfs\tau = \bfs\sigma-\bfs\beta_*.$
Then \eqref{eq: EO} with $\bfs\beta = \bfs\beta_*$ implies 
$$\wt\CC_{(b)}[\bfs\sigma] = C_* \,\CC_{(b)}[\bfs\sigma],$$
where $C_* = \wt\CC_{(b)}[\bfs\beta_*]/\CC_{(b)}[\bfs\beta_*]$ is a constant. 
\end{proof}

\begin{egs*}
(a) \textit{The genus zero case.} 
If $\supp\,\bfs\tau\cap\supp\,\bfs\beta = \emptyset,$ then in terms of a conformal map $w: (M,q_*)\to(\wh\C,\infty),$ 
$$\E\,\OO_{\bfs\beta}[\bfs\tau] = \prod_j |w'(z_j)|^{2\lambda_j}\prod_{j,k} |w(z_j)-w(q_k)|^{2\tau_j\beta_k} \prod_{j<k}|w(z_j)-w(z_k)|^{2\tau_j\tau_k}.$$ 
If $\supp\,\bfs\tau\cap\supp\,\bfs\beta \ne \emptyset,$ then the formula for $\E\,\OO_{\bfs\beta}[\bfs\tau]$ can be obtained by the previous theorem or by applying the following rooting rule to the above formula at each $z_{j_0}=q_{k_0}\in\supp\,\bfs\tau\cap\supp\,\bfs\beta:$
\begin{itemize}
\item the point $z_{j_0}$ in the terms $w'(z_{j_0}),$ $w(z_{j_0})-w(z_k)$ and $w(z_{j_0})-w(q_k)\, (k\ne k_0)$ is replaced by the puncture $q_{k_0};$
\item the term $w(z_{j_0})-w(q_{k_0})$ is replaced by $w'(q_{k_0}),$
\end{itemize}

\medskip\noindent 
(b) \textit{The genus one case.} 
If $\supp\,\bfs\tau\cap\supp\,\bfs\beta = \emptyset,$ then in terms of a conformal map $w:M\to\T_\Lambda,$
\begin{align*}
\E\,\OO_{\bfs\beta}[\bfs\tau] =&|\theta'(0)|^{\sum_j \tau_j^2} \prod_j |w'(z_j)|^{2\lambda_j}\\
&\prod_{j,k} |\theta(w(z_j)-w(q_k))|^{2\tau_j\beta_k} \prod_{j<k} |\theta(w(z_j)-w(z_k))|^{2\tau_j\tau_k} \\
&\exp \frac{2\pi}{\Im\,\tau}\Big(\big(\sum_j \tau_j\, \Im\,w(z_j) \big)^2 + 2\sum_{j,k}\tau_j\beta_k\,\Im\, w(z_j)\,\Im\,w(q_k) \Big).
\end{align*}
In the case $\supp\,\bfs\tau\cap\supp\,\bfs\beta \ne \emptyset,$ the rooting rule in (a) can be applied to (b). 
In addition, the point $z_{j_0}$ in the terms $\Im\, w(z_{j_0})$ is replaced by the puncture $q_{k_0}.$

\medskip\noindent 
(c) \textit{The higher genus case.}
If $\supp\,\bfs\tau\cap\supp\,\bfs\beta = \emptyset,$ then
\begin{align*}
\E\,\OO_{\bfs\beta}[\bfs\tau] =&\prod_j |\omega(z_j)|^{2\lambda_j}
\prod_{j<k} |\theta(z_j-z_k)|^{2\tau_j\tau_k} 
\prod_{j,l} |\theta(z_j-p_l)|^{2b\tau_j} 
\prod_{j,k} |\theta(z_j-q_k)|^{2\tau_j\beta_k} \\
&
\exp \big(2\pi(\Im\,\PM)^{-1}\Im\,\AA[\bfs\tau]\cdot\Im\,\AA[\bfs\tau+2(\bfs\beta-\bfs \beta_0)] \big).
\end{align*} 
If $\supp\,\bfs\tau\cap\supp\,\bfs\beta \ne \emptyset,$ then the following rooting rule can be applied:
\begin{itemize}
\item the point $z_{j_0}$ in the terms $\omega(z_{j_0}),$ $\theta(z_{j_0} - p_l),$ $\AA(\bfs\tau),$ $\theta(z_{j_0}-z_k),$ and $\theta(z_{j_0}-q_k)$ $(k\ne k_0)$ is replaced by the puncture $q_{k_0};$
\item the term $\theta(z_{j_0}-q_{k_0})$ is replaced by $\omega(q_{k_0}).$
\end{itemize}
\end{egs*}

We define the \emph{multi-vertex fields} \index{multi-vertex field} $\VV[\bfs\sigma]\equiv \VV_{\bfs\beta}[\bfs\sigma]$ by 
$$\VV[\bfs\sigma] = \CC_{(b)}[\bfs\sigma]~\VV^{\odot}[\bfs\sigma-\bfs\beta], \quad \VV^{\odot}[\bfs\tau] = \ee^{\odot i \Phi[\bfs\tau]}.$$
Let us introduce the \emph{puncture operators} \index{puncture operator} $\PP_{\bfs\beta}$ as  
$$\PP_{\bfs\beta}:=\VV[\bfs\beta] = \CC_{(b)}[\bfs\beta].$$
In terms of the multi-vertex fields and the puncture operator, we have 
$$\OO_{\bfs\beta}[\bfs\tau] = \PP_{\bfs\beta}^{-1}\VV[\bfs\tau + \bfs\beta].$$ 
We call $\bfs\tau$ the divisor of Wick's exponents and $\tau_{q_k}+\beta_k$ the \emph{effective charges} at $q_k.$ \index{effective charge}
Thus the sum of exponents is zero but the sum of (effective) charges is $b\chi(M).$

\begin{prop}\label{OPE nature}
If both $\bfs\tau_1$ and $\bfs\tau_2$ satisfy the neutrality condition $(\NC_0),$ then 
$$\OO_{\bfs\beta}[\bfs\tau_1]*\OO_{\bfs\beta}[\bfs\tau_2] = \OO_{\bfs\beta}[\bfs\tau_1+\bfs\tau_2].$$
\end{prop}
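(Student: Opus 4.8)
The plan is to reduce the claimed OPE-multiplicativity to the cocycle identity for the correlations $\E\,\OO_{\bfs\beta}[\bfs\tau]$ that was already established in the proof of Theorem~\ref{main: EO}, combined with Proposition~\ref{prop: OPE exp} relating OPE exponentials to Wick exponentials. Recall that $\OO_{\bfs\beta}[\bfs\tau] = \ee^{*i\Phi_{\bfs\beta}[\bfs\tau]}$ and that, by Proposition~\ref{prop: OPE exp} and Lemma~\ref{lem: ef eX}, we have the factorization $\OO_{\bfs\beta}[\bfs\tau] = \ee^{i\varphi_{\bfs\beta}[\bfs\tau]}\,\CC[\bfs\tau]\,\VV^\odot[\bfs\tau]$, where $\VV^\odot[\bfs\tau] = \ee^{\odot i\Phi[\bfs\tau]}$ is Wick's exponential. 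Since the $\VV^\odot$'s are genuine Fock space fields with a clean Wick multiplication rule, the OPE product on the left-hand side is controlled once I understand how the scalar prefactors $\ee^{i\varphi_{\bfs\beta}[\cdot]}\CC[\cdot]$ behave under addition of divisors.

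First I would compute the OPE product $\VV^\odot[\bfs\tau_1]*\VV^\odot[\bfs\tau_2]$ directly from Wick's formula. The singular part of the expansion of $\Phi[\bfs\tau_1](\bfs\zeta)\,\Phi[\bfs\tau_2](\bfs z)$ as $\bfs\zeta\to\bfs z$ is governed by the contractions $\E\,\Phi[\bfs\tau_1]*\Phi[\bfs\tau_2]$, and extracting the $*_0$-coefficient produces $\VV^\odot[\bfs\tau_1+\bfs\tau_2]$ times the finite contraction factor $\ee^{-\E\,\Phi[\bfs\tau_1]*\Phi[\bfs\tau_2]}$, exactly as in the proof that $\VV[\bfs\sigma]\VV[\bfs\tau]=\VV[\bfs\sigma+\bfs\tau]$ in Subsection~\ref{ss: V}. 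Next I would assemble the scalar prefactors: by \eqref{eq: c addition*} one has $c[\bfs\tau_1+\bfs\tau_2]-c[\bfs\tau_1]-c[\bfs\tau_2]=\E\,\Phi[\bfs\tau_1]*\Phi[\bfs\tau_2]$, and the additivity of $\varphi_{\bfs\beta}$ on divisors (it is linear, $\varphi_{\bfs\beta}[\bfs\tau_1+\bfs\tau_2]=\varphi_{\bfs\beta}[\bfs\tau_1]+\varphi_{\bfs\beta}[\bfs\tau_2]$, since $\bfs\beta$ is fixed) means the phase factors simply add. Multiplying the three ingredients together, the contraction factor $\ee^{-\E\,\Phi[\bfs\tau_1]*\Phi[\bfs\tau_2]}$ from the Wick product cancels precisely against the $\CC$-factor discrepancy coming from \eqref{eq: c addition*}, leaving $\ee^{i\varphi_{\bfs\beta}[\bfs\tau_1+\bfs\tau_2]}\CC[\bfs\tau_1+\bfs\tau_2]\VV^\odot[\bfs\tau_1+\bfs\tau_2]=\OO_{\bfs\beta}[\bfs\tau_1+\bfs\tau_2]$.

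I expect the main obstacle to be bookkeeping when $\supp\,\bfs\tau_1$, $\supp\,\bfs\tau_2$, and $\supp\,\bfs\beta$ overlap, since then the naive products $\Phi(z_j)\Phi(\zeta_k)$ are divergent and one must work throughout with the renormalized pairing $\Phi[\bfs\sigma]*\Phi[\bfs\tau]$ defined in \eqref{eq: Phi*} and with $\varphi_{\bfs\beta}[\bfs\tau]$ in the extended form \eqref{eq: varphi beta*tau}. The point is that both the definition of $\OO_{\bfs\beta}[\bfs\tau]$ as an OPE exponential and the contraction identity \eqref{eq: c addition*} are already formulated with the $*$-renormalization built in, so no extra care is needed beyond consistently using the starred quantities; the neutrality conditions $(\NC_0)$ on $\bfs\tau_1$ and $\bfs\tau_2$ guarantee every field involved is a well-defined Fock space functional. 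A clean way to package the whole argument is to note that all three factors in $\OO_{\bfs\beta}[\bfs\tau]=\ee^{i\varphi_{\bfs\beta}[\bfs\tau]}\CC[\bfs\tau]\VV^\odot[\bfs\tau]$ are, up to the scalar $\ee^{i\varphi_{\bfs\beta}[\bfs\tau]}$, exactly the unmodified vertex field $\VV[\bfs\tau]$; so the statement is equivalent to $\VV[\bfs\tau_1]*\VV[\bfs\tau_2]=\VV[\bfs\tau_1+\bfs\tau_2]$ together with the additivity of the phase, the former being the $b=0$ OPE-multiplicativity that follows from Proposition~\ref{prop: OPE exp} applied to $X=i\Phi[\bfs\tau_1+\bfs\tau_2]$.
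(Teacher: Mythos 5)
Your proposal is correct and follows essentially the same route as the paper's own proof: factor $\OO_{\bfs\beta}[\bfs\tau]$ into $\ee^{i\varphi_{\bfs\beta}[\bfs\tau]}\CC[\bfs\tau]\VV^{\odot}[\bfs\tau]$, compute the Wick-exponential OPE to produce the contraction factor $\ee^{-\E\,\Phi[\bfs\tau_1]*\Phi[\bfs\tau_2]}$, and cancel it against \eqref{eq: c addition*} together with the additivity of $\varphi_{\bfs\beta}[\cdot]$ from \eqref{eq: varphi beta*tau}. The paper merely adds an explicit warm-up computation for two formal $1$-point fields with $b=0$ before reducing the general case to exactly the scalar identity you identify.
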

\begin{proof}
Let us consider the simplest case $b= 0$ for two formal 1-point fields.
We have the operator expansion of two such fields as 
$$\VV^{(\tau_1)}(\zeta)\VV^{(\tau_2)}(z) ={|\zeta-z|^{2\tau_1\tau_2}}\big(C_{00}(z) + O(|\zeta-z|)\big).$$ 
We claim that the first coefficient $C_{00}(z) = \VV^{(\tau_1+\tau_2)}(z).$ 
First note that
$$\VV^{(\tau_1)}(\zeta)\VV^{(\tau_2)}(z)  = C(\zeta)^{-\tau_1^2}C(z)^{-\tau_2^2} \ee^{-\tau_1\tau_2\E\,\Phi(\zeta)\Phi(z)} \ee^{\odot i\tau_1\Phi(\zeta)}\odot \ee^{\odot i\tau_2\Phi(z)},$$
where $C \equiv C_{\Phi} = \ee^{c_\Phi}.$
We expand the above as 
$${|\zeta-z|^{2\tau_1\tau_2}}\big(C(z)^{-\tau_1^2-\tau_2^2} \ee^{-\tau_1\tau_2\E\,\Phi*\Phi(z)} \ee^{\odot i(\tau_1+\tau_2)\Phi(z)} + O(|\zeta-z|)\big).$$
Thus we have
$$C_{00}(z) = C(z)^{-(\tau_1+\tau_2)^2} \ee^{\odot i(\tau_1+\tau_2)\Phi(z)} =\VV^{(\tau_1+\tau_2)}(z).$$
In the general case, using a similar argument, the only thing to check is 
$$\E\,\OO_{\bfs\beta}[\bfs\tau_1]\E\,\OO_{\bfs\beta}[\bfs\tau_2]  \ee^{-\E\,\Phi[\bfs\tau_1]*\Phi[\bfs\tau_2]}= \E\,\OO_{\bfs\beta}[\bfs\tau_1+\bfs\tau_2].$$
Taking the logarithm on both sides of the above, it is equivalent to the identity  
$$i\varphi_{\bfs\beta}[\bfs\tau_1]  -c[\bfs\tau_1] + i\varphi_{\bfs\beta}[\bfs\tau_2]  -c[\bfs\tau_2] -\E\,\Phi[\bfs\tau_1]*\Phi[\bfs\tau_2] = i\varphi_{\bfs\beta}[\bfs\tau_1+\bfs\tau_2]  -c[\bfs\tau_1+\bfs\tau_2].$$
It follows from \eqref {eq: c addition*} and 
$$\varphi_{\bfs\beta}[\bfs\tau_1]  + \varphi_{\bfs\beta}[\bfs\tau_2]  = \varphi_{\bfs\beta}[\bfs\tau_1+\bfs\tau_2].$$
The above identity is obvious by \eqref{eq: varphi beta*tau}. 
\end{proof}

\subsection{Insertion fields}
In this subsection we explain how background charge modifications of the Gaussian free field can be obtained from the insertion procedure.
This procedure is widely used in conformal field theory. 
For example, the insertion of the chiral bi-vertex field with specific charges is used as a boundary condition changing operator in a simply connected domain with two marked boundary points to explain the relation between chordal SLE theory and conformal field theory, see \cite{KM13}.

Fix $q_*$ in $M.$ In the genus one case, we also fix a holomorphic differential $\omega.$
Given a basic reference form $\varphi_*:$ 
$$\varphi_* = 
\begin{cases} 
ib\chi\, \varphi_{q_*}, &\textrm{ if } g\ne 1; \\
ib\log|\omega|^2,   &\textrm{ if } g=1,
\end{cases}$$
we define a central charge modification $\Phi_{(b)}$ of Gaussian free field by 
$$\Phi_{(b)}(z,z_0) = \Phi(z,z_0) + \varphi_*(z)-\varphi_*(z_0).$$
Note that $\Phi_{(b)} = \Phi_{\bfs\beta_*}, \bfs\beta_*= b\chi\cdot q_*$ for $g\ne 1$ and $\Phi_{(b)} = \Phi_{\bfs0}$ for $g=1.$
Given a background charge $\bfs\beta,$ the correspondence $\XX\mapsto\wh\XX$ on Fock space functionals/fields is defined by the formula 
\begin{equation}\label{eq: hat formula}
\wh\Phi(z,z_0)= \Phi(z,z_0) + 2i\sum_k \beta_k (G_{q_k,q_*}(z) - G_{q_k,q_*}(z_0))
\end{equation}
and the rules 
\begin{equation}\label{eq: hat rules}
1 \mapsto 1,\quad \mu\XX + \nu\YY \mapsto \mu\wh\XX + \nu\wh\YY,\quad \pa \XX \mapsto \pa\wh\XX, \quad \bp \XX \mapsto \bp\wh\XX, \quad \XX\odot\YY \mapsto \wh\XX\odot\wh\YY,
\end{equation}
where $\mu$ and $\nu$ are non-random functions. 
Then 
$$\wh\Phi_{(b)}(z,z_0)\equiv\Phi_{\bfs\beta}(z,z_0) = \Phi_{(b)}(z,z_0) + 2i\sum_k \beta_k (G_{q_k,q_*}(z) - G_{q_k,q_*}(z_0)).$$
Let us denote 
\begin{equation}\label{def: wh E}
\wh\E[\XX] := \E\, \VV^{\odot}[\bfs\beta-\bfs\beta_*]\XX, \quad \VV^{\odot}[\bfs\tau]=\ee^{\odot i\Phi[\bfs\tau]}.
\end{equation}
\begin{prop}
Suppose $\XX$ maps to $\wh\XX$ under the correspondence given by the formula \eqref{eq: hat formula} and the rules \eqref{eq: hat rules}. 
Then we have 
\begin{equation} \label{eq: hat hat}
\wh\E \,\XX = \E\, \wh\XX.
\end{equation}
\end{prop}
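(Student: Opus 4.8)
The plan is to recognize \eqref{eq: hat hat} as the Cameron--Martin shift produced by inserting a Wick exponential into a Gaussian correlation, after first identifying the prescribed non-random term in \eqref{eq: hat formula} as a covariance. Write $\bfs\sigma := \bfs\beta - \bfs\beta_*$. Since $\bfs\beta$ and $\bfs\beta_* = b\chi(M)\cdot q_*$ both satisfy $(\NC_b)$, the divisor $\bfs\sigma$ satisfies $(\NC_0)$, so $\Phi[\bfs\sigma]$ is a well-defined Fock space functional and $\VV^\odot[\bfs\sigma]=\ee^{\odot i\Phi[\bfs\sigma]}$ is precisely the insertion field of \eqref{def: wh E}.

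First I would establish the identification
$$\wh\Phi(z,z_0) = \Phi(z,z_0) + i\,\E\,\Phi[\bfs\sigma]\Phi(z,z_0).$$
Using the correlation formula $\E\,\Phi[\bfs\sigma]\Phi(z,z_0) = 2\sum_k\beta_k G_{z,z_0}(q_k) - 2b\chi(M)\,G_{z,z_0}(q_*)$, the neutrality $\sum_k\beta_k = b\chi(M)$ rewrites this as $2\sum_k\beta_k\big(G_{z,z_0}(q_k) - G_{z,z_0}(q_*)\big)$, and the bilinear relation \eqref{eq: bilinear relation} converts each summand into $G_{q_k,q_*}(z) - G_{q_k,q_*}(z_0)$. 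Multiplying by $i$ reproduces the additive term of \eqref{eq: hat formula} exactly. Hence the correspondence $\XX\mapsto\wh\XX$ of \eqref{eq: hat formula}--\eqref{eq: hat rules} is the deterministic shift $\Phi\mapsto\Phi + i\,\E\,\Phi[\bfs\sigma]\Phi$, extended compatibly with $\C$-linearity, with $\pa,\bp$, and with Wick multiplication.

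Next I would verify \eqref{eq: hat hat} on a Wick monomial $\XX = \Phi(\zeta_1,z_1)\odot\cdots\odot\Phi(\zeta_n,z_n)$. Setting $a_j := i\,\E\,\Phi[\bfs\sigma]\Phi(\zeta_j,z_j)$, the right-hand side expands via $\wh\Phi(\zeta_j,z_j) = \Phi(\zeta_j,z_j) + a_j$; since the $a_j$ are non-random, $\E\,\wh\XX = \E\bigodot_j(\Phi(\zeta_j,z_j)+a_j) = \prod_j a_j$, because the expectation of a Wick product of one or more factors vanishes. For the left-hand side $\wh\E\,\XX = \E\,\ee^{\odot i\Phi[\bfs\sigma]}\XX$, I expand $\ee^{\odot i\Phi[\bfs\sigma]} = \sum_m \tfrac1{m!}(i\Phi[\bfs\sigma])^{\odot m}$ and apply the tensor-product rule \eqref{def: tensor product}: within the single Wick monomial $\XX$ the factors cannot self-contract, so each $\Phi(\zeta_j,z_j)$ must pair with one copy of $i\Phi[\bfs\sigma]$, a nonzero contribution forces a perfect matching and hence $m=n$, and the $n!$ matchings cancel the $1/n!$ to leave $\prod_j i\,\E\,\Phi[\bfs\sigma]\Phi(\zeta_j,z_j) = \prod_j a_j$. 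The two sides coincide.

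Finally I would upgrade to arbitrary Fock space functionals. As the insertion $\VV^\odot[\bfs\sigma]$ is independent of the node variables, differentiating the monomial identity in the $\zeta_j,z_j$ passes through both $\wh\E$ and $\E$, and on the right it passes through the hat map by the rule $\pa\XX\mapsto\pa\wh\XX$ of \eqref{eq: hat rules}; this gives \eqref{eq: hat hat} for all basic fields and, by $\C$-linearity, for finite combinations. Infinite combinations such as Wick exponentials follow from the generating-function identity $\E\,\ee^{\odot i\Phi[\bfs\sigma]}\ee^{\odot\Phi[\bfs\eta]} = \ee^{\,i\,\E\,\Phi[\bfs\sigma]\Phi[\bfs\eta]}$ together with $\ee^{\odot(\Phi[\bfs\eta]+c)} = \ee^{c}\ee^{\odot\Phi[\bfs\eta]}$ for deterministic $c$ (the Wick analog of Lemma~\ref{lem: ef eX}). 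The only conceptually substantive step is the first: matching the prescribed non-random term in \eqref{eq: hat formula} to $i\,\E\,\Phi[\bfs\sigma]\Phi$ via neutrality and the bilinear relation. I expect the main obstacle to be the Wick combinatorics of the monomial step and the justification of commuting derivatives and limits with the insertion, but both are routine once the shift is correctly identified.
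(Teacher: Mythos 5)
Your proof is correct and follows essentially the same route as the paper's: identify the additive term in \eqref{eq: hat formula} with the covariance $i\,\E\,\Phi[\bfs\beta-\bfs\beta_*]\Phi(z,z_0)$ (via the neutrality condition and the bilinear relation), reduce to Wick monomials, and evaluate $\wh\E$ there by the perfect-matching combinatorics of the inserted Wick exponential, with the $n!$ matchings cancelling the $1/n!$. The only cosmetic difference is that the paper runs the monomial computation directly on arbitrary derivatives of $\Phi$, which absorbs your separate differentiation step.
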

\begin{proof}
For $\XX = X_1(\zeta_1,z_1)\odot \cdots \odot X_n(\zeta_n,z_n), X_j = \pa_{\zeta_j}^{m_j} \bp_{\zeta_j}^{\wt m_j}\pa_{z_j} ^{n_j} \bp_{z_j}^{\wt n_j}\Phi(\zeta_j,z_j),$ it follows from the formula \eqref{eq: hat formula} and the rules \eqref{eq: hat rules} that
$$\wh\XX = \wh X_1(\zeta_1,z_1)\odot \cdots \odot \wh X_n(\zeta_n,z_n),$$ 
where $\wh X_j(\zeta_j,z_j) = X_j(\zeta_j,z_j)+ 2i\sum_k \beta_k\, \pa_{\zeta_j}^{m_j} \bp_{\zeta_j}^{\wt m_j}\pa_{z_j} ^{n_j} \bp_{z_j}^{\wt n_j}(G_{q_k,q_*}(\zeta_j) - G_{q_k,q_*}(z_j)).$
By Wick's formula and \eqref{def: wh E}, we obtain
$$\wh\E\,X_j(\zeta_j,z_j) = i\,\E\, X_j(\zeta_j,z_j) \Phi[\bfs\beta-\bfs\beta_*]= \E\, \wh X_j(\zeta_j,z_j).$$
By the definition \eqref{def: tensor product} of tensor product of Fock space functionals, 
\begin{align*}
\wh\E\,\XX &= \sum_{j=1}^\infty \frac{i^j}{j!}\,\E\,\Phi^{\odot j}[\bfs\beta-\bfs\beta_*] X_1(\zeta_1,z_1)\odot \cdots \odot X_n(\zeta_n,z_n)\\
&= \prod_{j=1}^n i\E\,\Phi[\bfs\beta-\bfs\beta_*] X_j(\zeta_j,z_j) = \prod_{j=1}^n \E\,\wh X_j(\zeta_j,z_j) = \E\,\wh\XX,
\end{align*}
which completes the proof.
\end{proof}

Theorem~\ref{main: Insertion} generalizes the previous proposition.
\begin{proof}[Proof of Theorem~\ref{main: Insertion}]
Recall the relation \eqref{eq: PPSs} between the modifications $\varphi_{\bfs\beta_1}$ and $\varphi_{\bfs\beta_2}$ in the case that $\supp\,\bfs\tau \cap (\supp\,\bfs\beta_1\cup\supp\,\bfs\beta_2) = \emptyset.$
This relation can be extended to the general case as
$$\varphi_{\bfs\beta_2}[\bfs\tau] = \varphi_{\bfs\beta_1}[\bfs\tau] + i\,\E\,\Phi[\bfs\beta_2-\bfs\beta_1]*\Phi[\bfs\tau]. $$
We rewrite it as 
$$\E\,\Phi_{\bfs\beta_2}[\bfs\tau] = \,\E\,\VV^{\odot}[\bfs\beta_2-\bfs\beta_1]*\Phi_{\bfs\beta_1}[\bfs\tau].$$
We define the correspondence $\XX \mapsto \wh\XX$ by the formula $\Phi_{\bfs\beta_1}[\bfs\tau]\mapsto\Phi_{\bfs\beta_2}[\bfs\tau]$ and the rules \eqref{eq: hat rules}.
We denote by $\wh\E[\XX]:=\E\,\VV^{\odot}[\bfs\beta_2-\bfs\beta_1]*\XX.$
Then applying the same argument as in the previous proposition, we arrive to $\wh\E \,\XX = \E\, \wh\XX.$
\end{proof}

Using Theorem~\ref{main: Insertion}, we present an alternate proof of Theorem~\ref{main: EO}.

\begin{proof}[Proof of Theorem~\ref{main: EO}]
Given a marked point $q_* \notin \supp\,\bfs\tau \cup \supp\,\bfs\beta,$ let $\bfs\beta_* = b\chi\cdot q_*.$
It follows from Theorem~\ref{main: Insertion} that
$$\E\,\OO_{\bfs\beta}[\bfs\tau] = \E\,\OO_{\bfs\beta_*}[\bfs\tau]*\VV^{\odot}[\bfs\beta-\bfs\beta_*].$$
By definitions of Wick's exponentials and the OPE exponentials,
$$\E\,\OO_{\bfs\beta_*}[\bfs\tau]*\VV^{\odot}[\bfs\beta-\bfs\beta_*] = \frac{\E\,\OO_{\bfs\beta_*}[\bfs\tau]*\OO_{\bfs\beta_*}[\bfs\beta-\bfs\beta_*]}{\E\,\OO_{\bfs\beta_*}[\bfs\beta-\bfs\beta_*]}.$$
The OPE nature of OPE exponentials (Proposition~\ref{OPE nature}) implies that 
$$\frac{\E\,\OO_{\bfs\beta_*}[\bfs\tau]*\OO_{\bfs\beta_*}[\bfs\beta-\bfs\beta_*]}{\E\,\OO_{\bfs\beta_*}[\bfs\beta-\bfs\beta_*]}=\frac{\E\,\OO_{\bfs\beta_*}[\bfs\tau+\bfs\beta-\bfs\beta_*]}{\E\,\OO_{\bfs\beta_*}[\bfs\beta-\bfs\beta_*]}.$$
By definition of the Coulomb gas correlation function $\CC_{(b)},$ we have
$$\frac{\E\,\OO_{\bfs\beta_*}[\bfs\tau+\bfs\beta-\bfs\beta_*]}{\E\,\OO_{\bfs\beta_*}[\bfs\beta-\bfs\beta_*]} = \frac{\CC_{(b)}[\bfs\tau+\bfs\beta]}{\CC_{(b)}[\bfs\beta]},$$
which completes the proof.
\end{proof}

For modifications of arg-type, we define 
$$\Phi_{(b)}(z,z_0) = \Phi(z,z_0) + \wt\varphi_*(z)-\wt\varphi_*(z_0),$$
where $\wt\varphi_*$ is a harmonic conjugate of $\varphi_*.$
Given a background charge $\bfs\beta,$ the correspondence $\XX\mapsto\wh\XX$ is defined by the rules \eqref{eq: hat rules} and the formula
$$\wh\Phi_{(b)}(z,z_0) = \Phi_{(b)}(z,z_0)- 2\sum \beta_j (\wt G_{q_j,q_*}(z) - \wt G_{q_j,q_*}(z_0)),$$
where $\wt G$ is the harmonic conjugate of the bipolar Green's function. 
Let us denote 
$$\wh\E[\XX]= \E\, [\ee^{\odot -\sum_j\beta_j\wt\Phi(q_j,q_*)}\XX],$$
where $\wt\Phi(z,z_0)$ is the harmonic conjugate of  $\Phi(z,z_0).$ 
More precisely, it is defined by the equation
$$2\Phi^+(z,z_0) =\Phi(z,z_0) + i \wt\Phi(z,z_0),$$
where
$$\Phi^+(z,z_0) = \Big\{\int_\gamma J(\zeta)\,\dd \zeta\,:\,\gamma \textrm{ is a curve from } z_0 \textrm{ to }z\Big\}.$$
Then we have \eqref{eq: hat hat}.

\subsection{KZ equations for OPE exponentials}
In Subsection~\ref{ss: Ward identities} we use certain types of the Knizhnik-Zamolodchikov equations (\emph{KZ equations}) \index{KZ equations} for the puncture operator $\PP_{\bfs\beta}$ to prove Theorem~\ref{main: Ward identity} (the Ward identities). 
In this subsection we derive certain KZ equations for the OPE exponentials $\OO_{\bfs\beta}[\bfs\tau].$
Let us consider the case $\supp\,\bfs\tau\,\cap\,\supp\,\bfs\beta=\emptyset$ only. 
The general case can be treated by applying the rooting procedure or normalization procedure.  

Recall that $\E\,\OO_{\bfs\beta}[\bfs\tau] = \ee^{i\varphi_{\bfs\beta}[\bfs\tau]} \CC[\bfs\tau]$ and $\CC[\bfs\tau]=\ee^{-c[\bfs\tau]}, (c[\bfs\tau]:=c_{\Phi[\bfs\tau]}).$
Thus we have 
\begin{equation} \label{eq: KZ0}
\frac{\pa_j\E\,\OO_{\bfs\beta}[\bfs\tau]}{\E\,\OO_{\bfs\beta}[\bfs\tau]} = i\pa_j \varphi_{\bfs\beta}[\bfs\tau] - \pa_jc[\bfs\tau]=i\tau_j j_{\bfs\beta}(z_j)-\pa_jc[\bfs\tau].
\end{equation}
Sometimes it is convenient to express $c[\bfs\tau]$ as a linear combination of the formal correlations: 
$$c[\bfs\tau] = \frac12\sum_j{\tau_j^2}\, \E\,\Phi*\Phi(z_j) + \sum_{k\ne j} \tau_j\tau_k\,\E\,\Phi(z_j)\Phi(z_k).$$
By differentiating $c[\bfs\tau]$ with respect $z_j,$ we have 
\begin{equation} \label{eq: KZ0a}
\pa_j c[\bfs\tau] = \tau_j^2\, \E\,J*\Phi(z_j) + \sum_{k\ne j} \tau_j\tau_k\,\E\,J(z_j)\Phi(z_k).
\end{equation}
A less formal representation 
$$\pa_j c[\bfs\tau] = \tau_j^2\, \E\,J*\Phi(z_j,z_0) + \sum_{k\ne j} \tau_j\tau_k\,\E\,J(z_j)\Phi(z_k,z_0)$$
does not depend on the choice of $z_0.$ 
Here, the OPE product is taken with respect to the $z_j$-variable. 
Let us consider the meromorphic differential $\omega$ as in Subsection~\ref{ss: C}. 
Then \eqref{eq: EJbeta} holds in any genus case,
\begin{equation} \label{eq: jbeta}
j_{\bfs\beta}(z) = ib\frac{\pa\omega(z)}{\omega(z)} + i\,\E\,J(z)\Phi[\bfs\beta-\bfs\beta_0],
\end{equation}
where $\bfs \beta_0 = -b\cdot(\omega).$ 
Combining \eqref{eq: KZ0}~--~\eqref{eq: jbeta}, we have
\begin{align} \label{eq: KZ}
\frac{\pa_j\E\,\OO_{\bfs\beta}[\bfs\tau]}{\E\,\OO_{\bfs\beta}[\bfs\tau]} &= -b\tau_j \frac{\pa\omega(z_j)}{\omega(z_j)} -\tau_j \E\,J(z_j)\Phi[\bfs\beta-\bfs\beta_0]\\
&-\tau_j^2\, \E\,J*\Phi(z_j) - \sum_{k\ne j} \tau_j\tau_k\,\E\,J(z_j)\Phi(z_k). \nonumber
\end{align}

\subsubsec{The genus zero case}
If $M = \wh\C, q_* = \infty,$ then in the identity chart of $\C,$ $\varphi_{q_*} = \log|\omega_{q_*}| = 0$ (up to an additive constant), and $\bfs\beta_0 = 2b\cdot q_\infty,$ $c(z) \equiv 0.$ 
By \eqref{eq: KZ}, we have 
\begin{equation} \label{eq: KZg0}
\pa_j\E\,\OO_{\bfs\beta}[\bfs\tau] = \Big(\sum_{k\ne j} \tau_j\tau_k\, \frac1{z_j-z_k}+\sum_k \tau_j \beta_k \frac1{z_j-q_k}\Big)\, \E\,\OO_{\bfs\beta}[\bfs\tau].
\end{equation}

\subsubsec{The genus one case}
In the identity chart of torus $\T_\Lambda,$ $\varphi_*$ is constant.
In addition, $\bfs\beta_0 = \bfs 0,$ and $c(z) = -\log|\theta'(0)|-2\pi(\Im\,z)^2/\Im\,\tau.$ 
By \eqref{eq: KZ}, we have  
\begin{align} \label{eq: KZ4torus}
\pa_j\E\,\OO_{\bfs\beta}[\bfs\tau] &= \Big(\sum_{k\ne j} \tau_j\tau_k\, \frac{\theta'(z_j-z_k)}{\theta(z_j-z_k)}-2\pi i \sum_{k} \tau_j\tau_k\frac{ \Im\,z_k}{\Im\,\tau} \Big)\, \E\,\OO_{\bfs\beta}[\bfs\tau]\\
&+ \Big(\sum_k \tau_j \beta_k \frac{\theta'(z_j-q_k)}{\theta(z_j-q_k)}- 2\pi i\sum_k \tau_j\beta_k \frac{\Im\,q_k}{\Im\,\tau}\Big)\, \E\,\OO_{\bfs\beta}[\bfs\tau]. \nonumber
\end{align}

\subsubsec{The higher genus case}
It is convenient to choose $\omega = \sum_j\pa_j\Theta(e)\omega_j.$
One can choose
$$c(z) = \log\frac1{|\omega(z)|} - 2\pi \langle (\Im\,\PM)^{-1}\Im\,\AA(z),\Im\,\AA(z)\rangle.$$
By \eqref{eq: KZ}, we have  
\begin{align*}
\frac{\pa_j\E\,\OO_{\bfs\beta}[\bfs\tau]}{\E\,\OO_{\bfs\beta}[\bfs\tau]} =\lambda_j \frac{\pa\omega(z_j)}{\omega(z_j)}&+ \sum_{k\ne j} \tau_j\tau_k\langle\frac{\nabla\Theta}{\Theta}(\AA(z_j)-\AA(z_k)-e),\overline{\vec\omega(z_j)}\rangle\\
&+\sum_k\tau_j \beta_k\langle\frac{\nabla\Theta}{\Theta}(\AA(z_j)-\AA(q_k)-e),\overline{\vec\omega(z_j)}\rangle\\
&+\sum_l \,\,\,\,b\tau_j \langle\frac{\nabla\Theta}{\Theta}(\AA(z_j)-\AA(p_l)-e),\overline{\vec\omega(z_j)}\rangle\\
&- 2\pi i \tau_j \langle (\Im\,\PM)^{-1}\vec\omega(z_j),\Im\,\AA[\bfs\tau+\bfs\beta-\bfs\beta_0]\rangle,
\end{align*}
where $\lambda_j = \frac 12 \tau_j^2-b\tau_j$ and $\bfs\beta_0 = -b\cdot (\omega).$

\section{Ward's equations} \label{s: Ward}
 
Given a background charge $\bfs\beta,$ we define the Virasoro pair for the OPE family $\FF_{\bfs\beta}(M^*)$ ($M^*:=M\setminus\bfs q, \bfs q:=\supp\,\bfs\beta$) of $\Phi_{\bfs\beta}$ as a pair of Schwarzian forms such that the residue operators associated with these forms represent the Lie derivative operators in applications to the fields in the OPE family $\FF_{\bfs\beta}(M^*).$ 
We extend this representation (Ward's identity) to the OPE family $\FF_{\bfs\beta}(M)$ using the puncture operator $\PP_{\bfs\beta}:=\CC_{(b)}[\bfs\beta],$ see Theorem~\ref{main: Ward identity}.

\subsection{Stress tensors and Virasoro fields} 
Let $v$ be a non-random local smooth vector field and $\psi_t$ be the local flow associated with $v.$ 
Let us recall the definition of the \emph{Lie derivative} \index{Lie derivative} $\LL_v X$ of a Fock space field $X:$
$$(\LL_v X \| \phi) (z) = \frac d{dt}\Big|_{t=0} (X\|\phi\circ \psi_{-t})(z).$$
Set 
$$\LL_v^\pm = \frac{\LL_v \mp i \LL_{iv}}2$$
so that $\LL_v = \LL_v^+ + \LL_v^-.$ 
Whereas $\LL_v$ depends $\R$-linearly on $v,$ its holomorphic part $\LL_v^+$ depends $\C$-linearly on $v.$

\begin{prop}[Cf. Proposition 4.1 in \cite{KM13}]
If $X$ is a $(\lambda,\lambda_*)$-differential, then
\begin{equation} \label{eq: Lv+diff}
\LL_v^+X =  (v\pa + \lambda v')X, \qquad \LL_v^-X =  (\bar v\bp + \lambda_* \overline{v'})X;
\end{equation}
if $\varphi$ is a $\PPS(ib,ib),$ then 
$$\LL_v^+\varphi = v\pa\varphi + ib v',\qquad \LL_v^-\varphi = \bar v\bp\varphi + ib \overline{v'};$$
if $\varphi$ is a $\PPS(ib,-ib),$ then 
$$\LL_v^+\varphi = v\pa\varphi + ib v',\qquad \LL_v^-\varphi = \bar v\bp\varphi - ib \overline{v'};$$
if $X$ is a pre-Schwarzian form of order $\mu,$ then
$$\LL_vX  = \LL_v^+X =  (v\pa + v')X + \mu\,v'';$$
if $X$ is a Schwarzian form of order $\nu,$ then
\begin{equation} \label{eq: Lv+S}
\LL_vX  = \LL_v^+X =  (v\pa + 2v')X + \nu\,v'''.
\end{equation}
\end{prop}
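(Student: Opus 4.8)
The plan is to reduce everything to a pointwise computation in a single chart, exactly as in the simply connected case \cite[Proposition~4.1]{KM13}; since each identity is local in one coordinate, passing to a compact surface adds nothing. Fix a chart $\phi$, write $v$ for the coordinate representation of the vector field in $\phi$, and let $\Phi_t := \phi\circ\psi_t\circ\phi^{-1}$ be the coordinate flow. Then $\Phi_0 = \id$, $\frac{d}{dt}\big|_{t=0}\Phi_t = v$, and differentiating the flow equation gives $\frac{d}{dt}\big|_{t=0}\Phi_t^{(k)} = v^{(k)}$ for all $k$. The chart occurring in the definition of $\LL_v$ is $\ti\phi = \phi\circ\psi_{-t}$, whose transition map to $\phi$ is $h = \ti\phi\circ\phi^{-1} = \Phi_{-t} =: w_t$; the kinematic identities I will use throughout are $w_t\circ\Phi_t = \id$, hence $w_t'(\Phi_t(u)) = 1/\Phi_t'(u)$.

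For a $(\lambda,\lambda_*)$-differential, inserting $h=w_t$ into the transformation law, solving for $(X\|\ti\phi)$, and setting the free argument to $\Phi_t(u)$ turns it into
\[
(X\|\ti\phi)(u) = (\Phi_t'(u))^{\lambda}\,(\overline{\Phi_t'(u)})^{\lambda_*}\,(X\|\phi)(\Phi_t(u)).
\]
Differentiating at $t=0$ and using the limits above yields $\LL_v X = (v\pa + \lambda v')X + (\bar v\bp + \lambda_*\overline{v'})X$. I then read off $\LL_v^\pm$ from $\LL_v^\pm = \frac12(\LL_v \mp i\LL_{iv})$: replacing $v$ by $iv$ multiplies the block $v\pa + \lambda v'$ by $i$ and the block $\bar v\bp + \lambda_*\overline{v'}$ by $-i$, so these blocks are precisely the $+$ and $-$ parts, which is the first claim.

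The remaining cases run through the same substitution, the only new ingredient being the additive cocycles. For $\PPS(\mu,\nu)$ the law $\varphi = \ti\varphi\circ h + \mu\log h' + \nu\log\overline{h'}$ becomes $(\varphi\|\ti\phi)(u) = (\varphi\|\phi)(\Phi_t(u)) + \mu\log\Phi_t'(u) + \nu\log\overline{\Phi_t'(u)}$, whose $t$-derivative is $v\pa\varphi + \bar v\bp\varphi + \mu v' + \nu\overline{v'}$; specializing $(\mu,\nu)=(ib,ib)$ and $(ib,-ib)$ and splitting as before gives the two PPS statements. For the $\PS$ and Schwarzian laws the cocycles are $\mu N_h$ and $\nu S_h$, and here I invoke the chain-rule cocycle identities $N_{g\circ h} = (N_g\circ h)h' + N_h$ and $S_{g\circ h} = (S_g\circ h)(h')^2 + S_h$ applied to $w_t\circ\Phi_t = \id$, giving $N_{w_t}(\Phi_t(u)) = -N_{\Phi_t}(u)/\Phi_t'(u)$ and $S_{w_t}(\Phi_t(u)) = -S_{\Phi_t}(u)/(\Phi_t'(u))^2$. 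These collapse the transformed expressions to $(X\|\ti\phi)(u) = \Phi_t'(u)(X\|\phi)(\Phi_t(u)) + \mu N_{\Phi_t}(u)$ and $(\Phi_t'(u))^2(X\|\phi)(\Phi_t(u)) + \nu S_{\Phi_t}(u)$. Differentiating at $t=0$ with $\frac{d}{dt}\big|_0 N_{\Phi_t} = v''$ and $\frac{d}{dt}\big|_0 S_{\Phi_t} = v'''$ produces $(v\pa + v')X + \mu v''$ and $(v\pa + 2v')X + \nu v'''$, together with an antiholomorphic term $\bar v\bp X$. Since the $\PS$ and Schwarzian forms in this paper (e.g. $J_{\bfs\beta}$, $T_{\bfs\beta}$) are holomorphic, $\bp X = 0$, so $\LL_v^- X = 0$ and $\LL_v X = \LL_v^+ X$.

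The computation is otherwise routine, so the main obstacle is purely one of bookkeeping: pinning down the direction of composition (the flow $\psi_{-t}$ versus its coordinate inverse $w_t = \Phi_{-t}$) and correctly converting the cocycles $N_{w_t}, S_{w_t}$ evaluated along $\Phi_t$ back into $N_{\Phi_t}, S_{\Phi_t}$ through the cocycle identities, so that the additive terms appear with the right sign and are then differentiated at $\id$, where $N_{\id} = S_{\id} = 0$.
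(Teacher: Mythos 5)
Your computation is correct and is the standard chart argument (differentiate the transformation law along the coordinate flow $\Phi_t=\phi\circ\psi_t\circ\phi^{-1}$, using $w_t\circ\Phi_t=\id$ and the cocycle identities for $N$ and $S$ to land on $N_{\Phi_t},S_{\Phi_t}$ with the right signs); the present paper gives no proof of its own, deferring to \cite[Proposition~4.1]{KM13}, and your argument matches that route. Your explicit remark that $\LL_v=\LL_v^+$ for the PS and Schwarzian cases rests on holomorphy ($\bp X=0$) is a correct and worthwhile clarification of a hypothesis the statement leaves implicit.
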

By definition, a Fock space field $X$ on $M$ has a \emph{stress tensor} \index{stress tensor} $(A,\bar A_*)$ 
if $(A,A_*)$ is a pair of holomorphic quadratic differentials and if
$$\LL_v^+ X = A_v X, \qquad \LL_v^+ \bar X = (A_*)_v \bar X$$
hold in $D_{\hol}(v)$ for all non-random local vector fields $v.$
Here, $D_\hol(v)$ is a maximal open set where $v$ is holomorphic and $A_v$ is the residue operator on Fock space fields $(A_v X)(z) = A_v(z)X(z)$ defined by  
$$A_v(z): X(z) \mapsto \frac1{2\pi i} \oint_{(z)} vA\,X(z).$$
We denote by $\FF(A,\bar A_*)$ the collection of Fock space fields which have a common stress tensor $(A,\bar A_*).$ 
By definition, a pair $(T,\bar T_*)$ of Fock space fields is the Virasoro pair for the family $\FF(A,\bar A_*)$ if 
$T, \bar T_* \in \FF(A,\bar A_*)$ and if $T-A, T_*-A_*$ are non-random holomorphic (or meromorphic with poles where background charges are placed) Schwarzian forms.
We remark that the family $\FF(A,\bar A_*)$ is closed under OPE products $*_n,$ see \cite[Proposition~5.8]{KM13}.

\begin{thm} \label{SET} 
Let $\varphi_{\bfs\beta}$ be a simple PPS form with the background charge $\bfs\beta(=i\pa\bp\varphi_{\bfs\beta}/\pi).$
\renewcommand{\theenumi}{\alph{enumi}}
{\setlength{\leftmargini}{1.8em}
\begin{enumerate}
\item If $\varphi_{\bfs\beta}$ is a PPS form of $\log|\cdot|$-type, then $\Phi_{\bfs\beta}$ has a stress energy tensor $(A_{\bfs\beta},\bar A^*_{\bfs\beta}):$ 
$$A_{\bfs\beta} =A+ib\pa J -j_{\bfs\beta}J,
\quad A^*_{\bfs\beta} = A-ib\pa J -j_{\bfs\beta}J,$$
where $A = -\frac12 J \odot J, j _{\bfs\beta}= \pa\varphi_{\bfs\beta}.$
The Virasoro pair $(T_{\bfs\beta},\bar T^*_{\bfs\beta})$ for the family $\FF(A_{\bfs\beta},\bar A^*_{\bfs\beta})$ is given by 
$$T_{\bfs\beta} = -\frac12 (\pa\Phi_{\bfs\beta}*\pa\Phi_{\bfs\beta}) + ib\pa^2\Phi_{\bfs\beta}, 
\quad
T^*_{\bfs\beta} = -\frac12 (\pa\bar\Phi_{\bfs\beta}*\pa\bar\Phi_{\bfs\beta}) - ib\pa^2\bar\Phi_{\bfs\beta}.$$ 
They are (holomorphic) Schwarzian forms of order $c/12 = 1/12 - b^2.$
\item If $\varphi_{\bfs\beta}$ is a PPS form of arg-type, then $\Phi_{\bfs\beta}$ has a stress energy tensor $(A_{\bfs\beta},\bar A_{\bfs\beta}):$ 
$$ A_{\bfs\beta} =A+ib\pa J -j_{\bfs\beta}J.$$
The Virasoro pair $(T_{\bfs\beta},\bar T_{\bfs\beta})$ for the family $\FF(A_{\bfs\beta},\bar A_{\bfs\beta})$ is given by 
$$T_{\bfs\beta} = -\frac12 (\pa\Phi_{\bfs\beta}*\pa\Phi_{\bfs\beta}) + ib\pa^2\Phi_{\bfs\beta}.$$
\end{enumerate}
}
\end{thm}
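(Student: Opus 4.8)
The plan is to verify the defining residue (Ward) form of the stress tensor directly on the generating fields $\Phi_{\bfs\beta}$ and $J_{\bfs\beta}=\pa\Phi_{\bfs\beta}$, propagate it to the whole family by OPE-closedness, and then identify $T_{\bfs\beta}$ by a transformation-law computation. I describe case (a); case (b) will require only a change in the antiholomorphic sector. Throughout I work on $M^*=M\sm\supp\,\bfs\beta$, where $A_{\bfs\beta}$ is holomorphic except for the poles of $j_{\bfs\beta}$ at $\supp\,\bfs\beta$.

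First I would record the reduction to generators. Since the family $\FF(A_{\bfs\beta},\bar A^*_{\bfs\beta})$ is closed under the OPE products $*_n$ (cf. \cite[Proposition~5.8]{KM13}) and $\LL_v^+$ acts as a derivation compatible with these products, and since every field in the family is obtained from $\Phi_{\bfs\beta}$ (equivalently $J_{\bfs\beta}$) by applying $\pa,\bp$ and OPE multiplication, it suffices to check $\LL_v^+J_{\bfs\beta}=(A_{\bfs\beta})_vJ_{\bfs\beta}$ and $\LL_v^+\Phi_{\bfs\beta}=(A_{\bfs\beta})_v\Phi_{\bfs\beta}$ together with their conjugates. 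Before that I would confirm that $A_{\bfs\beta}=A+ib\pa J-j_{\bfs\beta}J$ really is a quadratic differential: although $\pa J$ and $j_{\bfs\beta}J$ are not individually tensorial, writing the transformation of the $1$-differential $J$ and of the $\PS(ib)$ form $j_{\bfs\beta}=\pa\varphi_{\bfs\beta}$ shows that the anomalous $h''$-terms cancel, leaving $ib\pa J-j_{\bfs\beta}J=(h')^2(ib\ti J'-\ti j_{\bfs\beta}\ti J)\circ h$.

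The heart of the argument is the singular OPE computation, carried out by Wick's calculus from $J(\zeta)J(z)\sim -(\zeta-z)^{-2}$. The contraction of the $A$-part reproduces $\frac{J_{\bfs\beta}(z)}{(\zeta-z)^2}+\frac{\pa J_{\bfs\beta}(z)}{\zeta-z}$ (the $-j_{\bfs\beta}J$ term supplying exactly the deterministic $j_{\bfs\beta}$-contributions needed to complete $J$ to $J_{\bfs\beta}$), while contracting $ib\pa J(\zeta)$ against $J(z)$ yields the cubic pole $\frac{2ib}{(\zeta-z)^3}$. Hence $A_{\bfs\beta}(\zeta)J_{\bfs\beta}(z)\sim \frac{2ib}{(\zeta-z)^3}+\frac{J_{\bfs\beta}(z)}{(\zeta-z)^2}+\frac{\pa J_{\bfs\beta}(z)}{\zeta-z}$, and the residue $\frac1{2\pi i}\oint_{(z)}vA_{\bfs\beta}$ returns $(v\pa+v')J_{\bfs\beta}+ib\,v''$, which is precisely $\LL_v^+J_{\bfs\beta}$ for a $\PS(ib)$ form. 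An analogous computation one order lower gives $A_{\bfs\beta}(\zeta)\Phi_{\bfs\beta}(z,z_0)\sim \frac{ib}{(\zeta-z)^2}+\frac{J_{\bfs\beta}(z)}{\zeta-z}$ at $z$ (and the opposite at $z_0$), so that $(A_{\bfs\beta})_v\Phi_{\bfs\beta}=vJ_{\bfs\beta}+ib\,v'$, matching the $\PPS(ib,ib)$ Lie derivative.

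Finally I would identify $T_{\bfs\beta}$. Expanding $J_{\bfs\beta}*J_{\bfs\beta}$ using $f*X=fX$ for non-random $f$ and $J*J=J\odot J-\tfrac16 S$ gives $T_{\bfs\beta}=A_{\bfs\beta}+\big(\tfrac1{12}S-\tfrac12 j_{\bfs\beta}^2+ib\pa j_{\bfs\beta}\big)$; thus $T_{\bfs\beta}-A_{\bfs\beta}=\E\,T_{\bfs\beta}$ is non-random, so $T_{\bfs\beta}$ lies in the family and is its Virasoro field. Its Schwarzian order follows by substituting $j_{\bfs\beta}=h'\ti j_{\bfs\beta}\circ h+ib\,h''/h'$: the term $\tfrac1{12}S$ contributes order $1/12$ and $-\tfrac12 j_{\bfs\beta}^2+ib\pa j_{\bfs\beta}$ contributes $-b^2S_h$, for total order $1/12-b^2=c/12$. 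Case (b) uses the identical holomorphic computation; the only change is that for arg-type $\overline{\varphi^-}$ is a $\PPS(ib)$ (rather than $\PPS(-ib)$) form, so the antiholomorphic stress tensor is the genuine conjugate $\bar A_{\bfs\beta}$. The main obstacle is the careful bookkeeping of this singular OPE — in particular isolating the $\frac{2ib}{(\zeta-z)^3}$ pole that produces the pre-Schwarzian anomaly $ib\,v''$ — and verifying the cancellation of the non-tensorial $h''$-terms that makes $A_{\bfs\beta}$ a bona fide quadratic differential.
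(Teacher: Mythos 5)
Your proposal is correct and follows essentially the same route as the paper's proof: verify that the anomalous $h''$-terms cancel so that $A_{\bfs\beta}$ is a holomorphic quadratic differential, check Ward's OPE of $A_{\bfs\beta}$ against $\Phi_{\bfs\beta}$ by Wick's calculus from $J(\zeta)J(z)\sim-(\zeta-z)^{-2}$, and identify the Schwarzian order of $T_{\bfs\beta}=A_{\bfs\beta}+\E\,T_{\bfs\beta}$ from the transformation law of $-\tfrac12 j_{\bfs\beta}^2+ib\pa j_{\bfs\beta}$. Your additional check on $J_{\bfs\beta}$ and the explicit reduction-to-generators remark are consistent elaborations rather than a different argument.
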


\begin{proof}
First, we claim that $(ib\pa -j_{\bfs\beta})J$ and thus $A_{\bfs\beta}$ are holomorphic quadratic differentials. 
This claim is immediate from the following transformation laws,
\begin{align*}
ib\pa J &= ibh'' \ti J\circ h + ib(h')^2 \pa \ti J\circ h, \\ 
j_{\bfs\beta}J &= ib \Big(\frac{h''}{h'}\Big)h' \ti J \circ h +(h')^2 (\ti \jmath_{\bfs\beta} \ti J) \circ h.
\end{align*}
From the expressions of $A_{\bfs\beta}$ and $T_{\bfs\beta},$ we have
\begin{equation} \label{eq: Tbeta}
T_{\bfs\beta} = T +(A_{\bfs\beta}-A)- \frac12 j_{\bfs\beta}^2 + ib\pa j_{\bfs\beta}.
\end{equation}
To show that $T_{\bfs\beta}$ is a Schwarzian form of order $c/12,$ we need to check that $-\frac12 j_{\bfs\beta}^2 + ib\pa j_{\bfs\beta}$ is a Schwarzian form of order $-b^2.$
It follows from the transformation laws for $-\frac12 j_{\bfs\beta}^2$ and $ib\pa j_{\bfs\beta}:$ 
\begin{align*}
-\frac12 j_{\bfs\beta}^2 &= -ib h'' \ti \jmath_{\bfs\beta}\circ h -\frac12 (h')^2(\ti \jmath_{\bfs\beta}\circ h)^2 + \frac12 b^2 \Big(\frac{h''}{h'}\Big)^2,\\
ib\pa j_{\bfs\beta} &= \phantom{-} ib h'' \ti \jmath_{\bfs\beta}\circ h + ib (h')^2 (\pa \ti \jmath_{\bfs\beta}\,) \circ h - b^2 \Big(\frac{h''}{h'}\Big)'.
\end{align*}

We need to check Ward's OPE:
$$A_{\bfs\beta}(\zeta) \Phi_{\bfs\beta}(z,z_0) \sim \frac{J_{\bfs\beta}(z)}{\zeta-z} + ib\,\frac1{(\zeta-z)^2}.$$
It follows immediately from the singular OPEs in the case $b=0,$
$$A(\zeta) \Phi(z,z_0) \sim \frac{J(z)}{\zeta-z}, \qquad J(\zeta)\Phi(z,z_0)\sim -\frac1{\zeta-z}.$$
By differentiating the last OPE with respect to $\zeta,$ we find
$$\pa J(\zeta)\Phi(z,z_0)\sim \frac1{(\zeta-z)^2}.$$
Multiplying the singular OPE of $J(\zeta)\Phi(z,z_0)$ by $-j_{\bfs\beta}(\zeta)$ gives rise to 
$$-j_{\bfs\beta}(\zeta)J(\zeta)\Phi(z,z_0)\sim \frac{j_{\bfs\beta}(z)}{\zeta-z}.$$
We define $A_{\bfs\beta}^-$ by  
$$ A_{\bfs\beta}^- = \bar A_{\bfs\beta}^*, \quad A_{\bfs\beta}^* = A-ib\pa J -j_{\bfs\beta}J.$$
Then we have Ward's OPE:
$$A_{\bfs\beta}^-(\zeta) \Phi_{\bfs\beta}(z,z_0) \sim \overline{\frac{J_{\bfs\beta}(z)}{\zeta-z} - ib\,\frac1{(\zeta-z)^2}} = \frac{\bar J_{\bfs\beta}(z)}{\bar \zeta-\bar z} + ib\,\frac1{(\bar\zeta-\bar z)^2},$$
which completes the first part of theorem. 

We leave the second part to the reader as an exercise. 
\end{proof}

\begin{rmk*}
More generally, one can consider the following modification of mixed type
$$\varphi_{\bfs\beta} = \beta_+\varphi_{\bfs\beta}^+ + \beta_-\varphi_{\bfs\beta}^-, \quad (\beta_++\beta_-=1),$$ 
where $\varphi_{\bfs\beta}^+$ is a $\PPS(ib,ib)$ form and $\varphi_{\bfs\beta}^-$ is a $\PPS(ib,-ib)$ form. 
By definition, $\varphi_{\bfs\beta}$ is a $\PPS(ib,ib(\beta_+-\beta_-))$ form.
Then we have a stress tensor $(A_{\bfs\beta},\bar A_{\bfs\beta}^*)$ for the bosonic field $\Phi_{\bfs\beta}:=\Phi+\varphi_{\bfs\beta}:$
\begin{align*}
A_{\bfs\beta} &=A+ib\pa J -j_{\bfs\beta}J,\\
A_{\bfs\beta}^* &=  A- ib(\beta_+-\beta_-)\pa J - j_{\bfs\beta} J.
\end{align*}
Indeed, we have Ward's OPE:
$$\bar A_{\bfs\beta}^*(\zeta) \Phi_{\bfs\beta}(z,z_0) \sim \overline{\frac{J_{\bfs\beta}(z)}{\zeta-z} - ib(\beta_+-\beta_-)\,\frac1{(\zeta-z)^2}} = \frac{\bar J_{\bfs\beta}(z)}{\bar \zeta-\bar z} + ib(\beta_+-\beta_-)\,\frac1{(\bar\zeta-\bar z)^2}.$$
In this case, the Virasoro pair $(T_{\bfs\beta},\bar T^*_{\bfs\beta})$ for the family $\FF(A_{\bfs\beta},\bar A^*_{\bfs\beta})$ is given by 
$$T_{\bfs\beta} = -\frac12 (\pa\Phi_{\bfs\beta}*\pa\Phi_{\bfs\beta}) + ib\pa^2\Phi_{\bfs\beta}, 
\quad
T^*_{\bfs\beta} = -\frac12 (\pa\bar\Phi_{\bfs\beta}*\pa\bar\Phi_{\bfs\beta}) - ib(\beta_+-\beta_-) \pa^2\bar\Phi_{\bfs\beta}.$$ 
\end{rmk*}

The following singular operator product expansions (Ward's OPEs) can be derived by using the method in \cite[Corollaries~5.4~--~5.5]{KM13}.

\begin{prop}
As $\zeta\to z,$ we have 
\renewcommand{\theenumi}{\alph{enumi}}
{\setlength{\leftmargini}{1.8em}
\begin{enumerate}
\ms \item 
$T_{\bfs\beta}(\zeta)\Phi_{\bfs\beta}(z,z_0)\sim \dfrac{ib}{(\zeta-z)^2} + \dfrac{J_{\bfs\beta}(z)}{\zeta-z},$
\ms \item 
$T_{\bfs\beta}(\zeta)J_{\bfs\beta}(z)\sim \dfrac{2ib}{(\zeta-z)^3} + \dfrac{J_{\bfs\beta}(z)}{(\zeta-z)^2} + \dfrac{\partial J_{\bfs\beta}(z)}{\zeta-z},$
\ms \item
$T_{\bfs\beta}(\zeta)T_{\bfs\beta}(z)\sim \dfrac{c/2}{(\zeta-z)^4}+\dfrac{2T_{\bfs\beta}(z)}{(\zeta-z)^2}+ \dfrac{\partial T_{\bfs\beta}(z)}{\zeta-z},$
\ms \item $T_{\bfs\beta}(\zeta) \OO_{\bfs\beta}[\bfs\tau] \sim ~\dfrac {\lambda\OO_{\bfs\beta}[\bfs\tau]}{(\zeta-z)^2}+\dfrac{\partial \OO_{\bfs\beta}[\bfs\tau]}{\zeta-z}, \quad(\bfs\tau = \tau\cdot z + \cdots, \,\,\lambda = \dfrac{\tau^2}2-b\tau, \, \displaystyle\int\bfs\tau = 0).$
\end{enumerate}
}
\end{prop}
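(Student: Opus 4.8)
The plan is to derive each of the four Ward OPEs by writing $T_{\bfs\beta}=A_{\bfs\beta}+(T_{\bfs\beta}-A_{\bfs\beta})$, where $A_{\bfs\beta}$ is the stress tensor from Theorem~\ref{SET} and $T_{\bfs\beta}-A_{\bfs\beta}$ is a non-random (meromorphic) Schwarzian form. Since a non-random field contributes nothing to the singular part of an OPE with any Fock space field beyond its own smooth multiplication, the singular parts of $T_{\bfs\beta}(\zeta)\,Y(z)$ coincide with those of $A_{\bfs\beta}(\zeta)\,Y(z)$ for each target $Y$, up to the regular correction absorbed into $O(1)$. Thus the first step is to reduce everything to computing the singular OPEs of $A_{\bfs\beta}=A+ib\,\pa J-j_{\bfs\beta}J$ against $\Phi_{\bfs\beta}$, $J_{\bfs\beta}$, $T_{\bfs\beta}$, and $\OO_{\bfs\beta}[\bfs\tau]$, following the residue-operator bookkeeping of \cite[Corollaries~5.4--5.5]{KM13}.

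First I would treat (a): this is essentially the content already extracted inside the proof of Theorem~\ref{SET}, where Ward's OPE $A_{\bfs\beta}(\zeta)\Phi_{\bfs\beta}(z,z_0)\sim J_{\bfs\beta}(z)/(\zeta-z)+ib/(\zeta-z)^2$ is verified by combining $A(\zeta)\Phi(z,z_0)\sim J(z)/(\zeta-z)$, the derivative of $J(\zeta)\Phi(z,z_0)\sim -1/(\zeta-z)$, and the contraction $-j_{\bfs\beta}(\zeta)J(\zeta)\Phi\sim j_{\bfs\beta}(z)/(\zeta-z)$. For (b) I would apply $\pa_z$ to (a) in the appropriate sense and use $J_{\bfs\beta}=J+j_{\bfs\beta}$ together with the OPE $T(\zeta)J(z)\sim J(z)/(\zeta-z)^2+\pa J(z)/(\zeta-z)$ from Proposition~\ref{OPE4T}, adjusting the most singular coefficient: the $2ib/(\zeta-z)^3$ term arises from $ib\,\pa J(\zeta)$ acting on $J(z)$, reflecting that $J_{\bfs\beta}$ is a pre-Schwarzian form of order $ib$ and so transforms with an inhomogeneous $v''$ term under $\LL_v^+$, cf.\ \eqref{eq: Lv+diff}. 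For (c) I would use that $T_{\bfs\beta}$ is a Schwarzian form of order $c/12$, so the residue action of $A_{\bfs\beta}$ reproduces \eqref{eq: Lv+S}; the leading pole $c/2/(\zeta-z)^4$ is dictated by the central charge $c=1-12b^2$, which enters precisely through the $-\tfrac12 j_{\bfs\beta}^2+ib\,\pa j_{\bfs\beta}$ Schwarzian term identified in Theorem~\ref{SET}.

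Part (d) is the one I expect to be the main obstacle, and I would handle it last. The point is that $\OO_{\bfs\beta}[\bfs\tau]$ with $\bfs\tau=\tau\cdot z+\cdots$ is a $(\lambda,\lambda_*)$-differential at $z$ with $\lambda=\tfrac{\tau^2}2-b\tau$, so by \eqref{eq: Lv+diff} one expects only a double pole $\lambda/(\zeta-z)^2$ and a simple pole $\pa/(\zeta-z)$. To verify this I would expand $\OO_{\bfs\beta}[\bfs\tau]=\ee^{*i\Phi_{\bfs\beta}[\bfs\tau]}$ via Proposition~\ref{prop: OPE exp} as $\E\,\OO_{\bfs\beta}[\bfs\tau]\cdot\VV^\odot[\bfs\tau]$ and compute the contraction of $A_{\bfs\beta}(\zeta)=A+ib\pa J-j_{\bfs\beta}J$ against each factor of the Wick exponential. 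The factor $A=-\tfrac12 J\odot J$ produces, through double contraction with $i\tau\Phi(z)$, the $\tfrac{\tau^2}2$ part of $\lambda$, while $ib\,\pa J$ contracting once yields the $-b\tau$ part; the term $-j_{\bfs\beta}J$ must cancel the spurious contribution coming from the non-random factor $\E\,\OO_{\bfs\beta}[\bfs\tau]$ via \eqref{eq: KZ0}, i.e.\ the logarithmic-derivative $\pa_j\E\,\OO_{\bfs\beta}[\bfs\tau]/\E\,\OO_{\bfs\beta}[\bfs\tau]=i\tau j_{\bfs\beta}(z)-\pa c[\bfs\tau]$. The delicate bookkeeping is precisely this cancellation, confirming that the effective-charge shift $j_{\bfs\beta}$ does not generate an extra pole and that exactly $\lambda=\tfrac{\tau^2}2-b\tau$ survives; I would organize it by checking the double-pole coefficient first (purely from $A$ and $ib\pa J$) and then verifying that all simple-pole contributions assemble into $\pa\OO_{\bfs\beta}[\bfs\tau]$, which is guaranteed once the leading coefficient is $\lambda$ because $\OO_{\bfs\beta}[\bfs\tau]\in\FF(A_{\bfs\beta},\bar A_{\bfs\beta}^*)$ is a differential and the family is closed under OPE products.
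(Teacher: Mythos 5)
Your proposal is correct and follows essentially the route the paper intends: the paper gives no written proof here, only the remark that the OPEs ``can be derived by using the method in \cite[Corollaries~5.4--5.5]{KM13},'' and that method is exactly your reduction to the stress tensor $A_{\bfs\beta}$ (whose Ward OPE with $\Phi_{\bfs\beta}$ is verified inside the proof of Theorem~\ref{SET}), followed by differentiation in $z$ for (b) and the translation between membership in $\FF(A_{\bfs\beta},\bar A^*_{\bfs\beta})$ plus the transformation type of the target field --- Schwarzian form of order $c/12$ for (c), $(\lambda,\lambda_*)$-differential for (d) --- and the corresponding singular OPE. Your extra hands-on check of the double-pole coefficient in (d) is consistent with, though not required by, that argument.
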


\subsection{Ward's equations without modifications} \label{ss: Ward's equations 0}
In this subsection we derive Ward's equations in the case $b=0.$ 
Suppose a meromorphic vector field $v$ has poles at $\xi_k$'s.
For a string $\XX$ of fields with $S_\XX = \{z_1,\cdots,z_n\},$ 
the map 
$$\xi \mapsto \E\,v(\xi)A(\xi)\XX$$
is a homomorphic 1-differential on $M\sm\bigcup_{j,k}\{z_j,\xi_k\}$ and hence its residues at $z_j,\xi_k$ are well-defined.
We define the Ward functional $W_v^+$ by 
\begin{equation} \label{eq: W def0}
\E\,W_v^+\XX = - \sum_k \Res_{\,\xi_k} v\,\E\,A\XX
\end{equation}
for any tensor product $\XX$ of Fock space fields such that $S_\XX$ does not intersect the set of poles of $v.$ 

\begin{prop} \label{prop: Ward}
For any tensor ptoduct $\XX$ of fields in the OPE family $\FF$ of the Gaussian free field, we have
$$\E\,W_v^+\XX = \E\, \LL_v^+\XX.$$ 
\end{prop}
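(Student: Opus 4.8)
The plan is to reduce the identity to the residue theorem on the compact surface $M$, the two ingredients being the defining property of the stress tensor and the fact that $\LL_v^+$ acts as a derivation on tensor products. First I would recall that, in the case $b=0$, the Gaussian free field and every field in its OPE family $\FF$ share the common holomorphic stress tensor $A=-\frac12 J\odot J$ (the $\bfs\beta=\bfs0$ instance of Theorem~\ref{SET}). By the very definition of a stress tensor, for each basic field $X$ in $\FF$ one has $\LL_v^+X=A_vX$ within correlations, where $A_v(z)X(z)=\frac1{2\pi i}\oint_{(z)}vA\,X(z)$ and the contour is a small positively oriented loop around $z$ enclosing no other node.

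Next I would expand $\LL_v^+\XX$ by the Leibniz rule. Writing $\XX=X_1(z_1)\cdots X_n(z_n)$ with $S_\XX=\{z_1,\dots,z_n\}$ disjoint from the poles $\xi_k$ of $v$, the fact that $\LL_v^+$ is a derivation on tensor products, combined with the stress-tensor relation, gives
$$\E\,\LL_v^+\XX=\sum_j \frac1{2\pi i}\oint_{(z_j)} v(\xi)\,\E[A(\xi)\XX]\,\dd\xi=\sum_j \Res_{z_j} v\,\E\,A\XX,$$
since the loop around $z_j$ picks up exactly the residue at $z_j$ of the $1$-differential $\xi\mapsto v(\xi)\E[A(\xi)\XX]$, the poles at the remaining nodes lying outside that loop. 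The residue at $z_j$ is of course computed from the full correlation, so contractions of $A$ with the distant fields $X_k$ are automatically accounted for.

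The crux is then the residue theorem. As recalled immediately before the statement, $\xi\mapsto v(\xi)\E[A(\xi)\XX]$ is a single-valued meromorphic $1$-differential on $M$, holomorphic off $\bigcup_{j,k}\{z_j,\xi_k\}$; this is where compactness enters, since on a compact Riemann surface the sum of all residues of a meromorphic $1$-differential vanishes. Hence
$$\sum_j \Res_{z_j} v\,\E\,A\XX+\sum_k \Res_{\xi_k} v\,\E\,A\XX=0,$$
and comparing with the definition $\E\,W_v^+\XX=-\sum_k \Res_{\xi_k} v\,\E\,A\XX$ in~\eqref{eq: W def0} yields $\E\,\LL_v^+\XX=\E\,W_v^+\XX$.

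The main obstacle is not the residue bookkeeping but verifying the two structural inputs: that $v\,\E\,A\XX$ is genuinely a single-valued meromorphic $1$-differential on all of $M$, so that global cancellation of residues applies, and that the residue of $v\,\E\,A\XX$ at each node $z_j$ really reproduces the local action $\E\,\LL_v^+X_j$ through the Ward OPE of $A$ with $X_j$ (for a $(\lambda_j,\lambda_{*j})$-differential this reads $A(\zeta)X_j(z)\sim \lambda_j X_j(z)/(\zeta-z)^2+\pa X_j(z)/(\zeta-z)+\cdots$, with the analogous singular parts for $\Phi$, $J$, and $T$). Both inputs are supplied by the stress-tensor formalism of Theorem~\ref{SET} and by the discussion preceding the proposition, so once they are in place the argument closes immediately.
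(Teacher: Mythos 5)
Your proof is correct and follows essentially the same route as the paper: the paper applies Stokes' theorem over the complement of small disks around the nodes and the poles of $v$ (which is precisely the global vanishing of residues of the meromorphic $1$-differential $v\,\E\,A\XX$ on the compact surface $M$ that you invoke) and then identifies the contour integrals around the nodes with $\E\,\LL_v^+\XX$ via the residue form of Ward's identity, exactly as in your argument with the two steps taken in the opposite order.
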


\begin{proof}
Suppose $C_j \,(C'_k)$ is a positively oriented simple closed curve that bounds a sufficiently small open disk $D_j$ ($D'_k$) such that $z_j\in D_j \,(\xi_k\in D'_k)$ and $C_j \,(C'_k)$ has winding number 1 about $z_j \,(\xi_k),$ respectively. 
Applying Stokes' theorem to 
$$d\big(v(z)\E\,A(z)\XX\,\dd z\big) = -\bp\big(v(z)\E\,A(z)\XX\big)\,\dd z\wedge \dd \bar z = 2i \bp\big(v(z)\E\,A(z)\XX\big) \,\dd x \wedge \dd y$$
over $M \sm (\bigcup_j D_j \cup \bigcup_k D'_k),$ we have 
$$\E\,W_v^+\XX = \frac1{2\pi i} \sum_j \oint_{(z_j)} \E\,vA\XX.$$ 
Now proposition follows from a residue form of Ward's identity: 
$$\frac1{2\pi i} \sum_j \oint_{(z_j)} \E\,vA\XX =\sum_j \E\, \LL_v^+(z_j)\XX = \E\,\LL_v^+\XX.$$
\end{proof}

\begin{cor} \label{cor: Ward}
If $X_j$'s are $(\lambda_j,\lambda_{*j})$-differentials in the OPE family $\FF,$ then 
$$- \sum_k \E\,\Res_{\,\xi_k}(vA)\XX = \sum_j (v(z_j)\pa_j + \lambda_j v'(z_j))\E\,\XX,$$
where $\XX = X_1(z_1)\cdots X_n(z_n).$ 
\end{cor}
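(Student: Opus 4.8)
```latex
The plan is to obtain Corollary~\ref{cor: Ward} as a direct specialization of Proposition~\ref{prop: Ward}, so there is essentially no new analytic content: the work is in unwinding the two sides of the identity $\E\,W_v^+\XX = \E\,\LL_v^+\XX$ for the concrete choice $\XX = X_1(z_1)\cdots X_n(z_n)$ with each $X_j$ a $(\lambda_j,\lambda_{*j})$-differential. First I would rewrite the left-hand side. By the definition~\eqref{eq: W def0} of the Ward functional,
$$\E\,W_v^+\XX = -\sum_k \Res_{\,\xi_k} v\,\E\,A\XX,$$
which is exactly $-\sum_k \E\,\Res_{\,\xi_k}(vA)\XX$ once one notes that $\E\,\XX$ carries no dependence on the contour variable and the residue is taken in the $A$-variable. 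This matches the left-hand side of the corollary verbatim, so that side requires only bookkeeping.

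Next I would treat the right-hand side using the residue form of Ward's identity established inside the proof of Proposition~\ref{prop: Ward}, namely
$$\E\,\LL_v^+\XX = \frac1{2\pi i}\sum_j \oint_{(z_j)} \E\,vA\XX = \sum_j \E\,\LL_v^+(z_j)\XX.$$
Here I would invoke that each $X_j$ is a $(\lambda_j,\lambda_{*j})$-differential together with the first formula of \eqref{eq: Lv+diff}, which gives $\LL_v^+ X_j = (v\pa + \lambda_j v')X_j$ acting in the $j$-th node. Since Lie derivatives act as derivations (the contour around $z_j$ sees only the node at $z_j$), the $\C$-linear part distributes node by node over the tensor product, so that
$$\E\,\LL_v^+\XX = \sum_j (v(z_j)\pa_j + \lambda_j v'(z_j))\E\,\XX,$$
where $\pa_j$ denotes differentiation with respect to $z_j$. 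Combining this with the rewriting of the left-hand side yields precisely the asserted identity.

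The only point requiring a little care is the passage from the operator statement $\LL_v^+ X_j = (v\pa + \lambda_j v')X_j$ at a single node to the differentiated correlation $(v(z_j)\pa_j + \lambda_j v'(z_j))\E\,\XX$ for the full string; I would justify this by the derivation property of $\LL_v^+$ on tensor products together with the fact that $\E\,\LL_v^+(z_j)\XX$ only modifies the $j$-th factor, so that the expectation commutes with the scalar differential operator acting on the $z_j$-dependence of $\E\,\XX$. I do not expect any genuine obstacle here: the statement is a restatement of Proposition~\ref{prop: Ward} specialized to differentials, and the main subtlety is purely notational, namely keeping the residue convention on the left consistent with the contour-integral convention on the right.
```
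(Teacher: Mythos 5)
Your proposal is correct and follows essentially the same route as the paper, which proves the corollary in one line by combining Proposition~\ref{prop: Ward} with Leibniz's rule for the Lie derivative and formula~\eqref{eq: Lv+diff}; your write-up simply makes the bookkeeping explicit. No gaps.
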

\begin{proof}
It is clear from Leibniz's rule for the Lie derivative and the formula~\eqref{eq: Lv+diff} for the Lie derivative of a differential.  
\end{proof}

\subsubsec{The genus zero case}
Given $\xi\in\C,$ let us consider the vector field $v_\xi$ given by 
$$k_\xi(z) = \frac1{\xi-z}$$
in the identity chart of $\C.$
Its continuous extension to $\wh\C$ has a triple zero at infinity. 
Applying Proposition~\ref{prop: Ward} to the vector field $k_\xi,$ we obtain the following proposition. 
\begin{prop}
If the fields $X_j$ belong to the OPE family $\FF,$ then in the $\wh\C$-uniformization, 
\begin{equation} \label{eq: Ward4C}
\E\,A(\xi)\XX = \E\, \LL_{k_\xi}^+\XX,
\end{equation}
where $\XX = X_1(z_1)\cdots X_n(z_n).$ 
\end{prop}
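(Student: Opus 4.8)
The plan is to obtain \eqref{eq: Ward4C} as the special case $v=k_\xi$ of Proposition~\ref{prop: Ward}, so the substance of the argument reduces to identifying the Ward functional $W_{k_\xi}^+$ with the insertion of $A(\xi)$ inside correlations. First I would verify that $k_\xi$ is a genuine meromorphic vector field on all of $\wh\C$ whose only pole is $\xi$. In the identity chart of $\C$ it is holomorphic on $\C\setminus\{\xi\}$ with a single simple pole at $\xi$; transforming a vector field (a $(-1,0)$-differential) to the chart $z\mapsto -1/z$ shows that $k_\xi$ extends holomorphically across $\infty$ with a triple zero there, as claimed in the statement. Hence the set of poles of $k_\xi$ is exactly $\{\xi\}$, and as long as $\xi\notin S_\XX$ the hypotheses needed to apply Proposition~\ref{prop: Ward} to $k_\xi$ and $\XX$ are satisfied.

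Next I would evaluate the residue in the definition \eqref{eq: W def0} of the Ward functional. Since $A$ lies in $\FF$, the correlation $z\mapsto\E\,A(z)\XX$ is holomorphic on $\wh\C\setminus S_\XX$, hence holomorphic in a punctured neighborhood of $\xi$ and in fact at $\xi$ itself. Therefore $k_\xi(z)\,\E\,A(z)\XX=\E\,A(z)\XX/(\xi-z)$ has a simple pole at $z=\xi$, and using $1/(\xi-z)=-1/(z-\xi)$ its residue is $\Res_{z=\xi}k_\xi\,\E\,A\XX=-\E\,A(\xi)\XX$. Substituting into $\E\,W_{k_\xi}^+\XX=-\Res_{\,\xi}k_\xi\,\E\,A\XX$ gives $\E\,W_{k_\xi}^+\XX=\E\,A(\xi)\XX$.

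Finally I would invoke Proposition~\ref{prop: Ward}, which asserts $\E\,W_{k_\xi}^+\XX=\E\,\LL_{k_\xi}^+\XX$ for any tensor product $\XX$ of fields in $\FF$; combining this with the previous identity yields $\E\,A(\xi)\XX=\E\,\LL_{k_\xi}^+\XX$, which is precisely \eqref{eq: Ward4C}. There is no serious obstacle beyond bookkeeping here: the genuinely substantive input---that the $1$-differential $\E\,k_\xi A\XX\,\dd z$ has total residue zero on $\wh\C$, so that the residue at $\xi$ is minus the sum of the residues at the nodes and $W_{k_\xi}^+$ therefore acts as $\LL_{k_\xi}^+$---is already packaged inside Proposition~\ref{prop: Ward} via Stokes' theorem. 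The only points requiring care are the residue sign and the verification that the triple zero at $\infty$ leaves $k_\xi$ pole-free there, so that no contribution from infinity is overlooked when Proposition~\ref{prop: Ward} is applied.
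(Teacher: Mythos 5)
Your proposal is correct and follows exactly the paper's route: the paper's entire proof is ``Applying Proposition~\ref{prop: Ward} to the vector field $k_\xi$, we obtain the following proposition,'' and you have simply filled in the details (the pole/zero structure of $k_\xi$ on $\wh\C$ and the residue computation identifying $\E\,W_{k_\xi}^+\XX$ with $\E\,A(\xi)\XX$), all of which check out, including the sign.
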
 

\begin{eg*} Let us consider $\XX=\OO[\bfs\tau],$ where a divisor $\bfs\tau$ satisfies the neutrality condition $(\NC_0).$ 
By Wick's calculus, we find the left-hand side in \eqref{eq: Ward4C} as 
$$\E\,A(\xi)\XX = \frac12 \big(\E\,J(\xi)\Phi[\bfs\tau]\big)^2\E\,\XX = \Big(\sum_j \frac12\frac{\tau_j^2}{(\xi-z_j)^2} +\sum_{j<k} \frac{\tau_j\tau_k}{(\xi-z_j)(\xi-z_k)} \Big)\E\,\XX.$$
It follows from the KZ equations that the right-hand side in \eqref{eq: Ward4C} is 
$$\LL^+_{k_\xi}\E\,\XX= \Big(\sum_j \frac{\lambda_j}{(\xi-z_j)^2} + \sum_j\sum_{k\ne j} \frac{\tau_j\tau_k}{(\xi-z_j)(z_j-z_k)}\Big)\E\,\XX,$$
where $\lambda_j = \frac12 \tau_j^2.$
We now verify \eqref{eq: Ward4C} for $\XX=\OO[\bfs\tau]:$
\begin{align*}\sum_{j<k} \frac{\tau_j\tau_k}{(\xi-z_j)(\xi-z_k)} &=\sum_{j<k} \tau_j\tau_k\Big(\frac1{\xi-z_j}-\frac1{\xi-z_k}\Big)\frac1{z_j-z_k}\\
&=\sum_{j<k} \frac{\tau_j\tau_k}{(\xi-z_j)(z_j-z_k)}+\sum_{k>j} \frac{\tau_j\tau_k}{(\xi-z_k)(z_k-z_j)}\\
&=\sum_{j<k} \frac{\tau_j\tau_k}{(\xi-z_j)(z_j-z_k)}+\sum_{j>k} \frac{\tau_j\tau_k}{(\xi-z_j)(z_j-z_k)}\\
&=\sum_{j\ne k} \frac{\tau_j\tau_k}{(\xi-z_j)(z_j-z_k)}.
\end{align*}
\end{eg*}

\subsubsec{The genus one case}
Given $\xi,\xi_0\in\T_\Lambda,$ let us consider the vector field $v_{\xi,\xi_0}$ given by 
$$ v_{\xi,\xi_0}(z) = \frac{\theta'(\xi-z)}{\theta(\xi-z)}-\frac{\theta'(\xi_0-z)}{\theta(\xi_0-z)}$$
in the identity chart of $\T_\Lambda.$ 
Applying Proposition~\ref{prop: Ward} to the vector field $v_{\xi,\xi_0},$ we obtain the following proposition. 
\begin{prop} \label{Ward4g1a}
If the fields $X_j$ belong to the OPE family $\FF,$ then in the $\T_\Lambda$-uniformization,
\begin{equation} \label{eq: WardTorus1}
\E\,(A(\xi)-A(\xi_0))\XX = \E\, \LL_{v_{\xi,\xi_0}}^+\XX,
\end{equation}
where $\XX = X_1(z_1)\cdots X_n(z_n).$ 
\end{prop}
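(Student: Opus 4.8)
The plan is to apply Proposition~\ref{prop: Ward} directly to the specific meromorphic vector field $v_{\xi,\xi_0}$, so the entire task reduces to verifying that this vector field is genuinely well-defined and meromorphic on $\T_\Lambda$, and then identifying its poles. First I would check that $v_{\xi,\xi_0}$ descends to the torus, i.e. that it is doubly periodic in $z$ with periods $1$ and $\tau$. The logarithmic derivative $\theta'(\xi-z)/\theta(\xi-z)$ is not itself periodic (the Jacobi theta function $\theta=\theta_1$ is only quasi-periodic, picking up an additive constant under $z\mapsto z+\tau$ of the form $\theta'/\theta \mapsto \theta'/\theta + \text{const}$), but the difference of two such terms cancels these additive quasi-periods. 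This is precisely why one forms the difference $v_{\xi,\xi_0}$ rather than using a single $\theta'/\theta$ term: the two quasi-periodicity anomalies are identical and subtract to zero, leaving a genuine elliptic (meromorphic) function on $\T_\Lambda$.

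Next I would locate the poles. Since $\theta(w)$ has simple zeros exactly at the lattice points $w\in\Lambda$, the function $\theta'(\xi-z)/\theta(\xi-z)$ has simple poles in $z$ precisely where $\xi-z\in\Lambda$, i.e.\ at $z=\xi$ (mod $\Lambda$), with residue $+1$ in the $z$-variable (the sign coming from differentiating $\xi-z$ with respect to $z$). Likewise the second term contributes a simple pole at $z=\xi_0$ with residue $-1$. Thus $v_{\xi,\xi_0}$ is a meromorphic vector field on $\T_\Lambda$ with exactly two simple poles, at $\xi$ and $\xi_0$, and these are the points $\xi_k$ appearing in the definition~\eqref{eq: W def0} of the Ward functional. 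At each simple pole the local expansion $v_{\xi,\xi_0}(z)=\tfrac{1}{\xi-z}+O(1)$ near $\xi$ and $v_{\xi,\xi_0}(z)=-\tfrac{1}{\xi_0-z}+O(1)$ near $\xi_0$ shows that the residue operator extracts exactly $\E\,A(\xi)\XX$ and $-\E\,A(\xi_0)\XX$, respectively.

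With these two facts in hand, the statement is immediate: Proposition~\ref{prop: Ward} gives $\E\,W_{v_{\xi,\xi_0}}^+\XX = \E\,\LL_{v_{\xi,\xi_0}}^+\XX$ for any tensor product $\XX$ of fields in the OPE family $\FF$ (here $b=0$, so $A=-\tfrac12 J\odot J$ is the stress tensor and $T=A+\E\,T$). Unwinding the definition~\eqref{eq: W def0}, the left-hand side is $-\Res_\xi v_{\xi,\xi_0}\,\E\,A\XX - \Res_{\xi_0} v_{\xi,\xi_0}\,\E\,A\XX = \E\,(A(\xi)-A(\xi_0))\XX$, using the residue computation above (the minus sign in the definition combines with the residues $+1$ at $\xi$ and $-1$ at $\xi_0$ to produce the difference $A(\xi)-A(\xi_0)$). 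This yields exactly~\eqref{eq: WardTorus1}. One should take care that $\xi,\xi_0\notin S_\XX$ so that the poles of $v_{\xi,\xi_0}$ are disjoint from the nodes of $\XX$, which is the hypothesis under which Proposition~\ref{prop: Ward} applies.

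The main obstacle I anticipate is the verification of double-periodicity, specifically getting the quasi-periodicity factors of $\theta_1$ exactly right so that the additive anomalies in $\theta'/\theta$ cancel cleanly in the difference. Under $z\mapsto z+\tau$ one has $\theta(w-\tau)=-\ee^{-\pi i\tau+2\pi i w}\theta(w)$ (up to the standard normalization), so $\theta'/\theta$ shifts by a constant $2\pi i$, and one must confirm this constant is independent of $\xi$ versus $\xi_0$ so that it drops out of $v_{\xi,\xi_0}$. Once this bookkeeping is done correctly, the rest is a routine residue count and a direct appeal to the already-established Proposition~\ref{prop: Ward}; no genuinely new analytic input is required beyond the meromorphy and periodicity of the chosen vector field.
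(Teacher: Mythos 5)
Your proposal is correct and is exactly the paper's route: the paper's entire proof is the one-line remark ``Applying Proposition~\ref{prop: Ward} to the vector field $v_{\xi,\xi_0}$, we obtain the following proposition,'' and you have simply supplied the routine verifications (ellipticity of $v_{\xi,\xi_0}$ via cancellation of the quasi-periodicity anomalies, the location of its two simple poles, and the residue count) that the authors leave implicit. One small bookkeeping remark: as a function of $z$ the residue of $\theta'(\xi-z)/\theta(\xi-z)$ at $z=\xi$ is $-1$ (since $\tfrac{1}{\xi-z}=\tfrac{-1}{z-\xi}$), which is what makes $-\Res_{\xi}$ in \eqref{eq: W def0} produce $+\E\,A(\xi)\XX$; your final identity is nonetheless correct.
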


Given $\xi\in\T_\Lambda,$ let us consider the vector field $v_{\xi}$ on the torus $\T_\Lambda$ given by 
$$v_\xi(z) = -\wp(\xi-z).$$ 
Applying Proposition~\ref{prop: Ward} to the vector field $v_\xi,$ we obtain the following proposition. 
\begin{prop}
If the fields $X_j$ belong to the OPE family $\FF,$ then in the $\T_\Lambda$-uniformization,
\begin{equation} \label{eq: WardTorus2}
\E\,\pa A(\xi)\XX = \E\, \LL_{v_\xi}^+\XX,
\end{equation}
where $\XX = X_1(z_1)\cdots X_n(z_n).$ 
\end{prop}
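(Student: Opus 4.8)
The plan is to apply Proposition~\ref{prop: Ward} directly to the meromorphic vector field $v_\xi(z)=-\wp(\xi-z)$. The essential structural point, in contrast with the field $v_{\xi,\xi_0}$ used in Proposition~\ref{Ward4g1a}, is that $\wp$ is doubly periodic; hence $v_\xi$ descends to a genuine single-valued meromorphic vector field on $\T_\Lambda$, with a single pole (of order two) at $z=\xi$ and no other poles. Thus only the point $\xi$ contributes to the sum in the definition~\eqref{eq: W def0} of the Ward functional, and Proposition~\ref{prop: Ward} yields
$$\E\,W_{v_\xi}^+\XX = -\Res_{z=\xi} v_\xi(z)\,\E\,A(z)\XX = \E\,\LL_{v_\xi}^+\XX,$$
valid whenever $\xi\notin S_\XX$. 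It therefore remains only to identify the residue on the left with $\E\,\pa A(\xi)\XX$.

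First I would record the Laurent expansion of $v_\xi$ at its pole. Writing $w=\xi-z$ and using the expansion~\eqref{eq: wp expansion}, namely $\wp(w)=w^{-2}+\tfrac{g_2}{20}w^2+O(w^4)$, together with the evenness of $\wp$, one finds
$$v_\xi(z) = -\frac1{(z-\xi)^2} + O\big((z-\xi)^2\big)\qquad (z\to\xi).$$
The crucial feature is that this is a pure double pole: the coefficient of $(z-\xi)^{-1}$ vanishes, because the Laurent series of $\wp$ about the origin contains only even powers of its argument. Consequently, when $v_\xi(z)$ is multiplied by a factor holomorphic at $\xi$, the resulting residue receives no contribution proportional to the value of that factor at $\xi$.

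Next I would expand $\E\,A(z)\XX$, which is holomorphic near $\xi$ since $\xi\notin S_\XX$, as a Taylor series $\E\,A(z)\XX = \E\,A(\xi)\XX + (z-\xi)\,\E\,\pa A(\xi)\XX + O\big((z-\xi)^2\big)$; here I use that $\pa$ commutes with $\E$ and with forming correlations, so the linear coefficient is exactly $\E\,\pa A(\xi)\XX$. Multiplying the two expansions, the double pole of $v_\xi$ picks out precisely the linear Taylor coefficient, giving $\Res_{z=\xi} v_\xi(z)\,\E\,A(z)\XX = -\E\,\pa A(\xi)\XX$. Substituting into the displayed identity produces $\E\,\pa A(\xi)\XX = \E\,\LL_{v_\xi}^+\XX$, which is \eqref{eq: WardTorus2}.

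There is no serious obstacle in this argument, since Proposition~\ref{prop: Ward} has already done the analytic work of converting the boundary residues into Lie derivatives. The one point that genuinely requires care is the bookkeeping of the Laurent expansion of $v_\xi$ at $\xi$: one must verify that $\wp$ has no simple-pole term in its expansion about the origin, for otherwise the residue would also involve $\E\,A(\xi)\XX$, and the right-hand side would acquire an extra term rather than the single derivative $\E\,\pa A(\xi)\XX$.
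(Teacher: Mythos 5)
Your proof is correct and follows the same route as the paper, which simply invokes Proposition~\ref{prop: Ward} for the elliptic vector field $v_\xi$; the residue computation you supply (using that $\wp$ has a pure double pole with no $(z-\xi)^{-1}$ term, so $-\Res_{z=\xi} v_\xi\,\E\,A\,\XX=\E\,\pa A(\xi)\XX$) is exactly the step the paper leaves implicit.
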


\begin{eg*}
Ward's equation~\eqref{eq: WardTorus2} with $\XX = J(z)J(z_0)$ gives rise to the addition theorem for Weierstrass $\wp$-function: \index{addition theorem!for Weierstrass $\wp$-function}
$$\begin{vmatrix}
1 & 1 & 1 \\
\wp(\xi-z) & \wp(z-z_0) & \phantom{-}\wp(\xi-z_0) \\
\wp'(\xi-z) & \wp'(z-z_0) & -\wp'(\xi-z_0) \\
\end{vmatrix} = 0.$$
Recall that in the $\T_\Lambda$-uniformization, $\E\,J(z)J(z_0) = -\wp(z-z_0)-2\,\E\,T,$ where $\E\,T$ is constant. 
By Wick's calculus, we find the left-hand side in \eqref{eq: WardTorus2} as 
$$\pa_\xi \E\,A(\xi)\XX = -\pa_\xi\big(\wp(\xi-z)\wp(\xi-z_0) +2\,\E\,T(\wp(\xi-z)+\wp(\xi-z_0))\big). $$
As a Lie derivative of $\E\,\XX,$ the right-hand side in \eqref{eq: WardTorus2} is found as 
\begin{align*}
\LL^+_{v_\xi}\E\,\XX 
&= \wp(\xi-z)\pa_z\wp(z-z_0) - \wp'(\xi-z)(\wp(z-z_0)+2\,\E\,T)\\
&+ \wp(\xi-z_0)\pa_{z_0}\wp(z-z_0) - \wp'(\xi-z_0)(\wp(z-z_0)+2\,\E\,T).
\end{align*}
\end{eg*}

\begin{eg*} 
It is well known that the Weierstrass $\wp$-function satisfies the following differential equation
\begin{equation} \label{eq: wpwp'}
\wp\wp' = \frac1{12} \wp'''.
\end{equation}
Ward's equation~\eqref{eq: WardTorus2} with $\XX = T(z)$ gives rise to the above formula.
Applying Wick's calculus to the left-hand side of \eqref{eq: WardTorus2}, we have 
$$\pa_\xi\E\,A(\xi)T(z) = \frac12\pa_\xi (\E\,J(\xi)J(z))^2 = (\wp(\xi-z)+2\,\E\,T)\wp'(\xi-z).$$ 
On the other hand, as a Lie derivative of Schwarzian form $\E\,T(z)$ of order $1/12,$ the right-hand side in \eqref{eq: WardTorus2} is found as 
$$\LL_{v_\xi} \E\,T(z) = 2 \, \wp'(\xi-z)\,\E\,T + \frac1{12}\wp'''(\xi-z).$$ 
\end{eg*}

Before presenting the next example with $\XX = \OO[\tau\cdot z_1- \tau\cdot z_2],$ let us recall some basic properties of \emph{Weierstrass $\zeta$-function and $\sigma$-function}. \index{Weierstrass $\zeta$-function} \index{Weierstrass $\sigma$-function}
(See \cite{Chandrasekharan} for more details.)
Weierstrass $\zeta$-function
$$\zeta(z):=\frac1z + \sum_{\lambda\in\Lambda\setminus\{0\}} \Big(\frac1{z-\lambda} + \frac1\lambda + \frac z{\lambda^2}\Big)$$
is a meromorphic odd function which has simple poles at $\lambda\in\Lambda$ with residue 1.
Although it is not an elliptic function, every elliptic function can be expressed by Weierstrass $\zeta$-function and its derivatives, e.g. $\wp = -\zeta'.$
Weierstrass $\zeta$-function has the following quasi-periodicities
\begin{equation} \label{eq: zeta periodicities}
\zeta(z + m + n\tau) =  \zeta(z) + 2m\eta_1 + 2n\eta_2, \qquad \big(\eta_1 = \zeta(1/2),\quad \eta_2 = \zeta(\tau/2)\big).
\end{equation}
It is well known that
\begin{equation} \label{eq: eta1}
\eta_1 = -\frac16 \frac{\theta'''(0)}{\theta'(0)}.
\end{equation}
Later, we derive \eqref{eq: eta1} from Ward's equation, see the example below Theorem~\ref{lem: EOKM0}.

Weierstrass $\sigma$-function
$$\sigma(z):=z \prod_{\lambda\in\Lambda\setminus\{0\}} \Big(1-\frac z\lambda\Big)\ee^{\frac z\lambda + \frac{z^2}{2\lambda^2}}$$
is an entire odd function which has simple zeros at $\lambda\in\Lambda.$ 
It is related to Weierstrass $\zeta$-function as $\zeta = \sigma'/\sigma$ and has the following quasi-periodicities
\begin{equation} \label{eq: sigma periodicities}
\sigma(z + 1) =  -\sigma(z)\ee^{2\eta_1(z + \frac12)},\qquad \sigma(z+\tau)  = -\sigma(z)\ee^{2\eta_2(z + \frac12\tau)}.
\end{equation}
Weierstrass $\sigma$-function and Jacobi $\theta$-function are related as
\begin{equation} \label{eq: sigma theta}
\sigma(z) = \frac{\theta(z)}{\theta'(0)} \ee^{\eta_1z^2}.
\end{equation}
Thus we find the relation between Jacobi $\theta$-function and Weierstrass $\zeta$-function as 
\begin{equation} \label{eq: eta theta}
\frac{\theta'(z)}{\theta(z)} = \zeta(z) - 2\eta_1 z.
\end{equation}

\begin{eg*} 
Ward's equation~\eqref{eq: WardTorus2} with $\XX = \OO[\tau\cdot z_1- \tau\cdot z_2]$ gives rise to the addition theorem for Weierstrass $\zeta$-function: \index{addition theorem! for Weierstrass $\zeta$-function}
\begin{equation} \label{eq: addition4zeta}
\zeta(z+w) = \zeta(z) + \zeta(w) + \frac12\,\frac{\zeta''(z)-\zeta''(w)}{\zeta'(z)-\zeta'(w)}.
\end{equation}
For $\XX = \OO[\bfs\tau],$ application of Wick's calculus to the left-hand side of \eqref{eq: WardTorus2} leads to 
\begin{align*}
\pa_\xi \E\,A(\xi)\XX &= \frac12\pa_\xi\big(\E\,J(\xi)\Phi[\bfs\tau]\big)^2 \E\,\XX\\
&=\Big(-\sum\tau_j\frac{\theta'(\xi-z_j)}{\theta(\xi-z_j)} + \frac{2\pi i}{\Im\,\tau} \sum \tau_j\Im\,z_j\Big)\Big(\sum\tau_j\wp(\xi-z_j)\Big)\E\,\XX.\
\end{align*}
On the other hand, as the $\C$-linear part of Lie-derivative of the differential $\E\,\XX$, we find the right-hand side in \eqref{eq: WardTorus2} as 
\begin{align*}
\LL^+_{v_\xi} \E\,\XX &= \sum \frac12\tau_j^2\wp'(\xi-z_j)\E\,\XX\\
&-\sum_j \tau_j\wp(\xi-z_j)\Big(\sum_{k\ne j}\tau_k\frac{\theta'(z_j-z_k)}{\theta(z_j-z_k)}-2\pi i \sum_k\tau_k\, \frac{\Im\,z_k}{\Im\,\tau}\Big)\E\,\XX.
\end{align*}
Thus we have 
\begin{equation} \label{eq: Ward4EO}
\sum_j \frac12\tau_j^2\wp'(\xi-z_j)\ =  -\sum_j\tau_j\wp(\xi-z_j)\Big(\sum\tau_j\frac{\theta'(\xi-z_j)}{\theta(\xi-z_j)}-\sum_{k\ne j}\tau_k\frac{\theta'(z_j-z_k)}{\theta(z_j-z_k)}\Big).
\end{equation}
In particular, a choice of $\bfs\tau = \tau\cdot z_1 - \tau\cdot z_2$ gives 
\begin{align*}
\frac12\Big(&\wp'(\xi-z_1) + \wp'(\xi-z_2)\Big) \\
&= \Big(\wp(\xi-z_1) - \wp(\xi-z_2)\Big) \Big(-\frac{\theta'(\xi-z_1)}{\theta(\xi-z_1)}+\frac{\theta'(\xi-z_2)}{\theta(\xi-z_2)} -\frac{\theta'(z_1-z_2)}{\theta(z_1-z_2)}\Big).
\end{align*}
Setting $z = \xi-z_1$ and $w = -\xi+z_2,$ we have the addition theorem for $\theta'/\theta:$
$$\frac{\theta'z+w)}{\theta(z+w)} = \frac{\theta'(z)}{\theta(z)} + \frac{\theta'(w)}{\theta(w)} + \frac12\,\frac{\wp'(z)-\wp'(w)}{\wp(z)-\wp(w)}.$$
The addition theorem~\eqref{eq: addition4zeta} for Weierstrass $\zeta$-function now follows from \eqref{eq: wp theta} and \eqref{eq: eta theta}.
Let us apply the addition theorem for $\theta'/\theta$ to \eqref{eq: Ward4EO}: 
\begin{align*}
\sum_j &\frac12\tau_j^2\wp'(\xi-z_j) \\
&= -\sum_j\tau_j\wp(\xi-z_j)\Big(\tau_j\frac{\theta'(\xi-z_j)}{\theta(\xi-z_j)}+\sum_{k\ne j}\tau_k\big(\frac{\theta'(\xi-z_k)}{\theta(\xi-z_k)}-\frac{\theta'(z_j-z_k)}{\theta(z_j-z_k)}\big)\Big).
\end{align*}
The right-hand side simplifies to
$$-\sum_j\tau_j\wp(\xi-z_j)\Big(\tau_j\frac{\theta'(\xi-z_j)}{\theta(\xi-z_j)}+\sum_{k\ne j}\tau_k\big(\frac{\theta'(\xi-z_j)}{\theta(\xi-z_j)}+\frac12  \frac{\wp'(\xi-z_j) + \wp'(\xi-z_k)}{\wp(\xi-z_j) - \wp(\xi-z_k)}\big)\Big).$$
By the neutrality condition $\sum_j \tau_j = 0,$ we have 
\begin{equation} \label{eq: Ward4Owp}
\sum_j \tau_j^2 \wp'(\xi-z_j) = - \sum_j\sum_{k\ne j}\tau_j\tau_k \frac{\wp'(\xi-z_j) + \wp'(\xi-z_k)}{\wp(\xi-z_j) - \wp(\xi-z_k)}\wp(\xi-z_j).
\end{equation}
\end{eg*}

\subsection{Ward identities} \label{ss: Ward identities}
Given a background charge $\bfs\beta,$ let $\bfs q :=\supp\,\bfs\beta.$ 
Recall the definition of the puncture operator $\PP_{\bfs \beta},$ \index{puncture operator}
$$\PP_{\bfs \beta}:=\CC_{(b)}[\bfs\beta] = \E\,\OO_{\bfs\beta_*}[\bfs\beta-\bfs\beta_*],$$
where $\bfs\beta_* = b\chi\cdot q_*$ and $q_*$ is a reference point. 
Given a non-random meromorphic vector field $v$ with poles at $\xi_k$'s, we define the Ward functional $W_v^+$ by 
$$W_v^+ : = -\frac1{2\pi i}\sum_k \oint_{(\xi_k)} vT_{\bfs \beta}$$
in correlations with any tensor product of Fock space fields. 
To prove Theorem~\ref{main: Ward identity}, we need the following Lemma. 
It states that $W_v^+ = \PP_{\bfs \beta}^{-1}\LL_v^+\PP_{\bfs \beta}$ in application to the trivial string $\XX\equiv 1$ within correlation.
\begin{lem}
For each $q_k\in \bfs q,$ we have 
\begin{equation} \label{eq: res vET}
\frac1{2\pi i}\oint_{(q_k)} v\,\E\,T_{\bfs\beta} = \lambda(\beta_k)v'(q_k) + v(q_k) \PP_{\bfs \beta}^{-1}\pa_{q_k}\PP_{\bfs \beta}.
\end{equation}
\end{lem}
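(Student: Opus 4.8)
The plan is to extract both terms on the right-hand side of \eqref{eq: res vET} from the local expansion of $\E\,T_{\bfs\beta}$ near the puncture $q_k$, and then to identify the simple-pole coefficient with the logarithmic derivative of $\PP_{\bfs\beta}$. First I would use \eqref{eq: ETbeta}, namely $\E\,T_{\bfs\beta} = \E\,T - \frac12 j_{\bfs\beta}^2 + ib\pa j_{\bfs\beta}$, together with the fact (established in the subsection on modifications of the current field) that $j_{\bfs\beta} = \pa\varphi_{\bfs\beta}$ is a $\PS(ib)$ form whose only singularity at $q_k$ is a simple pole of residue $-i\beta_k$, while $\E\,T$ is a holomorphic Schwarzian form and hence regular at $q_k$. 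Writing, in a chart centered at $q_k$,
$$j_{\bfs\beta}(z) = \frac{-i\beta_k}{z-q_k} + \Reg_{q_k}j_{\bfs\beta} + O(z-q_k),$$
and substituting, the $(z-q_k)^{-2}$ coefficient of $\E\,T_{\bfs\beta}$ comes out to $\tfrac12\beta_k^2 - b\beta_k = \lambda(\beta_k)$, and the $(z-q_k)^{-1}$ coefficient to $i\beta_k\,\Reg_{q_k}j_{\bfs\beta}$ (the $ib\pa j_{\bfs\beta}$ term contributing only to the double pole).

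Next, since $v$ is holomorphic near $q_k$, I would expand $v(z) = v(q_k) + v'(q_k)(z-q_k) + \cdots$ and read off the residue of the product,
$$\frac1{2\pi i}\oint_{(q_k)} v\,\E\,T_{\bfs\beta} = \lambda(\beta_k)v'(q_k) + i\beta_k\,\Reg_{q_k}j_{\bfs\beta}\cdot v(q_k).$$
Comparing with \eqref{eq: res vET}, the entire lemma reduces to the single identity
$$\PP_{\bfs\beta}^{-1}\pa_{q_k}\PP_{\bfs\beta} = i\beta_k\,\Reg_{q_k}j_{\bfs\beta}.$$

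To establish this I would compute $\pa_{q_k}\log\PP_{\bfs\beta} = \pa_{q_k}\log\CC_{(b)}[\bfs\beta]$ from the explicit formula \eqref{eq: Cb}, using the formal extension of \eqref{eq: cPhiSigma} to the divisor $\bfs\beta$ and the representation \eqref{eq: jbeta} for $j_{\bfs\beta}$. The $\omega$-dependent factor contributes $-2b\beta_k\pa_{q_k}\log|\omega(q_k)| = -b\beta_k\,\pa\omega(q_k)/\omega(q_k)$, which is exactly $i\beta_k$ times the $ib\,\pa\omega/\omega$ part of $j_{\bfs\beta}(q_k)$. The derivatives of $-c[\bfs\beta]$ and of $\E\,\Phi[\bfs\beta_0]\Phi[\bfs\beta]$ then produce $i\beta_k$ times the regular part of $i\,\E\,J(z)\Phi[\bfs\beta-\bfs\beta_0]$ at $q_k$: the off-diagonal Green's-function terms match via $\pa_{q_k}G_{q_l}(q_k) = \tfrac12\E\,J(q_k)\Phi(q_l)$, while the diagonal self-interaction $\beta_k^2 c(q_k)$ differentiates to $\beta_k^2\,\pa c(q_k) = \beta_k^2\,\E\,J*\Phi(q_k)$, where the last equality is obtained by comparing \eqref{eq: cPhiSigma} with \eqref{eq: KZ0a}. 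In other words, this identity is precisely the KZ equation \eqref{eq: KZ} read at the node $z_j = q_k$ after applying the rooting rule to the coincidence $z_j = q_k$.

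The main obstacle is exactly this last step: the divisor $\bfs\beta$ defining $\PP_{\bfs\beta}$ carries the moving point $q_k$ in its own support, so the clean KZ equation \eqref{eq: KZ}, valid only for $\supp\,\bfs\tau\cap\supp\,\bfs\beta=\emptyset$, does not apply verbatim. The work is to verify that renormalizing the coincident pair $\Phi(q_k)\Phi(q_k)$ by the $*$-product reproduces the finite part $\Reg_{q_k}j_{\bfs\beta}$ with no anomalous contribution, i.e. that the regularized self-term is precisely $\E\,J*\Phi(q_k)=\pa c(q_k)$. Once that normalization is pinned down, the remaining matching between $\pa_{q_k}\log\CC_{(b)}[\bfs\beta]$ and $i\beta_k\,\Reg_{q_k}j_{\bfs\beta}$ is routine bookkeeping of the off-diagonal terms.
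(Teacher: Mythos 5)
Your proposal is correct and follows essentially the same route as the paper: expand $\E\,T_{\bfs\beta}=\E\,T-\tfrac12 j_{\bfs\beta}^2+ib\,\pa j_{\bfs\beta}$ at $q_k$, read off the double- and simple-pole coefficients $\lambda(\beta_k)$ and $i\beta_k\Reg_{q_k}j_{\bfs\beta}$, and identify the latter with $\PP_{\bfs\beta}^{-1}\pa_{q_k}\PP_{\bfs\beta}$ via the KZ equations. The ``main obstacle'' you flag is in fact already absorbed by the paper's setup: writing $\PP_{\bfs\beta}=\E\,\OO_{\bfs\beta_*}[\bfs\beta-\bfs\beta_*]$ puts $q_k$ in the divisor $\bfs\tau=\bfs\beta-\bfs\beta_*$ rather than in the background charge, so \eqref{eq: KZ} applies at the node $q_k$ with the self-interaction already appearing as the renormalized term $\tau_j^2\,\E\,J*\Phi(z_j)$, which is exactly your identity $\pa c(q_k)=\E\,J*\Phi(q_k)$ from \eqref{eq: KZ0a}.
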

\begin{proof}
Recall \eqref{eq: ETbeta}, $\E\,T_{\bfs\beta} = \E\, T -\frac12 j_{\bfs\beta} ^2 + ib\pa j_{\bfs\beta},$
where $j_{\bfs\beta} = \E\,J_{\bfs\beta}.$ 
As we mentioned in the last example of Subsection~\ref{ss: T}, 
$\E\, T$ or the Bergman projective connection is holomorphic. 
Thus $\E\, T$ has no contribution to \eqref{eq: res vET}.
The 1-point function $j_{\bfs\beta}$ (see \eqref{eq: jbeta} for its formula) has the following expansion: as $z\to q_k,$ 
$$j_{\bfs\beta}(z) = i\Big( -\frac{\beta_k}{z-q_k} + b \frac{\pa\omega(q_k)}{\omega(q_k)} + \E\,J(q_k)*\Phi[\bfs\beta-\bfs\beta_0] + o(1) \Big).$$
By the above expansion and \eqref{eq: ETbeta}, we find 
$$\frac1{2\pi i}\oint_{(q_k)} v\,\E\,T_{\bfs\beta} = \lambda(\beta_k)v'(q_k) - v(q_k) \Big(b\beta_k \frac{\pa\omega(q_k)}{\omega(q_k)}+\beta_k\E\,J(q_k)*\Phi[\bfs\beta-\bfs\beta_0]\Big).$$
It follows from the KZ equations~\eqref{eq: KZ} that 
\begin{align*}
 \PP_{\bfs \beta}^{-1}\pa_{q_k} \PP_{\bfs \beta}& = \frac{\pa_{q_k}\E\,\OO_{\bfs\beta_*}[\bfs\beta-\bfs\beta_*]}{\E\,\OO_{\bfs\beta_*}[\bfs\beta-\bfs\beta_*]} \\
&= - b\beta_k \frac{\pa\omega(q_k)}{\omega(q_k)}-\beta_k\E\,J(q_k)\Phi[\bfs\beta_*-\bfs\beta_0]-\beta_k\E\,J(q_k)*\Phi[\bfs\beta-\bfs\beta_*]\\
&= -b\beta_k \frac{\pa\omega(q_k)}{\omega(q_k)}-\beta_k\E\,J(q_k)*\Phi[\bfs\beta-\bfs\beta_0],
\end{align*}
which completes the proof. 
\end{proof}

\begin{def*}
We say $X_{q}\in \FF_{\bfs\beta}(q)$ if $X_q = X_1*(\cdots*(X_n*1_q)\cdots)$ for some fields $X_1,\cdots X_n\in \FF_{\bfs\beta}(M^*).$
The extended OPE family $\FF_{\bfs\beta}(M)$ is the collection of Fock space strings
$$\{\YY = \XX X_{q_1} \cdots X_{q_n}\,|\,\XX\in\FF_{\bfs\beta}(M^*), X_{q_k}\in \FF_{\bfs\beta}(q_k)\}.$$
\end{def*}

\begin{eg*}
Let us compute $(J_{\bfs\beta})_{q_l}, q_l \in \supp\,\bfs\beta$ in the genus zero case.  
In terms of $w:(M,q_*)\to(\wh\C,\infty),$
$$j_{\bfs\beta}(z) = ib\frac{w''(z)}{w'(z)} -  i\sum_k \frac{\beta_k \, w'(z)}{w(z)-w(q_k)}.$$
We have 
$$(J_{\bfs\beta})_{q_l} = J(q_l) + i \big(b-\frac{\beta_l}2\big)\frac{w''(q_l)}{w'(q_l)} - i\sum_{k\ne l} \frac{\beta_k \, w'(q_l)}{w(q_l)-w(q_k)}.$$
Here, we use 
$$\Big(\frac{w'}{w-w(q)}\Big)*1_{q} = \lim_{z\to q} \Big(\frac{w'(z)}{w(z)-w(q)}-\frac1{z-q}\Big)=\frac12 \frac{w''(q)}{w'(q)}.$$
\end{eg*}

We now prove Theorem~\ref{main: Ward identity}.
\begin{proof}[Proof of Theorem~\ref{main: Ward identity}]
Analysis similar to that in the proof of Proposition~\ref{prop: Ward} shows that 
$$W_v^+ = \sum_{z_j\in S_\XX\sm{\bfs q}} \frac1{2\pi i}\oint_{(z_j)} v\,T_{\bfs\beta} + \sum_{q_k\in\bfs q} \frac1{2\pi i}\oint_{(q_k)} v\,T_{\bfs\beta} $$
in correlations with any tensor product $\XX$ of Fock space field. 
Here $S_\XX$ is the set of all nodes of $\XX.$
For any tensor product $\XX$ in the extend OPE family $\FF_{\bfs\beta}(M),$ we have 
$$\E\sum_{z_j\in S_\XX\sm{\bfs q}} \frac1{2\pi i}\oint_{(z_j)} v\,T_{\bfs\beta} \XX = \E\sum_{z\in S_\XX\sm{\bfs q}} \LL_v^+(z) \XX =\E\, \PP_{\bfs \beta}^{-1} \sum_{z\in S_\XX\sm{\bfs q}} \LL_v^+(z) \PP_{\bfs \beta}\XX.$$
It remains to check that for each $q_k\in\bfs q,$
$$\frac1{2\pi i}\oint_{(q_k)} v\,T_{\bfs\beta} = \PP_{\bfs \beta}^{-1}\LL_v^+(q_k)\PP_{\bfs \beta}$$
in correlations with fields in $\FF_{\bfs\beta}(M).$
If $(vA_{\bfs\beta})*_{-1}X = \LL_v^+X$ and if $(vA_{\bfs\beta})*_{-1}Y_{q_k} = \LL_v^+(q_k)Y_{q_k},$ then 
$$(vA_{\bfs\beta})*_{-1}(X*Y_{q_k}) = \LL_v^+(q_k)(X*Y_{q_k}).$$
Thus in correlations with fields in $\FF_{\bfs\beta}(M),$ it follows that  
$$\frac1{2\pi i}\oint_{(q_k)} vA_{\bfs\beta} = \LL_v^+(q_k).$$ 
Combining this fact with the previous lemma, we find 
$$\frac1{2\pi i}\oint_{(q_k)} v\,T_{\bfs\beta} = \LL_v^+(q_k) + \frac1{2\pi i}\oint_{(q_k)} v\,\E\,T_{\bfs\beta} = \LL_v^+(q_k) + \lambda(\beta_k)v'(q_k) + v(q_k) \PP_{\bfs \beta}^{-1}\pa_{q_k}\PP_{\bfs \beta}$$
in correlations with fields in $\FF_{\bfs\beta}(M).$
Since $\PP_{\bfs \beta}$ is a non-random differential of conformal dimension $(\lambda(\beta_k),\lambda(\beta_k))$ at $q_k,$ it follows from Leibniz's rule that 
$$\PP_{\bfs \beta}^{-1}\LL_v^+(q_k)\PP_{\bfs \beta} = \LL_v^+(q_k) + \lambda(\beta_k)v'(q_k) + v(q_k) \PP_{\bfs \beta}^{-1}\pa_{q_k}\PP_{\bfs \beta}$$
in correlations with fields in $\FF_{\bfs\beta}(M).$
\end{proof}

\subsubsec{Genus zero case} We derive the following version of Ward's equations on the Riemann sphere.

\begin{prop} \label{Ward equation0}
In the $\wh\C$-uniformization, for any tensor product $\XX_{\bfs\beta}$ of fields in the extended OPE family $\FF_{\bfs\beta},$
\begin{equation} \label{eq: Ward equation0}
\E\,T_{\bfs\beta}(\xi) \XX_{\bfs\beta} =  \E\,T_{\bfs\beta}(\xi)\,\E\,\XX_{\bfs\beta} + \E\, \LL_{k_\xi}^+\XX_{\bfs\beta}, \qquad \Big(k_\xi(z) = \frac1{\xi-z}\Big)
\end{equation}
and 
\begin{equation} \label{eq: Ward equation0'}
\E\,\pa T_{\bfs\beta}(\xi) \XX_{\bfs\beta} =  \E\,\pa T_{\bfs\beta}(\xi)\,\E\,\XX_{\bfs\beta} + \E\, \LL_{\ti k_\xi}^+\XX_{\bfs\beta}, \qquad \Big(\ti k_\xi(z) = -\frac1{(\xi-z)^2}\Big).
\end{equation}
\end{prop}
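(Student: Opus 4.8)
The plan is to read off both identities from the Ward identity of Theorem~\ref{main: Ward identity} applied to the vector field $k_\xi$, combined with the residue formula~\eqref{eq: res vET}. First I would extract from Theorem~\ref{main: Ward identity} the shape it takes for a general meromorphic vector field $v$ and a general string $\XX_{\bfs\beta}\in\FF_{\bfs\beta}(M)$, namely $\E\,W_v^+\XX_{\bfs\beta}=\E\,\PP_{\bfs\beta}^{-1}\LL_v^+\PP_{\bfs\beta}\,\XX_{\bfs\beta}$. Since $\PP_{\bfs\beta}=\CC_{(b)}[\bfs\beta]$ is a non-random differential of dimension $(\lambda(\beta_k),\lambda(\beta_k))$ at each $q_k\in\bfs q$, Leibniz's rule splits the operator $\PP_{\bfs\beta}^{-1}\LL_v^+\PP_{\bfs\beta}$ into the scalar function $\PP_{\bfs\beta}^{-1}(\LL_v^+\PP_{\bfs\beta})$ plus $\LL_v^+$; summing \eqref{eq: res vET} over $q_k$ identifies that scalar with $\sum_{q_k}\frac1{2\pi i}\oint_{(q_k)}v\,\E\,T_{\bfs\beta}$. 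This produces the working formula
\begin{equation*}
\E\,W_v^+\XX_{\bfs\beta}=\Big(\sum_{q_k\in\bfs q}\frac1{2\pi i}\oint_{(q_k)}v\,\E\,T_{\bfs\beta}\Big)\E\,\XX_{\bfs\beta}+\E\,\LL_v^+\XX_{\bfs\beta}.
\end{equation*}

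Next I would put $v=k_\xi$ in the $\wh\C$-uniformization, whose only finite pole sits at $\xi$ and which has a triple zero at infinity. On the left, because $\xi\notin S_\XX\cup\bfs q$ the map $z\mapsto\E\,T_{\bfs\beta}(z)\XX_{\bfs\beta}$ is holomorphic near $\xi$, so the single contour in the definition of $W_{k_\xi}^+$ evaluates to $\E\,W_{k_\xi}^+\XX_{\bfs\beta}=\E\,T_{\bfs\beta}(\xi)\XX_{\bfs\beta}$. For the bracketed coefficient I would invoke the residue theorem for the meromorphic $1$-differential $k_\xi\,\E\,T_{\bfs\beta}$ on $\wh\C$: its residue at the simple pole $\xi$ is $-\E\,T_{\bfs\beta}(\xi)$, and the triple zero of $k_\xi$ at infinity together with the decay of $\E\,T_{\bfs\beta}$ forced by $(\NC_b)$ leaves no residue there, so the residues at the double poles $q_k$ must sum to $\sum_{q_k}\frac1{2\pi i}\oint_{(q_k)}k_\xi\,\E\,T_{\bfs\beta}=\E\,T_{\bfs\beta}(\xi)$. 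Substituting both evaluations into the working formula is exactly \eqref{eq: Ward equation0}.

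Finally, \eqref{eq: Ward equation0'} I would obtain by differentiating \eqref{eq: Ward equation0} in $\xi$. One has $\ti k_\xi=\pa_\xi k_\xi$, so $\pa_\xi$ turns $\E\,T_{\bfs\beta}(\xi)$ into $\E\,\pa T_{\bfs\beta}(\xi)$ in both $T_{\bfs\beta}$-terms, while $\pa_\xi\LL_{k_\xi}^+=\LL_{\ti k_\xi}^+$ because $\LL_v^+$ depends $\C$-linearly on $v$; this sends the Lie-derivative term to $\E\,\LL_{\ti k_\xi}^+\XX_{\bfs\beta}$. The one delicate point is the residue-at-infinity step: the equality of the two evaluations of $\E\,T_{\bfs\beta}(\xi)$ — as minus the residue at $\xi$ and as the sum of residues at the punctures $q_k$ — is exactly what manufactures the ``disconnected'' term $\E\,T_{\bfs\beta}(\xi)\E\,\XX_{\bfs\beta}$, and it hinges on $\E\,T_{\bfs\beta}$ decaying like $O(z^{-3})$ at infinity, which is the neutrality condition $\sum_k\beta_k=b\chi(M)$ in disguise.
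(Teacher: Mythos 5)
Your proof is correct and follows the paper's own route: both arguments rest on Theorem~\ref{main: Ward identity}, the evaluation $\E\,W_{k_\xi}^+\XX_{\bfs\beta}=\E\,T_{\bfs\beta}(\xi)\XX_{\bfs\beta}$ by a residue at $\xi$, and Leibniz's rule splitting $\PP_{\bfs\beta}^{-1}\LL_{k_\xi}^+\PP_{\bfs\beta}$ into a scalar plus $\LL_{k_\xi}^+$. The only (harmless) deviations are that the paper pins down the scalar coefficient as $\E\,T_{\bfs\beta}(\xi)$ by taking the trivial string $\XX_{\bfs\beta}\equiv1$ rather than by your global residue count on $\wh\C$ (for which the vanishing of the residue at infinity already follows from $\E\,T_{\bfs\beta}(z)=O(z^{-2})$, visible in \eqref{eq: ETbeta0} without the neutrality condition, so the $O(z^{-3})$ decay you attribute to $(\NC_b)$ is more than is needed), and that the paper reruns the same argument with $\ti k_\xi$ instead of differentiating \eqref{eq: Ward equation0} in $\xi$ -- both of your variants are valid.
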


\begin{proof}
By definition of the Ward functional and a simple residue calculus, we find 
$$\E\,W_{k_\xi}^+ \XX_{\bfs\beta} = \E\,T_{\bfs\beta}(\xi) \XX_{\bfs\beta}.$$
It follows from Ward's identity (Theorem~\ref{main: Ward identity}) that 
$$\E\,T_{\bfs\beta}(\xi) \XX_{\bfs\beta}= \E\,\PP_{\bfs\beta}^{-1}\LL_{k_\xi}^+\PP_{\bfs\beta}\XX_{\bfs\beta}.$$
Applying Leibniz's rule for Lie derivatives to tensor products,
$$\E\,T_{\bfs\beta}(\xi) \XX_{\bfs\beta} = (\PP_{\bfs\beta}^{-1}\LL_{k_\xi}^+\PP_{\bfs\beta})\,\E\,\XX_{\bfs\beta} + \E\, \LL_{k_\xi}^+\XX_{\bfs\beta}.$$
Considering the trivial string $\XX\equiv 1,$ we have $\E\,T_{\bfs\beta}(\xi) = \PP_{\bfs\beta}^{-1}\LL_{k_\xi}^+\PP_{\bfs\beta}.$
The second equation~\eqref{eq: Ward equation0'}  can be obtained using the same argument. 
\end{proof}
\begin{eg*}
We now apply this proposition to the previous example. 
By Wick's calculus, 
\begin{align*}
\E\,A_{\bfs\beta}(\xi) J_{\bfs\beta}(z) &=\E\,\big(A(\xi) + ib\pa J(\xi) - j_{\bfs\beta}(\xi)J(\xi)\big) \big(J(z) + j_{\bfs\beta}(z)\big)\\
&= ib\pa_\xi \E\,J(\xi)J(z) - j_{\bfs\beta}(\xi)\E\,J(\xi)J(z)\\
&=\frac{2ib}{(\xi-z)^3} - \frac{i}{(\xi-z)^2}\sum_k\frac{\beta_k}{\xi-q_k}
\end{align*}
in the $\wh \C$-uniformization. 
As $j_{\bfs\beta}(z)$ has no contribution to  $\E\,A_{\bfs\beta}(\xi) J_{\bfs\beta}(z),$ 
\begin{align*}
\E\,A_{\bfs\beta}(\xi) (J_{\bfs\beta})_{q_l} &= \frac{2ib}{(\xi-q_l)^3} - \frac{i}{(\xi-q_l)^2}\sum_k\frac{\beta_k}{\xi-q_k} \\
&= \frac{2ib-i\beta_l}{(\xi-q_l)^3} - \frac{i}{(\xi-q_l)^2}\sum_{k\ne l}\frac{\beta_k}{\xi-q_k}.
\end{align*}
From the previous example, we find $(j_{\bfs\beta})_{q_l}$ is a PS form of order $i(b-\beta_l/2)$ and 
$$(j_{\bfs\beta})_{q_l} = -i\sum_{k\ne j} \frac{\beta_k}{q_l-q_k},\qquad \pa_{q_l}(j_{\bfs\beta})_{q_l} = i\sum_{k\ne j} \frac{\beta_k}{(q_l-q_k)^2}$$
in the $\wh \C$-uniformization. 
Next we compute $\LL_{k_\xi}^+ (j_{\bfs\beta})_{q_l}:$
\begin{align*}
\LL_{k_\xi}^+ (j_{\bfs\beta})_{q_l} &= i\big(b-\frac{\beta_l}2\big) k_\xi''(q_l) + k_\xi(q_l)\pa_{q_l}(j_{\bfs\beta})_{q_l}+k_\xi'(q_l)(j_{\bfs\beta})_{q_l} + \sum_{k\ne l}k_\xi(q_k)\pa_{q_k}(j_{\bfs\beta})_{q_k}\\
&=\frac{2ib-i\beta_l}{(\xi-q_l)^3} +i \frac1{\xi-q_l}\sum_{k\ne j} \frac{\beta_k}{(q_l-q_k)^2}\\
&-i\frac1{(\xi-q_l)^2}\sum_{k\ne j} \frac{\beta_k}{q_l-q_k}-i \sum_{k\ne j} \frac1{\xi-q_k}\frac{\beta_k}{(q_l-q_k)^2} 
\end{align*}
in the $\wh \C$-uniformization. 
We now verify \eqref{eq: Ward equation0} for $\XX_{\bfs\beta} = (J_{\bfs\beta})(q_l):$
\begin{align*}
\frac1{\xi-q_l} \frac1{(q_l-q_k)^2} &-\frac1{(\xi-q_l)^2}\frac1{q_l-q_k} - \frac1{\xi-q_k}\frac1{(q_l-q_k)^2}\\
=&-\frac1{(\xi-q_l)^2}\frac1{q_l-q_k} + \frac1{(q_l-q_k)^2}\Big(\frac1{\xi-q_l}- \frac1{\xi-q_k} \Big)\\
=&-\frac1{(\xi-q_l)^2}\frac1{q_l-q_k} + \frac1{(q_l-q_k)(\xi-q_l)(\xi-q_k)} \\
=&-\frac1{\xi-q_l}\frac1{q_l-q_k} \Big(\frac1{\xi-q_l} - \frac1{\xi-q_k}\Big) = -\frac1{(\xi-q_l)^2 (\xi-q_k)}.
\end{align*}
\end{eg*}

\subsubsec{Genus one case} 
Applying a similar argument as in the genus zero case, we obtain Theorem~\ref{main: KM2}.

\begin{eg*}
We consider $\XX_{\bfs\beta} = \OO_{\bfs\beta}[\bfs\tau]$ with $\supp\,\bfs\tau \cap \supp\,\bfs \beta = \emptyset.$
Recall that Ward's equation~\eqref{eq: WardTorus2} without background charge modifications gives rise to the addition theorem~\eqref{eq: addition4zeta} for Weierstrass $\zeta$-function.  
It follows from Wick's calculus that
$$\frac{\E\,\pa A_{\bfs\beta}(\xi)\OO_{\bfs\beta}[\bfs\tau]}{\E\,\OO_{\bfs\beta}[\bfs\tau]} - \frac{\E\,\pa A(\xi)\OO[\bfs\tau]}{\E\,\OO[\bfs\tau]} = -b \pa_\xi^2 \E\,J(\xi)\Phi[\bfs\tau] -i \pa_\xi(j_{\bfs\beta}(\xi)\E\,J(\xi)\Phi[\bfs\tau]).$$
On the other hand, we compute the corresponding $\C$-linear part of Lie derivative as
\begin{align*}
\frac{\LL_{v_\xi}^+\E\,\OO_{\bfs\beta}[\bfs\tau]}{\E\,\OO_{\bfs\beta}[\bfs\tau]}-\frac{\LL_{v_\xi}^+\E\,\OO[\bfs\tau]}{\E\,\OO[\bfs\tau]} &= -b\sum_j \tau_j \wp'(\xi-z_j) - \sum_k\wp(\xi-q_k) \frac{\pa_{q_k} \E\,\OO_{\bfs\beta}[\bfs\tau]}{\E\,\OO_{\bfs\beta}[\bfs\tau]}\\
&- \sum_j\wp(\xi-z_j) \Big(\frac{\pa_j \E\,\OO_{\bfs\beta}[\bfs\tau]}{\E\,\OO_{\bfs\beta}[\bfs\tau]} - \frac{\pa_j \E\,\OO[\bfs\tau]}{\E\,\OO[\bfs\tau]}\Big).
&\end{align*}
One can check Ward's equation for $\XX_{\bfs\beta} = \OO_{\bfs\beta}[\bfs\tau]$ directly using the KZ equations~\eqref{eq: KZ4torus}, the neutrality condition $(\NC_0)$ for $\bfs\tau,\bfs\beta,$ and the pseudo-addition theorem~\eqref{eq: pseudo-addition4zeta} for $\zeta.$ 
\end{eg*}

\begin{eg*}
In this example, we consider $\XX_{\bfs\beta} = J_{\bfs\beta}(z).$
Recall that 
$$J_{\bfs\beta} = J + j_{\bfs\beta}, \quad j_{\bfs\beta}(z) = -i\sum\beta_k \frac{\theta'(z-q_k)}{\theta(z-q_k)} - \frac{2\pi}{\Im\,\tau} \sum \beta_k\Im\,q_k$$
and 
$$\E\,J(\xi)J(z) = -\wp(\xi-z) - 2\E\,T,\quad \E\,T = \eta_1 - \frac\pi{2\,\Im\,\tau}, \quad \eta_1 = -\frac16\frac{\theta'''(0)}{\theta'(0)}.$$
It follows from Wick's calculus that 
\begin{align*}\pa_\xi \E\,A_{\bfs\beta}(\xi) J_{\bfs\beta}(z) &= ib\pa_\xi^2 \E\,J(\xi)J(z) - \pa_\xi\big(j_{\bfs\beta}(\xi)\E\,J(\xi)J(z) \big)\\
&=-ib\wp''(\xi-z) + i\sum\beta_k \big(\wp(\xi-q_k) + 2\eta_1\big)\big(\wp(\xi-z)+2\E\,T\big)\\
&-i\sum \beta_k \big(\zeta(\xi-q_k) -2\eta_1(\xi-q_k)\big)\wp'(\xi-z)\\
&-\frac{2\pi}{\Im\,\tau}\big(\sum\beta_k \Im\,q_k\big)\wp'(\xi-z).
\end{align*}
We now compute the Lie derivative of PS form $j_{\bfs\beta}$ of order $ib.$ 
By the neutrality condition $\sum \beta_k = 0,$ 
\begin{align*}
\LL_{v_\xi} j_{\bfs\beta}(z) &= ib v_\xi''(z) + v_\xi(z)  j_{\bfs\beta}'(z) + v_\xi'(z)  j_{\bfs\beta}(z) + \sum v_\xi(q_k) \pa_{q_k}j_{\bfs\beta}(z)\\
&=-ib\wp''(\xi-z) -\wp(\xi-z)\big(i\sum\beta_k\wp(z-q_k)\big)\\
&+\wp'(\xi-z)\big(-i\sum \beta_k\zeta'(z-q_k) + 2i\eta_1\beta_k(z-q_k) -\frac{2\pi}{\Im\,\tau}\sum\beta_k \Im\,q_k\big)\\
&+\sum \beta_k \wp(\xi-q_k)\big(i\wp(z-q_k)+2i\eta_1-\frac{i\pi}{\Im\,\tau}\big).
\end{align*}
It follows from Ward's equation that
\begin{align*}
\sum_k \beta_k \Big(\wp(\xi-q_k)\wp(z-q_k) &- \big(\wp(\xi-q_k)+\wp(z-q_k)\big)\wp(\xi-z)\\
&+ \big(\zeta(\xi-q_k)-\zeta(z-q_k)\big)\wp'(\xi-z)\Big)=0.
\end{align*}
Using the addition theorem~\eqref{eq: addition4zeta} for Weierstrass $\zeta$-function and the neutrality condition $\sum_k\beta_k=0$, we rewrite the above identity as 
$$\sum_k \beta_k  F_{\xi,z}(q_k) = 0,$$
where 
\begin{align*}
F_{\xi,z}(q) = \wp(\xi-q)\wp(z-q) &- \big(\wp(\xi-q)+\wp(z-q)\big)\wp(\xi-z)\\
&-\frac12 \frac{\wp'(\xi-q)+\wp'(z-q)}{\wp(\xi-q)-\wp(z-q)}\wp'(\xi-z).
\end{align*}
Equivalently,
$$F_{\xi,z}(q_1) = F_{\xi,z}(q_2).$$
Thus we obtain \eqref{eq: new addition0}.
However, $F_{\xi,z}(q) \ne 0$ in general, see \eqref{eq: new addition}.
\end{eg*}

\section{Eguchi-Ooguri equations and BPZ equations}

In this section, we derive Eguchi-Ooguri equations on a torus without resorting to the path integral formalism. 
Eguchi-Ooguri equations reveal the remarkable relation between the differential operator $\pa_\tau$ with respect to the modular parameter $\tau$ and the integral operator $\oint_a 1\cdot A_{\bfs\beta}$ within correlations of fields in the OPE family, where $1$ is the constant vector field. 
Due to Eguchi-Ooguri equations, the insertion of a stress tensor acts (within correlations of fields in the OPE family) in terms of $\pa_\tau$ and the Lie derivative operator along a certain multi-valued vector field. 

Certain OPE exponentials have the property of degeneracy at level two or higher.  
In the genus zero and genus one cases, we combine the level two/three degeneracy equations with Ward's equations to derive Belavin-Polyakov-Zamolodchikov equations (BPZ equations).
In \cite{KM13} we use Cardy's boundary version of BPZ equations to relate chordal SLE theory to conformal field theory in a simply connected domain with two marked boundary points. See \cite{KM12} for its radial counterpart. 
We also derive Cardy's version of BPZ equations in the genus zero and genus one cases.

\subsection{Eguchi-Ooguri equations} \index{Eguchi-Ooguri equations}

We first prove Theorem~\ref{main: KM} in the case $b=0$ and $\bfs\beta = \bfs0.$

\begin{thm} \label{lem: EOKM0} \index{Eguchi-Ooguri equations}
For any tensor product $\XX$ of fields in the OPE family $\FF,$
\begin{equation} \label{eq: int A}
\frac1{2\pi i}\oint_{[0,1]} \E\,A(\xi)\XX\,\dd \xi = \frac{\pa}{\pa\tau}\,\E\,\XX
\end{equation}
in the $\T_\Lambda$-uniformization.
\end{thm}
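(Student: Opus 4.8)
The plan is to reduce the identity to a statement about the elementary two-point functions and then to settle that base case using the heat equation for the Jacobi theta function. Since $A=-\frac12 J\odot J$, for any tensor product $\XX=X_1(z_1)\cdots X_n(z_n)$ of fields generated by $\Phi$, Wick's formula \eqref{def: tensor product} forces both factors of $J(\xi)$ in $A(\xi)$ to contract with nodes of $\XX$. First I would check that the functional $\XX\mapsto \frac1{2\pi i}\oint_{[0,1]}\E\,A(\xi)\XX\,\dd\xi$ distributes over the Feynman-diagram expansion of $\E\,\XX$ in exactly the way the derivation $\pa/\pa\tau$ does: in each diagram, routing the two legs of $A(\xi)$ to an unordered pair of half-edges produces, after the $a$-cycle integration, precisely the term in which $\pa_\tau$ differentiates the corresponding two-point factor while all other contractions are left untouched. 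Tracking the combinatorial constant $-\tfrac12\cdot 2$ shows the two sums agree term by term, so it suffices to verify the identity at the two-point level. (Here the nodes $z_j$ are held fixed, since we work in the $\T_\Lambda$-uniformization.)

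Concretely, the reduction leads to the generating identity
$$-\frac1{2\pi i}\oint_{[0,1]}\E[J(\xi)J(z)]\,\E[J(\xi)J(w)]\,\dd\xi = \pa_\tau\,\E[J(z)J(w)],$$
together with the variants in which one or both of $J(z),J(w)$ is replaced by $\bar J$, the new ingredient there being the constant contraction $\E\,J(\xi)\overline{J(w)}=-\pi/\Im\,\tau$ from \eqref{eq: EJbarJ torus}. Correlations involving general derivatives of $\Phi$ follow by differentiating this identity in the node variables, and fields produced by OPE products (such as $T$) together with the non-random $\tau$-dependent coefficients (such as $\E\,T$ and the field $c$) are coincidence limits of these two-point correlations; hence the identity propagates to all of $\FF$ by continuity once the two-point case is settled.

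For the left-hand side I would use $\E\,J(\xi)J(z)=-\wp(\xi-z)-2\,\E\,T$ from \eqref{eq: OPE JJ torus}, expand the product, and integrate over the $a$-cycle $[0,1]$ by means of the quasi-periodicity \eqref{eq: zeta periodicities} of $\zeta=-\!\int\!\wp$; in particular $\oint_{[0,1]}\wp(\xi-z)\,\dd\xi=-2\eta_1$ is independent of $z$, while $\oint_{[0,1]}\wp(\xi-z)\wp(\xi-w)\,\dd\xi$ is evaluated by writing the elliptic integrand as a total $\xi$-derivative plus lower-order terms via the pseudo-addition theorem for $\zeta$. For the right-hand side I would invoke the heat equation $4\pi i\,\pa_\tau\theta=\theta''$ for $\theta=\theta_1(\cdot\,|\,\tau)$; combined with \eqref{eq: wp theta} and \eqref{eq: eta theta} it yields
$$\pa_\tau\log\theta(u)=\frac1{4\pi i}\Big(-\wp(u)+\tfrac13\tfrac{\theta'''(0)}{\theta'(0)}+(\zeta(u)-2\eta_1 u)^2\Big),$$
and two $u$-derivatives express $\pa_\tau\wp(u)$, hence $\pa_\tau\,\E[J(z)J(w)]$, as a quadratic expression in $\wp,\zeta,\eta_1$ that must coincide with the integrated product. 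The heat equation is exactly what converts the first-order operator $\pa_\tau$ into the second-order, quadratic-in-$J$ insertion of $A$.

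The main obstacle is the bookkeeping of the non-holomorphic anomaly terms proportional to $1/\Im\,\tau$. On the left they enter through the constant $-2\,\E\,T\supset \pi/\Im\,\tau$ inside $\E[JJ]$ and through the mixed contractions $-\pi/\Im\,\tau$; on the right they enter through $\pa_\tau(\pi/\Im\,\tau)=i\pi/2(\Im\,\tau)^2$, using $\pa_\tau\Im\,\tau=1/2i$. Verifying that the $1/(\Im\,\tau)^2$ contributions arising from $\oint\wp\,\wp$, from the cross terms, and from $4(\E\,T)^2$ conspire to reproduce $-2\,\pa_\tau\,\E\,T$ exactly — and, in the $J$–$\bar J$ sector, that the constant $a$-period reproduces $\pa_\tau$ of the $\Im(p-q)\,\Im(\tilde p-\tilde q)/\Im\,\tau$ term in \eqref{eq: EPhi1} — is the delicate part of the argument. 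Once these anomaly terms are matched, everything else reduces to routine residue and Wick calculus.
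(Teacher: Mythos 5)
Your overall strategy is the same as the paper's: reduce \eqref{eq: int A} by Wick's formula to a single two-point identity, prove that identity with the pseudo-addition theorem for the Weierstrass $\zeta$-function and the heat equation for $\theta$, and then propagate to all of $\FF$ through closure under node-derivatives and OPE coefficients (your ``coincidence limits'' remark is exactly the paper's term-by-term comparison of Laurent coefficients). The combinatorial reduction itself is sound: grouping the Feynman diagrams of $\E\,A(\xi)\XX$ by the unordered pair of nodes hit by the two legs of $J(\xi)\odot J(\xi)$ does match the Leibniz expansion of $\pa_\tau\E\,\XX$ factor by factor.

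The gap is in your choice of base case. The basic Fock space fields include the \emph{undifferentiated} bi-variant field $\Phi(z,z_0)$, and in the Wick expansion of $\E\,A(\xi)\XX$ the legs of $A(\xi)$ contract with such a factor via $\E\,J(\xi)\Phi(z,z_0)$, not via $\E\,J(\xi)J(z)$. So the generating identity you actually need is
\begin{equation*}
-\frac1{2\pi i}\oint_{[0,1]}\E[J(\xi)\Phi(z,z_0)]\,\E[J(\xi)\Phi(z',z_0')]\,\dd\xi \;=\; \frac{\pa}{\pa\tau}\,\E[\Phi(z,z_0)\Phi(z',z_0')],
\end{equation*}
which is the paper's base case, and your $J$--$J$ identity is only its $\pa_z\pa_{z'}$-derivative. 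Your sentence ``correlations involving general derivatives of $\Phi$ follow by differentiating this identity in the node variables'' goes the wrong way: $\Phi$ is an antiderivative of $J$, so recovering the $\Phi\Phi$ case from the $J J$, $J\bar J$, $\bar J\bar J$ cases requires integrating in the nodes and then pinning down the resulting ``constants'' of integration (functions killed by all mixed second node-derivatives). This is not cosmetic: the delicate cancellations live precisely at the undifferentiated level --- by \eqref{eq: EJPhi1} each contraction $\E\,J(\xi)\Phi(z,z_0)$ carries both a $\theta'/\theta$ part and an additive constant $2\pi i\,\Im(z-z_0)/\Im\,\tau$, and the cross terms between them (the terms $\mathrm{I}_{z,z'}$ in the paper's proof) only cancel after the $a$-cycle integration via the periodicity $\theta(z+1)=-\theta(z)$, while the product of the two constants must reproduce $\pa_\tau$ of the non-holomorphic term in \eqref{eq: EPhi1}. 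Either verify the $\Phi\Phi$ identity directly, or supplement your $J$--$J$ computation with an explicit argument fixing the integration ``constants'' (e.g.\ using antisymmetry of $\Phi(z,z_0)$ in its arguments and the known form of the two-point function). With that repaired, the rest of your outline goes through as in the paper.
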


\begin{proof}
We first show that \eqref{eq: int A} holds for $\XX = \Phi(z,z_0)\Phi(z',z_0')$ ($\{z,z_0\}\cap\{z',z_0'\}=\emptyset$).
By Wick's calculus and \eqref{eq: EJPhi1}, we have 
$$\E\,A(\xi)\XX = -\E\,J(\xi)\Phi(z,z_0)\,\E\,J(\xi)\Phi(z',z_0')$$ 
and
$$\oint_{[0,1]} \E\,A(\xi)\XX\,\dd \xi = \mathrm{I} + \mathrm{II} + 4\pi^2\,\frac{\Im\,(z-z_0)\,\Im\,(z'-z_0')}{(\Im\,\tau)^2},$$
where $\mathrm{I} = \mathrm{I}_{z,z'} - \mathrm{I}_{z_0,z'} - \mathrm{I}_{z,z_0'} + \mathrm{I}_{z_0,z_0'}, \mathrm{II} = \mathrm{II}_{z,z'} - \mathrm{II}_{z_0,z'} - \mathrm{II}_{z,z_0'} + \mathrm{II}_{z_0,z_0'},$
$$\mathrm{I}_{z,z'} =  \frac{2\pi i}{\Im\,\tau} \Big(\Im\,z \int_0^1 \frac{\theta'(\xi-z')}{\theta(\xi-z')}\,\dd \xi+\Im\,z' \int_0^1 \frac{\theta'(\xi-z)}{\theta(\xi-z)}\,\dd \xi\Big),$$ 
and 
$$\mathrm{II}_{z,z'}=-\int_0^1 \frac{\theta'(\xi-z)}{\theta(\xi-z)}\frac{\theta'(\xi-z')}{\theta(\xi-z')}\,\dd \xi.$$
From the periodicity $\theta(z+1) = -\theta(z),$ it follows that 
$$\mathrm{I}_{z,z'} = -2\pi^2\, \frac{\Im\,(z+z')}{\Im\,\tau}$$
and thus $\mathrm{I} = 0.$
To compute $\mathrm{II},$ we use Frobenius-Stickelberger's pseudo-addition theorem for Weierstrass $\zeta$-function (see \cite[20.41]{WW}),  \index{pseudo-addition theorem for Weierstrass $\zeta$-function}
\begin{equation} \label{eq: pseudo-addition4zeta0}
\big(\zeta(z_1) + \zeta(z_2) + \zeta(z_3)\big)^2 +  \zeta'(z_1) + \zeta'(z_2) + \zeta'(z_3) = 0, \qquad (z_1 + z_2 + z_3 = 0).
\end{equation}
By \eqref{eq: eta theta}, the above formula can be rewritten in terms of Jacobi $\theta$-function as 
\begin{equation} \label{eq: pseudo-addition4zeta}
2 \sum_{j<k} \frac{\theta'(z_j)}{\theta(z_j)}\frac{\theta'(z_k)}{\theta(z_k)} + \sum_j \frac{\theta''(z_j)}{\theta(z_j)} + 6\eta_1 = 0, \qquad (z_1 + z_2 + z_3 = 0).
\end{equation}
Setting $z_1 = z-\xi, z_2 = \xi-z', z_3 = -z+z',$ it follows from the above identity that 
\begin{align*}\mathrm{II}_{z,z'} &=  \frac{\theta'(z-z')}{\theta(z-z')}\int_0^1\Big(-\frac{\theta'(\xi-z)}{\theta(\xi-z)}+\frac{\theta'(\xi-z')}{\theta(\xi-z')}\Big)\,\dd \xi\\
& - \frac12  \frac{\theta''(z-z')}{\theta(z-z')} -\frac12\int_0^1 \frac{\theta''(\xi-z)}{\theta(\xi-z)}\,\dd \xi-\frac12\int_0^1 \frac{\theta''(\xi-z')}{\theta(\xi-z')}\,\dd \xi-6\eta_1.
\end{align*}
Again by the periodicity $\theta(z+1) = -\theta(z),$ the first integral in the right-hand side of the above is zero. On the other hand, the last three terms of the above cancel out with the corresponding terms in $- \mathrm{II}_{z_0,z'} - \mathrm{II}_{z,z_0'} + \mathrm{II}_{z_0,z_0'}.$
Thus we have   
$$\mathrm{II} = -\frac12\Big(\frac{\theta''(z-z')}{\theta(z-z')} - \frac{\theta''(z_0-z')}{\theta(z_0-z')} - \frac{\theta''(z-z_0')}{\theta(z-z_0')} + \frac{\theta''(z_0-z_0')}{\theta(z_0-z_0')}\Big).$$
Differentiating \eqref{eq: EPhi1} with respect to the modular parameter $\tau,$ we obtain
\begin{align} \label{eq: Ooguri4cross-ratio}
2\pi i \,\frac\pa{\pa\tau}\, &\E\ \Phi(z,z_0)\Phi(z',z_0') = 4\pi^2\,\frac{\Im\,(z-z_0)\,\Im\,(z'-z_0')}{(\Im\,\tau)^2}\\
&  -\frac12\Big(\frac{\theta''(z-z')}{\theta(z-z')} - \frac{\theta''(z_0-z')}{\theta(z_0-z')} - \frac{\theta''(z-z_0')}{\theta(z-z_0')} + \frac{\theta''(z_0-z_0')}{\theta(z_0-z_0')}\Big).\nonumber
\end{align}
Here, we use the property of Jacobi-theta function that it satisfies the heat equation, 
\begin{equation} \label{eq: heat4theta} 
2\pi i \, \frac\pa{\pa\tau}\,\theta = \frac12 \theta''.
\end{equation}

Let $\WW$ be the collection of (multi-variable) Fock space fields $\XX$ satisfying \eqref{eq: int A}.
So far we have shown that $\XX = \Phi(z,z_0)\Phi(z',z_0')$ is in $\WW.$
Using Wick's formula, Leibniz's rule for $\pa_\tau$ and an induction argument, it is easy to show that \eqref{eq: int A} holds for $\XX = \prod_{j=1}^{2n}\Phi(z_j,z_j')$ (without overlapping nodes in the tensor products).
Of course, there is nothing to prove for $\XX = \prod_{j=1}^{2n+1}\Phi(z_j,z_j').$
Thus the collection $\WW$ contains tensor products of the Gaussian free field.
It is easy to show that $\WW$ is closed under linear combinations over $\C,$ differentiations, and OPE products $*_n.$ For example, if $X$ is holomorphic and $X,Y$ are in $\WW,$ then
$$X(\zeta)Y(z) = \sum C_n(z)(\zeta-z)^n$$
and
$$\oint_{[0,1]} \E\,A(\xi)X(\zeta)Y(z)\,\dd \xi = 2\pi i \pa_\tau \E\,X(\zeta)Y(z).$$
Term by term integration gives
$$\oint_{[0,1]} \E\,A(\xi)X(\zeta)Y(z)\,\dd \xi = \sum \oint_{[0,1]} \E\,A(\xi) C_n(z)\,\dd \xi (\zeta-z)^n$$
and term by term differentiation gives
$$ \pa_\tau \E\,X(\zeta)Y(z) = \sum \pa_\tau \E\,C_n(z)(\zeta-z)^n.$$
Comparing the coefficients, we conclude the OPE coefficients $C_n$ belong to $\WW.$
Thus $\FF \subseteq \WW$ and we complete the proof of claim.
\end{proof}

\begin{eg*}
We present a conformal field theoretic proof for \eqref{eq: eta1} 
$$\eta_1 = -\frac16\frac{\theta'''(0)}{\theta'(0)},$$
where $\eta_1 = \zeta(1/2).$
For $\XX = J(z)\overline{J(z)},$ it follows from Wick's formula that
$$\E\,A(\xi)\XX = -\E\,J(\xi)J(z) \E\,J(\xi)\overline{J(z)} =-\Big(\wp(\xi-z) - \frac13\frac{\theta'''(0)}{\theta'(0)}- \frac{\pi}{\Im\,\tau}\Big) \frac{\pi}{\Im\,\tau}.$$
Since $\wp = -\zeta'$ and $\zeta(z+1) = \zeta(z) + 2\eta_1,$ we have 
$$\oint_{[0,1]} \E\,A(\xi)\XX\,\dd \xi =\Big(2\eta_1 + \frac13\frac{\theta'''(0)}{\theta'(0)}\Big)\frac{\pi}{\Im\,\tau} +  \Big(\frac{\pi}{\Im\,\tau}\Big)^2.$$
On the other hand, we have
$$ 2\pi i\,\pa_\tau\, \E\,\XX=-2\pi i \frac\pa{\pa\tau} \frac{\pi}{\Im\,\tau}  =  \Big(\frac{\pi}{\Im\,\tau}\Big)^2.$$
Now we use Theorem~\ref{lem: EOKM0} to obtain \eqref{eq: eta1}.
\end{eg*}

\begin{eg*}
Applying Theorem~\ref{lem: EOKM0} to $\XX = \Phi*\Phi(z,z_0),$ we give a field theoretic proof for 
\begin{equation}\label{eq: theta'0}
2\pi i \, \frac {\pa_\tau \theta'(0)} {\theta'(0)} =-3\eta_1. 
\end{equation}
Taking the limits in both sides of \eqref{eq: Ooguri4cross-ratio} as $z'\to z$ and $z_0'\to z_0,$ we have 
$$2\pi i  \frac\pa{\pa\tau} \log\frac{\theta(z-z_0)}{\theta'(0)}=\frac12 \Big( \frac{\theta''(z-z_0)}{\theta(z-z_0)}-\lim_{z\to0} \frac{\theta''(z)}{\theta(z)}\Big).$$ 
Thus it follows from \eqref{eq: heat4theta} and \eqref{eq: eta1} that 
$$2\pi i  \frac\pa{\pa\tau} {\theta'(0)} = \frac12\, {\theta'''(0)}= -3\eta_1 {\theta'(0)}.$$
\end{eg*}

\begin{eg*}
Integrating the differential equation \eqref{eq: wpwp'} $\wp\wp' = \frac1{12}\wp''',$ we have 
\begin{equation} \label{eq: wp2wp''}
\wp^2 = \frac16\wp'' +C
\end{equation}
for some constant $C$ which of course depends on $\tau.$ 
The Laurent expansion \eqref{eq: wp expansion} of $\wp$ around $z=0$
gives $C =  \frac1{12} g_2.$
We now use Theorem~\ref{lem: EOKM0} to derive
\begin{equation}\label{eq: g2}
g_2 = 48\big(\eta_1^2 + \pi i\, \frac{\pa}{\pa\tau}\eta_1\big).
\end{equation}
By Wick's calculus,
$$\E\,A(\xi)T(z) = \frac12(\E\,J(\xi)J(z))^2 = \frac12\big(\wp(\xi-z)+2\eta_1 - \frac\pi{\Im\,\tau}\big)^2.$$
We compute 
$$\int_0^1 \wp(\xi-z)^2\,\dd\xi = \int_0^1\frac16\wp''(\xi-z) + \frac1{12}g_2\,dd\xi = \frac1{12}g_2$$
and 
\begin{equation} \label{eq: int wp over a}
\int_0^1 \wp(\xi-z)\,\dd\xi = \int_0^1 -\zeta'(\xi-z)\,\dd\xi = -\zeta(1-z)+\zeta(-z) = -2\eta_1.
\end{equation}
Combining the above, we have 
$$\int_0^1 \E\,A(\xi)T(z)\,\dd\xi = \frac1{24}g_2-2\eta_1\big(2\eta_1 - \frac\pi{\Im\,\tau}\big)+\frac12 \big(2\eta_1 - \frac\pi{\Im\,\tau}\big)^2.$$
On the other hand, we compute 
$$2\pi i\frac\pa{\pa\tau} \E\,T(z) = 2\pi i \frac\pa{\pa\tau} \big(\eta_1 - \frac\pi{2\, \Im\,\tau}\big) = 2\pi i\frac{\pa}{\pa\tau}\eta_1 +\frac{\pi^2}{2\,(\Im\,\tau)^2}.$$
By Theorem~\ref{lem: EOKM0}, we get
$$ \frac1{24}g_2 -2\eta_1^2 = 2\pi i\frac{\pa}{\pa\tau}\eta_1.$$
\end{eg*}

\begin{eg*}
We consider $\XX = J(z)J(z_0)$ with $z\ne z_0.$
Recall that 
$$\E\, J(z)J(z_0) = -\wp(z-z_0) - 2\eta_1 + \frac\pi{\Im\,\tau}.$$
It follows from \eqref{eq: g2} that
$$2\pi i\pa_\tau \E\, J(z)J(z_0) = -2\pi i \pa_\tau \wp(z-z_0) - \frac1{12}g_2 + 4\eta_1^2 - \frac{\pi^2}{(\Im\,\tau)^2}.$$
By Wick's calculus, 
\begin{align} \label{eq: EAJJg1}
\E\,A(\xi) &J(z)J(z_0)  = - \E\, J(\xi)J(z)\E\, J(\xi)J(z_0) \\
= &-\wp(\xi-z)\wp(\xi-z_0) - \big(2\eta_1 - \frac\pi{\Im\,\tau}\big)\big(\wp(\xi-z)+\wp(\xi-z_0)\big) \nonumber\\
&-4\eta_1^2 + \frac{4\eta_1\pi}{\Im\,\tau}- \frac{\pi^2}{(\Im\,\tau)^2}. \nonumber
 \end{align}
Using \eqref{eq: int A} and \eqref{eq: int wp over a}, we have 
\begin{equation} \label{eq: int wp wp}
\int_0^1 \wp(\xi-z)\wp(\xi-z_0)\,\dd\xi = 2\pi i \, \frac{\pa}{\pa\tau} \, \wp(z-z_0) + \frac1{12}g_2.
\end{equation}
Let us compute
$$\mathcal{I}(z,z_0) :=  \int_0^1 \wp(\xi-z)\wp(\xi-z_0)\,\dd\xi = \pa_z\pa_{z_0}\int_0^1 \zeta(\xi-z)\zeta(\xi-z_0)\,\dd\xi.$$
By the pseudo-addition theorem \eqref{eq: pseudo-addition4zeta0} for the $\zeta$-function, we have 
\begin{equation} \label{eq: pseudo-addition4zeta1}
\big(\zeta(\xi-z)-\zeta(\xi-z_0)+\zeta(z-z_0)\big)^2 = \wp(\xi-z)+\wp(\xi-z_0)+\wp(z-z_0).
\end{equation}
Taking $\pa_z\pa_{z_0}$-derivative and then integrating over $[0,1]$ with respect to $\xi,$
\begin{align*}
\pa_z\pa_{z_0} \zeta(z-z_0)^2 - 2\,\mathcal{I}(z,z_0) - 4\eta_1 \pa_z\pa_{z_0} \big((z-z_0)\zeta(z-z_0)\big) = \pa_z\pa_{z_0} \wp(z-z_0).
\end{align*}
Here, we use
\begin{align} \label{eq: int diff zeta's}
\int_0^1 &\big(\zeta(\xi-z) - \zeta(\xi-z_0)\big)\,\dd\xi \\
&= \int_0^1 \Big(\frac{\theta'(\xi-z)}{\theta(\xi-z)}  - \frac{\theta'(\xi-z_0)}{\theta(\xi-z_0)} - 2\eta_1(z-z_0)\Big)\,\dd\xi= -2\eta_1(z-z_0). \nonumber
\end{align}
By differentiation,
\begin{align*}
\pa_z\pa_{z_0} \zeta(z-z_0)^2 &= -2\wp(z-z_0)^2 + 2\zeta(z-z_0) \wp'(z-z_0),\\
\pa_z\pa_{z_0} \big((z-z_0)\zeta(z-z_0)\big) &= 2\wp(z-z_0) + (z-z_0)\wp'(z-z_0), \\
\pa_z\pa_{z_0} \wp(z-z_0) & = -\wp''(z-z_0).
\end{align*}
Using \eqref{eq: wp2wp''} $\wp^2 = \frac16\wp'' + \frac1{12}g_2,$
$$\mathcal{I}(z) = 2\wp(z)^2 + \zeta(z)\wp'(z) - \frac{g_2}4 - 2\eta_1(2\wp(z)+z\wp'(z)).$$
Combining \eqref{eq: int wp wp} with the above, we arrive to the following formula 
\begin{equation} \label{eq: d tau wp}
2\pi i \, \frac{\pa}{\pa\tau} \, \wp(z) = 2\wp(z)^2 + \zeta(z)\wp'(z) - \frac{g_2}3 - 2\eta_1(2\wp(z)+z\wp'(z)).
\end{equation}
\end{eg*}

We now extend Theorem~\ref{lem: EOKM0} to the case with background charges. 
(Recall that $T_{\bfs\beta} = A_{\bfs\beta} + \E\,T_{\bfs\beta},$ where a stress tensor $A_{\bfs\beta} = A + ib\pa J - j_{\bfs\beta}J,$ $( j_{\bfs\beta}=\E\, J_{\bfs\beta})$ is chosen such that $\E\,A_{\bfs\beta} = 0.$) 

\begin{thm}[Theorem~\ref{main: KM}] \label{lem: EOKM} \index{Eguchi-Ooguri equations}
For any tensor product $\XX_{\bfs\beta}$ of fields in the OPE family $\FF_{\bfs\beta},$
\begin{equation} \label{eq: int A_beta}
\frac1{2\pi i}\oint_{[0,1]} \E\,A_{\bfs\beta}(\xi)\XX_{\bfs\beta}\,\dd \xi = \frac\pa{\pa\tau}\, \E\,\XX_{\bfs\beta}
\end{equation}
in the $\T_\Lambda$-uniformization.
\end{thm}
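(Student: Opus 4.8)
The plan is to reduce everything to the background-free case, Theorem~\ref{lem: EOKM0}, using the decomposition $A_{\bfs\beta}=A+ib\pa J-j_{\bfs\beta}J$ from Theorem~\ref{SET} together with the representation $j_{\bfs\beta}=i\,\E\,J\,\Phi[\bfs\beta]$ of the background current (this is \eqref{eq: jbeta} with $\bfs\beta_0=\bfs0$, since in the genus one case one may take $\omega$ holomorphic, hence constant in the $\T_\Lambda$-uniformization). Because the collection of fields satisfying \eqref{eq: int A_beta} is closed under $\C$-linear combinations, under $\pa,\bp$, and under the OPE products $*_n$ --- by the verbatim closure argument at the end of the proof of Theorem~\ref{lem: EOKM0} (term-by-term integration and differentiation of the OPE expansion, using that $\pa_\tau$ commutes with the $(\zeta-z)$-expansion at fixed nodes and that $A_{\bfs\beta}$ is holomorphic off $\bfs q:=\supp\,\bfs\beta$, so the $a$-cycle may be taken to avoid the punctures) --- it suffices to verify \eqref{eq: int A_beta} on a generating set: tensor products $\prod_j\Phi_{\bfs\beta}(z_j,z_j')$ with pairwise distinct nodes, from which the OPE exponentials $\OO_{\bfs\beta}[\bfs\tau]$ follow by OPE-exponentiation.

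For the $ib\pa J$ term, $\frac1{2\pi i}\oint_{[0,1]}\E\,ib\pa J(\xi)\XX_{\bfs\beta}\,\dd\xi=\frac b{2\pi}\oint_{[0,1]}\pa_\xi\,\E\,J(\xi)\XX_{\bfs\beta}\,\dd\xi$, and since $J$ and the fields of $\FF_{\bfs\beta}$ are genuine objects on $\T_\Lambda$, the map $\xi\mapsto\E\,J(\xi)\XX_{\bfs\beta}$ is a meromorphic elliptic function (each contraction $\E\,J(\xi)\Phi(\cdots)$ is elliptic by the $\theta$-quasiperiodicities, the neutrality $(\NC_0)$ killing the $\tau$-period shift); hence this $a$-period vanishes. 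The term $-j_{\bfs\beta}J$ is the crucial one. Writing $\OO_{\bfs\beta}[\bfs\tau]=\E\,\OO_{\bfs\beta}[\bfs\tau]\,\VV^\odot[\bfs\tau]$ and applying Wick's calculus gives $\E\,A_{\bfs\beta}(\xi)\OO_{\bfs\beta}[\bfs\tau]=\E\,\OO_{\bfs\beta}[\bfs\tau]\,(-\tfrac12\ell^2+ib\pa\ell-j_{\bfs\beta}\ell)$ with $\ell:=i\,\E\,J(\xi)\Phi[\bfs\tau]$. Using $-\tfrac12\ell^2=\CC[\bfs\tau]^{-1}\,\E\,A(\xi)\OO[\bfs\tau]$ and $-j_{\bfs\beta}\ell=-\,\E\,A(\xi)(\Phi[\bfs\beta]\Phi[\bfs\tau])$, where $\OO[\bfs\tau]=\CC[\bfs\tau]\VV^\odot[\bfs\tau]$ and $\Phi[\bfs\beta]\Phi[\bfs\tau]$ are honest fields of $\FF$ supported at the distinct points $\bfs q\cup\supp\,\bfs\tau$, Theorem~\ref{lem: EOKM0} applies to each and yields $\frac1{2\pi i}\oint_{[0,1]}(-\tfrac12\ell^2)\,\dd\xi=\pa_\tau\log\CC[\bfs\tau]$ and $\frac1{2\pi i}\oint_{[0,1]}(-j_{\bfs\beta}\ell)\,\dd\xi=-\pa_\tau\,\E\,\Phi[\bfs\beta]\Phi[\bfs\tau]$. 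Since $\E\,\OO_{\bfs\beta}[\bfs\tau]=\CC[\bfs\tau]\,\ee^{-\E\,\Phi[\bfs\beta]\Phi[\bfs\tau]}$, these sum to $\E\,\OO_{\bfs\beta}[\bfs\tau]\,\pa_\tau\log\E\,\OO_{\bfs\beta}[\bfs\tau]=\pa_\tau\,\E\,\OO_{\bfs\beta}[\bfs\tau]$, which is \eqref{eq: int A_beta}. The identical bookkeeping handles $\prod_j\Phi_{\bfs\beta}(z_j,z_j')$: the purely Gaussian contractions reproduce Theorem~\ref{lem: EOKM0} for $\prod_j\Phi(z_j,z_j')$, while the $-j_{\bfs\beta}J$ contractions reproduce, via the $\Phi[\bfs\beta]$-insertions, exactly $\pa_\tau$ of the non-random shifts $\varphi_{\bfs\beta}[\,\cdot\,]=i\,\E\,\Phi[\bfs\beta]\Phi[\,\cdot\,]$.

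The step I expect to be the main obstacle is precisely the bookkeeping that keeps every invocation of Theorem~\ref{lem: EOKM0} legitimate: the class of fields obeying the Eguchi--Ooguri identity is \emph{not} closed under multiplication by $\tau$-dependent non-random factors, nor under subtraction of the $\tau$-dependent constants relating Wick and OPE powers, so one must never feed Theorem~\ref{lem: EOKM0} a bare Wick exponential $\VV^\odot[\bfs\beta]$ or an isolated Wick square --- only the genuine OPE-family fields $\OO[\bfs\sigma]$ and tensor products at distinct points. This is exactly why the background current is reintroduced as $\Phi[\bfs\beta]$-insertions sitting at the punctures $\bfs q$, disjoint from the nodes precisely when $\supp\,\bfs\tau\cap\supp\,\bfs\beta=\emptyset$; the overlapping case is reduced to this one by the rooting/renormalization procedure of Section~\ref{s: O}, replacing the coincident factor at $z_{j_0}=q_{k_0}$ by its renormalized value. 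Once this discipline is maintained, the $\tau$-derivatives of the $\PP_{\bfs\beta}$- and $\CC[\bfs\tau]$-normalizations cancel automatically, and the closure argument promotes the generator check to all of $\FF_{\bfs\beta}$.
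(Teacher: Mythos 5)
Your proposal is correct, and it reaches the same destination as the paper by a more hands-on route. The paper's proof is a two-step reduction: first it observes that for $\bfs\beta=\bfs 0$ (but $b\ne0$) the correction $A_{\bfs 0}-A=ib\pa J$ is exact in the $\T_\Lambda$-uniformization, so Theorem~\ref{lem: EOKM0} carries over verbatim; then, for $\bfs\beta\ne\bfs 0$, it invokes the insertion formalism (Theorem~\ref{main: Insertion}) wholesale, rewriting $\E\,A_{\bfs\beta}(\xi)\XX_{\bfs\beta}$ as a $\bfs 0$-correlation of the enlarged string $\XX_{\bfs 0}\,\OO_{\bfs 0}[\bfs\beta]$ divided by $\E\,\OO_{\bfs 0}[\bfs\beta]$, applies the $\bfs 0$-case to that string, and splits $\pa_\tau$ by Leibniz, the extra term $\pa_\tau\log\PP_{\bfs\beta}$ being absorbed by the vacuum normalization (the trivial string $\XX\equiv1$). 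Your argument is the linearization of this: instead of inserting the exponential $\VV^{\odot}[\bfs\beta]$ once, you expand $A_{\bfs\beta}=A+ib\pa J-j_{\bfs\beta}J$ and convert each $-j_{\bfs\beta}J$ contraction into an $A$-correlation carrying a single extra $\Phi[\bfs\beta]$ insertion at the punctures, then match the resulting $\pa_\tau\,\E\,\Phi[\bfs\beta]\Phi[\cdot]$ terms against the $\tau$-dependence of the non-random shifts $\varphi_{\bfs\beta}[\cdot]$ and of $\CC[\bfs\tau]$. Both proofs rest on exactly the same two pillars (Theorem~\ref{lem: EOKM0} and the identification of the background charge with a $\Phi[\bfs\beta]$-insertion), and both are restricted to nodes off $\supp\,\bfs\beta$, the overlapping case being deferred to the rooting procedure. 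What the paper's packaging buys is brevity and the bonus identity \eqref{eq: dtau log Pbeta} for $\pa_\tau\log\PP_{\bfs\beta}$; what your version buys is an explicit check on the generators $\prod_j\Phi_{\bfs\beta}(z_j,z_j')$ and $\OO_{\bfs\beta}[\bfs\tau]$ that makes visible why the $\tau$-dependent normalizations cancel, together with a correctly identified caveat (non-closure of the Eguchi--Ooguri class under multiplication by $\tau$-dependent non-random factors) that the paper leaves implicit. The only blemish is presentational: you announce that it suffices to check tensor products of $\Phi_{\bfs\beta}$ and then carry out the computation for $\OO_{\bfs\beta}[\bfs\tau]$ first, but since both verifications are in fact supplied, this is not a gap.
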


\begin{proof}
Let us consider the special case $\bfs\beta = \bfs 0, (b\ne0).$
In this case, $A_{\bfs 0} - A = ib\pa J$ is exact in the $\T_\Lambda$-uniformization and thus we have 
\begin{equation} \label{eq: int A0}
\oint_{[0,1]}\E\,A_{\bfs 0}(\xi) \XX_{\bfs 0}\,\dd \xi =  \oint_{[0,1]}\E\,A(\xi) \XX_{\bfs 0}\,\dd \xi = 2\pi i \,\frac{\pa}{\pa \tau}\,\E\,\XX_{\bfs 0}
\end{equation}
for any tensor product $\XX_{\bfs 0}$ of $\Phi_{\bfs 0}(z_j,z_j').$  
Note that $\Phi_{\bfs 0}(z_j,z_j')=\Phi(z_j,z_j')$ in the $\T_\Lambda$-uniformization. 
Using a similar argument in the proof of Theorem~\ref{lem: EOKM0}, the above identity holds for any tensor product $\XX_{\bfs 0}$ of fields in the OPE family $\FF_{\bfs 0}.$ 

Let us consider the case $\bfs\beta\ne\bfs 0$ with $z_j \notin \supp\,\bfs\beta.$
We define the correspondence $\XX_{\bfs 0} \mapsto \wh\XX_{\bfs 0}$ by the formula $\Phi_{\bfs0}[\bfs\tau]\mapsto\Phi_{\bfs\beta}[\bfs\tau]$ and the rules \eqref{eq: hat rules}.
We denote by $\wh\E[\XX_{\bfs 0}]:=\E\,\VV^{\odot}[\bfs\beta]\XX_{\bfs 0}.$
It is easy to see that $\wh A_{\bfs 0} = T_{\bfs \beta}-\E\,T$ in the $\T_\Lambda$-uniformization and $\PP_{\bfs \beta} =\CC_{(b)}[\bfs\beta]=  \E\,\OO_{\bfs 0} [{\bfs \beta}]$ (up to a multiplicative constant).
Indeed, since $\Phi_{\bfs0} = \Phi, J_{\bfs0} =J, \wh J = J + j_{\bfs \beta}$ in the $\T_\Lambda$-uniformization,
$$\wh A_{\bfs 0} =-\frac12 \wh J\odot \wh J + ib \pa \wh J = A + ib\pa J-j_{\bfs \beta}J - \frac12 j_{\bfs \beta}^2 + ib\pa j_{\bfs \beta} = A_{\bfs \beta} + \E\,T_{\bfs \beta} - \E\,T= T_{\bfs \beta}- \E\,T.$$
By definition, $\PP_{\bfs \beta} = \CC_{(b)}[\bfs\beta]=  \E\,\OO_{\bfs \beta_*} [{\bfs \beta - \bfs \beta_*}].$
Here we choose $\bfs \beta_* = \bfs 0.$ 

By Theorem~\ref{main: Insertion} and \eqref{eq: int A0}, we find 
\begin{align*}
\oint_{[0,1]} \E\, T_{\bfs \beta}(\xi)\XX_{\bfs \beta}\,\dd \xi &-\E\,T\, \E\,\XX_{\bfs \beta} = \oint_{[0,1]} \E\, \wh A_{\bfs 0}(\xi)\wh \XX_{\bfs 0}\,\dd \xi \\
&= \frac1{\E\,\OO_{\bfs 0} [{\bfs \beta}]}\oint_{[0,1]} \E\, A_{\bfs 0}(\xi)\XX_{\bfs 0} \OO_{\bfs 0}[\bfs \beta]\,\dd \xi \\
&=  \frac{2\pi i}{\E\,\OO_{\bfs 0} [{\bfs \beta}]} {\pa_\tau}\,\E\,\XX_{\bfs 0} \OO_{\bfs 0}[\bfs \beta]
=2\pi i\big(\pa_\tau \E\,\XX_{\bfs\beta} + \E\,\XX_{\bfs\beta}\, \frac{\pa_\tau\E\,\OO_{\bfs 0}[\bfs \beta]}{\E\,\OO_{\bfs 0}[\bfs \beta]}\big).
\end{align*}
Applying the trivial string $\XX_{\bfs\beta} \equiv 1$ to the above, we have 
\begin{equation}\label{eq: dtau log Pbeta}
\frac1{2\pi i}\oint_{[0,1]} \E\, T_{\bfs \beta}(\xi)\,\dd \xi -\E\,T=  \frac{\pa_\tau\E\,\OO_{\bfs 0}[\bfs \beta]}{\E\,\OO_{\bfs 0}[\bfs \beta]} \Big(= \frac\pa{\pa\tau} \log \PP_{\bfs \beta} \Big)
\end{equation}
and we complete the proof. 
\end{proof}

Combining \eqref{eq: dtau log Pbeta} and \eqref{eq: Zbeta=ZPbeta} $Z_{\bfs\beta} = Z\, \PP_{\bfs \beta}$ with 
$$2\pi i\, \frac{\pa}{\pa\tau} \log Z(\tau) =2\pi i\, \frac{\pa}{\pa\tau} \log \frac1{\sqrt{\Im\,\tau}|\eta(\tau)^2|} =\eta_1 - \frac\pi{2\,\Im\,\tau}=  \E\,T,$$
we obtain the following corollary. 
\begin{cor} \label{dtau log Zbeta}
We have
$$2\pi i\, \frac{\pa}{\pa\tau} \log Z_{\bfs\beta}(\tau)  =  \oint_{[0,1]} \E\,T_{\bfs\beta} (\xi)\,d\xi$$
in the $\T_\Lambda$-uniformization. 
\end{cor}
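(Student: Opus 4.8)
The plan is to obtain the corollary by differentiating the multiplicative factorization of the partition function and matching the two resulting pieces against results already in hand. First I would take the logarithm of the identity~\eqref{eq: Zbeta=ZPbeta}, $Z_{\bfs\beta} = Z\,\PP_{\bfs\beta}$, to split
$$\log Z_{\bfs\beta} = \log Z + \log \PP_{\bfs\beta},$$
and then apply the operator $2\pi i\,\pa_\tau$ to both sides. The entire content of the statement is thereby reduced to evaluating the two summands $2\pi i\,\pa_\tau\log Z$ and $2\pi i\,\pa_\tau\log\PP_{\bfs\beta}$ separately in the $\T_\Lambda$-uniformization.

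For the first summand I would invoke the classical modular formula $2\pi i\,\pa_\tau \log Z = \E\,T$ recorded just above the corollary, which reads $\E\,T = \eta_1 - \pi/(2\,\Im\,\tau)$ here. For the second summand I would use the relation established inside the proof of Theorem~\ref{lem: EOKM}: specializing the displayed identity there to the trivial string $\XX_{\bfs\beta}\equiv 1$ (so that $\E\,\XX_{\bfs\beta}=1$ and $\pa_\tau\E\,\XX_{\bfs\beta}=0$) yields
$$2\pi i\,\frac\pa{\pa\tau}\log\PP_{\bfs\beta} = \oint_{[0,1]}\E\,T_{\bfs\beta}(\xi)\,\dd\xi - \E\,T.$$
Adding the two pieces, the two occurrences of $\E\,T$ cancel and leave precisely $\oint_{[0,1]}\E\,T_{\bfs\beta}(\xi)\,\dd\xi$, which is the asserted identity.

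I do not expect any genuine obstacle at this stage: the corollary is a bookkeeping consequence of inputs already derived, and the only point requiring care is the sign matching that produces the cancellation of the holomorphic $\E\,T$ term. All the substantive work lies upstream---in the $\zeta$-regularized computation of $Z$ on the torus, in the identification $Z_{\bfs\beta}=Z\,\PP_{\bfs\beta}$ via the Liouville partition function, and above all in the insertion argument (Theorem~\ref{main: Insertion}) that converts the background-charge Eguchi-Ooguri equation~\eqref{eq: int A0} for $b\ne 0$ into the formula for $\pa_\tau\log\PP_{\bfs\beta}$. Once those are granted, the present statement follows immediately.
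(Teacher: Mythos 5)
Your proposal is correct and follows the paper's own route exactly: the paper likewise splits $\log Z_{\bfs\beta}=\log Z+\log\PP_{\bfs\beta}$ via \eqref{eq: Zbeta=ZPbeta}, uses $2\pi i\,\pa_\tau\log Z=\E\,T$, and obtains the formula for $\pa_\tau\log\PP_{\bfs\beta}$ by applying the trivial string $\XX_{\bfs\beta}\equiv 1$ in the proof of Theorem~\ref{lem: EOKM}. Your version $2\pi i\,\pa_\tau\log\PP_{\bfs\beta}=\oint_{[0,1]}\E\,T_{\bfs\beta}(\xi)\,\dd\xi-\E\,T$ is moreover the correctly normalized form of the paper's display \eqref{eq: dtau log Pbeta} (which misplaces a factor of $2\pi i$), and the cancellation of $\E\,T$ then goes through exactly as you describe.
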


\begin{eg*}
It follows from \eqref{eq: jbeta4torus}, \eqref{eq: eta theta} and \eqref{eq: g2} that
\begin{align}\label{eq: tau j}
2\pi i \, \frac\pa{\pa\tau}\, j_{\bfs\beta}(z) &= 2\pi \sum_k \beta_k \frac\pa{\pa\tau}\zeta(z-q_k) - (\frac {g_2}{12} -4\eta_1^2)i\sum_k\beta_kq_k \\
&+ \frac{2\pi^2}{(\Im\,\tau)^2}\sum_k\beta_k\Im\,q_k. \nonumber
\end{align}
By Wick's calculus, in the $\T_\Lambda$-uniformization, 
\begin{align*}
\E\,A_{\bfs\beta}(\xi)J_{\bfs\beta}(z) &= ib\pa_\xi\E\,J(\xi)J(z) - j_{\bfs\beta}(\xi)\,\E\,J(\xi)J(z) \\
&=-ib\wp'(\xi-z) \\
&+ \big(\wp(\xi-z) + 2\E\,T\big)\big(-i\sum_k\beta_k \frac{\theta'(\xi-q_k)}{\theta(\xi-q_k)}-\frac{2\pi}{\Im\,\tau}\sum_k\beta_k\Im\,q_k\big),
\end{align*}
where $\E\,T = \eta_1  -\pi/(2\,\Im\,\tau).$
Since $\wp$ is elliptic, the first term in the right-hand side does not contribute to the integral in \eqref{eq: int A_beta}.
Due to the quasi-periodicity \eqref{eq: zeta periodicities} of $\zeta$-function,
$$\int_0^1 \big(\wp(\xi-z) + 2\,\E\,T\big)\big(-\frac{2\pi}{\Im\,\tau}\sum_k\beta_k\Im\,q_k\big)\,\dd\xi=\frac{2\pi^2}{(\Im\,\tau)^2}\sum_k\beta_k\Im\,q_k.$$
This coincides with the last term of the right-hand side in \eqref{eq: tau j}.
Using the neutrality condition $\sum_k\beta_k=0$ and the fact that the integral of $\theta'(\xi-q_k)/\theta(\xi-q_k)$ with respect to the $\xi$-variable over the cycle $[0,1]$ is constant, 
$$-2i\,\E\,T \int_0^1\sum_k\beta_k \frac{\theta'(\xi-q_k)}{\theta(\xi-q_k)}\,\dd\xi = 0.$$
It follows from \eqref{eq: eta theta} that 
\begin{align*}
-i\sum_k\beta_k\int_0^1&\wp(\xi-z) \frac{\theta'(\xi-q_k)}{\theta(\xi-q_k)}\,\dd\xi = -i\sum_k\beta_k\int_0^1\wp(\xi-z) \zeta(\xi-q_k)\,\dd\xi\\
&-2\eta_1 i \sum_k \beta_kq_k \int_0^1\wp(\xi-z) \,\dd\xi +2\eta_1 i \sum_k\beta_k\int_0^1\xi\,\wp(\xi-z) \,\dd\xi. 
\end{align*}
Note that the last term of the above vanishes due to the neutrality condition $\sum_k\beta_k=0.$
By \eqref{eq: int wp over a}, we have 
\begin{align} \label{eq: int wp theta wp zeta}
-i\sum_k\beta_k\int_0^1\wp(\xi-z) \frac{\theta'(\xi-q_k)}{\theta(\xi-q_k)}\,\dd\xi &= -i\sum_k\beta_k\int_0^1\wp(\xi-z) \zeta(\xi-q_k)\,\dd\xi\\
&+4\eta_1^2 i \sum_k \beta_kq_k. \nonumber
\end{align}
Let us compute 
$$\mathcal{J}(z,z_0) :=  \int_0^1 \wp(\xi-z)\zeta(\xi-z_0)\,\dd\xi = \pa_z\int_0^1 \zeta(\xi-z)\zeta(\xi-z_0)\,\dd\xi.$$
Taking $\pa_z$-derivative of \eqref{eq: pseudo-addition4zeta1}, letting $z_0 = q_k,$ and then integrating over $[0,1]$ with respect to $\xi,$
\begin{align*}
\mathcal{J}(z,q_k) &= \mathcal{J}(z,z) -\zeta(z-q_k)\wp(z-q_k) +2\eta_1(z-q_k)\wp(z-q_k) \\
&-2\eta_1\zeta(z-q_k) - \frac12\wp'(z-q_k).
\end{align*}
Here we use \eqref{eq: int diff zeta's} and \eqref{eq: int wp over a}.
Note that the term $\mathcal{J}(z,z)$ is independent of $q_k$, hence does not contribute to the sum in \eqref{eq: int wp theta wp zeta}.
Combining all of the above with the neutrality condition $\sum_k\beta_k=0,$ we have 
$$2\pi i\sum_k \beta_k \frac\pa{\pa\tau}\zeta(z-q_k) =\sum\beta_k F(z-q_k),$$
where
$$F(z) = - \frac12\wp'(z)-\zeta(z)\wp(z)  +\frac{g_2}{12} z+2\eta_1\big(z\wp(z) -\zeta(z)\big).$$
It is known that 
\begin{equation}\label{eq: tau zeta}
2\pi i  \,\frac\pa{\pa\tau}\zeta(z) = F(z).
\end{equation}
Differentiating the above with respect to $z$ leads  to \eqref{eq: d tau wp}.
\end{eg*}

\subsection{Eguchi and Ooguri's version of Ward's equations}

We now prove Theorem~\ref{main: KM2'}  in the case $b=0$ and $\bfs\beta=\bfs0.$

\begin{thm}\label{EOKM0} \index{Ward's equations}
For  any tensor product $\XX$ of fields in the OPE family $\FF,$ we have $$ \E\,A(\xi)\XX =  \LL_{\ti v_\xi}^+\E\,\XX+2\pi i \, \frac\pa{\pa\tau} \,\E\,\XX, \quad(\ti v_\xi(z) = \zeta(\xi-z) + 2\eta_1 z)$$
in the $\T_\Lambda$-uniformization. 
\end{thm}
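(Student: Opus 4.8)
The plan is to reduce the claimed identity to the Eguchi-Ooguri equation \eqref{eq: int A} already proved in Theorem~\ref{lem: EOKM0}. The multivalued vector field $\ti v_\xi(z) = \zeta(\xi-z)+2\eta_1 z$ has a single simple pole at $z=\xi$ (modulo the lattice) with residue $+1$, coming from the pole of $\zeta$. The key structural observation is that $\ti v_\xi$ is \emph{not} elliptic: using the quasi-periodicity \eqref{eq: zeta periodicities} of the Weierstrass $\zeta$-function, $\zeta(\xi-(z+1)) = \zeta(\xi-z)-2\eta_1$, so $\ti v_\xi(z+1) = \ti v_\xi(z)$, while $\zeta(\xi-(z+\tau)) = \zeta(\xi-z)-2\eta_2$ gives $\ti v_\xi(z+\tau) = \ti v_\xi(z) + 2\eta_1\tau - 2\eta_2$. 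By Legendre's relation $\eta_1\tau - \eta_2 = \pi i$ (equivalently $2\eta_1\tau - 2\eta_2 = 2\pi i$), the field picks up the constant additive period $2\pi i$ under $z\mapsto z+\tau$. This extra constant period is precisely what will produce the Teichm\"uller term $2\pi i\,\pa_\tau$.

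First I would apply Ward's Proposition~\ref{prop: Ward} to $\ti v_\xi$. Since $\ti v_\xi$ has its only pole at $\xi$ with residue $1$, the Ward functional gives $\E\,W_{\ti v_\xi}^+\XX = \E\,A(\xi)\XX$ (the residue of $\ti v_\xi A$ at $\xi$ is $\E\,A(\xi)\XX$ with the correct sign, since the double and higher poles of $A$ at the nodes are handled by the contour-shift argument). The crucial point is that because $\ti v_\xi$ is multivalued — its period around the $b$-cycle is the constant $2\pi i$ — the Stokes/residue computation in the proof of Proposition~\ref{prop: Ward} acquires a boundary contribution from the non-trivial period. Concretely, shifting the contour that computes $\sum_j\oint_{(z_j)}\E\,\ti v_\xi A\XX$ across the fundamental domain of $\T_\Lambda$ leaves a residual integral of $A$ along the cycle $[0,1]$ weighted by the period jump $2\pi i$, namely $\oint_{[0,1]}\E\,A(\xi')\XX\,\dd\xi'$.

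The plan is therefore to write the full residue identity as
\begin{equation*}
\E\,A(\xi)\XX = \E\,\LL_{\ti v_\xi}^+\XX + \frac{1}{2\pi i}\cdot 2\pi i\oint_{[0,1]}\E\,A(\xi')\XX\,\dd\xi',
\end{equation*}
where the first term is the sum of residues $\sum_j\E\,\LL_{\ti v_\xi}^+(z_j)\XX$ at the genuine nodes (the holomorphic part of the Lie derivative acts on each $X_j$ as a differential via \eqref{eq: Lv+diff}, and $\ti v_\xi$ is holomorphic at every $z_j$ with well-defined value $\zeta(\xi-z_j)+2\eta_1 z_j$ and derivative $\wp(\xi-z_j)+2\eta_1$), while the second term is exactly the period contribution. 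Invoking Theorem~\ref{lem: EOKM0}, $\frac{1}{2\pi i}\oint_{[0,1]}\E\,A(\xi')\XX\,\dd\xi' = \pa_\tau\E\,\XX$, so the boundary term becomes $2\pi i\,\pa_\tau\E\,\XX$. Since for fields in $\FF$ one has $\E\,\LL_{\ti v_\xi}^+\XX = \LL_{\ti v_\xi}^+\E\,\XX$ (the $\C$-linear Lie derivative commutes with taking correlations on differentials, as in Corollary~\ref{cor: Ward}), this yields the stated equation.

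The main obstacle I anticipate is making the period-contribution step fully rigorous: the Stokes'-theorem argument in Proposition~\ref{prop: Ward} was set up for single-valued (meromorphic) $v$, and here $\ti v_\xi$ jumps by $2\pi i$ across the $b$-cycle. I would handle this by cutting $\T_\Lambda$ along a fundamental polygon with sides identified by $z\mapsto z+1$ and $z\mapsto z+\tau$, applying Stokes on the cut domain, and carefully tracking that the two $a$-sides cancel (period $0$) while the two $\tau$-sides fail to cancel, their difference being the constant $2\pi i$ times $\oint_{[0,1]}\E\,A\XX$. One must also confirm that $\E\,A(\xi)\XX$, as a function of the integration variable along the cycle, is genuinely elliptic in that variable so that the cycle integral is well-defined and independent of the chosen representative of $[0,1]$; this holds because $A = -\tfrac12 J\odot J$ and the two-point function $\E\,J(\xi')J(z)$ is an elliptic function of $\xi'$ up to the constant $\E\,J(\xi')\overline{J(z)}$ terms, which integrate cleanly. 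Everything else is a routine identification of residues via \eqref{eq: Lv+diff} and \eqref{eq: Lv+S}.
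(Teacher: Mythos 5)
Your proposal is correct and follows essentially the same route as the paper: the paper likewise applies Green's formula on the fundamental parallelogram, uses the quasi-periodicity of $\zeta$ together with Legendre's relation $\eta_1\tau-\eta_2=\pi i$ to reduce the surviving boundary contribution to $\oint_{[0,1]}\E\,A(\xi')\XX\,\dd\xi'$, and then invokes Theorem~\ref{lem: EOKM0}; the only cosmetic difference is that the paper splits $\ti v_\xi$ into the two pieces $\zeta(\xi-z)$ and $2\eta_1 z$ and runs the Stokes argument for each before combining. One small slip: as a function of $z$ one has $\zeta(\xi-z)\sim -1/(z-\xi)$, so the residue of $\ti v_\xi$ at $z=\xi$ is $-1$ rather than $+1$; this is harmless since the minus sign in the definition of $W_v^+$ (equivalently, the paper's identity $\bp\,\ti v_\xi=-\pi\delta_\xi$) restores $\E\,W_{\ti v_\xi}^+\XX=\E\,A(\xi)\XX$ as you asserted.
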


\begin{proof}
Let us consider two vector fields $\ti v^1(z) = z, \ti v_\xi^2 (z) = \zeta(\xi-z).$ 
We remark that these two vector fields have jump discontinuities across the cycles $a = [0,1], b = [0,\tau].$  
Consider a parallelogram $D:=\{z\in\C\,|\, z = x + y\tau, x,y\in[0,1]\}$ and its boundary $\gamma_1+\gamma_2+\gamma_3+\gamma_4$ as a positively oriented curve,  
$$\gamma_1 = [0,1], \quad \gamma_2 = [1,1+\tau], \quad \gamma_3 = [1+\tau,\tau], \quad \gamma_4 = [\tau,0].$$
It follows from Green's formula that
\begin{align*}
-\frac1{\pi} \int_D\bp  \ti v^1(z) \E\,A(z)\XX &= \sum \frac1{2\pi i} \oint_{(z_j)} \ti v^1(z) \E\,A(z)\XX\,\dd z\\
&+\frac{\tau}{2\pi i}\int_{\gamma_1} \E\,A(z)\XX\,\dd z  - \frac1{2\pi i}\int_{\gamma_2} \E\,A(z)\XX\,\dd z.
\end{align*}
We use the periodicity~\eqref{eq: zeta periodicities} of $\zeta$-function and Green's formula to obtain  
\begin{align*}
-\frac1{\pi}\int_D \bp \ti v_\xi^2(z) \E\,A(z)\XX &= \sum \frac1{2\pi i} \oint_{(z_j)} \ti v_\xi^2(z) \E\,A(z)\XX\,\dd z\\
&-\frac{\eta_2}{\pi i}\int_{\gamma_1} \E\,A(z)\XX\,\dd z + \frac{\eta_1}{\pi i}\int_{\gamma_2} \E\,A(z)\XX\,\dd z.\end{align*}
Taking a suitable linear combination of the above two equations, we have 
\begin{align*}
-\frac1{\pi}\int_D \bp \ti v_\xi(z) \E\,A(z)\XX &=  \sum \frac1{2\pi i} \oint_{(z_j)} \ti v_\xi(z) \E\,A(z)\XX\,\dd z\\
&+ \frac{\eta_1\tau-\eta_2}{\pi i}\int_{\gamma_1} \E\,A(z)\XX\,\dd z.
\end{align*}
Note that $\ti v_\xi = \ti v_\xi^2 + 2\eta_1 \ti v^1.$  
Since $\bp \ti v_\xi = -\pi \delta_\xi,$ the left-hand side simplifies to $\E\,A(\xi)\XX.$ 
The first term in the right-hand side is identified with $ \LL_{\ti v_\xi}^+\E\,\XX.$ 
Due to Legendre, $\eta_1\tau-\eta_2=\pi i$ (e.g. see \cite[p.50]{Chandrasekharan}).
Theorem~\ref{lem: EOKM0} says that the second term in the right-hand side is $2\pi i\pa_\tau \,\E\,\XX.$
\end{proof}

The following corollary first appeared in \cite[Eq.~(28)]{EO87}. 
\begin{cor}[Eguchi-Ooguri]\label{Eguchi-Ooguri}
If $X_j$'s are $(\lambda_j,\lambda_{*j})$-differentials in the OPE family $\FF,$ then in the $\T_\Lambda$-uniformization 
\begin{align} \label{eq: EO87Eq28}
\E\,T(\xi)\XX &=  \E\,T(\xi) \E\,\XX+2\pi i \,\frac\pa{\pa\tau}\,\E\,\XX\\
& +\sum_j  \big((\zeta(\xi-z_j) +2\eta_1z_j)\pa_j  + \lambda_j( \wp(\xi-z_j) + 2\eta_1)\big)\E\,\XX,\nonumber
\end{align}
where $\XX = X_1(z_1)\cdots X_n(z_n).$
\end{cor}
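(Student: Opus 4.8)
The plan is to deduce the corollary directly from Theorem~\ref{EOKM0}, the only work being to rewrite the Virasoro insertion $\E\,T(\xi)\XX$ in terms of the stress-tensor insertion $\E\,A(\xi)\XX$, and then to unfold the holomorphic Lie derivative $\LL_{\ti v_\xi}^+$ via Leibniz's rule.

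First I would invoke the decomposition $T = A + \E\,T$ from Proposition~\ref{prop: T0} (with $A = -\frac12 J\odot J$), where in the $\T_\Lambda$-uniformization $\E\,T$ is a nonrandom field. Because $\E\,T(\xi)$ is nonrandom it factors out of the correlation, so that
\begin{equation*}
\E\,T(\xi)\XX = \E\,A(\xi)\XX + \E\,T(\xi)\,\E\,\XX.
\end{equation*}
Substituting Theorem~\ref{EOKM0}, namely $\E\,A(\xi)\XX = \LL_{\ti v_\xi}^+\E\,\XX + 2\pi i\,\pa_\tau\E\,\XX$, then reduces the claim to expanding $\LL_{\ti v_\xi}^+\E\,\XX$ into the sum over nodes displayed in \eqref{eq: EO87Eq28}.

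The second step is to apply Leibniz's rule for Lie derivatives to $\E\,\XX = \E\,X_1(z_1)\cdots X_n(z_n)$. Since each $X_j$ is a $(\lambda_j,\lambda_{*j})$-differential, $\E\,\XX$ is a differential of holomorphic dimension $\lambda_j$ in the variable $z_j$, so the formula \eqref{eq: Lv+diff}, $\LL_v^+ X = (v\pa + \lambda v')X$, applied node by node gives
\begin{equation*}
\LL_{\ti v_\xi}^+\E\,\XX = \sum_j \big(\ti v_\xi(z_j)\pa_j + \lambda_j\,\ti v_\xi'(z_j)\big)\E\,\XX.
\end{equation*}
It then remains only to evaluate $\ti v_\xi(z) = \zeta(\xi-z) + 2\eta_1 z$ and its $z$-derivative at the nodes: a one-line computation using $\wp = -\zeta'$ yields $\ti v_\xi(z_j) = \zeta(\xi-z_j) + 2\eta_1 z_j$ and $\ti v_\xi'(z_j) = \wp(\xi-z_j) + 2\eta_1$, which are exactly the coefficients of $\pa_j$ and of $\lambda_j$ appearing in \eqref{eq: EO87Eq28}.

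I do not expect a genuine obstacle here, since all the analytic content is already carried by Theorem~\ref{EOKM0}, and the corollary is essentially a transcription of its right-hand side in the special case of differentials. The one point demanding a moment's care is the sign produced by differentiating $\ti v_\xi$ in its second argument, $\tfrac{d}{dz}\zeta(\xi-z) = -\zeta'(\xi-z) = \wp(\xi-z)$; tracking this correctly is what generates the $+2\eta_1$ shift in the $\pa_j$-coefficient and the $\wp(\xi-z_j)+2\eta_1$ factor multiplying $\lambda_j$.
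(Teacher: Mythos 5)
Your proposal is correct and matches the paper's (implicit) derivation: the corollary is stated immediately after Theorem~\ref{EOKM0} precisely because it follows by writing $T=A+\E\,T$ with $\E\,T$ non-random, substituting the theorem, and unfolding $\LL_{\ti v_\xi}^+\E\,\XX$ node by node via \eqref{eq: Lv+diff}. Your sign bookkeeping, $\tfrac{d}{dz}\zeta(\xi-z)=\wp(\xi-z)$ giving $\ti v_\xi'(z_j)=\wp(\xi-z_j)+2\eta_1$, is exactly right.
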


\begin{eg*}
We consider $\XX = J(z)J(z_0)$ ($z\ne z_0$). 
Since $J$ is a 1-differential, 
\begin{align*}&\LL_{\ti v_\xi}\E\, J(z)J(z_0) = \pa_z\big(\ti v_\xi(z) \E\, J(z)J(z_0) \big) + \pa_{z_0}\big(\ti v_\xi(z_0) \E\, J(z)J(z_0) \big)\\
&= - \big(\wp(\xi-z)+\wp(\xi-z_0)\big)\wp(z-z_0) - \big(\zeta(\xi-z)-\zeta(\xi-z_0)\big)\wp'(z-z_0)\\
&-(2\eta_1 - \frac\pi{\Im\,\tau})\big(\wp(\xi-z)+\wp(\xi-z_0)\big) -4\eta_1\wp(z-z_0)-8\eta_1^2+\frac{4\eta_1\pi}{\Im\,\pi}\\
&-2\eta_1(z-z_0)\wp'(z-z_0).
\end{align*}
It follows from \eqref{eq: d tau wp} and \eqref{eq: g2} that
\begin{align*}
2\pi i \pa_\tau \E\, J(z)J(z_0) &= -2\wp(z-z_0)^2 - \zeta(z-z_0)\wp'(z-z_0)-\frac{g_2}4 \\
&+4\eta_1\wp(z-z_0)+2\eta_1(z-z_0)\wp'(z-z_0) + 4\eta_1^2 - \frac{\pi^2}{(\Im\,\tau)^2}.
\end{align*}
By the previous corollary and \eqref{eq: EAJJg1}, we have
\begin{align*}
\wp(\xi-z)\wp(\xi-z_0) &- \big(\wp(\xi-z)+\wp(\xi-z_0)\big)\wp(z-z_0)\\
& - \big(\zeta(\xi-z)-\zeta(\xi-z_0)\big)\wp'(z-z_0)\\
&= 2\wp(z-z_0)^2 +\zeta(z-z_0)\wp'(z-z_0)-\frac{g_2}4. 
\end{align*}
Using \eqref{eq: wp2wp''} $\wp^2 = \frac16\wp'' + \frac1{12}g_2,$ we rewrite the previous identity as follows:
\begin{align*}
\big(\wp(\xi-z)- \wp(z-z_0)\big)&\big(\wp(\xi-z_0) - \wp(z-z_0)\big) - \big(\zeta(\xi-z)-\zeta(\xi-z_0)\big)\wp'(z-z_0)\\
&= \frac12\wp''(z-z_0) +\zeta(z-z_0)\wp'(z-z_0). 
\end{align*}
Letting $z_1 = z -\xi, z_2 = \xi-z_0, z_3 = -z + z_0,$ it follows from the addition theorem \eqref{eq: addition4zeta} for Weierstrass $\zeta$-function that
$$\big(\wp(z_1)-\wp(z_3)\big) \big(\wp(z_2)-\wp(z_3)\big) + \frac12\frac{\wp'(z_1)-\wp'(z_2)}{\wp(z_1)-\wp(z_2)}\,\wp'(z_3) = \frac12\wp''(z_3)$$
if $z_1+z_2+z_3 = 0.$  
Thus we have \eqref{eq: new addition}.
\end{eg*}

We use Theorem~\ref{lem: EOKM} to extend Theorem~\ref{EOKM0} to Theorem~\ref{main: KM2'} in the case $\bfs\beta\ne\bfs0.$
 
\begin{thm}\label{EOKM} \index{Ward's equations}
For  any tensor product $\XX_{\bfs\beta}$ of fields in the OPE family $\FF_{\bfs\beta},$ we have 
$$ \E\,T_{\bfs\beta}(\xi)\XX_{\bfs\beta} =   \E\,T_{\bfs\beta}(\xi)\E\XX_{\bfs\beta}+ \LL_{\ti v_\xi}^+\E\,\XX_{\bfs\beta}+2\pi i \, \frac\pa{\pa\tau} \,\E\,\XX_{\bfs\beta}, \quad(\ti v_\xi(z) = \zeta(\xi-z) + 2\eta_1 z)$$
in the $\T_\Lambda$-uniformization. 
\end{thm}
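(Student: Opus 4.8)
The plan is to reduce the assertion to its $b=0,\bfs\beta=\bfs0$ prototype (Theorem~\ref{EOKM0}) in two stages, exactly paralleling the passage from Theorem~\ref{lem: EOKM0} to Theorem~\ref{lem: EOKM}. First I would recast the claim in terms of the stress tensor. Since $T_{\bfs\beta}=A_{\bfs\beta}+\E\,T_{\bfs\beta}$ with $\E\,A_{\bfs\beta}=0$, and $\E\,T_{\bfs\beta}(\xi)$ is non-random, we have $\E\,T_{\bfs\beta}(\xi)\XX_{\bfs\beta}=\E\,A_{\bfs\beta}(\xi)\XX_{\bfs\beta}+\E\,T_{\bfs\beta}(\xi)\,\E\,\XX_{\bfs\beta}$, so the theorem is equivalent to the reduced identity
\[
\E\,A_{\bfs\beta}(\xi)\XX_{\bfs\beta}=\LL_{\ti v_\xi}^+\E\,\XX_{\bfs\beta}+2\pi i\,\frac\pa{\pa\tau}\E\,\XX_{\bfs\beta}\qquad(\star).
\]
It therefore suffices to establish $(\star)$.

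For the base case $\bfs\beta=\bfs0$ I would run the Green's-formula argument of Theorem~\ref{EOKM0} verbatim, with $A$ replaced by $A_{\bfs0}=A+ib\pa J$. In the $\T_\Lambda$-uniformization $\Phi_{\bfs0}=\Phi$, and $A_{\bfs0}$ is holomorphic, so $z\mapsto\E\,A_{\bfs0}(z)\XX_{\bfs0}$ is meromorphic with poles only at the nodes of $\XX_{\bfs0}$. Using the multivalued field $\ti v_\xi$ with $\bp\ti v_\xi=-\pi\delta_\xi$, the $\delta$-contribution is $\E\,A_{\bfs0}(\xi)\XX_{\bfs0}$; the sum of residues at the nodes is $\LL_{\ti v_\xi}^+\E\,\XX_{\bfs0}$ by the residue form of Ward's identity (Proposition~\ref{prop: Ward} for the family $\FF_{\bfs0}$ carrying the stress tensor $A_{\bfs0}$, Theorem~\ref{SET}); and, after Legendre's relation $\eta_1\tau-\eta_2=\pi i$, the boundary term over $[0,1]$ is $\oint_{[0,1]}\E\,A_{\bfs0}(\xi)\XX_{\bfs0}\,\dd\xi=2\pi i\,\pa_\tau\E\,\XX_{\bfs0}$ by Theorem~\ref{lem: EOKM}. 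This proves $(\star)$ for $\bfs\beta=\bfs0$.

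For general $\bfs\beta$ I would insert, taking $\bfs\beta_*=\bfs0$ and the hat correspondence $\Phi_{\bfs0}[\bfs\tau]\mapsto\Phi_{\bfs\beta}[\bfs\tau]$, $\wh\E[\,\cdot\,]=\E\,\VV^\odot[\bfs\beta]\,(\cdot)$; by Theorem~\ref{main: Insertion}, $\wh\XX_{\bfs0}=\XX_{\bfs\beta}$ and $\E\,\XX_{\bfs\beta}=\wh\E\,\XX_{\bfs0}$. As computed in the proof of Theorem~\ref{lem: EOKM}, $\wh A_{\bfs0}=T_{\bfs\beta}-\E\,T$ and $\PP_{\bfs\beta}=\E\,\OO_{\bfs0}[\bfs\beta]$ here, whence $\E\,\XX_{\bfs0}\OO_{\bfs0}[\bfs\beta]=\PP_{\bfs\beta}\,\E\,\XX_{\bfs\beta}$ and
\[
\E\,T_{\bfs\beta}(\xi)\XX_{\bfs\beta}-\E\,T\,\E\,\XX_{\bfs\beta}=\frac1{\PP_{\bfs\beta}}\,\E\,A_{\bfs0}(\xi)\,\XX_{\bfs0}\OO_{\bfs0}[\bfs\beta].
\]
Applying $(\star)$ for $\bfs\beta=\bfs0$ to the string $\XX_{\bfs0}\OO_{\bfs0}[\bfs\beta]\in\FF_{\bfs0}$ and then Leibniz's rule for $\LL_{\ti v_\xi}^+$ and for $\pa_\tau$ to peel off the factor $\PP_{\bfs\beta}$, I obtain $\LL_{\ti v_\xi}^+\E\,\XX_{\bfs\beta}+2\pi i\,\pa_\tau\E\,\XX_{\bfs\beta}$ together with the term $\E\,\XX_{\bfs\beta}\,\PP_{\bfs\beta}^{-1}(\LL_{\ti v_\xi}^++2\pi i\,\pa_\tau)\PP_{\bfs\beta}$. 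Evaluating the full identity on the trivial string (or invoking Theorem~\ref{main: Ward identity} together with \eqref{eq: dtau log Pbeta}) identifies this last bracket with $\E\,T_{\bfs\beta}(\xi)-\E\,T$, so adding back $\E\,T\,\E\,\XX_{\bfs\beta}$ reconstitutes $\E\,T_{\bfs\beta}(\xi)\E\,\XX_{\bfs\beta}$ and yields the theorem.

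The delicate point is not the residue calculus but the non-tensorial bookkeeping. In the base case one must be sure that $A_{\bfs0}$ reproduces the \emph{full} Lie derivative of the $b$-twisted (pre-)Schwarzian forms of $\FF_{\bfs0}$, including the inhomogeneous pieces, rather than the naive differential Lie derivative; this is exactly what resolves the apparent mismatch one meets on testing $\XX=J(z)$, where $\E\,J=0$ yet $\LL_{\ti v_\xi}^+J$ still carries the term $ib\,\ti v_\xi''$ coming from $ib\pa J$ in $A_{\bfs0}$. In the insertion step the analogous care is in tracking the puncture operator $\PP_{\bfs\beta}$ simultaneously through $\LL_{\ti v_\xi}^+$ and $\pa_\tau$; the compatibility of the two is guaranteed by Theorem~\ref{main: Ward identity} and by \eqref{eq: dtau log Pbeta}, both already available, so no new estimate is needed.
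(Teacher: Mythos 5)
Your argument is correct, but it routes the general background charge case differently from the paper. The paper's proof of this theorem is a one\,-\,step contour computation: it applies the Green's\,-\,formula argument of Theorem~\ref{EOKM0} directly to $\E\,A_{\bfs\beta}(z)\XX_{\bfs\beta}$, collects the residues at the nodes of $\XX_{\bfs\beta}$ \emph{and} at the punctures $q_k\in\supp\,\bfs\beta$ (identifying their sum with $\LL_{\ti v_\xi}^+\E\,\XX_{\bfs\beta}$ via the Ward identity machinery of Theorem~\ref{main: Ward identity}), and then invokes Theorem~\ref{lem: EOKM} for general $\bfs\beta$ to evaluate the boundary term $\int_{\gamma_1}\E\,A_{\bfs\beta}(z)\XX_{\bfs\beta}\,\dd z$ as $2\pi i\,\pa_\tau\E\,\XX_{\bfs\beta}$. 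You instead run the contour argument only in the prototype case $\bfs\beta=\bfs0$ (where the only lemma needed at the boundary is the $\bfs\beta=\bfs0$ instance \eqref{eq: int A0} of Theorem~\ref{lem: EOKM}), and then transfer to general $\bfs\beta$ by the insertion identity $\E\,T_{\bfs\beta}(\xi)\XX_{\bfs\beta}-\E\,T\,\E\,\XX_{\bfs\beta}=\PP_{\bfs\beta}^{-1}\E\,A_{\bfs0}(\xi)\XX_{\bfs0}\OO_{\bfs0}[\bfs\beta]$, Leibniz's rule for $\LL_{\ti v_\xi}^+$ and $\pa_\tau$ applied to the product $\PP_{\bfs\beta}\,\E\,\XX_{\bfs\beta}$, and evaluation on the trivial string to identify $\PP_{\bfs\beta}^{-1}\bigl(\LL_{\ti v_\xi}^++2\pi i\,\pa_\tau\bigr)\PP_{\bfs\beta}$ with $\E\,T_{\bfs\beta}(\xi)-\E\,T$. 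This is precisely the mechanism the authors use to pass from Theorem~\ref{lem: EOKM0} to Theorem~\ref{lem: EOKM}, so you are extending that template to the pointwise statement rather than following the paper's direct contour proof. What each buys: the paper's version is shorter once the residue calculus at the punctures (the lemma giving \eqref{eq: res vET}) is in hand, and it treats fields rooted at the $q_k$ uniformly; yours avoids redoing that puncture\,-\,residue analysis by leaning on Theorem~\ref{main: Insertion} and Proposition~\ref{OPE nature}, at the cost of the extra bookkeeping in tracking $\PP_{\bfs\beta}$ through both $\LL_{\ti v_\xi}^+$ and $\pa_\tau$ — which you handle correctly, including the observation that $\E\,\XX_{\bfs\beta}$ itself depends on the $q_k$ so that the Leibniz split and the trivial\,-\,string normalization must be done consistently. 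Your remark about the inhomogeneous term $ib\,\ti v_\xi''$ for $\XX=J(z)$ is a genuine and correctly resolved subtlety: $\E\,A_{\bfs0}(\xi)J(z)=ib\,\pa_\xi\E\,J(\xi)J(z)=-ib\,\wp'(\xi-z)$ matches $\LL_{\ti v_\xi}^+\E\,J_{\bfs0}=ib\,\ti v_\xi''$ only because $J_{\bfs0}$ is treated as a PS form of order $ib$.
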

\begin{proof}
Using the same argument in the proof of Theorem~\ref{EOKM0}, we have 
\begin{align*}
-\frac1{\pi}\int_D \bp \ti v_\xi(z) \E\,A_{\bfs\beta}(z)\XX_{\bfs\beta} &=  \sum_{z_j\in S_{\XX_{\bfs\beta}}\setminus\bfs q} \frac1{2\pi i} 
\oint_{(z_j)} \ti v_\xi(z) \E\,A_{\bfs\beta}(z)\XX_{\bfs\beta}\,\dd z\\
& +\sum_{q_k\in \bfs q} \frac1{2\pi i} \oint_{(z_j)} \ti v_\xi(z) \E\,A_{\bfs\beta}(z)\XX_{\bfs\beta}\,\dd z + \int_{\gamma_1} \E\,A_{\bfs\beta}(z)\XX_{\bfs\beta}\,\dd z.
\end{align*}
Since $\bp \ti v_\xi = -\pi \delta_\xi,$ the left-hand side simplifies to $\E\,A_{\bfs\beta}(\xi)\XX_{\bfs\beta}=\E\,T_{\bfs\beta}(\xi)\XX_{\bfs\beta}-\E\,T_{\bfs\beta}(\xi)\E\,\XX_{\bfs\beta}.$ 
The first two terms in the right-hand side are identified with $ \LL_{\ti v_\xi}^+\E\,\XX_{\bfs\beta}.$
By Theorem~\ref{lem: EOKM}, the last term in the right-hand side is $2\pi i\pa_\tau \,\E\,\XX_{\bfs\beta}.$
\end{proof}

\begin{cor}
If $X_j$'s are $(\lambda_j,\lambda_{*j})$-differentials in the extended OPE family $\FF_{\bfs\beta},$ then in the $\T_\Lambda$-uniformization 
\begin{align} \label{eq: EO87Eq28beta}
\E\,T_{\bfs\beta}(\xi)\XX_{\bfs\beta} &=  \E\,T_{\bfs\beta}(\xi) \E\,\XX_{\bfs\beta}+2\pi i \,\frac\pa{\pa\tau}\,\E\,\XX_{\bfs\beta}\\
& + \sum_j  \big((\zeta(\xi-z_j) +2\eta_1z_j)\pa_j  + \lambda_j( \wp(\xi-z_j) + 2\eta_1)\big)\E\,\XX_{\bfs\beta}
\nonumber\\
&+ \sum_k  (\zeta(\xi-q_k) +2\eta_1q_k)\pa_{q_k} \E\,\XX_{\bfs\beta},\nonumber
\end{align}
where $\XX_{\bfs\beta} = X_1(z_1)\cdots X_n(z_n).$
\end{cor}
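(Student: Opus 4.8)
The plan is to specialize Theorem~\ref{EOKM} (equivalently Theorem~\ref{main: KM2'}) to the case where each $X_j$ is a genuine $(\lambda_j,\lambda_{*j})$-differential, and to read off the corollary by evaluating the abstract operator $\LL_{\ti v_\xi}^+$ explicitly. Since Theorem~\ref{EOKM} already supplies
$$\E\,T_{\bfs\beta}(\xi)\XX_{\bfs\beta} = \E\,T_{\bfs\beta}(\xi)\E\,\XX_{\bfs\beta} + \LL_{\ti v_\xi}^+\E\,\XX_{\bfs\beta} + 2\pi i\,\frac{\pa}{\pa\tau}\E\,\XX_{\bfs\beta},$$
the only real task is to expand the middle term $\LL_{\ti v_\xi}^+\E\,\XX_{\bfs\beta}$ in coordinates and separate the nodes $z_j$ (where $X_j$ is a differential with conformal weight $\lambda_j$) from the background-charge punctures $q_k$ (over which the correlation depends through the $\PP_{\bfs\beta}$-conjugation and hence gets only a $v\pa$-type action). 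This is a purely computational unfolding, not a new argument.

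First I would invoke Leibniz's rule for the Lie derivative applied to the tensor product, which reduces $\LL_{\ti v_\xi}^+\E\,\XX_{\bfs\beta}$ to a sum of contributions localized at each node of $\E\,\XX_{\bfs\beta}$. The nodes split into the differential insertions at the $z_j$ and the puncture insertions at the $q_k$. For each $z_j$, since $X_j$ is a $(\lambda_j,\lambda_{*j})$-differential, the holomorphic part of the Lie derivative is given by formula~\eqref{eq: Lv+diff}, namely $\LL_{\ti v_\xi}^+(z_j) = \ti v_\xi(z_j)\pa_j + \lambda_j\,\ti v_\xi'(z_j)$, acting on $\E\,\XX_{\bfs\beta}$. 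For each puncture $q_k$, the extended Ward identity (Theorem~\ref{main: Ward identity}) shows the dependence enters through differentiation $\pa_{q_k}$ with the covariance weight already absorbed into $\PP_{\bfs\beta}$, so the contribution is of the form $\ti v_\xi(q_k)\pa_{q_k}$ acting on $\E\,\XX_{\bfs\beta}$, without a separate $\lambda$-term visible at this level.

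Next I would substitute the explicit vector field $\ti v_\xi(z) = \zeta(\xi-z) + 2\eta_1 z$ from \eqref{eq: Ooguri vector field} and compute its derivative. Using $\wp = -\zeta'$, one gets $\ti v_\xi'(z) = \wp(\xi-z) + 2\eta_1$, and $\ti v_\xi(z_j) = \zeta(\xi-z_j) + 2\eta_1 z_j$, $\ti v_\xi(q_k) = \zeta(\xi-q_k) + 2\eta_1 q_k$. Feeding these into the node-by-node expansion yields precisely
$$\sum_j\big((\zeta(\xi-z_j)+2\eta_1 z_j)\pa_j + \lambda_j(\wp(\xi-z_j)+2\eta_1)\big)\E\,\XX_{\bfs\beta} + \sum_k (\zeta(\xi-q_k)+2\eta_1 q_k)\pa_{q_k}\E\,\XX_{\bfs\beta},$$
which is the claimed right-hand side of \eqref{eq: EO87Eq28beta} once the $\E\,T_{\bfs\beta}(\xi)\E\,\XX_{\bfs\beta}$ and $2\pi i\,\pa_\tau$ terms are carried over verbatim from Theorem~\ref{EOKM}.

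The main obstacle, such as it is, lies in justifying the asymmetric treatment of the $z_j$ and the $q_k$: one must be careful that the puncture insertions contribute only a translational $\ti v_\xi(q_k)\pa_{q_k}$ term and not an additional weight term, which is exactly the content of the $\PP_{\bfs\beta}$-conjugation in Theorem~\ref{main: Ward identity}. Concretely, the $\lambda(\beta_k)$-weight and the $\PP_{\bfs\beta}^{-1}\pa_{q_k}\PP_{\bfs\beta}$ factors that appear in the lemma preceding the proof of Theorem~\ref{main: Ward identity} are already incorporated into $\E\,\XX_{\bfs\beta}$ through its $q_k$-dependence, so differentiating $\E\,\XX_{\bfs\beta}$ with respect to $q_k$ captures them automatically. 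Apart from verifying this bookkeeping, the corollary is an immediate specialization, so I would keep the proof to a short paragraph pointing to \eqref{eq: Lv+diff}, Leibniz's rule, and the explicit form of $\ti v_\xi$.
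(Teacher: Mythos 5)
Your proposal is correct and follows the same route the paper intends: the corollary is the immediate coordinate expansion of Theorem~\ref{EOKM}, obtained by applying Leibniz's rule and formula~\eqref{eq: Lv+diff} at the differential nodes $z_j$ and substituting $\ti v_\xi(z)=\zeta(\xi-z)+2\eta_1 z$, $\ti v_\xi'(z)=\wp(\xi-z)+2\eta_1$. You also correctly isolate the one point needing care — that the punctures contribute only $\ti v_\xi(q_k)\pa_{q_k}$ because the weight terms $\lambda(\beta_k)\ti v_\xi'(q_k)$ and $\ti v_\xi(q_k)\PP_{\bfs\beta}^{-1}\pa_{q_k}\PP_{\bfs\beta}$ sit in the residue of $\ti v_\xi\,\E\,T_{\bfs\beta}$ and hence cancel between $\E\,T_{\bfs\beta}(\xi)\XX_{\bfs\beta}$ and $\E\,T_{\bfs\beta}(\xi)\,\E\,\XX_{\bfs\beta}$ (equivalently, are absorbed by the $\PP_{\bfs\beta}$-conjugation).
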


\begin{eg*}
Consider the PS form $\XX_{\bfs\beta} = J_{\bfs\beta}(z)$ of order $ib.$ 
Then Ward's equation for $J_{\bfs\beta}(z)$ reads as 
\begin{align*} 
\E\,A_{\bfs\beta}(\xi)J_{\bfs\beta}(z) &=  2\pi i \,\frac\pa{\pa\tau}\,j_{\bfs\beta}(z)+ \sum_k (\zeta(\xi-q_k) +2\eta_1q_k)\pa_{q_k}j_{\bfs\beta}(z) \\
&-ib\,\wp'(\xi-z) + \big((\zeta(\xi-z) +2\eta_1z)\pa_z  + (\wp(\xi-z) + 2\eta_1)\big)j_{\bfs\beta}(z).
\nonumber
\end{align*}
It follows that 
$$\sum_k\beta_k \Big(\frac{1}2\wp'(z-q_k)+\big(\zeta(z-q_k)+\zeta(\xi-z)-\zeta(\xi-q_k)\big)\big(\wp(z-q_k) - \wp(\xi-z)\big)\\
\Big)=0.$$
One can check it directly using the addition theorem \eqref{eq: addition4zeta} for Weierstrass $\zeta$-function and the neutrality condition $\sum_k\beta_k = 0.$

\end{eg*}

\subsection{Level two degeneracy equations and BPZ equations} \label{ss: Cardy2}
Let $\{J_n\}$ and $\{L_n\}$ denote the modes of the current field $J_{\bfs\beta}$ and the Virasoro field $T_{\bfs\beta}$ in $\FF_{\bfs\beta}$ theory, respectively: 
$$J_n(z):=\frac1{2\pi i}\oint_{(z)}(\zeta-z)^{n} J_{\bfs\beta}(\zeta)~\dd\zeta, \qquad L_n(z):=\frac1{2\pi i}\oint_{(z)}(\zeta-z)^{n+1} T_{\bfs\beta}(\zeta)~\dd\zeta.$$
Recall that a field $X$ is called \emph{Virasoro primary} if $X$ is a differential and if $X$ is in the family $\FF_{\bfs\beta}.$ \index{Virasoro primary} 
It is well known that $X$ in $\FF_{\bfs\beta}$ is Virasoro primary if and only if $L_nX = L_n \bar X=0$ for all $n\ge1$ and 
\begin{equation} \label{eq: T primary}
\begin{cases}
L_{-1}X=\pa X, \\
L_{-1}\bar X=\pa \bar X, 
\end{cases}
\qquad
\begin{cases}
L_0X=\lambda X, \\
L_0\bar X=\bar\lambda_* \bar X,
\end{cases}
\end{equation}
for some numbers $\lambda$ and $\lambda_*.$ 
(These numbers are called conformal dimensions of $X$.)
See \cite[Proposition~7.5]{KM13} for this.  
A Virasoro primary field $X$ is called \emph{current primary} \index{current primary} if 
$J_nX = J_n\bar X = 0$
for all $n\ge 1$ and
\begin{equation} \label{eq: J0}
J_0X = -iqX, \quad J_0\bar X = i\bar q_*\bar X
\end{equation}
for some numbers $q$ and $q_*$ 
(These numbers are called charges of $X$.)
It is well known that current primary fields with specific charges satisfy the level two degeneracy equations.
\index{level two degeneracy equations}

\begin{prop} \label{level2degeneracy} 
For a current primary field $\OO$ in $\FF_{\bfs\beta}$ with charges $q,q_*$ at $z\in S_\OO,$ we have 
\begin{equation} \label{eq: level2degeneracy} 
\big(L_{-2}(z)+\eta L_{-1}^2(z)\big)\OO =0 
\end{equation}
if $2q(b+q) = 1$ and if $\eta = -1/(2q^2).$
\end{prop}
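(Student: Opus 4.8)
The plan is to translate the null-vector relation into the algebra of the current modes $J_m$ and exploit that $\OO$ is current primary. The starting point is the free-field (Sugawara) form of the Virasoro field recorded in Theorem~\ref{SET}, namely $T_{\bfs\beta} = -\frac12 J_{\bfs\beta}*J_{\bfs\beta} + ib\,\pa J_{\bfs\beta}$. Expanding both sides in modes about $z$ and integrating the last term by parts inside the contour integral defining $L_n$, I would obtain the mode form
$$
L_n = -\frac12\sum_{m} {:}J_m J_{n-m}{:} - ib(n+1)J_n,
$$
where ${:}\,\cdot\,{:}$ denotes normal ordering of modes, with the annihilation modes $J_m$, $m\ge1$, to the right. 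This is the mode counterpart of the intertwining relation $\ti L_n = L_n - ib(n+1)J_n$ quoted in the introduction.

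Next I would evaluate $L_{-1}\OO$ and $L_{-2}\OO$ using the current-primary conditions $J_m\OO = 0$ for $m\ge1$ and $J_0\OO = -iq\OO$, so that only finitely many terms of the Sugawara sum survive. For $n=-1$ the anomalous term drops (since $n+1=0$) and the surviving contributions give $L_{-1}\OO = iq\,J_{-1}\OO$; comparing with the Virasoro-primary identity $L_{-1}\OO=\pa\OO$ yields the relation $J_{-1}\OO = -\frac iq\,\pa\OO$. For $n=-2$ the same bookkeeping produces
$$
L_{-2}\OO = -\frac12 J_{-1}^2\OO + i(q+b)J_{-2}\OO.
$$

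To finish I must express $L_{-1}^2\OO = \pa^2\OO$ through the same current descendants. Here I would use the commutator $[L_{-1},J_{-1}]=J_{-2}$, which follows from the OPE $T_{\bfs\beta}(\zeta)J_{\bfs\beta}(z)\sim \frac{2ib}{(\zeta-z)^3}+\frac{J_{\bfs\beta}(z)}{(\zeta-z)^2}+\frac{\pa J_{\bfs\beta}(z)}{\zeta-z}$; the cubic-pole anomaly contributes a factor proportional to $m(m+1)$ and hence vanishes at $m=-1$. This gives $\pa(J_{-1}\OO)=J_{-2}\OO+iq\,J_{-1}^2\OO$, and therefore $\pa^2\OO = \pa(iq\,J_{-1}\OO)=iq\,J_{-2}\OO - q^2 J_{-1}^2\OO$. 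Substituting into $(L_{-2}+\eta L_{-1}^2)\OO$ and grouping the terms proportional to $J_{-1}^2\OO$ and $J_{-2}\OO$ leaves
$$
\Big(-\frac12 - \eta q^2\Big)J_{-1}^2\OO + i\big((q+b)+\eta q\big)J_{-2}\OO.
$$
The first coefficient vanishes exactly when $\eta=-1/(2q^2)$, and then the second vanishes iff $q+b = 1/(2q)$, i.e. iff $2q(b+q)=1$, which is precisely the stated hypothesis; both coefficients being zero makes the expression vanish identically.

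The main obstacle is not the combinatorics above but the rigorous justification of the mode algebra inside the Fock-space-field formalism: one must verify the Sugawara mode expansion of $T_{\bfs\beta}$ and the commutation relation $[L_m,J_n]=-nJ_{m+n}+ib\,m(m+1)\delta_{m+n,0}$ through the contour-deformation arguments, and check that the normal-ordering prescription is compatible with the $*_n$-products used to define $L_n$ and $J_n$. These are the exact analogues of the simply connected computations in \cite[Corollaries~5.4--5.5 and Proposition~7.5]{KM13}, so I would import that machinery rather than reprove it.
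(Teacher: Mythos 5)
Your argument is correct and is essentially the paper's own method: the paper defers the level-two case to \cite[Proposition~11.2]{KM13}, but its proof of the level-three analogue (Proposition~\ref{level3degeneracy}) runs exactly as you do, via the mode identities \eqref{eq: [L,J]} and \eqref{eq: LJ}, the current-primary conditions, and the resulting expressions for $L_{-1}\OO$ and $L_{-2}\OO$ in terms of $J_{-1}^2\OO$ and $J_{-2}\OO$. Your coefficient bookkeeping matches (with $q=2a$ the relation $L_{-1}\OO=iq\,J_{-1}\OO$ is the paper's \eqref{eq: J-1}), and the remaining foundational facts you flag are indeed the imported machinery from \cite{KM13}.
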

See \cite[Proposition~11.2]{KM13}) for its proof.

\subsubsec{Genus zero case} We now derive BPZ equations (Belavin-Polyakov-Zamolodchikov equations) on the Riemann sphere. We remark that in the upper half-plane $\H$ with $\bfs\beta = 2b\delta_\infty,$ a different type of BPZ equations plays an important role in the context of SLE martingales, see \cite{KM13}.

Given $b,$ let $a$ be one of the solutions to the quadratic equation $2x(x+b) = 1$ for $x$ 
and let $\OO_{\bfs\beta}(z)\equiv \OO_{\bfs\beta}^{(a,\bfs\tau)}(z) := \OO_{\bfs\beta}[a\cdot z + \bfs \tau]$ $(\bfs\tau = \sum \tau_j \cdot z_j, j\le0)$ with the neutrality condition $(\NC_0): a + \sum \tau_j = 0.$

\begin{thm} \label{BPZ0} \index{BPZ equations}
If $z\notin\supp\,\bfs\beta \cup \supp\,\bfs\tau,$ then  for any tensor product $\XX_{\bfs\beta}$ of fields $X_j$ in $\FF_{\bfs\beta},$ we have 
in the $\wh\C$-uniformization, 
$$\frac1{2a^2}\pa_z^2\E\, \OO_{\bfs\beta}(z) \XX_{\bfs\beta} = \E\,T_{\bfs\beta}(z) \E\,\OO_{\bfs\beta}(z) \XX_{\bfs\beta} +\E\, \check\LL_{k_z}^+\OO_{\bfs\beta}(z)\XX_{\bfs\beta},$$
where the vector field $k_z$ is given by $k_z(\zeta) = 1/(z-\zeta)$ in the identity chart of $\C$ and the Lie derivative operator $\check\LL_{k_z}^+$ does not apply to the $z$-variable. 
\end{thm}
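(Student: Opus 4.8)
The plan is to combine the level two degeneracy equation of Proposition~\ref{level2degeneracy} with the genus zero Ward equation of Proposition~\ref{Ward equation0}. First I would observe that $\OO_{\bfs\beta}(z)=\OO_{\bfs\beta}[a\cdot z+\bfs\tau]$ is a current primary field whose charge at the node $z$ equals $a$ (since $z\notin\supp\,\bfs\beta$, the effective charge coincides with the Wick exponent $a$) and whose conformal dimension is $\lambda=\frac12 a^2-ab$. Because $a$ is a root of $2x(x+b)=1$, the charge $q=a$ satisfies $2q(b+q)=1$, so Proposition~\ref{level2degeneracy} applies with $\eta=-1/(2a^2)$ and gives $\big(L_{-2}(z)-\frac1{2a^2}L_{-1}^2(z)\big)\OO_{\bfs\beta}(z)=0$ inside every correlation. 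Using the translation relation $L_{-1}=\pa$ from \eqref{eq: T primary}, this rewrites as $\E\,L_{-2}(z)\OO_{\bfs\beta}(z)\XX_{\bfs\beta}=\frac1{2a^2}\pa_z^2\,\E\,\OO_{\bfs\beta}(z)\XX_{\bfs\beta}$, which is precisely the left-hand side of the theorem.

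Next I would unfold the mode $L_{-2}(z)$ through its contour definition. Writing $\YY=\OO_{\bfs\beta}(z)\XX_{\bfs\beta}$, we have
$$\E\,L_{-2}(z)\OO_{\bfs\beta}(z)\XX_{\bfs\beta}=\frac1{2\pi i}\oint_{(z)}\frac{\E\,T_{\bfs\beta}(\zeta)\YY}{\zeta-z}\,\dd\zeta,$$
and into the numerator I would substitute Ward's equation $\E\,T_{\bfs\beta}(\zeta)\YY=\E\,T_{\bfs\beta}(\zeta)\,\E\,\YY+\E\,\LL_{k_\zeta}^+\YY$ with $k_\zeta(w)=1/(\zeta-w)$. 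The first summand is controlled by analyticity: since $z\notin\supp\,\bfs\beta$, the map $\zeta\mapsto\E\,T_{\bfs\beta}(\zeta)$ is holomorphic at $\zeta=z$ and $\E\,\YY$ is independent of $\zeta$, so the residue at $\zeta=z$ of $(\zeta-z)^{-1}\E\,T_{\bfs\beta}(\zeta)\,\E\,\YY$ equals $\E\,T_{\bfs\beta}(z)\,\E\,\OO_{\bfs\beta}(z)\XX_{\bfs\beta}$, the first term on the right of the theorem.

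For the Lie derivative piece I would apply Leibniz's rule to split $\LL_{k_\zeta}^+\YY$ into the action on the node $z$ and the action $\check\LL_{k_\zeta}^+$ on the nodes of $\XX_{\bfs\beta}$. Since $\OO_{\bfs\beta}(z)$ is a $(\lambda,\lambda)$-differential, the self-action contributes $\E[\LL_{k_\zeta}^+(z)\OO_{\bfs\beta}(z)]\XX_{\bfs\beta}=\frac{\pa_z F}{\zeta-z}+\frac{\lambda F}{(\zeta-z)^2}$ with $F=\E\,\OO_{\bfs\beta}(z)\XX_{\bfs\beta}$; after dividing by $\zeta-z$ these terms yield only poles of order two and three at $\zeta=z$ and therefore have vanishing residue. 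This is the mechanism by which $L_{-2}$ discards the singular part of the $T_{\bfs\beta}$--$\OO_{\bfs\beta}$ operator product expansion. The remaining piece $\E\,\OO_{\bfs\beta}(z)\,\check\LL_{k_\zeta}^+\XX_{\bfs\beta}$ is holomorphic at $\zeta=z$ (all coefficients $1/(\zeta-w)$ and $1/(\zeta-w)^2$ with $w$ a node of $\XX_{\bfs\beta}$, hence $w\ne z$, are regular there), so its residue against $(\zeta-z)^{-1}$ is its value at $\zeta=z$, namely $\E\,\check\LL_{k_z}^+\OO_{\bfs\beta}(z)\XX_{\bfs\beta}$ with $k_z(w)=1/(z-w)$. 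Assembling the three residues produces the stated identity.

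The genuinely delicate points are, first, verifying that $\OO_{\bfs\beta}(z)$ is current primary with charge exactly $a$ so that Proposition~\ref{level2degeneracy} applies (this rests on the primary-field singular OPE $T_{\bfs\beta}(\zeta)\OO_{\bfs\beta}[\bfs\tau]\sim\lambda(\zeta-z)^{-2}\OO_{\bfs\beta}[\bfs\tau]+(\zeta-z)^{-1}\pa\OO_{\bfs\beta}[\bfs\tau]$ together with the current-primary conditions $J_n\OO=0$, $J_0\OO=-ia\OO$), and second, the bookkeeping of which contributions are singular versus regular at $\zeta=z$ — in particular the nonobvious but crucial fact that the self-contraction at $z$ drops out of the residue. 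Everything else is routine residue calculus, and the hypothesis $z\notin\supp\,\bfs\beta\cup\supp\,\bfs\tau$ is exactly what secures the analyticity needed at $\zeta=z$.
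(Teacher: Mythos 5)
Your proposal is correct and follows essentially the same route as the paper: combine the genus-zero Ward equation with the level two degeneracy equation, identifying $L_{-2}\OO_{\bfs\beta}(z)$ by discarding the singular part of the $T_{\bfs\beta}$--$\OO_{\bfs\beta}$ operator product expansion at $\zeta=z$. The paper phrases this last step as ``subtract $\E\,\LL_{k_\xi}^+(z)\OO_{\bfs\beta}(z)\XX_{\bfs\beta}$ from both sides and let $\xi\to z$'' to obtain $\E\,T_{\bfs\beta}*\OO_{\bfs\beta}(z)\XX_{\bfs\beta}$, which is exactly the residue computation you carry out explicitly via the contour definition of $L_{-2}$, so the two arguments coincide.
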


\begin{proof}
It follows from Ward's equations (Proposition~\ref{Ward equation0}) that 
\begin{align*}
\E\,T_{\bfs\beta}(\xi) \OO_{\bfs\beta}(z) \XX_{\bfs\beta}  &= \E\,T_{\bfs\beta}(\xi) \E\,\OO_{\bfs\beta}(z) \XX_{\bfs\beta} \\
&+ \E\,\LL_{k_\xi}^+(z)\OO_{\bfs\beta}(z)\XX_{\bfs\beta}+ \E\,\check\LL_{k_\xi}^+\OO_{\bfs\beta}(z) \XX_{\bfs\beta}.
\end{align*}
Note that $\LL_{k_\xi}^+(z)\OO_{\bfs\beta}(z)$ is the singular part of the operator product expansion of $T_{\bfs\beta}(\xi)$ and $\OO_{\bfs\beta}(z)$ as $\xi\to z.$ 
Subtracting $\E\,\LL_{k_\xi}^+(z)\OO_{\bfs\beta}(z)\XX_{\bfs\beta}$ from both sides and then taking the limits as $\xi\to z,$ we obtain 
$$\E\,T_{\bfs\beta}*\OO_{\bfs\beta}(z) \XX_{\bfs\beta} = \E\,T_{\bfs\beta}(z) \E\,\OO_{\bfs\beta}(z) \XX_{\bfs\beta} +\E\, \check\LL_{k_z}^+\OO_{\bfs\beta}(z)\XX_{\bfs\beta},$$
where the OPE product is taken with respect to $z.$
Theorem now follows from the level two degeneracy equations~\eqref{eq: level2degeneracy} for $\OO_{\bfs\beta}.$
\end{proof}

Let $Z_{\bfs\beta} = \E\,\OO_{\bfs\beta}.$
The next corollary is immediate from Theorem~\ref{BPZ0} with $\XX \equiv 1.$ 
\begin{cor} \index{null vector equations}
The functions $Z_{\bfs\beta}$ satisfy the null vector equations
$$\frac1{2a^2} \pa_z^2 Z_{\bfs\beta}  = \E\,T_{\bfs\beta}(z) Z_{\bfs\beta} + \check\LL_{k_z}^+Z_{\bfs\beta}.$$
\end{cor}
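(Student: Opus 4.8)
The plan is to obtain the corollary as the special case $\XX_{\bfs\beta} \equiv 1$ of Theorem~\ref{BPZ0}, so the work is entirely a matter of correctly identifying the three terms under this specialization. First I would recall that by definition $Z_{\bfs\beta} = \E\,\OO_{\bfs\beta}(z)$, so that substituting the trivial string into the left-hand side of Theorem~\ref{BPZ0} gives
$$\frac1{2a^2}\pa_z^2\, \E\,\OO_{\bfs\beta}(z)\cdot 1 = \frac1{2a^2}\pa_z^2 Z_{\bfs\beta},$$
since the tensor product of $\OO_{\bfs\beta}(z)$ with the constant field $1$ is just $\OO_{\bfs\beta}(z)$ itself.

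Next I would treat the first term on the right-hand side. With $\XX_{\bfs\beta} \equiv 1$ one has $\E\,\OO_{\bfs\beta}(z)\cdot 1 = Z_{\bfs\beta}$, and the non-random Schwarzian form $\E\,T_{\bfs\beta}(z)$ multiplies this correlation, giving exactly $\E\,T_{\bfs\beta}(z)\, Z_{\bfs\beta}$. The only substantive point is the second term $\E\,\check\LL_{k_z}^+\OO_{\bfs\beta}(z)$: here I would observe that since the empty string $\XX \equiv 1$ contributes no nodes, the only nodes on which $\check\LL_{k_z}^+$ acts are those of $\bfs\tau$ and of the background charge $\bfs\beta$ (all with the $z$-variable excluded, as stipulated in Theorem~\ref{BPZ0}). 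Because the operator $\check\LL_{k_z}^+$ is a differential operator in these remaining coordinates acting on correlation functions, it commutes with the expectation, so $\E\,\check\LL_{k_z}^+\OO_{\bfs\beta}(z) = \check\LL_{k_z}^+ \E\,\OO_{\bfs\beta}(z) = \check\LL_{k_z}^+ Z_{\bfs\beta}$.

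Assembling these three identities yields
$$\frac1{2a^2}\pa_z^2 Z_{\bfs\beta} = \E\,T_{\bfs\beta}(z)\, Z_{\bfs\beta} + \check\LL_{k_z}^+ Z_{\bfs\beta},$$
which is the desired null vector equation. There is no genuine obstacle here, as the statement is a direct corollary. If anything requires care, it is only the bookkeeping of the second paragraph---confirming that passing to $\XX \equiv 1$ leaves precisely the correlation $Z_{\bfs\beta}$ on which $\check\LL_{k_z}^+$ and $\E$ may be interchanged, rather than introducing any spurious contribution from the trivial string.
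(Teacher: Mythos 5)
Your proof is correct and is exactly the paper's argument: the authors state that the corollary is immediate from Theorem~\ref{BPZ0} applied to the trivial string $\XX\equiv 1$, which is precisely the specialization you carry out. Your extra bookkeeping on the term $\E\,\check\LL_{k_z}^+\OO_{\bfs\beta}(z)=\check\LL_{k_z}^+Z_{\bfs\beta}$ is consistent with the explicit formula for $\check\LL_{k_z}^+Z_{\bfs\beta}$ given right after the corollary.
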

In the previous corollary, $\check\LL_{k_z}^+Z_{\bfs\beta}$ reads as 
$$\check\LL_{k_z}^+Z_{\bfs\beta} = \sum_j \Big(\frac{\lambda_j}{(z-z_j)^2} + \frac{\pa_j}{z-z_j} \Big)Z_{\bfs\beta}  +\sum_k \frac{\pa_{q_k}}{z-q_k}Z_{\bfs\beta}.$$

We denote
$$\wh\E_{\bfs\beta} \XX_{\bfs\beta}  := \frac{\E\,\OO_{\bfs\beta}(z) \XX_{\bfs\beta} }{\E\,\OO_{\bfs\beta}(z)}.$$
Then we have 
\begin{equation} \label{eq: hat'}
\frac{\pa_z\E\,\OO_{\bfs\beta}(z) \XX_{\bfs\beta} } {\E\,\OO_{\bfs\beta}(z)} = \pa_z \wh\E_{\bfs\beta} \XX_{\bfs\beta}   +  \frac{\pa_z Z_{\bfs\beta}}{Z_{\bfs\beta}} \wh\E_{\bfs\beta} \XX_{\bfs\beta}  
\end{equation}
and similar statements hold for $z_j$-derivative and $q_k$-derivative. 
Also we have 
\begin{equation} \label{eq: hat''}
\frac{\pa_z^2\E\,\OO_{\bfs\beta}(z) \XX_{\bfs\beta} } {\E\,\OO_{\bfs\beta}(z)} = \pa_z^2 \wh\E_{\bfs\beta} \XX_{\bfs\beta}   +2 \frac{\pa_z Z_{\bfs\beta}}{Z_{\bfs\beta}}\pa_z \wh\E_{\bfs\beta} \XX_{\bfs\beta} + \frac{\pa_z^2 Z_{\bfs\beta}}{Z_{\bfs\beta}} \wh\E_{\bfs\beta} \XX_{\bfs\beta} .
\end{equation}
Using the above relations to rewrite the BPZ equations in terms of $\wh\E_{\bfs\beta} \XX$ and its derivatives, we arrive to the BPZ-Cardy equations, see the next corollary. 
We remark that the coefficient of $\wh\E_{\bfs\beta} \XX$ vanishes in consequence of the null vector equations for $Z_{\bfs\beta.}$

\begin{cor} \index{BPZ-Cardy equations}
For any tensor product $\XX_{\bfs\beta}$ of fields in the extended OPE family $\FF_{\bfs\beta},$ 
$$\frac1{2a^2}\Big(\pa_z^2 \wh\E_{\bfs\beta} \XX_{\bfs\beta}  + 2\frac{\pa_z Z_{\bfs\beta}}{Z_{\bfs\beta}} \pa_z \wh\E_{\bfs\beta} \XX_{\bfs\beta} \Big) = \check\LL_{k_z}^+\wh\E_{\bfs\beta} \XX_{\bfs\beta}$$
in the $\wh\C$-uniformization.
\end{cor}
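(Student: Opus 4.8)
The plan is to derive the BPZ--Cardy equation from the BPZ equation of Theorem~\ref{BPZ0} by a change of unknown: factor out the partition function $Z_{\bfs\beta} = \E\,\OO_{\bfs\beta}(z)$ and rewrite everything in terms of the normalized correlation $\wh\E_{\bfs\beta}\XX_{\bfs\beta}$. Write $F := \E\,\OO_{\bfs\beta}(z)\XX_{\bfs\beta}$, so that $F = Z_{\bfs\beta}\,\wh\E_{\bfs\beta}\XX_{\bfs\beta}$ and Theorem~\ref{BPZ0} reads $\frac1{2a^2}\pa_z^2 F = \E\,T_{\bfs\beta}(z)\,F + \E\,\check\LL_{k_z}^+\OO_{\bfs\beta}(z)\XX_{\bfs\beta}$. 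First I would record the two product rules I intend to substitute: the rule for the $z$-derivative, which is precisely the content of \eqref{eq: hat''}, and the analogous Leibniz rule for the operator $\check\LL_{k_z}^+$.

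The key step is the identity
$$\E\,\check\LL_{k_z}^+\OO_{\bfs\beta}(z)\XX_{\bfs\beta} = (\check\LL_{k_z}^+ Z_{\bfs\beta})\,\wh\E_{\bfs\beta}\XX_{\bfs\beta} + Z_{\bfs\beta}\,\check\LL_{k_z}^+\wh\E_{\bfs\beta}\XX_{\bfs\beta}.$$
To establish it I would split $\check\LL_{k_z}^+$, which acts within correlations as a differential operator in the node positions other than $z$, into its first-order (derivation) part $\sum_{w\ne z}k_z(w)\pa_w$ and its multiplication (conformal-weight) part $\sum_{w\ne z}\lambda_w k_z'(w)$. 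The derivation part distributes over the factorization $F = Z_{\bfs\beta}\,\wh\E_{\bfs\beta}\XX_{\bfs\beta}$ by the ordinary product rule, while the weight terms split according to whether the node $w$ lies in $\supp\,\bfs\tau\cup\supp\,\bfs\beta$ (where it is carried by $Z_{\bfs\beta}$) or is a node of $\XX_{\bfs\beta}$ (where it is carried by $\wh\E_{\bfs\beta}\XX_{\bfs\beta}$); since $Z_{\bfs\beta}$ does not depend on the nodes of $\XX_{\bfs\beta}$, the cross multiplication terms cancel and the identity follows.

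With these two product rules in hand I would substitute into Theorem~\ref{BPZ0} and collect the terms proportional to $\wh\E_{\bfs\beta}\XX_{\bfs\beta}$ that carry no $z$-derivative. Their coefficient is
$$\frac1{2a^2}\pa_z^2 Z_{\bfs\beta} - \E\,T_{\bfs\beta}(z)\, Z_{\bfs\beta} - \check\LL_{k_z}^+ Z_{\bfs\beta},$$
which vanishes by the null vector equation for $Z_{\bfs\beta}$ (the Corollary following Theorem~\ref{BPZ0}, equivalently Theorem~\ref{BPZ0} applied to the trivial string $\XX\equiv1$). What remains is
$$\frac1{2a^2}\Big(Z_{\bfs\beta}\,\pa_z^2\wh\E_{\bfs\beta}\XX_{\bfs\beta} + 2(\pa_z Z_{\bfs\beta})\,\pa_z\wh\E_{\bfs\beta}\XX_{\bfs\beta}\Big) = Z_{\bfs\beta}\,\check\LL_{k_z}^+\wh\E_{\bfs\beta}\XX_{\bfs\beta},$$
and dividing by $Z_{\bfs\beta}$ gives the asserted equation in the $\wh\C$-uniformization.

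The main obstacle is the bookkeeping in the Leibniz identity above: one must verify that the conformal-weight pieces of $\check\LL_{k_z}^+$ are distributed correctly between $Z_{\bfs\beta}$ and $\wh\E_{\bfs\beta}\XX_{\bfs\beta}$, so that the multiplication cross-terms cancel exactly rather than leaving a spurious zeroth-order remainder. Once this cancellation is confirmed, the null vector equation absorbs the only surviving $z$-derivative-free term, and the rest of the computation is the routine product-rule algebra encoded in \eqref{eq: hat'} and \eqref{eq: hat''}.
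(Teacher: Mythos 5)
Your proof is correct and follows essentially the same route the paper intends: substitute the product rules \eqref{eq: hat'}--\eqref{eq: hat''} and their Leibniz analogue for $\check\LL_{k_z}^+$ into Theorem~\ref{BPZ0}, observe that the coefficient of $\wh\E_{\bfs\beta}\XX_{\bfs\beta}$ vanishes by the null vector equation for $Z_{\bfs\beta}$, and divide by $Z_{\bfs\beta}$. The paper only sketches this computation, so your explicit check of the Leibniz identity for $\check\LL_{k_z}^+$ --- with the conformal-weight terms at $\supp\,\bfs\tau\cup\supp\,\bfs\beta$ carried by $Z_{\bfs\beta}$ and those at the nodes of $\XX_{\bfs\beta}$ carried by $\wh\E_{\bfs\beta}\XX_{\bfs\beta}$ --- is a correct filling-in of the omitted bookkeeping.
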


If $\XX_{\bfs\beta} = X_1(z_1)\cdots X_n(z_n),$ then $\check\LL_{k_z}^+\wh\E_{\bfs\beta} \XX_{\bfs\beta}$ in the previous corollary reads as 
$$\check\LL_{k_z}^+\wh\E_{\bfs\beta} \XX_{\bfs\beta} = \sum_{j>0} \LL_{k_z}^+(z_j) \wh\E_{\bfs\beta} \XX_{\bfs\beta} + \Big(\sum_{j\le 0} \frac{\pa_j}{z-z_j}  +\sum_k \frac{\pa_{q_k}}{z-q_k} \Big)\wh\E_{\bfs\beta} \XX_{\bfs\beta}.$$

\subsubsec{Genus one case} 
We now prove Theorem~\ref{BPZ1}, the BPZ equations in $\T_\Lambda$ associated with the level two degeneracy equations.  
Let $\ti v_\xi(z) = \zeta(\xi-z) + 2\eta_1 z.$
Given $b,$ let $a$ be one of the solutions to the quadratic equation $2x(x+b) = 1$ for $x$ 
and let $\OO_{\bfs\beta}(z)\equiv \OO_{\bfs\beta}^{(a,\bfs\tau)}(z) := \OO_{\bfs\beta}[a\cdot z + \bfs \tau]$  with the neutrality condition $(\NC_0).$

\begin{proof}[Proof of Theorem~\ref{BPZ1}] \index{BPZ equations}
It follows from Ward's equations~\eqref{eq: EOKM_main} that 
$$\E\,T_{\bfs\beta}(\xi)\OO_{\bfs\beta}(z)\XX_{\bfs\beta} = \E\,T_{\bfs\beta}(\xi)\,\E\,\OO_{\bfs\beta}(z)\XX_{\bfs\beta}  +2\pi i\,\pa_\tau \E\,\OO_{\bfs\beta}(z)\XX_{\bfs\beta} + \LL_{\ti v_\xi}^+\E\,\OO_{\bfs\beta}(z)\XX_{\bfs\beta}.$$
Subtracting $\Sing_{\xi\to z}\E\, T_{\bfs\beta}(\xi) \OO_{\bfs\beta}(z)\XX_{\bfs\beta}$ from both sides and then taking the limit as $\xi\to z,$ we find
\begin{align*}
\E\,L_{-2}(z)\OO_{\bfs\beta}(z)\XX_{\bfs\beta} &= \E\,T_{\bfs\beta}(z)\,\E\,\OO_{\bfs\beta}(z)\XX_{\bfs\beta}  +2\pi i\,\pa_\tau \E\,\OO_{\bfs\beta}(z)\XX_{\bfs\beta} +\check\LL_{\ti v_z}^+\E\,\OO_{\bfs\beta}(z)\XX_{\bfs\beta}\\
&+ \lim_{\xi\to z} \big(\zeta(\xi-z)-\frac1{\xi-z} +2\eta_1z\big)\pa_z\E\,\OO_{\bfs\beta}(z)\XX_{\bfs\beta}\\
&+ \lim_{\xi\to z} \lambda\big( \wp(\xi-z) - \frac1{(\xi-z)^2}+ 2\eta_1\big)\E\,\OO_{\bfs\beta}(z)\XX_{\bfs\beta}.
\end{align*}
Theorem is immediate from the level two degeneracy equation for $\OO_{\bfs\beta}.$
\end{proof}

Let $Z_{\bfs\beta} =\E\,\OO_{\bfs\beta}.$
\begin{cor} \label{null1} \index{null vector equations}
The functions $Z_{\bfs\beta}$ satisfy the null vector equations
$$
\frac1{2a^2}\pa_z^2 Z_{\bfs\beta} =  \big( \E\,T_{\bfs\beta}(z) +2\pi i\,\pa_\tau + 2\eta_1z\pa_z +2\lambda\eta_1+\check\LL_{\ti v_z}^+\big)  Z_{\bfs\beta}$$
in the $\T_\Lambda$-uniformization.
\end{cor}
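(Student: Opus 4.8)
The plan is to obtain the statement as the special case of Theorem~\ref{BPZ1} in which the accompanying string is trivial. First I would set $\XX_{\bfs\beta}\equiv 1$; since $\E\,\OO_{\bfs\beta}(z)\cdot 1 = \E\,\OO_{\bfs\beta}(z) = Z_{\bfs\beta}$ by the very definition $Z_{\bfs\beta} = \E\,\OO_{\bfs\beta}$, every correlation $\E\,\OO_{\bfs\beta}(z)\XX_{\bfs\beta}$ appearing in Theorem~\ref{BPZ1} collapses to $Z_{\bfs\beta}$. Because Theorem~\ref{BPZ1} is valid for \emph{any} tensor product $\XX_{\bfs\beta}$ of fields in $\FF_{\bfs\beta}$, the constant string is an admissible choice and nothing extra needs to be verified.

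Carrying this substitution through the four terms of Theorem~\ref{BPZ1}, the left-hand side becomes $\frac1{2a^2}\pa_z^2 Z_{\bfs\beta}$ and the right-hand side becomes
$$\big(\E\,T_{\bfs\beta}(z) + 2\pi i\,\pa_\tau + 2\eta_1 z\pa_z + 2\lambda\eta_1 + \check\LL_{\ti v_z}^+\big)Z_{\bfs\beta},$$
which is exactly the asserted null vector equation. This mirrors the genus zero passage from Theorem~\ref{BPZ0} to its corollary, carried out there by the same specialization $\XX\equiv 1$.

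The one point deserving attention---rather than a genuine obstacle---is the interpretation of $\check\LL_{\ti v_z}^+ Z_{\bfs\beta}$. Even with the trivial string, $Z_{\bfs\beta} = \E\,\OO_{\bfs\beta}[a\cdot z + \bfs\tau]$ still depends on the nodes $z_j$ of $\bfs\tau$ and on the puncture positions $q_k$ of $\bfs\beta$, so $\check\LL_{\ti v_z}^+$ acts as a nontrivial first-order differential operator in those variables while, by the meaning of the check, leaving the distinguished insertion point $z$ untouched. All the analytic content---the level two degeneracy equation for $\OO_{\bfs\beta}$ (Proposition~\ref{level2degeneracy}) combined with Ward's equation~\eqref{eq: EOKM_main}---has already been absorbed into the proof of Theorem~\ref{BPZ1}, so no further estimate or computation is required and the corollary follows immediately.
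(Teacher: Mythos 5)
Your proposal is correct and is exactly the paper's route: Corollary~\ref{null1} is obtained from Theorem~\ref{BPZ1} by the specialization $\XX_{\bfs\beta}\equiv 1$, just as the genus zero null vector equation is obtained from Theorem~\ref{BPZ0}. Your remark on the meaning of $\check\LL_{\ti v_z}^+ Z_{\bfs\beta}$ (acting on the $z_j$ and $q_k$ but not on $z$) matches the explicit expansion the paper records immediately after the corollary.
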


In the previous corollary, $\check\LL_{\ti v_z}^+ Z_{\bfs\beta}$ reads as 
\begin{align*}
\check\LL_{\ti v_z}^+ Z_{\bfs\beta} &= \sum_j \big(\zeta(z-z_j) + 2\eta_1 z_j\big)\pa_j Z_{\bfs\beta} + \lambda_j\big(\wp(z-z_j) + 2\eta_1\big)Z_{\bfs\beta}\\
&+ \sum_k \big(\zeta(z-q_k) + 2\eta_1 q_k\big)\pa_{q_k} Z_{\bfs\beta}.
\end{align*}

\begin{eg*}
Let us consider the case $\bfs\beta = \bfs 0$ and $Z(z,z_0) = \E\,\OO[a\cdot z - a\cdot z_0].$ We verify the null vector equation for $Z:$
$$Z(z,z_0) = |\theta'(0)|^{2a^2}|\theta(z-z_0)|^{-2a^2}\exp\big(2\pi a^2\frac{(\Im(z-z_0))^2}{\Im\,\tau}\big).$$
Let $w = z-z_0.$
Differentiating $Z$ with respect to $z,$ we have 
\begin{equation}\label{eg: pa Z}
\frac{\pa_zZ}{Z} = -a^2\big(\zeta(w)-2\eta_1w+2\pi i \,\frac{\Im\,w}{\Im\,\tau}\big).
\end{equation}
A further differentiation gives 
\begin{align*}
\frac1{2a^2} \frac{\pa_z^2Z}{Z} & =  \frac1{2a^2} \pa_z \frac{\pa_zZ}{Z} + \frac1{2a^2}\Big(\frac{\pa_zZ}{Z} \Big)^2\\
&=\frac12\wp(w) + \eta_1 -\frac\pi{2\,\Im\,\tau}+\frac{a^2}2\Big(\zeta(w)^2 + 4\eta_1^2w^2 -4\pi^2\frac{(\Im\,w)^2}{(\Im\,\tau)^2}\Big)\\
&-a^2\Big(2\eta_1w\zeta(w) -2\pi i \,\frac{\Im\,w}{\Im\,\tau}\, \zeta(w)+4\pi \eta_1i \,w\,\frac{\Im\,w}{\Im\,\tau}  \Big).
\end{align*}
Differentiating $Z$ with respect to $\tau,$ it follows from \eqref{eq: heat4theta} and \eqref{eq: theta'0} that 
\begin{align*}
2\pi i \,\frac{\pa_\tau Z}{Z} &=  2\pi i a^2 \pa_\tau \big(\log\theta'(0) -\log\theta(w) + \frac{2\pi}{\Im\,\tau}(\Im\,w)^2 \big)\\
& = a^2\big(-3\eta_1 -\frac12 \frac{\theta''(w)}{\theta(w)} - 2\pi^2\frac{(\Im\,w)^2}{(\Im\,\tau)^2} \big)\\
& = a^2\big(-2\eta_1 -\frac12\zeta^2(w) + 2\eta_1w\zeta(w) +\frac12\wp(w) - 2\eta_1^2 w^2 - 2\pi^2\frac{(\Im\,w)^2}{(\Im\,\tau)^2} \big).\\
\end{align*}
Using \eqref{eg: pa Z}, we have 
\begin{align*}
\frac{2\eta_1}Z\big( z\pa_z + z_0\pa_{z_0}\big)Z = 2\eta_1a^2 \big(-w\zeta(w)+2\eta_1w^2-2\pi i \,w\,\frac{\Im\,w}{\Im\,\tau}\big)
\end{align*}
and
\begin{align*}
\zeta(w)\frac{\pa_{z_0}Z}Z = a^2 \big(\zeta^2(w) -2\eta_1w\zeta(w) + 2\pi i \,\frac{\Im\,w}{\Im\,\tau}\zeta(w)\big).
\end{align*}
Combining all of the above, 
\begin{align*}
\frac1{2a^2} \frac{\pa_z^2Z}{Z} &=\E\,T+2\pi i \,\frac{\pa_\tau Z}{Z} + \frac{2\eta_1}Z\big( z\pa_z + z_0\pa_{z_0}\big)Z  \\&+\zeta(w)\frac{\pa_{z_0}Z}Z  + 2(\lambda+\lambda_0)\eta_1+\lambda_0\wp(w).
\end{align*}
\end{eg*}

\begin{eg*}
We consider $Z = \E\,\OO(z,\bfs z)$ with $\bfs\beta = \bfs 0$ and   
$\bfs\sigma = a\cdot z + \sum\sigma_j \cdot z_j.$
Recall that (see \eqref{eq: C4torus})
\begin{align*}
Z &=  |\theta'(0)|^{a^2 + \sum\sigma_j^2}
\prod_j|\theta(z-z_j)|^{2a\sigma_j}\prod_{j<k}|\theta(z_j-z_k)|^{2\sigma_j\sigma_k}e^{2\pi\frac{(\Im\,\bfs\sigma)^2}{\Im\,\tau}}
\end{align*}
in the $\T_\Lambda$-uniformization. 
In this example, the null vector equation for $Z$ reads as 
\begin{align*}
\frac1{2a^2}\frac{\pa_z^2 Z}{Z}&=\E\,T + 2\pi i\, \frac{\pa_\tau Z}{Z} + \frac{2\eta_1}{Z} \big(z\pa_z + \sum z_j\pa_j\big)Z  \\
&+ \sum_j\zeta(z-z_j)\frac{\pa_jZ}Z + 2\big(\lambda + \sum_j \lambda_j\big)\eta_1+ \sum_j\lambda_j\wp(z-z_j).
\end{align*}
One can check it directly using the pseudo-addition theorem~\eqref{eq: pseudo-addition4zeta0} for the $\zeta$-function.
\end{eg*}

\begin{cor} \index{BPZ-Cardy equations}
For any tensor product $\XX_{\bfs\beta}$ of fields in $\FF_{\bfs\beta},$ 
$$\frac1{2a^2}\Big(\pa_z^2 \wh\E_{\bfs\beta} \XX_{\bfs\beta} + 2\frac{\pa_z Z_{\bfs\beta}}{Z_{\bfs\beta}} \pa_z\wh\E_{\bfs\beta}\XX_{\bfs\beta} \Big)
= \big(2\pi i\,\pa_\tau + 2\eta_1z\pa_z + \check\LL_{\ti v_z}^+\big) \wh\E_{\bfs\beta}\XX_{\bfs\beta}$$
in the $\T_\Lambda$-uniformization.
\end{cor}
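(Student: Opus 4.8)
The plan is to obtain this Cardy-type identity from the BPZ equation of Theorem~\ref{BPZ1} by pure algebra, in exact parallel with the passage from Theorem~\ref{BPZ0} to the genus-zero BPZ-Cardy equation. Abbreviate $Z = Z_{\bfs\beta} = \E\,\OO_{\bfs\beta}(z)$ and $W = \wh\E_{\bfs\beta}\XX_{\bfs\beta}$, so that $\E\,\OO_{\bfs\beta}(z)\XX_{\bfs\beta} = ZW$. First I would record how each operator appearing in Theorem~\ref{BPZ1} acts on the product $ZW$, after division by $Z$: relations \eqref{eq: hat'} and \eqref{eq: hat''} handle $\pa_z$ and $\pa_z^2$, while the elementary product rule gives $\pa_\tau(ZW)/Z = \pa_\tau W + (\pa_\tau Z/Z)\,W$ and the analogous identity for the first-order operator $2\eta_1 z\,\pa_z$. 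These are genus-independent identities, so they apply verbatim in the $\T_\Lambda$-uniformization.

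The only operator that requires care is $\check\LL_{\ti v_z}^+$. I would use that it splits as a derivation $D$ in the node variables $z_j,q_k$ (all nodes except $z$) plus a scalar multiplication operator $V = \sum_j \lambda_j(\wp(z-z_j)+2\eta_1)$, summed over the nodes carrying a conformal weight. Since $Z = \E\,\OO_{\bfs\beta}(z)$ depends only on $z$, on the charges of $\bfs\tau$, and on the punctures, the weights of the differentials $X_j$ enter only through $W$; writing $V = V_{\bfs\tau} + V_{\XX}$ accordingly and noting that $DZ$ involves no $\XX$-node, the Leibniz rule for the derivation yields
$$\frac{\check\LL_{\ti v_z}^+(ZW)}{Z} = \bigl(DW + V_{\XX}W\bigr) + \frac{\check\LL_{\ti v_z}^+ Z}{Z}\,W,$$
where the first parenthesis is exactly $\check\LL_{\ti v_z}^+ W$ as it acts on $W$ (carrying only the $X_j$-weights, precisely the operator displayed after Corollary~\ref{null1} in the genus-zero case), and the second summand is a bare multiple of $W$.

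Dividing the equation of Theorem~\ref{BPZ1} by $Z$ and substituting these expansions, I would sort the result into the terms in which some derivative or $\check\LL_{\ti v_z}^+$ falls on $W$ and the terms that are bare multiples of $W$. The first group is precisely the two sides of the asserted equation, while the coefficient of the bare $W$ is
$$-\frac1{2a^2}\frac{\pa_z^2 Z}{Z} + \E\,T_{\bfs\beta}(z) + 2\pi i\,\frac{\pa_\tau Z}{Z} + 2\eta_1 z\,\frac{\pa_z Z}{Z} + 2\lambda\eta_1 + \frac{\check\LL_{\ti v_z}^+ Z}{Z},$$
which vanishes identically by the null vector equation of Corollary~\ref{null1}. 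This gives the stated BPZ-Cardy equation. The main obstacle is the bookkeeping of the second paragraph: one must verify that the potential contribution of the charges of $\OO_{\bfs\beta}(z)$ itself is absorbed into $(\check\LL_{\ti v_z}^+ Z)/Z$ and hence cancelled through Corollary~\ref{null1}, so that only the weights of the inserted fields $X_j$ survive in $\check\LL_{\ti v_z}^+ W$. By contrast, the Teichm\"uller term $2\pi i\,\pa_\tau$ and the anomalous $2\eta_1 z\,\pa_z$ and $2\lambda\eta_1$ pieces, being already built into Theorem~\ref{BPZ1}, demand no further contour analysis here.
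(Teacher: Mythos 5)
Your proposal is correct and is essentially the paper's own proof: divide Theorem~\ref{BPZ1} by $Z_{\bfs\beta}$, expand each operator on the product $Z_{\bfs\beta}\,\wh\E_{\bfs\beta}\XX_{\bfs\beta}$ via \eqref{eq: hat'}, \eqref{eq: hat''} and their analogues for $\pa_\tau$ and $\check\LL_{\ti v_z}^+$, and cancel the bare multiples of $\wh\E_{\bfs\beta}\XX_{\bfs\beta}$ using the null vector equation of Corollary~\ref{null1}. Your second paragraph merely makes explicit the Leibniz bookkeeping for $\check\LL_{\ti v_z}^+$ that the paper summarizes as ``similar equations for \dots Lie derivative,'' so there is no substantive difference.
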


\begin{proof}
By Theorem~\ref{BPZ1}, \eqref{eq: hat'}, and similar equations for $\tau$-derivative and Lie derivative $\check\LL_{\ti v_z}$, we have 
\begin{align*}
\frac1{2a^2} \frac{\pa_z^2\E\,\OO_{\bfs\beta}(z) \XX_{\bfs\beta}} {\E\,\OO_{\bfs\beta}(z)} & =
\E\,T_{\bfs\beta}(z) \wh\E_{\bfs\beta}\XX_{\bfs\beta} + 2\pi i \big( \pa_\tau \wh\E_{\bfs\beta}\XX_{\bfs\beta}  + \frac{\pa_\tau Z_{\bfs\beta}}{Z_{\bfs\beta}} \wh\E_{\bfs\beta}\XX_{\bfs\beta} \big)\\
&+ 2\eta_1z\big(\pa_z \wh\E_{\bfs\beta}\XX_{\bfs\beta} +  \frac{\pa_z Z_{\bfs\beta}}{Z_{\bfs\beta}}  \wh\E_{\bfs\beta}\XX_{\bfs\beta}\big) + 2\lambda\eta_1 \wh\E_{\bfs\beta}\XX_{\bfs\beta}\\
&+ \check\LL_{\ti v_z}^+ \wh\E_{\bfs\beta}\XX_{\bfs\beta} +  \frac{\check\LL_{\ti v_z}^+  Z_{\bfs\beta}}{Z_{\bfs\beta}}  \wh\E_{\bfs\beta}\XX_{\bfs\beta}. 
\end{align*}
Corollary now follows from \eqref{eq: hat''} and the null vector equations (Corollary~\ref{null1}) for $Z_{\bfs\beta}.$
\end{proof}

If $\XX_{\bfs\beta} = X_1(z_1)\cdots X_n(z_n),$ then $\check\LL_{k_z}^+\wh\E_{\bfs\beta} \XX_{\bfs\beta}$ in the previous corollary reads as 
\begin{align*}
\check\LL_{k_z}^+\wh\E_{\bfs\beta} \XX_{\bfs\beta} &= \sum_{j>0} \LL_{k_z}^+(z_j) \wh\E_{\bfs\beta} \XX_{\bfs\beta}\\
&+\sum_{j\le0} \big(\zeta(z-z_j) + 2\eta_1 z_j\big)\pa_j  \wh\E_{\bfs\beta} \XX_{\bfs\beta}
+ \sum_k \big(\zeta(z-q_k) + 2\eta_1 q_k\big)\pa_{q_k} \wh\E_{\bfs\beta} \XX_{\bfs\beta}.
\end{align*}

\subsection{Level three degeneracy equations and BPZ equations}  \label{ss: Cardy3}
In this subsection, we derive the level three degeneracy equations for certain current primary fields with specific charges and combine these with Ward's equations to obtain certain types of BPZ equations.
\begin{prop} \label{level3degeneracy} \index{level three degeneracy equations}
Let $\OO := \OO_{\bfs\beta}(z,\bfs z)$ be a current primary field in $\FF_{\bfs\beta}$ with charges $q, q_*$ at $z$ and $q= 2a, (2a(a+b)=1).$
Then we have  
\begin{equation} \label{eq: level3degeneracy} 
\big(L_{-1}^3 -4\tau L_{-2}L_{-1} + (4\tau^2-2\tau)L_{-3}\big)(z)\OO = 0,
\end{equation}
where $\tau = 2a^2.$ 
\end{prop}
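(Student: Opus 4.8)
The plan is to run the computation inside the free-field (Heisenberg) realization attached to the current field, exactly as in the level-two case (Proposition~\ref{level2degeneracy}, proved in \cite[Proposition~11.2]{KM13}), so that the Kac degeneracy shows up as a genuine linear relation among current-mode descendants. First I would record the mode algebra. From the singular OPE $J_{\bfs\beta}(\zeta)J_{\bfs\beta}(z)\sim-(\zeta-z)^{-2}$ the current modes obey the Heisenberg relations $[J_m,J_n]=-m\,\delta_{m+n,0}$, and since $\OO$ is current primary of charge $q=2a$ one has $J_n\OO=0$ for $n\ge1$ and $J_0\OO=-iq\OO$. The Sugawara form $T_{\bfs\beta}=-\tfrac12 J_{\bfs\beta}*J_{\bfs\beta}+ib\,\pa J_{\bfs\beta}$ gives the mode expansion $L_n=-\tfrac12\sum_k :J_{n-k}J_k:-ib(n+1)J_n$, from which I would derive the mixed relation $[L_m,J_n]=-n\,J_{m+n}+ib\,m(m+1)\,\delta_{m+n,0}$; the anomalous value $2ib$ at $(m,n)=(1,-1)$ reproduces the triple-pole coefficient in the Ward OPE for $T_{\bfs\beta}(\zeta)J_{\bfs\beta}(z)$, a convenient consistency check.

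The structural point is that the level-three subspace of the Fock module, spanned over $\C$ by $J_{-1}^3\OO$, $J_{-1}J_{-2}\OO$, and $J_{-3}\OO$, is three-dimensional, and each of the three Virasoro descendants $L_{-1}^3\OO$, $L_{-2}L_{-1}\OO$, $L_{-3}\OO$ lands in it. The plan is therefore to expand all three in this basis and to check that the coefficient vector $(1,-4\tau,4\tau^2-2\tau)$ annihilates every basis coefficient. The base cases follow directly from the formulas above: $L_{-1}\OO=iq\,J_{-1}\OO$, $L_{-2}\OO=(iq+ib)J_{-2}\OO-\tfrac12 J_{-1}^2\OO$, and $L_{-3}\OO=(iq+2ib)J_{-3}\OO-J_{-1}J_{-2}\OO$. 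Iterating $L_{-1}$ with $[L_{-1},J_{-1}]=J_{-2}$ and $[L_{-1},J_{-2}]=2J_{-3}$ yields
\[
L_{-1}^3\OO=-iq^3\,J_{-1}^3\OO-3q^2\,J_{-1}J_{-2}\OO+2iq\,J_{-3}\OO,
\]
and a single use of $[L_{-2},J_{-1}]=J_{-3}$ gives $L_{-2}L_{-1}\OO=-\tfrac{iq}{2}J_{-1}^3\OO-q(q+b)J_{-1}J_{-2}\OO+iq\,J_{-3}\OO$.

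The verification then reduces to three scalar identities, and the two reductions that collapse everything are $q^2=2\tau$ (from $q=2a$, $\tau=2a^2$) and $q(q+b)=\tau+1$ (from the hypothesis $2a(a+b)=1$). Concretely the $J_{-1}^3$-coefficient is $iq(2\tau-q^2)=0$; the $J_{-1}J_{-2}$-coefficient is $-3q^2+4\tau q(q+b)-(4\tau^2-2\tau)=-6\tau+(4\tau^2+4\tau)-(4\tau^2-2\tau)=0$; and the $J_{-3}$-coefficient is $i\big(2q-4\tau q+(4\tau^2-2\tau)(q+2b)\big)$, which vanishes after substituting $q+2b=2(a+b)=1/a$. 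Hence the whole combination is zero, which is \eqref{eq: level3degeneracy}.

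I expect the only real obstacle to be bookkeeping: $L_{-1}^3\OO$ and $L_{-2}L_{-1}\OO$ generate many terms through repeated commutators and normal-ordered contractions, and one must track the background-charge-dependent coefficients $iq$, $iq+ib$, $iq+2ib$ carefully. There is no conceptual difficulty beyond the observation that at the special charge $q=2a$ the three Virasoro descendants become linearly dependent. Equivalently, one could instead verify $L_1v=L_2v=0$ to exhibit the combination $v$ as a singular vector, but the direct Fock-space expansion has the advantage of producing the identity $v=0$ itself rather than merely its singularity.
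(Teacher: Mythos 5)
Your proof is correct and follows essentially the same route as the paper's: both expand the Virasoro descendants of $\OO$ in the Heisenberg (current-mode) realization, using \eqref{eq: LJ} and \eqref{eq: [L,J]} together with the current-primary conditions $J_n\OO=0$ $(n\ge1)$, $J_0\OO=-iq\OO$, and then verify that the coefficient vector $(1,-4\tau,4\tau^2-2\tau)$ annihilates the result under the hypotheses $q=2a$, $2a(a+b)=1$, $\tau=2a^2$. The only difference is bookkeeping: you expand fully into the PBW basis $J_{-1}^3\OO$, $J_{-1}J_{-2}\OO$, $J_{-3}\OO$ and work with $L_{-2}L_{-1}$ directly, whereas the paper keeps the intermediate vectors $L_{-1}^2J_{-1}\OO$ and $L_{-1}J_{-1}L_{-1}\OO$ and converts $L_{-1}L_{-2}$ to $L_{-2}L_{-1}$ only at the end via $L_{-3}=[L_{-1},L_{-2}]$; your mode identities and all three coefficient cancellations check out.
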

\begin{proof} We recall some basic properties (e.g. see Subsections 7.3 and 11.1 in \cite{KM13}) of $J_n\equiv J_n(z)$ and $L_n\equiv L_n(z):$
\begin{equation}\label{eq: [L,J]}
[L_m,J_n] = -nJ_{m+n}+ibm(m+1)\delta_{m+n,0};
\end{equation}
\begin{equation}\label{eq: [L,L]}
[L_m, L_n]=(m-n)L_{m+n}+\frac c{12}m(m^2-1)\delta_{m+n,0};
\end{equation}
and
\begin{equation}\label{eq: LJ}
L_n = -\frac12\sum_{k=-\infty}^\infty\!:J_{-k}J_{k+n}:\!\,-ib(n+1)J_n,
\end{equation}
where 
$$\!:J_mJ_n\!: ~= \begin{cases} J_mJ_n \quad &\textrm{if } m\le n;\\  J_nJ_m &\textrm{otherwise.}\end{cases} $$
By \eqref{eq: LJ}, we find 
$L_{-1}\OO = -J_{-1}J_0\OO$ and $L_0\OO = -\frac12J_0^2\OO-ibJ_0\OO.$
It follows from \eqref{eq: T primary}, \eqref{eq: J0}, and the previous finding that
\begin{equation}\label{eq: J-1}
\pa_z \OO = L_{-1}\OO= 2ia\, J_{-1}\OO.
\end{equation}
Since $\OO$ is a Virasoro primary field, $L_{-1}\OO = \pa_z \OO$ by definition, see \eqref{eq: T primary}.
By \eqref{eq: J-1}, we have
\begin{equation} \label{eq: L-1 3}
L_{-1}^3\OO = L_{-1}^2\pa \OO = 2ia\, L_{-1}^2J_{-1}\OO.
\end{equation}
It follows from \eqref{eq: LJ}, \eqref{eq: J0}, and \eqref{eq: J-1} that
$$L_{-2}\OO = -J_{-2}J_0\OO -\frac12J_{-1}^2\OO+ibJ_{-2}\OO = i(2a+b)J_{-2}\OO+\frac i{4a}J_{-1}L_{-1}\OO.$$
Using \eqref{eq: [L,J]}, the above equation simplifies to 
\begin{equation}\label{eq: L-2}
L_{-2}\OO = i(2a+b)L_{-1}J_{-1}\OO - i\big(a + \frac1{4a}\big)J_{-1}L_{-1}\OO.
\end{equation}
Again, it follows from \eqref{eq: LJ}, \eqref{eq: J0}, and \eqref{eq: J-1} that 
$$L_{-3}\OO = \big(-(J_{-3}J_0 + J_{-2}J_{-1}) + 2ibJ_{-3}\big)\OO = \big(2i(a+b)J_{-3} + \frac i{2a}J_{-2}L_{-1} \big)\OO.$$
By \eqref{eq: [L,J]}, we find
\begin{align} \label{eq: L-3}
L_{-3}\OO &=  \frac i{2a}(L_{-1}J_{-2}-J_{-2}L_{-1})\OO + \frac i{2a}J_{-2}L_{-1} \OO\\
 &= \frac i{2a}L_{-1}J_{-2}\OO = \frac i{2a}(L_{-1}^2J_{-1}-L_{-1}J_{-1}L_{-1})\OO. \nonumber
\end{align}
Combining \eqref{eq: L-1 3}~--~\eqref{eq: L-3} with \eqref{eq: [L,L]}, we find
\begin{align*}
\big(L_{-1}^3 &-4\tau L_{-1}L_{-2} + (4\tau^2+2\tau)L_{-3}\big)\OO \\
&=i\Big(2a - 4\tau(a+\frac1{2a}) + (4\tau^2+2\tau)\frac1{2a}\Big)L_{-1}^2J_{-1}\OO \\
&+i\Big(4\tau(a+\frac1{4a}) - (4\tau^2+2\tau)\frac1{2a})\Big)L_{-1}J_{-1}L_{-1}\OO = 0
\end{align*}
if $\tau = 2a^2.$ 
Proposition now follows from the equation $L_{-3} = [L_{-1},L_{-2}],$ see \eqref{eq: [L,L]}.
\end{proof}

\subsubsec{Genus zero case} 
Let $\OO_{\bfs\beta}(z)\equiv\OO_{\bfs\beta}(z)^{(2a,\bfs\tau)}:=\OO_{\bfs\beta}[2a\cdot z+\bfs\tau]$ with the neutrality condition $(\NC_0).$
We combine the level three degeneracy equations for $\OO_{\bfs\beta}(z)$ with Ward's equations to derive the following theorem, a version of BPZ equations on the Riemann sphere.

\begin{thm} \label{BPZl3g0} \index{BPZ equations}
If $z\notin \supp\,\bfs\beta\cup\supp\,\bfs\tau,$ then for any tensor product $\XX_{\bfs\beta}$ of fields $X_j$ in $\FF_{\bfs\beta},$ 
\begin{align*}
\frac1{8a^2}\pa_z^3\, \E\,  \OO_{\bfs\beta}(z)\XX_{\bfs\beta} &= \big(\E\,T_{\bfs\beta}(z) + \check\LL_{k_z}^+ \big) \pa_z\E\,\OO_{\bfs\beta}(z)\XX_{\bfs\beta}\\
& - (2a^2-\frac12) \big(\E\,\pa T_{\bfs\beta}(z) + \check\LL_{\ti k_z}^+ \big)\E\,\OO_{\bfs\beta}(z)\XX_{\bfs\beta}
\end{align*}
in the $\wh\C$-uniformization. Here, the vector fields $k_\xi$ and $\ti k_\xi$ are given by
$$k_\xi(z) = \frac1{\xi-z}, \qquad  \ti k_\xi(z) = -\frac1{(\xi-z)^2}$$
in the $\wh\C$-uniformization and the Lie derivatives operators $\check\LL_{k_z}^+,\check\LL_{\ti k_z}^+$ do not apply to the $z$-variable. 
\end{thm}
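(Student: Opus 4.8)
The plan is to mirror the derivation of the level two equation (Theorem~\ref{BPZ0}), replacing Proposition~\ref{level2degeneracy} by the level three degeneracy relation (Proposition~\ref{level3degeneracy}) and using \emph{both} Ward's equations~\eqref{eq: Ward equation0} and~\eqref{eq: Ward equation0'}. Since $\OO_{\bfs\beta}(z)=\OO_{\bfs\beta}^{(2a,\bfs\tau)}(z)$ is current primary with charge $2a$ at $z$ and $2a(a+b)=1$, Proposition~\ref{level3degeneracy} applies with $\tau=2a^2$. Dividing its relation by $4\tau=8a^2$, using the translation identity $L_{-1}(z)=\pa_z$ within correlations so that $L_{-1}^3(z)\OO_{\bfs\beta}=\pa_z^3\OO_{\bfs\beta}$ and $L_{-2}(z)L_{-1}(z)\OO_{\bfs\beta}=L_{-2}(z)\pa_z\OO_{\bfs\beta}$, I would first reduce the target to
\[
\frac1{8a^2}\pa_z^3\,\E\,\OO_{\bfs\beta}(z)\XX_{\bfs\beta}=\E\,L_{-2}(z)\pa_z\OO_{\bfs\beta}(z)\XX_{\bfs\beta}-(2a^2-\tfrac12)\,\E\,L_{-3}(z)\OO_{\bfs\beta}(z)\XX_{\bfs\beta}.
\]
It then remains to put the two terms on the right into the claimed form.

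The second step is to read the two modes off operator product expansions. From $L_n(z)=\frac1{2\pi i}\oint_{(z)}(\zeta-z)^{n+1}T_{\bfs\beta}(\zeta)\,\dd\zeta$ one has $L_{-2}(z)Y(z)=(T_{\bfs\beta}*Y)(z)$ for any field $Y$ in $\FF_{\bfs\beta}$, while extracting the coefficient of $(\zeta-z)^0$ in $\pa_\zeta T_{\bfs\beta}(\zeta)\OO_{\bfs\beta}(z)$ gives $L_{-3}(z)\OO_{\bfs\beta}(z)=\big((\pa T_{\bfs\beta})*\OO_{\bfs\beta}\big)(z)$. For the first term I would apply~\eqref{eq: Ward equation0} to the string $(\pa_z\OO_{\bfs\beta})(z)\XX_{\bfs\beta}$; the $z$-node contribution $\LL_{k_\xi}^+(z)(\pa_z\OO_{\bfs\beta})(z)$ is the singular part of $T_{\bfs\beta}(\xi)(\pa_z\OO_{\bfs\beta})(z)$, so subtracting it and letting $\xi\to z$ isolates the $*$-product and yields
\[
\E\,L_{-2}(z)\pa_z\OO_{\bfs\beta}(z)\XX_{\bfs\beta}=\big(\E\,T_{\bfs\beta}(z)+\check\LL_{k_z}^+\big)\pa_z\,\E\,\OO_{\bfs\beta}(z)\XX_{\bfs\beta},
\]
using $\E\,\pa_z\OO_{\bfs\beta}(z)\XX_{\bfs\beta}=\pa_z\E\,\OO_{\bfs\beta}(z)\XX_{\bfs\beta}$ and that $\check\LL_{k_z}^+$, acting only on the nodes of $\XX_{\bfs\beta}$ at fixed $z$, commutes with $\E$. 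For the second term I would apply~\eqref{eq: Ward equation0'} to $\OO_{\bfs\beta}(z)\XX_{\bfs\beta}$; now $\LL_{\ti k_\xi}^+(z)\OO_{\bfs\beta}(z)$ is the singular part of $\pa T_{\bfs\beta}(\xi)\OO_{\bfs\beta}(z)$, and the same subtract-and-limit step gives
\[
\E\,L_{-3}(z)\OO_{\bfs\beta}(z)\XX_{\bfs\beta}=\big(\E\,\pa T_{\bfs\beta}(z)+\check\LL_{\ti k_z}^+\big)\E\,\OO_{\bfs\beta}(z)\XX_{\bfs\beta}.
\]
Substituting these into the reduced identity produces the theorem.

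The hard part will be justifying the two ``singular part'' identifications, namely that $\LL_{k_\xi}^+(z)Y(z)$ (with $Y=\pa_z\OO_{\bfs\beta}$) and $\LL_{\ti k_\xi}^+(z)\OO_{\bfs\beta}(z)$ capture the \emph{full} singular parts of $T_{\bfs\beta}(\xi)Y(z)$ and $\pa T_{\bfs\beta}(\xi)\OO_{\bfs\beta}(z)$, even though $\pa_z\OO_{\bfs\beta}$ is a non-primary descendant whose OPE with $T_{\bfs\beta}$ carries poles of order higher than two. The point is that these are not the primary transformation laws but the stress-tensor residue property: $\LL_{v}^+(z)Y(z)=\frac1{2\pi i}\oint_{(z)}v\,A_{\bfs\beta}\,Y$ records \emph{all} poles of the expansion, and $T_{\bfs\beta}-A_{\bfs\beta}=\E\,T_{\bfs\beta}$ is holomorphic at $z$ because $z\notin\supp\,\bfs\beta$. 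Expanding $k_\xi(\zeta)=\sum_{m\ge0}(\zeta-z)^m/(\xi-z)^{m+1}$ and $\ti k_\xi(\zeta)=-\sum_{m\ge0}(m+1)(\zeta-z)^m/(\xi-z)^{m+2}$ against the Laurent series of $A_{\bfs\beta}(\zeta)Y(z)$ verifies, coefficient by coefficient, that the residues reproduce the singular parts of $T_{\bfs\beta}(\xi)Y(z)$ and of $\pa T_{\bfs\beta}(\xi)\OO_{\bfs\beta}(z)$ respectively; this in particular re-confirms $\big((\pa T_{\bfs\beta})*\OO_{\bfs\beta}\big)(z)=L_{-3}(z)\OO_{\bfs\beta}(z)$. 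Once this is in place the remaining manipulations are routine bookkeeping of $\pa_z$ and the Lie derivatives commuting past the correlation.
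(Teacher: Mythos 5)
Your proposal is correct and follows essentially the same route as the paper: both reduce the claim via the level three degeneracy relation (divided by $8a^2$) to computing $\E\,L_{-2}L_{-1}\OO_{\bfs\beta}\XX_{\bfs\beta}$ and $\E\,L_{-3}\OO_{\bfs\beta}\XX_{\bfs\beta}$, and both obtain these by applying Ward's equations~\eqref{eq: Ward equation0} (to the string containing $\pa_z\OO_{\bfs\beta}$, which the paper realizes by differentiating the Ward identity in $z$ and commuting $\pa_z$ past $\check\LL_{k_\xi}^+$) and~\eqref{eq: Ward equation0'}, subtracting the singular parts and letting $\xi\to z$. Your closing discussion of why the residue operator $\LL_v^+(z)$ captures the full singular part for the descendant $\pa_z\OO_{\bfs\beta}$ is a sound justification of the same step the paper handles by explicitly differentiating the primary OPE term $\big(\tfrac1{\xi-z}\pa_z+\tfrac{\lambda}{(\xi-z)^2}\big)\E\,\OO_{\bfs\beta}\XX_{\bfs\beta}$.
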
 

\begin{proof}
By Ward's equations, 
$$\E\,\pa T_{\bfs\beta}(\xi) \OO_{\bfs\beta}(z)\XX_{\bfs\beta} = \E\,\pa T_{\bfs\beta}(\xi) \E\,\OO_{\bfs\beta}(z)\XX_{\bfs\beta} +\LL_{\ti k_\xi}^+ \E\,\OO_{\bfs\beta}(z)\XX_{\bfs\beta}.$$
We subtract $\Sing_{\xi\to z}\E\, \pa T_{\bfs\beta}(\xi) \OO_{\bfs\beta}(z)\XX_{\bfs\beta}$ from both sides and take the limit as $\xi\to z$ to obtain
$$\E\, L_{-3}(z) \OO_{\bfs\beta}(z)\XX_{\bfs\beta} = \big(\E\,\pa T_{\bfs\beta}(z) + \check\LL_{\ti k_z}^+ \big)\E\, \OO_{\bfs\beta}(z)\XX_{\bfs\beta} .$$
Differentiating \eqref{eq: Ward equation0} with respect $z,$
\begin{align*}
\E\, T_{\bfs\beta}(\xi) \pa_z\OO_{\bfs\beta}(z)\XX_{\bfs\beta}  &= \E\,T_{\bfs\beta}(\xi) \E\,\pa_z\OO_{\bfs\beta}(z)\XX _{\bfs\beta} +\check\LL_{k_\xi}^+\E\,\pa_z\OO_{\bfs\beta}(z,z_0)\XX_{\bfs\beta}\\
&+ \pa_z\Big(\frac1{\xi-z}\pa_z + \frac\lambda{(\xi-z)^2}\Big)\E\,\OO_{\bfs\beta}(z,z_0)\XX_{\bfs\beta}.
\end{align*}
After using  
$$\check\LL_{k_\xi}^+\E\,\pa_z\OO_{\bfs\beta}(z,z_0)\XX_{\bfs\beta}  = \pa_z\check\LL_{k_\xi}^+\E\,\OO_{\bfs\beta}(z,z_0)\XX_{\bfs\beta},$$ 
we subtract the singular part, $\Sing_{\xi\to z}\E\, T_{\bfs\beta}(\xi) \pa_z\OO_{\bfs\beta}(z,z_0)\XX_{\bfs\beta}$ from both sides and take the limit as $\xi\to z.$ 
Then we have 
$$L_{-2}(z)L_{-1}(z)\E\,\OO_{\bfs\beta}(z,z_0)\XX_{\bfs\beta}= \big(\E\,T_{\bfs\beta}(z) + \check\LL_{k_z}^+ \big) \pa_z\E\,\OO_{\bfs\beta}(z)\XX_{\bfs\beta}.$$
Theorem now follows from the level three degeneracy equation~\eqref{eq: level3degeneracy} for $\OO_{\bfs\beta}.$
\end{proof}

Let $Z_{\bfs\beta}= \E\,\OO_{\bfs\beta}(z)$ and $\wh\E_{\bfs\beta} \XX_{\bfs\beta}  = \E\,\OO_{\bfs\beta}(z)\XX_{\bfs\beta}/Z_{\bfs\beta}.$

\begin{cor} \index{null vector equations}
The functions $Z_{\bfs\beta}$ satisfy the null vector equations
$$
\frac1{8a^2} \pa_z^3 Z_{\bfs\beta} = \big(\E\,T_{\bfs\beta}(z) + \check\LL_{k_z}^+ \big)\pa_zZ_{\bfs\beta}  - (2a^2-\frac12) \big(\E\,\pa T_{\bfs\beta}(z) + \check\LL_{\ti k_z}^+ \big)Z_{\bfs\beta}$$
in the $\wh\C$-uniformization.
\end{cor}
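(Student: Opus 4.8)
The plan is to obtain the null vector equations as the specialization of Theorem~\ref{BPZl3g0} to the trivial string $\XX_{\bfs\beta}\equiv 1$, exactly as the level-two null vector equations were deduced from Theorem~\ref{BPZ0} (the corollary there is recorded as ``immediate from Theorem~\ref{BPZ0} with $\XX\equiv 1$''). First I would note that the vacuum functional $1$ is a legitimate choice: a Fock space correlation functional is by definition a linear combination of the constant $1$ and Wick monomials, and the empty tensor product serves as the trivial string, so Theorem~\ref{BPZl3g0} applies verbatim with $\XX_{\bfs\beta}=1$.

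With this choice every correlation $\E\,\OO_{\bfs\beta}(z)\XX_{\bfs\beta}$ collapses to $\E\,\OO_{\bfs\beta}(z)=Z_{\bfs\beta}$, so the left-hand side of the theorem becomes $\tfrac1{8a^2}\pa_z^3 Z_{\bfs\beta}$. On the right-hand side the only nodes of $\OO_{\bfs\beta}(z)=\OO_{\bfs\beta}[2a\cdot z+\bfs\tau]$ other than $z$ are the Wick-exponent points $z_j$ ($j\le 0$), carrying conformal dimensions $\lambda_j$, together with the background-charge punctures $q_k$. Since by hypothesis $\check\LL_{k_z}^+$ and $\check\LL_{\ti k_z}^+$ do not differentiate in $z$, they act purely as differential operators in these $z_j$ and $q_k$ and in particular commute with $\pa_z$; hence the two terms $\big(\E\,T_{\bfs\beta}(z)+\check\LL_{k_z}^+\big)\pa_z Z_{\bfs\beta}$ and $(2a^2-\tfrac12)\big(\E\,\pa T_{\bfs\beta}(z)+\check\LL_{\ti k_z}^+\big)Z_{\bfs\beta}$ are literal expressions in the surviving variables, whose explicit coordinate form is read off as in the level-two corollary (the sum of $\lambda_j/(z-z_j)^2+\pa_j/(z-z_j)$ over the $\bfs\tau$-points together with $\pa_{q_k}/(z-q_k)$ over the punctures).

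Substituting these identifications directly reproduces the claimed equation, so the corollary follows. I do not anticipate any genuine obstacle: the entire analytic content---the subtraction of the singular part $\Sing_{\xi\to z}$ of the operator product expansion and the invocation of the level-three degeneracy equation of Proposition~\ref{level3degeneracy}---already resides inside the proof of Theorem~\ref{BPZl3g0}, and this corollary merely evaluates that theorem on the vacuum string. The only point requiring a word of justification is the harmless one just made, that with $\XX_{\bfs\beta}\equiv 1$ the operators $\check\LL_{k_z}^+,\check\LL_{\ti k_z}^+$ reduce to differentiations in the $z_j$ and $q_k$ alone.
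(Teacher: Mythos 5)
Your proposal is correct and is exactly the paper's route: the corollary is the specialization of Theorem~\ref{BPZl3g0} to the trivial string $\XX_{\bfs\beta}\equiv 1$, under which every correlation $\E\,\OO_{\bfs\beta}(z)\XX_{\bfs\beta}$ reduces to $Z_{\bfs\beta}=\E\,\OO_{\bfs\beta}(z)$ and the checked Lie derivatives act only on the points of $\bfs\tau$ and $\supp\,\bfs\beta$, precisely as in the analogous level-two corollary that the paper declares ``immediate from Theorem~\ref{BPZ0} with $\XX\equiv 1$.'' (Your aside that $\check\LL_{k_z}^+$ ``commutes with $\pa_z$'' is unnecessary and, taken literally, delicate since the vector field $k_z$ depends on $z$; but no commutation is actually used, as the operator ordering in the corollary matches the theorem verbatim.)
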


\begin{cor} \label{BPZ-Cardy0} \index{BPZ-Cardy equations}
For any tensor product $\XX$ of fields in $\FF_{\bfs\beta},$ we have 
\begin{align*}
\frac1{8a^2}\Big(\pa_z^3 \wh\E_{\bfs\beta} \XX_{\bfs\beta} &+ 3\frac{\pa_z Z_{\bfs\beta}}{Z_{\bfs\beta}} \pa_z^2 \wh\E_{\bfs\beta}\XX_{\bfs\beta} + 3\frac{\pa_z^2 Z_{\bfs\beta}}{Z_{\bfs\beta}} \pa_z \wh\E_{\bfs\beta} \XX_{\bfs\beta}\Big)\\
&= 
\big( \E\,T_{\bfs\beta}(z)+ \check\LL_{k_z}^+  )\pa_z\wh\E_{\bfs\beta}\XX_{\bfs\beta} - (2a^2-\frac12)\check\LL_{\ti k_z}^+\E_{\bfs\beta}\XX_{\bfs\beta} \\
&+ \frac{\check\LL_{k_z}^+Z_{\bfs\beta}}{Z_{\bfs\beta}} \pa_z\wh\E_{\bfs\beta}\XX_{\bfs\beta} 
+ \frac{\pa_zZ_{\bfs\beta}}{Z_{\bfs\beta}}  \check\LL_{k_z}^+\wh\E_{\bfs\beta}\XX_{\bfs\beta}
\end{align*}
in the $\wh\C$-uniformization.

\end{cor}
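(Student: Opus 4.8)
The plan is to substitute $\E\,\OO_{\bfs\beta}(z)\XX_{\bfs\beta} = Z_{\bfs\beta}\,\wh\E_{\bfs\beta}\XX_{\bfs\beta}$ into the level three BPZ equation of Theorem~\ref{BPZl3g0} and to strip off the factor $Z_{\bfs\beta}$ by the Leibniz rule, exactly as in the BPZ-Cardy derivation for the level two equations. Write $Z = Z_{\bfs\beta}$ and $F = \wh\E_{\bfs\beta}\XX_{\bfs\beta}$, so that $\E\,\OO_{\bfs\beta}(z)\XX_{\bfs\beta} = ZF$. First I would record the product rules for the $z$-derivatives: relations \eqref{eq: hat'}, \eqref{eq: hat''}, and their immediate third-order analogue
$$\pa_z^3(ZF) = Z\,\pa_z^3 F + 3(\pa_z Z)\pa_z^2 F + 3(\pa_z^2 Z)\pa_z F + (\pa_z^3 Z)F,$$
together with the fact that $\check\LL_{k_z}^+$ and $\check\LL_{\ti k_z}^+$, which act on every node other than $z$, are derivations, so that $\check\LL_{k_z}^+(ZF) = (\check\LL_{k_z}^+ Z)F + Z\,\check\LL_{k_z}^+ F$ and likewise for $\check\LL_{\ti k_z}^+$. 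One checks that the conformal weights of $Z$ and $F$ add up to those of $ZF$ at every shared node (the weights at the nodes of $\bfs\tau$ and $\bfs\beta$ all sit on $Z$, those at the nodes of $\XX_{\bfs\beta}$ on $F$), so the weight terms distribute correctly and the Leibniz rule holds.

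Next I would expand $\mathcal{B}(ZF) = 0$, where $\mathcal{B} = \tfrac1{8a^2}\pa_z^3 - (\E\,T_{\bfs\beta}(z) + \check\LL_{k_z}^+)\pa_z + (2a^2-\tfrac12)(\E\,\pa T_{\bfs\beta}(z) + \check\LL_{\ti k_z}^+)$ is the operator of Theorem~\ref{BPZl3g0}, using the product rules above. The decisive bookkeeping step is to collect those terms in which neither a $z$-derivative nor a Lie derivative acts on $F$; their sum is precisely $F$ times $\mathcal{B}$ applied to $Z$. Because $\E\,T_{\bfs\beta}(z)$ and $\E\,\pa T_{\bfs\beta}(z)$ are multiplication operators, the contribution $\check\LL_{k_z}^+\pa_z Z$ arising here is literally the same expression (not merely a conjugate) as the one occurring in the null vector equation for $Z_{\bfs\beta}$, so no commutation of $\check\LL_{k_z}^+$ with $\pa_z$ is needed to identify the block. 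By the null vector equations for $Z_{\bfs\beta}$ (the previous corollary) one has $\mathcal{B}Z = 0$, and this entire block of $F$-coefficients vanishes.

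Dividing the surviving identity by $Z$ then produces the claimed equation: the third-order $z$-derivative terms reproduce the left-hand side via \eqref{eq: hat''} and its third-order analogue; the block $\check\LL_{k_z}^+(\pa_z F)$ remains intact, combining with $\E\,T_{\bfs\beta}(z)\pa_z F$ into $(\E\,T_{\bfs\beta}(z) + \check\LL_{k_z}^+)\pa_z F$; and the remaining cross terms assemble into $\tfrac{\check\LL_{k_z}^+ Z}{Z}\,\pa_z F + \tfrac{\pa_z Z}{Z}\,\check\LL_{k_z}^+ F - (2a^2-\tfrac12)\,\check\LL_{\ti k_z}^+ F$, matching the statement. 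I expect the only genuine obstacle to be organizational rather than conceptual: one must treat $\check\LL_{k_z}^+$ scrupulously as a derivation acting solely on the non-$z$ nodes and never silently commute it past $\pa_z$ (indeed $\pa_z k_z = \ti k_z$, so the two do not commute), and one must verify that the weight terms split correctly between $Z$ and $F$ at the shared nodes. Once the terms are grouped as above, the comparison with the asserted formula is immediate.
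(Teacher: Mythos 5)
Your proposal is correct and follows essentially the same route as the paper's proof: substitute $\E\,\OO_{\bfs\beta}(z)\XX_{\bfs\beta}=Z_{\bfs\beta}\,\wh\E_{\bfs\beta}\XX_{\bfs\beta}$ into the level three BPZ equation of Theorem~\ref{BPZl3g0}, expand $\pa_z^3$, $\check\LL_{k_z}^+\pa_z$ and $\check\LL_{\ti k_z}^+$ by the Leibniz rule, and cancel the block of terms multiplying $\wh\E_{\bfs\beta}\XX_{\bfs\beta}$ via the null vector equation for $Z_{\bfs\beta}$. Your explicit cautions (treating $\check\LL_{k_z}^+$ as a derivation on the non-$z$ nodes only, not commuting it past $\pa_z$) are exactly the points the paper handles implicitly by computing $\check\LL_{k_z}^+\pa_z$ of the product directly.
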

\begin{proof}
Differentialting ${\E\,\OO_{\bfs\beta}(z) \XX_{\bfs\beta}}/{\E\,\OO_{\bfs\beta}(z)} = \wh\E_{\bfs\beta} \XX_{\bfs\beta},$ we have 
$$\frac{\pa_z^3\E\,\OO_{\bfs\beta}(z) \XX_{\bfs\beta}}{\E\,\OO_{\bfs\beta}(z)} =  \pa_z^3 \wh\E_{\bfs\beta} \XX + 3\frac{\pa_z Z_{\bfs\beta}}{Z_{\bfs\beta}} \pa_z^2 \wh\E_{\bfs\beta}\XX_{\bfs\beta} + 3\frac{\pa_z^2 Z_{\bfs\beta}}{Z_{\bfs\beta}} \pa_z \wh\E_{\bfs\beta} \XX_{\bfs\beta} +  \frac{\pa_z^3 Z_{\bfs\beta}}{Z_{\bfs\beta}}  \wh\E_{\bfs\beta} \XX_{\bfs\beta}$$
and 
$$\frac{\check\LL_{\ti k_z}^+\E\,\OO_{\bfs\beta}(z) \XX_{\bfs\beta}}{\E\,\OO_{\bfs\beta}(z)} = \check\LL_{\ti k_z}^+ \wh\E_{\bfs\beta} \XX_{\bfs\beta}  + \frac{\check\LL_{\ti k_z}^+Z_{\bfs\beta} } {Z_{\bfs\beta}}\wh\E_{\bfs\beta} \XX_{\bfs\beta} .$$
In addition, we find 
\begin{align*}
\frac{\check\LL_{k_z}^+ \pa_z\E\,\OO_{\bfs\beta}(z,z_0) \XX_{\bfs\beta}}{\E\,\OO_{\bfs\beta}(z,z_0)} &= \check\LL_{k_z}^+\pa_z\wh\E_{\bfs\beta} \XX_{\bfs\beta}  + \frac{\check\LL_{k_z}^+Z_{\bfs\beta} } {Z_{\bfs\beta}}\pa_z\wh\E_{\bfs\beta} \XX_{\bfs\beta}\\
 &+   \frac{\pa_zZ_{\bfs\beta} } {Z_{\bfs\beta}}\check\LL_{k_z}^+\wh\E_{\bfs\beta} \XX_{\bfs\beta} + \frac{\check\LL_{k_z}^+\pa_zZ_{\bfs\beta} } {Z_{\bfs\beta}}\wh\E_{\bfs\beta} \XX_{\bfs\beta}.
\end{align*}
Combing all of the above with the BPZ equations for $\E\,\OO_{\bfs\beta}(z)\XX_{\bfs\beta}$ and the null vector equations for $Z_{\bfs\beta},$  the assertion follows. 
\end{proof}

\subsubsec{Genus one case} 
We now derive the following version of the BPZ equations for $\T_\Lambda$ which comes from the level three degeneracy equations. 

\begin{thm} \label{BPZl3g1} \index{BPZ equations}
Let $\OO_{\bfs\beta}(z) = \OO_{\bfs\beta}^{(2a,\bfs\tau)}(z).$
If $z\notin\supp\,\bfs\beta \cup \supp\,\bfs\tau,$ then for any tensor product $\XX_{\bfs\beta}= X_1(z_1)\cdots X_n(z_n)$ of fields $X_j$ in $\FF_{\bfs\beta},$ we have 
\begin{align*}
\frac1{8a^2}\pa_z^3\, &\E\,\OO_{\bfs\beta}(z)\XX_{\bfs\beta}\\ 
&=  \big(\E\,T_{\bfs\beta}(z)+ 2\pi i\,\pa_\tau+(2\eta_1z\pa_{z} + 2(\lambda+1) \eta_1)+ \check\LL_{\ti v_z}^+\big) \pa_z\E\,\OO_{\bfs\beta}(z)\XX_{\bfs\beta}\\
&- (2a^2-\frac12) \big(\E\,\pa T_{\bfs\beta}(z) + \check\LL_{v_z}^+ \big)\E\,\OO_{\bfs\beta}(z)\XX_{\bfs\beta}
\end{align*}
in the $\T_\Lambda$-uniformization.
Here, the vector fields $\ti v_\xi$ and $v_\xi$ are given by
$$\ti v_\xi(z) = \zeta(\xi-z) + 2\eta_1z, \qquad  v_\xi(z) = -\wp(\xi-z)$$
in the $\T_\Lambda$-uniformization and the Lie derivatives operators $\check\LL_{\ti v_z}^+,\check\LL_{v_z}^+$ do not apply to the $z$-variable. 
\end{thm}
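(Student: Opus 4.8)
**The plan is to combine the level-three degeneracy equation from Proposition~\ref{level3degeneracy} with the Eguchi-Ooguri version of Ward's equations (Theorem~\ref{main: KM2'}, i.e. \eqref{eq: EOKM_main}) and its $\pa$-derivative analogue \eqref{eq: Ward4paT}, following the same residue-subtraction strategy used in the genus zero case (Theorem~\ref{BPZl3g0}).** The degeneracy equation states that $\big(\frac1{8a^2}L_{-1}^3 - L_{-2}L_{-1} + (2a^2-\frac12)L_{-3}\big)\OO_{\bfs\beta}(z)=0$, and since $\OO_{\bfs\beta}$ is Virasoro primary we have $L_{-1}(z)\OO_{\bfs\beta}(z) = \pa_z\OO_{\bfs\beta}(z)$, so $L_{-1}^3$ produces the left-hand side $\frac1{8a^2}\pa_z^3\,\E\,\OO_{\bfs\beta}(z)\XX_{\bfs\beta}$. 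The task reduces to computing, within correlations against $\XX_{\bfs\beta}$, the two remaining pieces $\E\,L_{-2}(z)L_{-1}(z)\OO_{\bfs\beta}(z)\XX_{\bfs\beta}$ and $\E\,L_{-3}(z)\OO_{\bfs\beta}(z)\XX_{\bfs\beta}$.

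First I would extract the $L_{-3}$-term. Applying Ward's equation \eqref{eq: Ward4paT} to the string $\OO_{\bfs\beta}(z)\XX_{\bfs\beta}$, I subtract the singular part $\Sing_{\xi\to z}\E\,\pa T_{\bfs\beta}(\xi)\OO_{\bfs\beta}(z)\XX_{\bfs\beta}$ from both sides and let $\xi\to z$; the surviving regular term is $\E\,\pa T_{\bfs\beta}*\OO_{\bfs\beta}(z)\XX_{\bfs\beta}$, which equals $\E\,L_{-3}(z)\OO_{\bfs\beta}(z)\XX_{\bfs\beta}$ by the definition of the modes $L_n(z)$. This yields
$$\E\,L_{-3}(z)\OO_{\bfs\beta}(z)\XX_{\bfs\beta} = \big(\E\,\pa T_{\bfs\beta}(z) + \check\LL_{v_z}^+\big)\E\,\OO_{\bfs\beta}(z)\XX_{\bfs\beta},$$
exactly as in the genus zero proof but with $v_\xi(z)=-\wp(\xi-z)$ in place of $\ti k_\xi$. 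Next I would extract the $L_{-2}L_{-1}$-term: differentiate the Eguchi-Ooguri equation \eqref{eq: EOKM_main} with respect to $z$, commute $\pa_z$ past $\check\LL_{\ti v_z}^+$, $\pa_\tau$, and the $z$-independent operators (using $\pa_z\check\LL_{\ti v_\xi}^+ = \check\LL_{\ti v_\xi}^+\pa_z$ since $\check\LL$ does not act on the $z$-variable), then subtract the singular part $\Sing_{\xi\to z}$ and take $\xi\to z$. The new feature relative to genus zero is that the differentiated Teichmüller and $\eta_1$-terms contribute: the explicit appearance of $2\pi i\,\pa_\tau$, the term $2\eta_1 z\pa_z$, and a shift $\lambda\mapsto\lambda+1$ in the coefficient of $2\eta_1$ arising from differentiating $2\lambda\eta_1\,\E\,\OO_{\bfs\beta}(z)\XX_{\bfs\beta}$ together with the $\pa_z$ acting on the weight. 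This produces
$$\E\,L_{-2}(z)L_{-1}(z)\OO_{\bfs\beta}(z)\XX_{\bfs\beta} = \big(\E\,T_{\bfs\beta}(z)+2\pi i\,\pa_\tau+2\eta_1 z\pa_z+2(\lambda+1)\eta_1+\check\LL_{\ti v_z}^+\big)\pa_z\E\,\OO_{\bfs\beta}(z)\XX_{\bfs\beta}.$$

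Finally I would substitute both expressions into the degeneracy equation $\frac1{8a^2}\pa_z^3 = L_{-2}L_{-1} - (2a^2-\frac12)L_{-3}$ (within correlations) to assemble the stated identity. The main obstacle I anticipate is the careful bookkeeping in the $L_{-2}L_{-1}$ step: tracking which terms on the right of \eqref{eq: EOKM_main} are singular as $\xi\to z$ versus regular, verifying that $\pa_z$ genuinely commutes with $\check\LL_{\ti v_z}^+$ and $\pa_\tau$ in the correlational sense, and correctly obtaining the weight shift $\lambda\to\lambda+1$ from the interplay between $\pa_z$ and the conformal dimension $\lambda$ of $\OO_{\bfs\beta}(z)$. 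The residue-subtraction manipulations and the $\xi\to z$ limits are routine once the genus one Ward's equations are in hand, but the exact coefficients of the Legendre/$\eta_1$ terms demand attention.
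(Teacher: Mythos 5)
Your proposal is correct and follows essentially the same route as the paper: the $L_{-3}$ term is extracted from Ward's equation \eqref{eq: Ward4paT} by subtracting the singular part and letting $\xi\to z$, the $L_{-2}L_{-1}$ term comes from the Eguchi--Ooguri equation applied (equivalently, differentiated in $z$) to the string containing $\pa_z\OO_{\bfs\beta}(z)$ — which is exactly where the paper's weight shift $\lambda\mapsto\lambda+1$ arises — and the two are assembled via the level-three degeneracy equation $\frac1{8a^2}L_{-1}^3 = L_{-2}L_{-1}-(2a^2-\frac12)L_{-3}$. Your identification of the delicate points (the extra singular term $-\lambda\wp'(\xi-z)$ being absorbed into the $\Sing_{\xi\to z}$ subtraction, and the Legendre/$\eta_1$ coefficients) matches what the paper's computation actually requires.
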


\begin{proof} 
It follows from \eqref{eq: Ward4paT} in Theorem~\ref{main: KM2} that 
$$\E\, \pa T_{\bfs\beta}(\xi) \OO_{\bfs\beta}(z)\XX_{\bfs\beta} = \E\, \pa T_{\bfs\beta}(\xi) \,  \E\, \OO_{\bfs\beta}(z)\XX_{\bfs\beta} +  \LL_{v_\xi}^+ \E\, \OO_{\bfs\beta}(z)\XX_{\bfs\beta}.$$
Subtracting $\Sing_{\xi\to z}\E\, \pa T_{\bfs\beta}(\xi) \OO_{\bfs\beta}(z)\XX_{\bfs\beta}$ from both sides and then taking the limit as $\xi\to z,$ we find
$$\E\, L_{-3}(z) \OO_{\bfs\beta}(z)\XX =  \big(\E\,\pa T_{\bfs\beta}(z) + \check\LL_{v_z}^+ \big)\E\, \OO_{\bfs\beta}(z,z_0)\XX.$$
By Theorem~\ref{EOKM}, we have 
\begin{align*}
 \E\,T_{\bfs\beta}(\xi)&\pa_z\OO_{\bfs\beta}(z)\XX_{\bfs\beta}\\
 &=   \E\,T_{\bfs\beta}(\xi)\E\,\pa_z\OO_{\bfs\beta}(z)\XX_{\bfs\beta}+2\pi i \, \pa_\tau \,\E\,\OO_{\bfs\beta}(z)\XX_{\bfs\beta} + \check\LL_{\ti v_\xi}^+\E\,\pa_z\OO_{\bfs\beta}(z)\XX_{\bfs\beta}\\
&+ \big((\zeta(\xi-z) +2\eta_1z)\pa_{z} + (\lambda+1)( \wp(\xi-z) + 2\eta_1)\big)\E\,\pa_z\OO_{\bfs\beta}(z)\XX_{\bfs\beta}.
\end{align*}
Subtracting $\Sing_{\xi\to z}\E\, T_{\bfs\beta}(\xi) \pa_z\OO_{\bfs\beta}(z,z_0)\XX$ from both sides and then taking the limit as $\xi\to z,$ we find
$$L_{-2}(z)L_{-1}(z)\E\,\OO_{\bfs\beta}(z)\XX_{\bfs\beta}= (2\eta_1z\pa_{z} + 2(\lambda+1) \eta_1+\check\LL_{\ti v_z}^+)\pa_z\E\,\OO_{\bfs\beta}(z)\XX_{\bfs\beta}.$$
Theorem now follows from the level three degeneracy equation~\eqref{eq: level3degeneracy} for $\OO_{\bfs\beta}.$
\end{proof}

Let $Z_{\bfs\beta}= \E\,\OO_{\bfs\beta}(z)$ and $\wh\E_{\bfs\beta} \XX_{\bfs\beta}  = \E\,\OO_{\bfs\beta}(z)\XX_{\bfs\beta}/Z_{\bfs\beta}.$

\begin{cor} \index{null vector equations}
The functions $Z_{\bfs\beta}$ satisfy the null vector equations
\begin{align*}
\frac1{8a^2}\pa_z^3\, Z_{\bfs\beta} &= \big(\E\,T_{\bfs\beta}(z)+2\pi i\,\pa_\tau+2\eta_1z\pa_{z} + 2(\lambda+1) \eta_1 +  \check\LL_{\ti v_z}^+\big)\pa_z Z_{\bfs\beta}  \\
& - (2a^2-\frac12) \big(\E\,\pa T_{\bfs\beta}(z) + \check\LL_{v_z}^+ \big)Z_{\bfs\beta}
\end{align*}
in the $\T_\Lambda$-uniformization.
\end{cor}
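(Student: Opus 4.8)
The plan is to obtain the null vector equations as the special case $\XX_{\bfs\beta}\equiv 1$ of Theorem~\ref{BPZl3g1}. Since that theorem holds for any tensor product of fields in $\FF_{\bfs\beta},$ and the trivial string is admissible, the stated corollary should follow by direct substitution once the action of the various operators on constants is recorded. First I would set $\XX_{\bfs\beta}\equiv 1$ in the conclusion of Theorem~\ref{BPZl3g1}. On the left-hand side we then have $\frac1{8a^2}\pa_z^3\,\E\,\OO_{\bfs\beta}(z) = \frac1{8a^2}\pa_z^3 Z_{\bfs\beta}$ by the definition $Z_{\bfs\beta}= \E\,\OO_{\bfs\beta}(z).$

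On the right-hand side, every occurrence of $\E\,\OO_{\bfs\beta}(z)\XX_{\bfs\beta}$ collapses to $Z_{\bfs\beta},$ and every occurrence of $\pa_z\E\,\OO_{\bfs\beta}(z)\XX_{\bfs\beta}$ collapses to $\pa_z Z_{\bfs\beta}.$ The only point requiring care is that the Lie derivative operators $\check\LL_{\ti v_z}^+$ and $\check\LL_{v_z}^+$ are defined to act on the nodes other than $z$; when $\XX_{\bfs\beta}\equiv 1$ there are no such extra nodes coming from the $X_j$'s, so these operators act only through the punctures $q_k\in\supp\,\bfs\beta$ (which are always present in $\FF_{\bfs\beta}$ theory via the puncture operator). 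Thus $\check\LL_{\ti v_z}^+ Z_{\bfs\beta}$ and $\check\LL_{v_z}^+ Z_{\bfs\beta}$ are genuinely nonzero in general and must be retained, exactly as written in the corollary's statement. I would verify that setting the string to $1$ does not accidentally suppress these terms by recalling that $Z_{\bfs\beta}$ still depends on the $q_k$'s through $\CC_{(b)},$ so the $q_k$-derivatives inside $\check\LL$ survive.

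Assembling, substitution yields precisely
\begin{align*}
\frac1{8a^2}\pa_z^3\, Z_{\bfs\beta} &= \big(\E\,T_{\bfs\beta}(z)+2\pi i\,\pa_\tau+2\eta_1z\pa_{z} + 2(\lambda+1) \eta_1 +  \check\LL_{\ti v_z}^+\big)\pa_z Z_{\bfs\beta}  \\
& - (2a^2-\tfrac12) \big(\E\,\pa T_{\bfs\beta}(z) + \check\LL_{v_z}^+ \big)Z_{\bfs\beta},
\end{align*}
which is the assertion. The proof is therefore a one-line specialization of Theorem~\ref{BPZl3g1}.

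I do not expect a genuine obstacle here, since the heavy lifting (the level three degeneracy equation of Proposition~\ref{level3degeneracy} combined with the Eguchi-Ooguri Ward's equations of Theorem~\ref{EOKM}) is already carried out in proving Theorem~\ref{BPZl3g1}. The only subtlety worth flagging in the write-up is the admissibility of the trivial string: one should note that $\XX_{\bfs\beta}\equiv 1$ is a degenerate but valid tensor product in $\FF_{\bfs\beta},$ and that the operators $\check\LL$ continue to act nontrivially on $Z_{\bfs\beta}$ through its dependence on the background charge nodes $q_k,$ so that no terms drop out. Beyond that, the corollary is immediate.
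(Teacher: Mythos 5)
Your proof is correct and is exactly the paper's (implicit) argument: the corollary is obtained by specializing Theorem~\ref{BPZl3g1} to the trivial string $\XX_{\bfs\beta}\equiv 1,$ just as the genus zero analogue is stated to be ``immediate from Theorem~\ref{BPZ0} with $\XX\equiv 1.$'' One small imprecision: the operators $\check\LL_{\ti v_z}^+$ and $\check\LL_{v_z}^+$ acting on $Z_{\bfs\beta}$ survive not only through the punctures $q_k\in\supp\,\bfs\beta$ but also through the nodes $z_j\in\supp\,\bfs\tau$ of the divisor in $\OO_{\bfs\beta}(z)=\OO_{\bfs\beta}[2a\cdot z+\bfs\tau],$ though this does not affect the validity of the specialization.
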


The method of proof of Corollary~\ref{BPZ-Cardy0} in the genus zero case carries over to the genus one case.  
\begin{cor} \index{BPZ-Cardy equations}
For any tensor product $\XX$ of fields in $\FF_{\bfs\beta},$ we have 
\begin{align*}
\frac1{8a^2}\Big(\pa_z^3 \wh\E_{\bfs\beta} \XX_{\bfs\beta} &+ 3\frac{\pa_z Z_{\bfs\beta}}{Z_{\bfs\beta}} \pa_z^2 \wh\E_{\bfs\beta}\XX_{\bfs\beta} + 3\frac{\pa_z^2 Z_{\bfs\beta}}{Z_{\bfs\beta}} \pa_z \wh\E_{\bfs\beta} \XX_{\bfs\beta}\Big)\\
&= 
 \big(\E\,T_{\bfs\beta}(z)+ 2\pi i\,\pa_\tau+(2\eta_1z\pa_{z} + 2(\lambda+1) \eta_1)+ \check\LL_{\ti v_z}^+\big) \pa_z \E_{\bfs\beta}\XX_{\bfs\beta}\\
& - (2a^2-\frac12) \check\LL_{v_z}^+\E_{\bfs\beta}\XX_{\bfs\beta} 
\\
&
+ 2\pi i \big(\frac{\pa_\tau Z_{\bfs\beta}}{Z_{\bfs\beta}}\pa_z \wh\E_{\bfs\beta}\XX_{\bfs\beta} + \frac{\pa_z Z_{\bfs\beta}}{Z_{\bfs\beta}}\pa_\tau \wh\E_{\bfs\beta}\XX_{\bfs\beta}\big)+ 4\eta_1z \frac{\pa_z Z_{\bfs\beta}}{Z_{\bfs\beta}}\pa_z \wh\E_{\bfs\beta} \XX_{\bfs\beta} \\
&+ \frac{\check\LL_{\ti v_z}^+Z_{\bfs\beta}}{Z_{\bfs\beta}} \pa_z\wh\E_{\bfs\beta}\XX_{\bfs\beta} 
+ \frac{\pa_zZ_{\bfs\beta}}{Z_{\bfs\beta}}  \check\LL_{\ti v_z}^+\wh\E_{\bfs\beta}\XX_{\bfs\beta}
\end{align*}
in the $\T_\Lambda$-uniformization.

\end{cor}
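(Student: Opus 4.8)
The plan is to derive this final BPZ--Cardy corollary from the preceding level-three BPZ equation (Theorem~\ref{BPZl3g1}) together with the null vector equations for $Z_{\bfs\beta}$ that immediately precede it, in exact parallel to the genus-zero argument in Corollary~\ref{BPZ-Cardy0}. The essential device is the factorization $\E\,\OO_{\bfs\beta}(z)\XX_{\bfs\beta} = Z_{\bfs\beta}\,\wh\E_{\bfs\beta}\XX_{\bfs\beta}$, so that every differential operator applied to the full correlation decomposes by Leibniz's rule into a piece hitting $\wh\E_{\bfs\beta}\XX_{\bfs\beta}$ plus correction terms involving logarithmic derivatives of $Z_{\bfs\beta}$.

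First I would record the Leibniz expansions for the three operators appearing in Theorem~\ref{BPZl3g1}. For the third $z$-derivative,
\begin{align*}
\frac{\pa_z^3\E\,\OO_{\bfs\beta}(z)\XX_{\bfs\beta}}{Z_{\bfs\beta}} &= \pa_z^3\wh\E_{\bfs\beta}\XX_{\bfs\beta} + 3\frac{\pa_zZ_{\bfs\beta}}{Z_{\bfs\beta}}\pa_z^2\wh\E_{\bfs\beta}\XX_{\bfs\beta} \\
&+ 3\frac{\pa_z^2Z_{\bfs\beta}}{Z_{\bfs\beta}}\pa_z\wh\E_{\bfs\beta}\XX_{\bfs\beta} + \frac{\pa_z^3Z_{\bfs\beta}}{Z_{\bfs\beta}}\wh\E_{\bfs\beta}\XX_{\bfs\beta},
\end{align*}
which is exactly the product rule already used in Corollary~\ref{BPZ-Cardy0}. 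Next I would expand the operator $\big(\E\,T_{\bfs\beta}(z)+2\pi i\,\pa_\tau + 2\eta_1z\pa_z + 2(\lambda+1)\eta_1 + \check\LL_{\ti v_z}^+\big)$ applied to $\pa_z\E\,\OO_{\bfs\beta}(z)\XX_{\bfs\beta}$. Here the new feature relative to genus zero is that the operator now contains a $\tau$-derivative and the term $2\eta_1 z\pa_z$; both are first-order, so each contributes a single cross term of the form $(\pa_\tau Z_{\bfs\beta}/Z_{\bfs\beta})$ or $(\pa_z Z_{\bfs\beta}/Z_{\bfs\beta})$ multiplied against a derivative of $\wh\E_{\bfs\beta}\XX_{\bfs\beta}$, which is precisely why the stated corollary carries the extra line with $2\pi i\big(\tfrac{\pa_\tau Z_{\bfs\beta}}{Z_{\bfs\beta}}\pa_z + \tfrac{\pa_z Z_{\bfs\beta}}{Z_{\bfs\beta}}\pa_\tau\big)\wh\E_{\bfs\beta}\XX_{\bfs\beta}$ and the $4\eta_1 z\,\tfrac{\pa_z Z_{\bfs\beta}}{Z_{\bfs\beta}}\pa_z$ term. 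The Lie-derivative operator $\check\LL_{\ti v_z}^+$ splits as in the genus-zero case into $\check\LL_{\ti v_z}^+\wh\E_{\bfs\beta}\XX_{\bfs\beta} + (\check\LL_{\ti v_z}^+Z_{\bfs\beta}/Z_{\bfs\beta})\wh\E_{\bfs\beta}\XX_{\bfs\beta}$, since $\check\LL_{\ti v_z}^+$ acts as a first-order differential operator in the node variables not equal to $z$.

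Having written all three expansions, I would substitute them into the identity of Theorem~\ref{BPZl3g1} and then invoke the null vector equations (the corollary just above) to cancel every term proportional to $\wh\E_{\bfs\beta}\XX_{\bfs\beta}$ with no derivative of $\wh\E_{\bfs\beta}\XX_{\bfs\beta}$ on it. Concretely, the coefficient of $\wh\E_{\bfs\beta}\XX_{\bfs\beta}$ assembles into precisely $\tfrac1{8a^2}\pa_z^3 Z_{\bfs\beta}$ minus the action of the full null-vector operator on $Z_{\bfs\beta}$, which vanishes by the null vector equations. What survives is exactly the displayed BPZ--Cardy equation. The main obstacle, as is typical for these Cardy-type reductions, is purely bookkeeping: one must track the $2\pi i$, $2\eta_1 z$, and $2(\lambda+1)\eta_1$ factors through the product rule and confirm that the first-order cross terms land with the coefficients $3$, $2\pi i$, and $4\eta_1 z$ shown, while the apparently dangerous second-order cross terms (e.g. those that would pair $\pa_z^2 Z_{\bfs\beta}$ against $\wh\E_{\bfs\beta}\XX_{\bfs\beta}$) are absorbed by the null vector relation rather than appearing explicitly. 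Since the structure is identical to Corollary~\ref{BPZ-Cardy0} with the additional Teichm\"uller and $2\eta_1 z\pa_z$ terms carried along, no genuinely new idea is required beyond careful application of Leibniz's rule and the null vector equations.
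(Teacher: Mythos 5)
Your proposal is correct and follows exactly the route the paper intends: the paper's own justification is the one-line remark that the method of Corollary~\ref{BPZ-Cardy0} carries over, and your plan is precisely that method — factor out $Z_{\bfs\beta}$, apply Leibniz's rule to each operator in Theorem~\ref{BPZl3g1} (with the new first-order cross terms from $2\pi i\,\pa_\tau$ and $2\eta_1 z\pa_z$), and cancel the zeroth-order coefficient of $\wh\E_{\bfs\beta}\XX_{\bfs\beta}$ using the null vector equations. The only quibble is a loose phrase near the end (the term pairing $\pa_z^2 Z_{\bfs\beta}$ with $\pa_z\wh\E_{\bfs\beta}\XX_{\bfs\beta}$ is retained on the left-hand side rather than absorbed; it is the $\pa_z^3 Z_{\bfs\beta}$ term that the null vector equation kills), but your displayed Leibniz expansion already has this right.
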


\renewcommand\sectionname{Appendix}
\renewcommand\thesection{A}

\section{Representation of bipolar Green's function} \label{sec: Theta}

The goal of this appendix is to represent bipolar Green's function on a compact Riemann surface $M$ in terms of the period matrix $\PM$ of $M,$ Riemann's theta function associated with $\PM,$ and the Abel-Jacobi map $\AA.$

\subsection{Period matrix and theta function} \label{ss: Period matrix and theta function}
\subsubsec{Period matrix} 
Let $M$ be a compact Riemann surface of genus $g\ge 1.$ 
Fix a canonical basis $\{a_j,b_j\}$ for the homology $H_1 = H_1(M)$ with the following intersection properties: $a_j\cdot b_j = 1$ and all other intersection numbers are zero.
Let $\Omega(M)$ be the space of all holomorphic 1-differentials on $M$ and let $\{\omega_j\}$ be its basis uniquely determined by the equations 
$$\oint _{a_k}\omega_j = \delta_{jk}.$$
The \emph{period matrix} \index{period matrix} $\PM = \{\PM_{jk}\}$ is defined as 
$$\PM_{jk} =\oint_{b_k}\omega_j.$$
It is well known that the period matrix is symmetric and its imaginary part is positive definite, $\Im\,\PM >0.$
See \cite[III.2.8]{FK}. 

\subsubsec{Theta function}
Let $\PM$ be a symmetric $g\times g$ matrix with $\Im\,\PM>0$ (e.g. the period matrix of a Riemann surface). 
The \emph{theta function} \index{theta function} $\Theta(\cdot \,|\,\PM)$ associated with $\PM$ is the following function of $g$ complex variables $Z = (z_1,\cdots,z_g),$
$$\Theta(Z\,|\,\PM) = \sum_{N\in\Z^g} \ee^{2\pi i (Z\cdot N + \frac12 \PM N\cdot N)} \qquad (Z\in\C^g).$$
The theta function is an \emph{even} entire function on $\C^g$ (or a multivalued function on the Jacobi variety, see the next subsection).
It has the following periodicity properties:
for $N\in\Z^g,$ we have 
$$\Theta(Z+N) = \Theta(Z),\qquad \Theta(Z+\PM N) = \ee^{-2\pi i(Z\cdot N + \frac12 \PM N\cdot N)} \Theta(Z). $$
See \cite[VI.1.2]{FK}. 

\begin{rmk*} 
If $g = 1,$ then $\Theta = \theta_3.$ Its relation to $\theta_1$ (which we used previously) is given by
$$\theta_1(z) = \textrm{const}\, \ee^{-\pi i z}\theta_3\Big(z + \frac{1-\tau}2\Big).$$
\end{rmk*}

\subsection{Abel-Jacobi map} \label{ss: Abel-Jacobi} 
Let $\PM$ be the period matrix of $M$ (with a fixed canonical basis and therefore the basis $\{\omega_j\}$ of $\Omega(M)$).
We consider the lattice $\Lambda = \Z^g + \PM \Z^g$ in $\C^g$ associated with the period matrix $\PM$ of $M$ and set 
$$\T_\Lambda \equiv \T_\Lambda^{g}:=\C^{g}/\Lambda.$$
It is homeomorphic to a topological torus $S^1\times\cdots\times S^1$ ($g$ times).
We also set 
$$\Jac(M):=\T_\Lambda.$$
The complex torus $\Jac(M)$ is called the \emph{Jacobi variety} \index{Jacobi variety} or \emph{Jacobian} of $M.$
For each $p_0\in M$ we define the \emph{Abel-Jacobi map} \index{Abel-Jacobi map} by 
$$\AA\equiv \AA_{p_0}:M\to\Jac(M),\quad p\mapsto \int_{p_0}^p \vec\omega,$$
where $ \vec\omega = (\omega_1,\cdots,\omega_g).$
This map is a holomorphic embedding. 
See \cite[III.6.1]{FK}. 

\begin{rmks*} 
\renewcommand{\theenumi}{\alph{enumi}}
{\setlength{\leftmargini}{1.8em}
\begin{enumerate}
\item Up to a torus translation, the Abel-Jacobi map does not depend on the choice of a base point $p_0.$ 
Let 
$$\AA(p) = \int_{p_0}^p \vec\omega, \quad \ti \AA(p) = \int_{\ti p_0}^p \vec\omega.$$ 
Since $\AA(p_0) = 0,$ we have
$$\AA - \ti{\AA} = \AA(\ti p_0) = -\ti \AA(p_0).$$
\item Since $\Jac(M)$ is an Abelian group, we can extend the Abel-Jacobi map to the group $\div(M)$ of divisors on $M:$ 
$$\AA: \div(M) \to \Jac(M).$$
By definition, a \emph{divisor} \index{divisor} $D$ on $M$ is the formal finite linear combination 
$$D = \sum \alpha_j \cdot p_j, \quad \alpha_j\in\Z, \, p_j\in M.$$ 
The sum $\deg D:= \sum \alpha_j$ is called the degree of $D.$
\item For a meromorphic function $f$, Abel's theorem (\cite[III.6.3]{FK}) says that $\AA(f) = 0.$ 
Here, $(f)$ is the divisor of $f:$
$$(f) = \sum_{p\in M} \mathrm{ord}_p f\cdot p,$$
where $ \mathrm{ord}_p f$ is the multiplicity of $p$ if $p$ is a zero of $f$ and $- \mathrm{ord}_p f$ is the multiplicity of $p$ if $p$ is a pole of $f.$
\item If $\xi$ is a meromorphic 1-differential, then $\AA(\xi) = -2 \mathcal{K},$ where $\mathcal{K}$ is the \emph{vector of Riemann constants}, \index{vector of Riemann constants} see Subsection~\ref{ss: K}.
\end{enumerate}
}
\end{rmks*}

A divisor $D$ is called \emph{integral} \index{divisor!integral} if $D = \sum \alpha_j \cdot p_j, \alpha_j\ge0$ and we write $D\ge 0.$
We denote by $M^{(k)} :=\div_+^k(M)$ the set of integral divisors of degree $k$ on $M$ and set $J^{(k)} = \AA(M^{(k)}).$
By Jacobi's theorem (\cite[III.6.6]{FK}), $\Jac(M)$ is stratified as 
$$\{0\} = J^{(0)} \subset J^{(1)} \subset \cdots \subset J^{(g)} = \Jac(M).$$

For a non-special divisor $D\in M^{(g)},$ (see the definition of special divisors in the next subsection) $\AA: M^{(g)} \to \Jac(M)$ is a local homeomorphism at $D.$
See \cite[VI.2.5]{FK}. 
\medskip
 
We write $D_1\ge D_2$ if $D_1-D_2\ge0.$
For $D\in \div(M),$ we denote 
\begin{align*}
\mathcal{R}(D)&= \textrm{the vector space of meromorphic functions } f \textrm{ on } M \textrm{ such that } \\
&\qquad\qquad\qquad\qquad\quad (f) \ge D \textrm{ or } f\equiv0, \\
\mathcal{I}(D)&= \textrm{the vector space of Abelian differentials } \omega\textrm{ on } M \textrm{ such that } \\ &\qquad\qquad\qquad\qquad\quad (\omega) \ge D \textrm{ or } \omega\equiv0,
\end{align*}
and set $r(D)= \dim \mathcal{R}(D),$ $i(D) = \dim\mathcal{I}(D).$
The Riemann-Roch Theorem says that 
$$r(-D) = \deg D - g +1 + i(D)$$
for each divisor $D$ on $M.$ 
 
\subsection{Special divisors and the vector of Riemann constants} \label{ss: K}
\subsubsec{Special divisors}
By definition, an integral divisor $D$ of degree $g$ on $M$ is \emph{special} \index{divisor!special} if $i(D) > 0,$ i.e. if there exists a non-trivial 1-differential $\xi$ such that $(\xi) \ge D.$ 
By the Riemann-Roch Theorem, $D$ is special if and only if there exists a non-constant meromorphic function $f$ such that $(f) + D \ge 0.$

\begin{eg*} 
A point $p\in M$ is called a \emph{Weierstrass point} \index{Weierstrass point} if the divisor $D = g\cdot p$ is special. 
Equivalently, there exists a meromorphic function $f$ with a single pole at $p$ of order $\le g;$ or there is a holomorphic differential with zero of order $\ge g$ at $p.$ 
There are at least $2g+2$ and at most $g^3-g$ points. 
See \cite[III.5.11]{FK}
\end{eg*}

\subsubsec{The vector of Riemann constants}

We now introduce the \emph{vector $\mathcal{K}_{p_0}=(K_1,\cdots,K_g)$ of Riemann constants} \index{vector of Riemann constants} with a base point $p_0:$
$$K_j = \frac{1+\PM_{jj}}2 - \sum_{k\ne j}\oint_{a_k}\omega_k(p) \int_{p_0}^p \omega_j.$$
Recall that $\omega_j$'s are uniquely determined by the equations 
$$\oint _{a_k}\omega_j = \delta_{jk}$$
and they form a basis of $\Omega(M).$
The vector $\mathcal{K}$ of Riemann constants is well-defined as an elements of $\T_\Lambda.$
\begin{rmk*}
This vector $\mathcal{K}_{p_0}$ depends on the base point $p_0:$
$$\mathcal{K}_{p_0} = \mathcal{K}_{q_0} + (g-1)\AA_{q_0}(p_0).$$
Thus $\AA_{p_0} + \mathcal{K}_{p_0}: M^{(g-1)} \to \T_\Lambda$ is independent of the choice of $p_0.$ 
Therefore, a base point $p_0$ can be omitted in the statement of Theorem~\ref{Theta divisor} below. 
\end{rmk*}

\subsection{Theta divisor, theta equation, and periodicity of theta function}
\subsubsec{Theta divisor} 
\index{theta divisor}
The zero set $\Theta_\zero \subset \T_\Lambda$ of the theta functions is well defined on the torus $\T_\Lambda.$

\begin{thm}$($\cite[VI.3.1]{FK}$)$ \label{Theta divisor} 
We have 
$$ \Theta_\zero =J^{(g-1)} + \mathcal{K},$$
i.e. $\Theta(e) = 0$ if and only if there exists $D\in M^{(g-1)}$ such that 
$$e = \AA(D) + \mathcal{K}.$$
\end{thm}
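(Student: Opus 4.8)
The plan is to prove the equality of the two subsets $\Theta_\zero$ and $J^{(g-1)}+\mathcal{K}$ of $\T_\Lambda$ by the classical route of Riemann's inversion theorem, using the quasi-periodicity of $\Theta$ recorded in Subsection~\ref{ss: Period matrix and theta function} together with Abel's theorem and the Riemann--Roch theorem from Subsection~\ref{ss: Abel-Jacobi}. The central object is, for a fixed $e\in\C^g$, the function $f_e(p):=\Theta(\AA(p)-e)$ on $M$, whose zero set is well defined on $M$ even though $f_e$ itself is only a multivalued section. Since $\Theta$ is even and $\AA(p_0)=0$ for the base point $p_0$ of $\AA$, one has $f_e(p_0)=\Theta(-e)=\Theta(e)$, so $\Theta(e)=0$ is equivalent to $f_e$ vanishing at $p_0$. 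The whole theorem thus reduces to locating the zero set of $f_e$.

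First I would establish Riemann's theorem: if $f_e\not\equiv 0$, then $f_e$ has exactly $g$ zeros $p_1,\dots,p_g$ (with multiplicity) on $M$, and
\[ \AA(p_1+\cdots+p_g)=e-\mathcal{K}\pmod{\Lambda}. \]
Both assertions come from contour integrals over the boundary $\partial\widehat M$ of the fundamental $4g$-gon obtained by cutting $M$ along the canonical basis $\{a_j,b_j\}$. The zero count is $\tfrac1{2\pi i}\oint_{\partial\widehat M}d\log f_e$; pairing each side $a_j$ with $a_j^{-1}$ and $b_j$ with $b_j^{-1}$ and inserting the transformation law $\Theta(Z+\PM N)=\ee^{-2\pi i(Z\cdot N+\frac12\PM N\cdot N)}\Theta(Z)$ collapses the integral to $\sum_j\oint_{a_j}\omega_j=g$. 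The location formula comes from $\tfrac1{2\pi i}\oint_{\partial\widehat M}\big(\int_{p_0}^{p}\omega_k\big)\,d\log f_e$: the residue theorem evaluates the left side as the $k$-th component $\sum_j\AA_k(p_j)$, while the same pairing of identified sides, using $\oint_{a_j}\omega_k=\delta_{jk}$ and $\oint_{b_j}\omega_k=\PM_{jk}$, produces $e_k$ modulo $\Lambda$ together with exactly the constants defining $\mathcal{K}$ in Subsection~\ref{ss: K}.

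With Riemann's theorem in hand, the inclusion $\Theta_\zero\subseteq J^{(g-1)}+\mathcal{K}$ follows quickly in the generic case: if $\Theta(e)=0$ and $f_e\not\equiv0$, then $p_0$ is one of the zeros, say $p_1=p_0$, whence $e-\mathcal{K}=\AA(p_0)+\AA(p_2+\cdots+p_g)=\AA(D)$ with $D=p_2+\cdots+p_g\in M^{(g-1)}$; this is precisely where the degree drops from $g$ to $g-1$. For the reverse inclusion I would run the computation backwards: given $D=\sum_{j=1}^{g-1}q_j\in M^{(g-1)}$ and $e=\AA(D)+\mathcal{K}$, evenness gives $f_e(q_j)=\Theta(\AA(q_j)-\AA(D)-\mathcal{K})=\Theta(\AA(D-q_j)+\mathcal{K})$, reducing the vanishing of $f_e$ on $\supp\,D$ to the same statement for the degree $g-2$ divisor $D-q_j$. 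An induction on degree, with the Riemann--Roch inequality $i(D)\ge1$ for $\deg D=g-1$ guaranteeing enough holomorphic differentials to keep the argument alive, then forces $\Theta(e)=f_e(p_0)=0$.

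The main obstacle will be the bookkeeping in the location integral and the treatment of the locus where $f_e\equiv0$. Matching the boundary contributions of $\oint\big(\int\omega_k\big)d\log f_e$ against the explicit normalization $K_j=\tfrac{1+\PM_{jj}}2-\sum_{k\ne j}\oint_{a_k}\omega_k\int_{p_0}^{p}\omega_j$ is the delicate step, and is where the precise definition of $\mathcal{K}$ must be pinned down. The degenerate case $f_e\equiv0$, corresponding to the singular points of the theta divisor and to special divisors, is not covered by the residue argument; here I would invoke that $J^{(g-1)}+\mathcal{K}$ is the image of the compact space $M^{(g-1)}$, hence closed, and that the non-special $e$ are dense in $\Theta_\zero$, so that the set equality already proved on the generic stratum extends by continuity. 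Alternatively, once $J^{(g-1)}+\mathcal{K}\subseteq\Theta_\zero$ is known, a dimension count together with irreducibility of the analytic hypersurface $\Theta_\zero$ yields equality of the two $(g-1)$-dimensional subvarieties without any case analysis.
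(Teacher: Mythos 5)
The paper offers no proof of this statement---it is quoted from \cite[VI.3.1]{FK}---so the only meaningful comparison is with the classical argument, which is indeed the route you have chosen: Riemann's vanishing theorem via contour integration of $d\log f_e$ and $\big(\int_{p_0}^p\omega_k\big)\,d\log f_e$ over the boundary of the fundamental polygon, followed by the identity $f_e(p_0)=\Theta(-e)=\Theta(e)$. Your forward inclusion $\Theta_\zero\subseteq J^{(g-1)}+\mathcal{K}$ is correct in the generic case, and your treatment of the locus $f_e\equiv0$ by closedness of $(\AA+\mathcal{K})(M^{(g-1)})$ plus density of the non-singular points of $\Theta_\zero$ is repairable, though the density is asserted rather than proved: you must rule out a component of $\Theta_\zero$ lying entirely in $\Theta_\sing(p_0)=\bigcap_{p}(\Theta_\zero+\AA(p))$, which one does by noting that such a component would be invariant under translation by all of $\AA(M)$, hence (by Jacobi inversion, $J^{(g)}=\Jac(M)$) under all of $\Jac(M)$, forcing $\Theta\equiv0$. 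Your fallback via ``irreducibility of $\Theta_\zero$'' is not available at this stage: irreducibility of the theta divisor is normally \emph{deduced} from this very theorem, as the image of the irreducible variety $M^{(g-1)}$.

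The genuine gap is the reverse inclusion. Your reduction $f_e(q_j)=\Theta(\AA(D-q_j)+\mathcal{K})$ asks for the vanishing of $\Theta$ on $J^{(g-2)}+\mathcal{K}$, but since $J^{(g-2)}\subseteq J^{(g-1)}$ (append $p_0$ to the divisor), that is a special case of the inclusion $J^{(g-1)}+\mathcal{K}\subseteq\Theta_\zero$ you are trying to prove; the induction runs in a circle and has no accessible base case (degree $0$ would require $\Theta(\mathcal{K})=0$, itself an instance of the theorem). The working argument is different: given $D\in M^{(g-1)}$ with $D+p_0$ \emph{non-special} and $e=\AA(D)+\mathcal{K}$, either $f_e\equiv0$, in which case $\Theta(e)=f_e(p_0)=0$ trivially, or $f_e\not\equiv0$, in which case its zero divisor is effective of degree $g$ and, by Abel's theorem, linearly equivalent to $D+p_0$; non-specialness gives $r(-(D+p_0))=1$, so the zero divisor \emph{is} $D+p_0$, hence contains $p_0$ and $\Theta(e)=0$. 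One then extends to all of $M^{(g-1)}$ by density of such $D$ and continuity of $\Theta\circ(\AA+\mathcal{K})$. Note that Riemann--Roch enters through $r(-(D+p_0))=1$ for non-special divisors, not through ``$i(D)\ge1$ for $\deg D=g-1$'' as you suggest.
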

Note that $\Theta_\zero$ does not depend on the choice of a base point $p_0.$ 
See the remark in the previous subsection.

We say that $e\in\T_\Lambda$ is \emph{singular} with respect to $p_0,$ and write $e\in\Theta_\sing(p_0)$ if 
$$\Theta(\AA(p)-e)\equiv 0, \qquad (p\in M),$$
where $\AA$ has a base point at $p_0.$

\begin{thm}$($\cite[VI.3.3]{FK}$)$  \label{sing} 
The following are equivalent:
\renewcommand{\theenumi}{\alph{enumi}}
{\setlength{\leftmargini}{1.8em}
\begin{enumerate} 
\item The point $e$ is singular with respect to $p_0;$
\item there exists $D$ in $M^{(g-1)}$ such that the divisor $(p_0 + D)$ is special and 
$$ e = \AA(D) + \mathcal{K}.$$
\end{enumerate}
}
\end{thm}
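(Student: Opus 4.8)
Theorem~\ref{sing} asserts the equivalence of two conditions about a point $e\in\T_\Lambda$ being singular with respect to a base point $p_0$: namely (a) that $\Theta(\AA(p)-e)\equiv 0$ for all $p\in M$, and (b) that there is a divisor $D\in M^{(g-1)}$ with $p_0+D$ special and $e=\AA(D)+\mathcal K$.

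Let me sketch how I would approach this.

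The plan is to exploit Theorem~\ref{Theta divisor} (the characterization of the theta divisor $\Theta_\zero = J^{(g-1)}+\mathcal K$) together with the periodicity properties of $\Theta$ and the Riemann–Roch theorem. First I would unwind the definition: $e\in\Theta_\sing(p_0)$ means the entire function $p\mapsto\Theta(\AA_{p_0}(p)-e)$ vanishes identically on $M$. In particular, evaluating at $p=p_0$ gives $\Theta(-e)=\Theta(e)=0$ (using that $\Theta$ is even), so $e\in\Theta_\zero$, and by Theorem~\ref{Theta divisor} we may write $e=\AA(D)+\mathcal K$ for some $D\in M^{(g-1)}$. This already produces the divisor $D$ appearing in condition (b); the remaining content is to show that the identical vanishing is equivalent to the \emph{speciality} of $p_0+D$.

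For the heart of the argument I would establish that, for a fixed $e=\AA(D)+\mathcal K$ with $D\in M^{(g-1)}$, the function $p\mapsto\Theta(\AA(p)-e)$ either vanishes identically or has exactly $g$ zeros (counted with multiplicity), and that in the non-identically-zero case its zero divisor is an integral divisor $D'\in M^{(g)}$ with $\AA(D')=e+\mathcal K=\AA(D)+2\mathcal K$. The standard tool here is the argument-principle/period computation for $\Theta$ along the boundary of a fundamental polygon, combined with the quasi-periodicity relations $\Theta(Z+N)=\Theta(Z)$ and $\Theta(Z+\PM N)=\ee^{-2\pi i(Z\cdot N+\frac12\PM N\cdot N)}\Theta(Z)$ recorded in Subsection~\ref{ss: Period matrix and theta function}. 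The key dichotomy is then: the vanishing at $p=p_0$ forces $p_0$ to be among these $g$ zeros when the function is not identically zero, so $D'\ge p_0$, i.e. $D'=p_0+D''$ for some $D''\in M^{(g-1)}$; translating $\AA(D')=\AA(D)+2\mathcal K$ back through Theorem~\ref{Theta divisor} and comparing with Abel's theorem (\cite[III.6.3]{FK}) relates $D''$ to $D$ up to linear equivalence. The identical-vanishing case (b) $\Rightarrow$ (a) runs in reverse: if $p_0+D$ is special then $i(p_0+D)>0$, so by Riemann–Roch there is a nonconstant meromorphic function, which forces the zero count of $\Theta(\AA(p)-e)$ to exceed $g$, an impossibility unless the function vanishes identically.

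The main obstacle I anticipate is the precise bookkeeping of the vector of Riemann constants $\mathcal K$ and its base-point dependence: the relation $\AA(D')=\AA(D)+2\mathcal K$ must be matched against the defining formula for $\mathcal K_{p_0}$ so that the shift $+\mathcal K$ in condition (b) comes out exactly right, and one must verify that the divisor $D'-p_0$ is genuinely integral of degree $g-1$ rather than merely linearly equivalent to one. Since this is a quotable result from \cite{FK} (it is \cite[VI.3.3]{FK}), I would in practice cite it directly; were I to prove it from scratch, I would lean on the zero-counting lemma for $\Theta$ and the Riemann–Roch dimension count as the two pillars, treating the Riemann-constants normalization as careful but routine once those are in hand.
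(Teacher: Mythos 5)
The paper offers no proof of this statement: Theorem~\ref{sing} is quoted verbatim from \cite[VI.3.3]{FK} and used as a black box (e.g.\ in the subsequent lemma on the divisor of $z\mapsto \Theta(\AA(z)-\AA(q)-e)/\Theta(\AA(z)-\AA(p)-e)$), so your closing remark that in practice one would simply cite it is exactly what the authors do. Judged as a proof sketch on its own merits, your opening step is correct (evaluate at $p=p_0$, use evenness of $\Theta$ and Theorem~\ref{Theta divisor} to write $e=\AA(D)+\mathcal K$), and the direction (b)$\Rightarrow$(a) can be completed, though more directly than by your ``zero count exceeds $g$'' heuristic: if $i(p_0+D)>0$ then $i(p_0+D-p)\ge i(p_0+D)>0$ for every $p\in M$, so Riemann--Roch gives $r(-(p_0+D-p))\ge 1$ and $p_0+D-p$ is linearly equivalent to an integral divisor $D_p$ of degree $g-1$; hence $\AA(p)-e=-(\AA(D_p)+\mathcal K)$ lies on the (symmetric) theta divisor for every $p$, i.e.\ the function vanishes identically. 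Note also a sign slip: by Theorem~\ref{Theta equation} the zero divisor $D'$ of a non-vanishing $p\mapsto\Theta(\AA(p)-e)$ satisfies $\AA(D')=e-\mathcal K$, not $e+\mathcal K$.

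The genuine gap is in (a)$\Rightarrow$(b). Your ``key dichotomy'' analyzes the zero divisor of $f(p)=\Theta(\AA(p)-e)$, but under hypothesis (a) this function vanishes identically, so there is no zero divisor $D'$ to decompose as $p_0+D''$, and that branch of the argument produces nothing. Moreover, the representation $e=\AA(D)+\mathcal K$ coming from Theorem~\ref{Theta divisor} is far from unique, and the theorem requires you to exhibit \emph{some} representative $D$ for which $p_0+D$ is special; nothing in your sketch selects one. The standard way to close this direction (Riemann's argument, as in \cite{FK}) is to consider $\Theta\bigl(\sum_{i=1}^{s}\AA(p_i)-\sum_{j=1}^{s-1}\AA(q_j)-e\bigr)$, take the largest $s$ for which this vanishes identically in all variables, and extract the special divisor from the zero set at that threshold; alternatively one invokes the Riemann singularity theorem relating the vanishing order of $\Theta$ at $e$ to $i(D)$. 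Without one of these inputs the forward implication does not follow from the tools you list.
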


\begin{eg*}
If $p_0$ is a Weierstrass point, then $e=\mathcal{K}$ is singular with respect to $p_0.$
\end{eg*}

\subsubsec{Theta equation}
We now describe the location of the zeros of theta function.
\begin{thm}$($\cite[VI.2.4]{FK}$)$  \label{Theta equation}
If $e\in\T_\Lambda$ is not singular with respect to $p_0,$ then the equation 
$$\Theta(\AA(z) - e) = 0, \qquad z\in M$$
has exactly $g$ solutions $z_j$, and 
$$\sum_{j=1}^g \AA(z_j) = e-\mathcal{K}.$$ 
\end{thm}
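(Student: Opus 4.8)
The plan is to realize $f(z) := \Theta(\AA(z)-e)$ as a holomorphic function on a dissected model of $M$ and apply the argument principle. First I would fix the canonical dissection: cut $M$ along the basis $\{a_j,b_j\}$ to obtain a simply connected $4g$-gon $\Delta$ whose oriented boundary is $\pa\Delta = \prod_{j=1}^g a_j b_j a_j^{-1} b_j^{-1}$, and choose a single-valued branch of $\AA$ on $\Delta$ (this is possible since $\Delta$ is simply connected). Then $f$ is an honest holomorphic function on $\Delta$. Because $e\notin\Theta_\sing(p_0)$, we have $f\not\equiv 0$, so its zeros are isolated and interior to $\Delta$; moreover the quasi-periodicity of $\Theta$ shows that $|f|$ descends to a single-valued function on $M$, so the zero divisor of $f$ is intrinsically defined on $M$.

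Next I would count the zeros via $N = \frac1{2\pi i}\oint_{\pa\Delta}\dd\log f$. The point is that on two identified edges the arguments of $\Theta$ differ by a period of $\AA$: the $a_j$-period is the standard basis vector in $\Z^g$, while the $b_j$-period is the $j$-th column of $\PM$. Substituting these jumps into the transformation laws $\Theta(Z+N)=\Theta(Z)$ and $\Theta(Z+\PM N)=\ee^{-2\pi i(Z\cdot N + \frac12\PM N\cdot N)}\Theta(Z)$, and pairing the contribution of $a_j$ with that of $a_j^{-1}$ and of $b_j$ with $b_j^{-1}$, the derivative of the exponent integrates over each pair to give exactly $1$. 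Summing over $j$ yields $N = g$, which establishes the first assertion.

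For the location of the zeros I would evaluate, for each $k$, the integral $\frac1{2\pi i}\oint_{\pa\Delta}\AA_k(z)\,\dd\log f(z)$, which by the argument principle equals $\sum_j \AA_k(z_j)$ modulo $\Lambda$, where $\AA_k(z)=\int_{p_0}^z\omega_k$. The same edge-identification bookkeeping as above, now weighted by the additional factor $\AA_k$ and using its jumps across the cuts, reduces the boundary integral to a combination of the normalized periods $\oint_{a_k}\omega_j=\delta_{jk}$ and $\oint_{b_k}\omega_j=\PM_{jk}$ together with the constant $e$. Collecting the resulting terms gives $e - \mathcal{K}$, where $\mathcal{K}$ is the vector of Riemann constants of Subsection~\ref{ss: K}.

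The hard part will be the final step: one must simultaneously track the multivaluedness of $f$ and the jumps of $\AA_k$ across each identified pair of edges, and then recognize that the surviving additive constant is \emph{exactly} $\mathcal{K}$ rather than merely some period-dependent vector. The normalization of $\mathcal{K}$ in Subsection~\ref{ss: K}, namely $K_j = \frac{1+\PM_{jj}}2 - \sum_{k\ne j}\oint_{a_k}\omega_k(p)\int_{p_0}^p\omega_j$, is precisely what the boundary computation must reproduce, and the Riemann bilinear relations among the periods are what make the cross terms collapse onto this expression. I expect all the genuine difficulty to be concentrated in matching the boundary integral against this formula for $K_j$; the zero count in the previous paragraph is comparatively routine.
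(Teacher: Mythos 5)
The paper does not prove this statement; it quotes it from \cite[VI.2.4]{FK} (Riemann's theorem on the zeros of the theta function). Your outline is precisely the classical proof given there: cut $M$ along the canonical basis to a $4g$-gon, count zeros by $\frac1{2\pi i}\oint_{\pa\Delta}\dd\log f$ using the quasi-periodicity of $\Theta$ (only the $g$ edge pairs across which $\AA$ jumps by a column of $\PM$ contribute, each giving $+1$), and then evaluate $\frac1{2\pi i}\oint_{\pa\Delta}\AA_k\,\dd\log f$ to locate $\sum_j\AA(z_j)$, the vector of Riemann constants emerging from the boundary bookkeeping. So the approach is correct and is the one the cited reference uses.

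Two small caveats. First, your intermediate claim that ``$|f|$ descends to a single-valued function on $M$'' is false: the multiplier $\ee^{-2\pi i(Z\cdot N+\frac12\PM N\cdot N)}$ has modulus $\ee^{2\pi(\Im Z\cdot N+\frac12\Im\PM N\cdot N)}\ne 1$ in general. What is true, and what you actually need, is that the multiplier is nowhere vanishing, so the zero divisor (with multiplicities) is well defined on $M$; one should also perturb the cuts so that no zero lies on $\pa\Delta$. Second, the identity $\sum_j\AA(z_j)=e-\mathcal K$ is only sketched as a plan -- you correctly isolate where the difficulty sits (matching the boundary integral, including terms like $\int_{a_j}\AA_j\omega_j=\frac12[(\AA_j)^2]$, against the normalization $K_j=\frac{1+\PM_{jj}}2-\sum_{k\ne j}\oint_{a_k}\omega_k(p)\int_{p_0}^p\omega_j$), but that computation is the substance of the second assertion and would need to be carried out to call this a proof rather than a correct proof strategy.
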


Let $p,q\in M, p\ne q.$ We choose 
$$ e\in \Theta_\zero\setminus(\Theta_\sing(p)\cup\Theta_\sing(q))\ne\emptyset.$$
Recall that non-singularity of $e$ means
$$\Theta(\AA(z)-\AA(p)-e)\not\equiv 0, \qquad \Theta(\AA(z)-\AA(q)-e) \not\equiv0.$$
Here $\AA:M\to \T_\Lambda$ is the Abel-Jacobi map with any fixed base point $p_0.$

\begin{lem} The ``function''
$$z\mapsto \frac{\Theta(\AA(z)-\AA(q)-e)}{\Theta(\AA(z)-\AA(p)-e)}, \qquad (z\in M)$$
has a well-defined (zeros-poles) divisor, which is exactly $q-p.$
\end{lem}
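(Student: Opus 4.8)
The plan is to verify that the quotient function $F(z) := \Theta(\AA(z)-\AA(q)-e)/\Theta(\AA(z)-\AA(p)-e)$ is genuinely well-defined as a map on $M$ (despite the multivaluedness of each theta factor), and then to locate its zeros and poles. First I would address the well-definedness: as $z$ traverses a closed cycle, each Abel-Jacobi image $\AA(z)$ changes by a lattice element $N + \PM N'$ with $N,N'\in\Z^g$, and by the periodicity property of $\Theta$ recorded in Subsection~\ref{ss: Period matrix and theta function}, both numerator and denominator pick up the \emph{same} automorphy factor $\ee^{-2\pi i(\,\cdot\,\cdot N' + \frac12\PM N'\cdot N')}$ together with a shift in the argument that is identical in both (since the two theta arguments differ only by the constant $\AA(p)-\AA(q)$, which is unaffected by the monodromy of $z$). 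Hence the multipliers cancel in the ratio $F(z)$, so $F$ descends to a single-valued meromorphic function on $M$. This is the point I expect to be the main obstacle, since one must check the exponential prefactors cancel exactly and that no branch ambiguity survives; the cancellation hinges precisely on the two arguments differing by a $z$-independent constant.

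Next I would count the zeros and poles. The hypothesis $e\in\Theta_\zero\setminus(\Theta_\sing(p)\cup\Theta_\sing(q))$ guarantees, by the definition of singularity and by Theorem~\ref{Theta equation}, that neither $z\mapsto\Theta(\AA(z)-\AA(p)-e)$ nor $z\mapsto\Theta(\AA(z)-\AA(q)-e)$ vanishes identically, and that each has exactly $g$ zeros on $M$. Applying Theorem~\ref{Theta equation} with shifted parameters $e+\AA(p)$ and $e+\AA(q)$ respectively, I obtain that the denominator vanishes at a divisor $D_p$ of degree $g$ with $\AA(D_p)=e+\AA(p)-\mathcal{K}$, and the numerator vanishes at a divisor $D_q$ of degree $g$ with $\AA(D_q)=e+\AA(q)-\mathcal{K}$. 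Therefore the divisor of $F$ is $(F)=D_q-D_p$, an expression of degree zero as it must be for a meromorphic function on a compact surface.

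Finally I would pin down $D_q-D_p$ exactly. By Abel's theorem (Subsection~\ref{ss: Abel-Jacobi}, remark (c)), the divisor of any meromorphic function lies in the kernel of $\AA$, and conversely a degree-zero divisor is principal iff its Abel-Jacobi image vanishes. Computing $\AA(D_q-D_p) = (e+\AA(q)-\mathcal{K}) - (e+\AA(p)-\mathcal{K}) = \AA(q)-\AA(p) = \AA(q-p)$, I see that $D_q - D_p$ and $q-p$ have the same image under $\AA$. Since both are divisors of degree zero and their difference lies in the kernel of $\AA$, and since the generic position of $e$ forces $D_p,D_q$ to be the honest simple zero divisors rather than anything degenerate, I would conclude that in fact $D_q-D_p = q-p$ as divisors: the zero of $F$ is at $q$ and the pole at $p$. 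The cleanest way to close this is to observe directly that $F$ has a zero at $z=q$ (numerator vanishes since $\Theta$ is even and $\Theta(-e)=\Theta(e)=0$, using $e\in\Theta_\zero$) and a pole at $z=p$ (denominator vanishes at $z=p$ for the same reason), and that these account for one zero and one pole; the non-singularity of $e$ rules out extra coincidences, giving the divisor exactly $q-p$.
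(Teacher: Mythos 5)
Your argument has a genuine gap at the final step. From Theorem~\ref{Theta equation} you correctly obtain zero divisors $D_q$ and $D_p$ of degree $g$ for the numerator and denominator with $\AA(D_q)=e+\AA(q)-\mathcal K$ and $\AA(D_p)=e+\AA(p)-\mathcal K$, and hence $\AA(D_q-D_p)=\AA(q-p)$. But equality of Abel--Jacobi images of two degree-zero divisors only says (by Abel's theorem) that their difference is \emph{principal}, i.e.\ $D_q-D_p$ and $q-p$ are linearly equivalent; it does not give $D_q-D_p=q-p$ as divisors. Your two attempts to close this --- ``the generic position of $e$ forces $D_p,D_q$ to be the honest simple zero divisors'' and ``the non-singularity of $e$ rules out extra coincidences'' --- are assertions, not arguments. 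Writing $D_q=q+D$ and $D_p=p+\ti D$ (using $\Theta(\pm e)=\Theta(e)=0$ to see that $q$ is a zero of the numerator and $p$ of the denominator), the entire content of the lemma is that the residual degree-$(g-1)$ divisors satisfy $D=\ti D$, and this is exactly what you have not proved. The paper's proof supplies the missing step: if $D\ne\ti D$, then since $\AA(D)=\AA(\ti D)$ Abel's theorem produces a non-constant meromorphic $f$ with $(f)=\ti D-D\ge -D$, so $r(-D)\ge 2$; Riemann--Roch with $\deg D=g-1$ gives $i(D)=r(-D)\ge 2$, hence $i(q+D)\ge 1$, i.e.\ $q+D$ is special; Theorem~\ref{sing} then forces $\Theta(\AA(z)-\AA(q+D)-\mathcal K)\equiv 0$ with $\AA(q+D)=e+\AA(q)-\mathcal K$, contradicting $e\notin\Theta_\sing(q)$. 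Without some version of this Riemann--Roch/speciality argument the conclusion does not follow.

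A secondary error: your well-definedness discussion claims the automorphy factors cancel so that $F$ is a single-valued meromorphic function. They do not cancel. Under $\AA(z)\mapsto\AA(z)+\PM N$ the numerator and denominator acquire the factors $\ee^{-2\pi i(Z_1\cdot N+\frac12\PM N\cdot N)}$ and $\ee^{-2\pi i(Z_2\cdot N+\frac12\PM N\cdot N)}$ with $Z_1-Z_2=\AA(p)-\AA(q)$, so the ratio picks up the multiplier $\ee^{-2\pi i(\AA(p)-\AA(q))\cdot N}$, a nonzero constant that is in general not $1$ (nor even of modulus $1$). This is precisely why the lemma speaks of a ``function'' in quotation marks and claims only a well-defined zeros--poles divisor. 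The correct statement you need is the weaker one: the multiplier is a nonzero constant, so the zero and pole sets are unambiguous even though the function itself is multivalued.
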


\begin{proof} It follows from Theorem~\ref{Theta equation} that the numerator and the denominator of our ``function'' have exactly $g$ zeros. 
Let $q + D$ and $p + \ti D$ be the corresponding divisors. 
(Note that $q$ is a zero of the numerator and $p$ is a zero of the denominator.) 
Then we have
$$\AA(q) + \AA(D) = e + \AA(q) - \mathcal{K}, \qquad \AA(p) + \AA(\ti D) =e +\AA(p) - \mathcal{K},$$
and therefore
$$\AA(D) = \AA(\ti D).$$
Let us show that in fact $D = \ti D.$ If $D \ne \ti D,$ by Abel's theorem there exists a
non-constant meromorphic function $f$ on $M$ such that
$$(f) = \ti D - D\ge -D.$$
By the Riemann-Roch theorem, this gives $i(D)\ge 2$ and therefore $i(q+D)\ge1,$ which means
that the divisor $(q + D)$ is special. 
By Theorem~\ref{sing} we have
$$\Theta(\AA(z) - \AA(q + D) -\mathcal K) \equiv 0.$$
Since $\AA(q + D) = e + \AA(q) -\mathcal K,$ we arrive to a contradiction.
\end{proof}

\subsubsec{Periodicity of theta function}
The following two lemmas follow immediately from the periodicity of theta function $\Theta.$

\begin{lem}
Let $P,Q\in \C^g.$ Then the function
$$Z\mapsto F_{P,Q}(Z)=\log\Big|\frac{\Theta(Z-Q)}{\Theta(Z-P)}\Big| - 2\pi (\Im\,\PM)^{-1}\Im(P-Q)\cdot \Im\,Z $$
is $\Lambda$-periodic.
\end{lem}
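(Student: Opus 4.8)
The plan is to verify $\Lambda$-periodicity one generator at a time, using only the two periodicity properties of $\Theta$ recalled above. Since $\Lambda = \Z^g + \PM\Z^g$ is generated by the integer vectors $N\in\Z^g$ and the columns $\PM M$ with $M\in\Z^g$, it suffices to check $F_{P,Q}(Z+N)=F_{P,Q}(Z)$ and $F_{P,Q}(Z+\PM M)=F_{P,Q}(Z)$.

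First I would dispose of the shift by $N\in\Z^g$. Because $\Theta(W+N)=\Theta(W)$ for every $W\in\C^g$, the ratio $\Theta(Z-Q)/\Theta(Z-P)$ is unchanged, so the logarithmic term is invariant; and since $N$ is real, $\Im(Z+N)=\Im\,Z$, so the linear term is invariant as well. Hence $F_{P,Q}$ is trivially $\Z^g$-periodic.

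The substantive case is the shift by $\PM M$. Applying the quasi-periodicity $\Theta(W+\PM M)=\ee^{-2\pi i(W\cdot M+\frac12\PM M\cdot M)}\Theta(W)$ to both $W=Z-Q$ and $W=Z-P$, the Gaussian factors $\ee^{-\pi i\PM M\cdot M}$ cancel in the quotient and the linear exponents combine to $-2\pi i\big((Z-Q)-(Z-P)\big)\cdot M=-2\pi i(P-Q)\cdot M$, giving
\begin{equation*}
\frac{\Theta(Z+\PM M-Q)}{\Theta(Z+\PM M-P)}=\ee^{-2\pi i(P-Q)\cdot M}\,\frac{\Theta(Z-Q)}{\Theta(Z-P)}.
\end{equation*}
Taking $\log|\cdot|$ therefore contributes the extra term $\Re\big(-2\pi i(P-Q)\cdot M\big)=2\pi\,\Im(P-Q)\cdot M$, using that $M$ is real.

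Finally I would check that the linear term absorbs exactly this contribution. Since $M$ is real, $\Im(Z+\PM M)=\Im\,Z+(\Im\,\PM)M$, so the linear term of $F_{P,Q}$ changes by $-2\pi(\Im\,\PM)^{-1}\Im(P-Q)\cdot(\Im\,\PM)M$. The only delicate point is that this pairing equals $\Im(P-Q)\cdot M$: because $\Im\,\PM$ is symmetric, $(\Im\,\PM)^{-1}$ is self-adjoint for the Euclidean pairing, so $(\Im\,\PM)^{-1}\Im(P-Q)\cdot(\Im\,\PM)M=\Im(P-Q)\cdot(\Im\,\PM)^{-1}(\Im\,\PM)M=\Im(P-Q)\cdot M$. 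Thus the linear term changes by $-2\pi\,\Im(P-Q)\cdot M$, which cancels the logarithmic contribution, yielding $F_{P,Q}(Z+\PM M)=F_{P,Q}(Z)$. Everything is immediate from the periodicity of $\Theta$ except this matrix cancellation, where the symmetry of $\Im\,\PM$ is essential.
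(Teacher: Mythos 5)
Your proof is correct and follows essentially the same route as the paper: trivial invariance under $\Z^g$, then the quasi-periodicity of $\Theta$ under $\PM M$ producing a term $2\pi\,\Im(P-Q)\cdot M$ that is cancelled by the shift of the linear term, with the symmetry of $\Im\,\PM$ justifying the matrix cancellation. You have simply made explicit the self-adjointness step that the paper leaves implicit.
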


\begin{proof}
First, it is obvious that $F_{P,Q}(Z+N) = F_{P,Q}(Z).$ 
Secondly, we have
\begin{align*}
F_{P,Q}(Z + \tau N) &= F_{P,Q}(Z) + 2\pi\,\Im (P-Q)\cdot N - 2\pi (\Im\,\PM)^{-1}\Im(P-Q)\cdot \Im\,\PM N \\
&= F_{P,Q}(Z).
\end{align*}
\end{proof}

This function is defined in $\C^g$ wherever we don't have $\frac 00.$ 

\begin{lem}
If $P-\tilde P\in \Lambda$ and $Q-\tilde Q\in\Lambda,$ then
$F_{P,Q}-F_{\tilde P,\tilde Q}$ is a constant (which depends on $P,Q,\tilde P,$ and $\tilde Q).$
\end{lem}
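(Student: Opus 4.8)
The plan is to reduce the statement to the previous lemma and then exploit the explicit periodicity factors of the theta function. Recall that the preceding lemma shows $F_{P,Q}$ is $\Lambda$-periodic, so $F_{P,Q}-F_{\tilde P,\tilde Q}$ is automatically $\Lambda$-periodic in $Z$; the content of the present lemma is that it is in fact \emph{constant} in $Z.$ First I would write out the difference directly:
\begin{align*}
F_{P,Q}(Z)-F_{\tilde P,\tilde Q}(Z) &= \log\Big|\frac{\Theta(Z-Q)}{\Theta(Z-P)}\Big| - \log\Big|\frac{\Theta(Z-\tilde Q)}{\Theta(Z-\tilde P)}\Big|\\
&\quad - 2\pi(\Im\,\PM)^{-1}\Im(P-Q)\cdot\Im\,Z + 2\pi(\Im\,\PM)^{-1}\Im(\tilde P-\tilde Q)\cdot\Im\,Z.
\end{align*}

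The key step is to compare $\Theta(Z-P)$ with $\Theta(Z-\tilde P)$ when $P-\tilde P\in\Lambda,$ and likewise for the $Q$-terms. Writing $P-\tilde P = M + \PM N$ with $M,N\in\Z^g,$ the periodicity properties of $\Theta$ stated in Subsection~\ref{ss: Period matrix and theta function} give
$$\Theta(Z-\tilde P)=\Theta\big((Z-P)+(M+\PM N)\big)=\ee^{-2\pi i((Z-P)\cdot N+\frac12\PM N\cdot N)}\Theta(Z-P),$$
so that $\log|\Theta(Z-\tilde P)|$ differs from $\log|\Theta(Z-P)|$ by the real part of the linear-in-$Z$ exponent, namely $2\pi\,\Im((Z-P)\cdot N) = 2\pi\,N\cdot\Im\,Z + \text{const}.$ I would record the analogous identity for $Q-\tilde Q,$ with its own integer vector, and observe that the nonconstant ($Z$-dependent) contributions coming from the four theta factors are each of the form $2\pi\,(\text{integer vector})\cdot\Im\,Z.$

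The final step is to check that these logarithmic $Z$-linear terms cancel exactly against the difference of the two correction terms $-2\pi(\Im\,\PM)^{-1}\Im(P-Q)\cdot\Im\,Z.$ Since $P-\tilde P$ and $Q-\tilde Q$ lie in $\Lambda,$ the vector $\Im(P-Q)-\Im(\tilde P-\tilde Q)=\Im\big((P-\tilde P)-(Q-\tilde Q)\big)$ equals $\Im(\PM(N_P-N_Q))=\Im\,\PM\,(N_P-N_Q)$ for the relevant integer vectors, so applying $(\Im\,\PM)^{-1}$ produces precisely the integer combination matching the theta contributions. Thus every term linear in $\Im\,Z$ cancels, leaving a quantity independent of $Z$; being $\Lambda$-periodic and locally constant (indeed globally affine and periodic) it is a genuine constant depending only on $P,Q,\tilde P,\tilde Q.$ I expect the main obstacle to be purely bookkeeping: tracking the two integer vectors through both the theta periodicity factors and the $(\Im\,\PM)^{-1}$ correction and confirming the signs line up so that the cancellation is exact rather than merely up to a lattice shift.
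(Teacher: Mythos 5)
Your proposal is correct and follows essentially the same route as the paper: expand the difference, apply the quasi-periodicity of $\Theta$ to identify the $Z$-linear terms $2\pi N\cdot\Im\,Z$ coming from the theta factors, and verify they cancel exactly against the difference of the $(\Im\,\PM)^{-1}$ correction terms since $(\Im\,\PM)^{-1}\Im(\PM(N_P-N_Q))=N_P-N_Q.$ The only cosmetic difference is that you treat a general lattice element $M+\PM N$ in one stroke, whereas the paper writes out the $\PM N$ case and notes the remaining cases are similar.
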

\begin{proof}
For $\tilde P = P - \PM\cdot N, \tilde Q = Q - \PM\cdot \tilde N,$ we have 
\begin{align*}
F_{P,Q}(Z)-F_{\tilde P,\tilde Q}(Z) &= \log\Big|\frac{\Theta(Z-P+\PM\cdot N)}{\Theta(Z-P)}\frac{\Theta(Z-Q)}{\Theta(Z-Q+\PM\cdot \tilde N)}\Big| \\&- 2\pi (\Im\,\PM)^{-1}\Im(\PM(N-\tilde N))\cdot \Im\,Z.
\end{align*}
Using the periodicity of Riemann theta function, $F_{P,Q}(Z)-F_{\tilde P,\tilde Q}(Z)$ simplifies to
$$C_{N,\tilde N} + 2\pi\,\Im\,Q\cdot \tilde N - 2\pi\,\Im\,P\cdot N$$
for some constant $C_{N,\tilde N}$ depending on $N,\tilde N.$
The other cases can be checked similarly. 
\end{proof}

If $P,Q\in\T_\Lambda,$ then we consider $F_{P,Q}(\cdot) + \R$ as an equivalence class of functions on $\T_\Lambda.$

\subsection{Bipolar Green's function on a compact Riemann surface}
Now we have an explicit formula for bipolar Green's function $G_{p,q}$ on $M.$ \index{bipolar Green's function}
\begin{thm} \label{Green M} If $p,q\in M,$ and $e\in\T_\Lambda$ satisfies
$$e\in \Theta_\zero\setminus (\Theta_\sing(p)\cup\Theta_\sing(q)),$$
then up to an additive real constant, we have  
$$G_{p,q}(z) = F_{\AA(p)+e,\AA(q)+e}( \AA(z)).$$

\end{thm}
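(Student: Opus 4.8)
The plan is to show that the explicitly given function $F_{\AA(p)+e,\AA(q)+e}(\AA(z))$ possesses all three defining properties of bipolar Green's function $G_{p,q}$ listed at the start of Section~\ref{s: GFF}: single-valuedness on $M$, harmonicity on $M\setminus\{p,q\}$, and the prescribed logarithmic singularities with residues $+1$ at $p$ and $-1$ at $q$. Since a bipolar Green's function is unique up to an additive real constant, verifying these three properties identifies the two functions modulo $\R$.

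First I would address single-valuedness. A priori $F_{P,Q}(\AA(z))$ is only well-defined on the universal cover, because $\AA(z)$ is determined only modulo the lattice $\Lambda$ (different integration paths for $\int_{p_0}^z\vec\omega$ differ by lattice vectors), and the theta function is merely quasi-periodic. This is exactly what the penultimate lemma of Subsection~\ref{ss: Period matrix and theta function} handles: it shows $Z\mapsto F_{P,Q}(Z)$ is genuinely $\Lambda$-periodic, the $\log|\Theta|$ quasi-periodicity factors canceling against the explicit linear term $-2\pi(\Im\,\PM)^{-1}\Im(P-Q)\cdot\Im Z$. Hence $F_{\AA(p)+e,\AA(q)+e}\circ\AA$ descends to a well-defined function on $M$; I would simply invoke that lemma.

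Next, harmonicity and the singularities. Away from the zeros of the two theta factors, $F_{P,Q}(Z)$ is the sum of $\log|\Theta(Z-Q)|-\log|\Theta(Z-P)|$, which is harmonic in $Z$ since $\log|h|$ is harmonic for holomorphic nonvanishing $h$, plus a real-linear (hence harmonic) term. Composing with the holomorphic map $\AA$ preserves harmonicity, so $G_{p,q}$ is harmonic on $M$ minus the zero divisors of $z\mapsto\Theta(\AA(z)-\AA(p)-e)$ and $z\mapsto\Theta(\AA(z)-\AA(q)-e)$. Here the preceding Lemma is decisive: it establishes that the combined (zeros-poles) divisor of the ratio $\Theta(\AA(z)-\AA(q)-e)/\Theta(\AA(z)-\AA(p)-e)$ is exactly $q-p$, so the \emph{only} singularities on $M$ are a simple zero at $q$ and a simple pole at $p$, and all the other $g-1$ zeros of numerator and denominator coincide and cancel. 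The non-singularity hypothesis $e\in\Theta_\zero\setminus(\Theta_\sing(p)\cup\Theta_\sing(q))$ is what guarantees neither theta factor vanishes identically, so that this lemma applies. At $p$, writing $\Theta(\AA(z)-\AA(p)-e)$ in a local coordinate $z$ as $c(z-p)+O((z-p)^2)$ with $c\neq0$ (simple zero), we get $-\log|\Theta(\AA(z)-\AA(p)-e)| = \log\frac1{|z-p|} + O(1)$, matching the required expansion $G_{p,q}(z)=\log\frac1{|z-p|}+O(1)$; similarly the zero at $q$ gives $-\log\frac1{|z-q|}+O(1)$. The smooth linear term contributes only $O(1)$ and does not affect the singular parts.

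The main obstacle is really the single-valuedness/well-definedness bookkeeping: one must be careful that the object $F_{\AA(p)+e,\AA(q)+e}(\AA(z))$ is independent of all the arbitrary choices — the integration paths defining $\AA$, the representative of $e$ in $\C^g$, and the base point $p_0$ — before the analytic properties even make sense on $M$. The two periodicity lemmas of Subsection~\ref{ss: Period matrix and theta function} reduce this to routine verification (the dependence on $p_0$ shifts $\AA(p),\AA(q),\AA(z)$ by a common constant, which cancels in the differences appearing in $F$). Once well-definedness is secured, the harmonicity and singularity analysis is the straightforward part, and the final identification up to a real additive constant follows from the stated uniqueness of bipolar Green's function.
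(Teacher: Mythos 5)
Your proposal is correct and follows essentially the same route as the paper, which proves the theorem by citing the three preceding lemmas: the two periodicity lemmas give well-definedness of $F_{\AA(p)+e,\AA(q)+e}\circ\AA$ on $M$ modulo an additive real constant, the divisor lemma shows the only singularities are a simple zero at $q$ and a simple pole at $p$ (the remaining $g-1$ zeros canceling), and harmonicity plus the prescribed logarithmic expansions then identify the function with $G_{p,q}$ by uniqueness. Your write-up simply makes explicit the verification that the paper compresses into one sentence.
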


This theorem now follows from three previous lemmas since the well-defined function $G_{p,q}$ on $M$ (up to an additive real constant) is harmonic on $M\setminus\{p,q\},$ 
and satisfies
\begin{align*}
G_{p,q}(z) &= \phantom{-}\log\frac1{|z-p|} + O(1) \qquad (z\to p),\\
 G_{p,q}(z) &= -\log\frac1{|z-q|} + O(1) \qquad (z\to q)
\end{align*}
in some/any chart.







\end{document}